\newcommand{\di}{\textrm{d}}
\newcommand{\hc}{\mbox{h.c.}}
\let\a=\alpha     \let\g=\gamma     \let\d=\delta     \let\e=\varepsilon
             \let\l=\lambda
          \let\x=\xi        \let\p=\pi        \let\r=\rho
\let\s=\sigma \let\t=\tau         \let\ph=\varphi   \let\c=\chi
\let\ps=\psi   \let\o=\omega     
 \let\D=\Delta       \let\L=\Lambda    
\let\O=\Omega
\def\cE{{\cal E}}\def\cV{{\cal V}}
\def\cC{{\cal C}}\def\cF{{\cal F}}\def\cH{{\cal H}}
\def\cN{{\cal N}}\def\cZ{{\cal Z}}
\def\cR{{\cal R}}\def\cL{{\cal L}}
\def\cG{{\cal G}}
\def\cO{{\cal O}}\def\cK{{\cal K}}
  \def\v0{{\vec 0}}
 \def\hw{{\widehat w}}  
\def\bal{{\bar \l}}
\def\tl#1{{\tilde{#1}}}
\def\bR{\mathbb{R}}
\def\cV{\mathcal{V}}
\def\cF{\mathcal{F}}
\def\cG{\mathcal{G}}
\def\cL{\mathcal{L}}
\def\cN{\mathcal{N}}
\def\cE{\mathcal{E}}
\def\cK{\mathcal{K}}
\def\cH{\mathcal{H}}
\def\eps{\varepsilon}
\def\ph{\varphi}
\def\bN{\mathbb{N}}  
\def\bZ{\mathbb{Z}} 
\def\bR{\mathbb{R}}
\def\indic{\hbox{\raise-2pt \hbox{\indbf 1}}}
\let\dpr=\partial
\let\io=\infty
\def\*{{\hfill\break\null\hfill\break}}
\def\bmedia#1{{\bigl\langle#1\bigr\rangle}}
\def\tende#1{\,\vtop{\ialign{##\crcr\rightarrowfill\crcr
             \noalign{\kern-1pt\nointerlineskip}
             \hskip3.pt${\scriptstyle #1}$\hskip3.pt\crcr}}\,}
\def\otto{\,{\kern-1.truept\leftarrow\kern-5.truept\to\kern-1.truept}\,}
\def\fra#1#2{{#1\over#2}}
\def\wt#1{\widetilde{#1}}
\def\sqt[#1]#2{\root #1\of {#2}}
\def\lis{\overline}
\def\hc{{\rm h.c.}\,}
\def\tr{{\rm tr}}
\def\wt{\widetilde}
\def\be{\begin{equation}}
\def\ee{\end{equation}}
\def\bea{\begin{eqnarray}}\def\eea{\end{eqnarray}}
\def\bean{\begin{eqnarray*}}\def\eean{\end{eqnarray*}}
\def\bfr{\begin{flushright}}\def\efr{\end{flushright}}
\def\bc{\begin{center}}\def\ec{\end{center}}
\def\bal{\begin{align}} 
\def\eal{\end{align}}
\def\spl#1\spl{\[ \begin{split}#1\end{split} \]}
\def\bd{\begin{description}}\def\ed{\end{description}}
\def\Halmos{\hfill\vrule height10pt width4pt depth2pt \par\hbox to \hsize{}}
\newtheorem{theorem}{Theorem}[section]
\newtheorem{prop}{Proposition}[section]
\newtheorem{lemma}[prop]{Lemma} 
\numberwithin{equation}{section}
\title{Bose-Einstein condensation for two dimensional bosons\\  in the Gross-Pitaevskii regime}
\date{\today}
\def \aa{{\mathfrak a}}
\begin{document}

\author{Cristina Caraci$^1$, Serena Cenatiempo$^1$, Benjamin Schlein$^2$ 
\\\\
Gran Sasso Science Institute, Viale Francesco Crispi 7 \\ 
67100 L'Aquila, Italy$^1$ 
\\ \\
Institute of Mathematics, University of Zurich\\
Winterthurerstrasse 190, 8057 Zurich, 
Switzerland$^2$}

\maketitle

\abstract{We consider systems of N bosons trapped on the two-dimensional unit torus, in the Gross-Pitaevskii regime, where the scattering length of the repulsive interaction is exponentially small in the number of particles. We show that low-energy states exhibit complete Bose-Einstein condensation, with almost optimal bounds on the number of orthogonal excitations. }

\section{Introduction}

We consider $N \in \bN$ bosons trapped in the two-dimensional box $\L = [-1/2;1/2]^2$ with periodic boundary conditions. In the Gross-Pitaevskii regime, particles interact through a repulsive pair 
potential, with a scattering length exponentially small in $N$. The Hamilton operator is given by 
\begin{equation}\label{eq:HN} H_N = \sum_{j=1}^N -\Delta_{x_j} + \sum_{i<j}^N e^{2N} V (e^N(x_i -x_j)) \end{equation}
and acts on a dense subspace of $L^2_s (\Lambda^N)$, the Hilbert space consisting of functions in $L^2 (\L^N)$ that are invariant with respect to permutations of the $N$ particles. We assume here $V \in L^3 (\bR^2)$ to be compactly supported and pointwise non-negative (i.e. $V(x) \geq 0$ for almost all $x \in \bR^2$). 

We denote by $\aa$ the scattering length of the unscaled potential $V$. We recall that in two dimensions and for a potential $V$ with finite range $R_0$, the scattering length is defined by
\be \label{eq:defa0}
\frac{2\pi}{\log(R/\aa)} =\inf_{\phi} \int_{B_R}  \left[ |\nabla \phi|^2 + \frac 1 2 V |\phi|^2   \right] dx 
\ee
where $R > R_0$, $B_R$ is the disk of radius $R$ centered at the origin and the infimum is taken over functions $\phi \in H^1(B_R)$ with $\phi (x)=1$ for all $x$ with $|x|=R$. The unique minimizer of the variational problem on the r.h.s. of \eqref{eq:defa0} is non-negative, radially symmetric and satisfies the scattering equation
\[
- \D \phi^{(R)} + \frac 12 V \phi^{(R)} =0
\]
in the sense of distributions. 
For $R_0 < |x| \leq R$, we have 
\[
\phi^{(R)} (x) = \frac{\log(|x|/\aa)}{\log (R/\aa) }\,.
\]
By scaling, $\phi_N (x) := \phi^{(e^N R)} (e^N x)$  is such that 
\[ -\D \phi_N + \frac{1}{2} e^{2N} V (e^N x) \phi_N = 0 \] 
We have 
\[
\phi_{N}(x) =\frac{\log(|x|/\aa_N)}{\log (R/\aa_N) } \qquad \forall x \in \bR^2 : e^{-N} R_0 < |x| \leq R\,,
\]
for all $x \in \bR^2$ with $e^{-N} R_0 < |x| \leq R$. Here $\aa_N= e^{-N} \aa$.

The properties of trapped two dimensional bosons in the Gross-Pitaevskii regime (in the more general case where the bosons are confined by external trapping potentials) have been first studied in \cite{LSeY2d, LS-BEC, LSe-GP-rotating}. These results can be translated to the Hamilton operator (\ref{eq:HN}), defined on the torus, with no external potential.
They imply that the ground state energy $E_N$ of (\ref{eq:HN}) is such that 
\begin{equation}\label{eq:en-ti} 
E_N = 2 \pi N \big( 1 + O(N^{-1/5})\big)\,.
\end{equation}

Moreover, they imply Bose-Einstein condensation in the zero-momentum mode $\ph_0 (x) = 1$ for all $x \in \Lambda$, for any approximate ground state of (\ref{eq:HN}). More precisely, it follows from \cite{LS-BEC} that, for any sequence $\psi_N \in L^2_s (\L^N)$ with $\| \psi_N \| = 1$ and  
\begin{equation} \label{eq:appro-gs} 
\lim_{N \to \infty} \frac{1}{N} \langle \psi_N , H_N \psi_N \rangle = 2\pi  ,
\end{equation}
the one-particle reduced density matrix $\gamma_N = \tr_{2, \dots , N} |\psi_N \rangle \langle \psi_N |$ is such that 
\begin{equation}\label{eq:BEC1} 
1 - \langle \ph_0 , \gamma_N \ph_0 \rangle \leq CN^{-\bar{\d}} 
\end{equation}
for a sufficiently small $\bar{\d}>0$. The estimate (\ref{eq:BEC1}) states that, in many-body states satisfying (\ref{eq:appro-gs}) (approximate ground states), almost all particles are described by the 
one-particle orbital $\ph_0$, with at most $N^{1-\delta} \ll N$ orthogonal excitations.

For $V \in L^3(\bR^2)$, our main theorem improves (\ref{eq:en-ti}) and (\ref{eq:BEC1})  by providing more precise bounds on the ground state energy and on the number of excitations.
\begin{theorem}\label{thm:main}
	Let $V \in L^3 (\bR^2)$ have compact support, be spherically symmetric and pointwise non-negative. Then there exists a constant $C > 0$ such that the ground state energy $E_N$ of (\ref{eq:HN}) satisfies 
	\be\label{eq:Enbd}
	2\pi N - C  \leq E_N  \leq 2\pi N + C \log N
	\ee
	Furthermore, consider a sequence $\psi_N \in L^2_s (\Lambda^N)$ with $\| \psi_N \| = 1$ and such that   
	\[\label{eq:apprH} \langle \psi_N , H_N \psi_N \rangle  \leq 2\pi N + K \]
	for a $K > 0$. Then the reduced density matrix $\gamma_N = \tr_{2,\dots , N} | \psi_N \rangle \langle\psi_N |$ associated with $\psi_N$ is such that 
	\begin{equation}\label{eq:BEC} 1 - \langle \ph_0 , \gamma_N \ph_0 \rangle \leq \frac{C(1+ K)}{N} \end{equation}
	for all $N \in \bN$ large enough. 
\end{theorem}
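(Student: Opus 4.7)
The plan is to follow the strategy, successful in the three-dimensional Gross–Pitaevskii regime, of passing to the Fock space of excitations orthogonal to the condensate and then conjugating with a (generalized) Bogoliubov transformation built from the two-dimensional scattering solution $\phi_N$. Concretely, I would first use the Lewin–Nam–Serfaty–Solovej map $U_N : L^2_s(\Lambda^N) \to \mathcal{F}_+^{\le N}$, where $\mathcal{F}_+^{\le N}$ is the truncated bosonic Fock space over the orthogonal complement of $\varphi_0$, to rewrite $H_N - 2\pi N$ as an operator $\mathcal{L}_N = U_N H_N U_N^* - 2\pi N$. The result is a sum of terms of order zero, two, three and four in creation/annihilation operators on $\mathcal{F}_+^{\le N}$, with the quartic piece carrying the singular interaction $e^{2N}V(e^N \cdot)$.

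The singular quartic and the off-diagonal quadratic terms cannot be controlled directly because the Fourier kernel of the potential is essentially a constant of order $N$ on a huge range of momenta. I would therefore introduce a Bogoliubov transformation $T = e^B$ with $B = \tfrac12 \sum_{p \ne 0} \eta_p (a_p^* a_{-p}^* - a_p a_{-p})$, where the kernel $\eta_p$ is the Fourier transform of $1 - \phi_N$ multiplied by a momentum cutoff, so that $\eta$ solves (an approximation of) the zero-energy scattering equation $-\Delta (1-\phi_N) + \tfrac12 e^{2N} V(e^N x) \phi_N = 0$. The key algebraic input is that when $T^*$ acts on the singular quartic term it produces a contribution that cancels the leading part of the off-diagonal quadratic term, leaving a renormalized, positive quadratic form. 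After computing $\mathcal{G}_N = T^* \mathcal{L}_N T$ I expect an expansion of the schematic shape
\[
\mathcal{G}_N = C_N + \mathcal{Q}_N + \mathcal{V}_N + \mathcal{E}_N, \qquad C_N = O(\log N),
\]
with $\mathcal{Q}_N \ge c_0 \mathcal{N}_+ - C$, $\mathcal{V}_N \ge 0$, and $\mathcal{E}_N$ bounded in absolute value by $\delta \mathcal{Q}_N + C$ using Cauchy–Schwarz together with the a priori bounds $T^* \mathcal{N}_+ T \le C(\mathcal{N}_+ + 1)$ and analogous ones for $\mathcal{N}_+^2/N$ and the kinetic energy, that follow from estimates on $\eta$ (such as $\|\eta\|_2 \le C$ and $\|\nabla \eta\|_2 \le C \sqrt{N}$ coming from the 2D scattering solution).

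Once this is in place, both halves of the theorem follow almost mechanically. For the lower bound in \eqref{eq:Enbd} I use $\mathcal{G}_N \ge -C$; for the upper bound I test with the trial state $\psi_N^{\text{trial}} = U_N^* T \Omega$ and observe that $\langle \Omega, \mathcal{G}_N \Omega \rangle = C_N + O(1)$. For the BEC statement, if $\langle \psi_N, H_N \psi_N\rangle \le 2\pi N + K$ and $\xi_N = T^* U_N \psi_N$, then $\langle \xi_N, \mathcal{G}_N \xi_N \rangle \le 2\pi N + K$, which combined with the lower bound on $\mathcal{Q}_N$ yields $\langle \xi_N, \mathcal{N}_+ \xi_N \rangle \le C(1+K)$, hence (by boundedness of $T$ and the identity $\mathrm{tr}(1-|\varphi_0\rangle\langle\varphi_0|)\gamma_N = N^{-1} \langle U_N\psi_N, \mathcal{N}_+ U_N\psi_N\rangle$) the optimal bound \eqref{eq:BEC}.

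The principal obstacle is the treatment of the cubic terms and of the cubic-by-cubic contributions produced by $T^*\mathcal{V}_N T$ in the two-dimensional exponential regime. Because $\eta_p$ decays only like $1/p^2$ up to momenta of order $e^N$ and because the interaction kernel is constant on this range, a purely quadratic transformation is typically not enough to render the cubic remainders $O(1)$; as in related 3D work, I expect that one further conjugation with a cubic transformation $e^A$ (with $A$ a carefully chosen antisymmetric polynomial of degree three in creation/annihilation operators acting on the high-momentum part) is necessary to absorb these dangerous terms and reduce them to quantities controlled by $\mathcal{N}_+$ and a small fraction of $\mathcal{V}_N$. Making this renormalization procedure compatible with the logarithmic nature of the 2D scattering solution — in particular tracking the precise $\log N$ constant that appears in $C_N$ and ensuring the positivity of $\mathcal{Q}_N$ survives the two successive conjugations — is where the bulk of the technical work will lie.
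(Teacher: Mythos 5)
Your high-level plan is close to the paper's: excitation Hamiltonian via $U_N$, renormalization by a generalized Bogoliubov transformation, then a further cubic conjugation $e^A$. But there are two genuine gaps that would block the proof as you sketch it.

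First, the choice and size of $\eta$. You propose $\eta_p$ as the Fourier transform of $1-\phi_N$ "multiplied by a momentum cutoff," with a priori bounds $\|\eta\|_2\leq C$ and $\|\nabla\eta\|_2\leq C\sqrt N$. In the paper, $\eta$ comes from the Neumann problem on a ball of (rescaled) radius $\ell=N^{-\alpha}$, chosen so that $\|\eta\|_2\leq CN^{-\alpha}$ with $\alpha\geq 5/2$. This is not a cosmetic choice: the paper explicitly notes that, unlike the 3D case where a small $O(1)$ value of $\|\eta\|$ suffices, in 2D one needs $\|\eta\|$ to vanish polynomially, because the error estimates from both conjugations carry prefactors like $N^{1-\alpha}$, $N^{2-\alpha}$, and $N^{1/2-\alpha}$ that only become negligible when $\alpha$ is large enough. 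With $\|\eta\|_2=O(1)$ as you posit, the error terms produced by the cubic conjugation and by the $b$-vs-$a$ discrepancies would not be $o(N)$. (Also, building $B$ out of $a_p^*a_{-p}^*$ rather than the generalized $b_p^*b_{-p}^*$ operators does not preserve $\mathcal F_+^{\leq N}$; this is a known fix but needs to be stated.)

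Second, and more fundamentally, the step "$\mathcal E_N$ bounded in absolute value by $\delta\mathcal Q_N+C$ using Cauchy--Schwarz" does not hold. After both conjugations the paper's effective operator satisfies only (their Eq. (4.9))
\[
\mathcal R_{N,\alpha}^{\text{eff}}\;\geq\; 2\pi N+\frac{\widehat\omega_N(0)}{2}\mathcal N_+ +\frac{c}{\log N}\mathcal H_N - C(\log N)^2\,\frac{\mathcal N_+^2}{N}-C\,,
\]
and the term $-C(\log N)^2\mathcal N_+^2/N$ is \emph{not} absorbable into $\delta\mathcal N_+$ on the full truncated Fock space: when $\mathcal N_+\sim N$ it is of order $-N(\log N)^2$ and overwhelms the linear term. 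The paper's resolution is a localization in the number of particles (their Prop.\ 5.1): one splits $\mathcal F_+^{\leq N}$ into the sector $\mathcal N_+\lesssim M=N^{1-\epsilon}$ (where $\mathcal N_+^2/N\lesssim N^{-\epsilon}\mathcal N_+$ can be absorbed) and the complementary sector $\mathcal N_+\gtrsim M$, and on the latter one runs a contradiction argument: if the energy on that sector were close to $2\pi N$, one could produce an approximate ground state $\psi_N$ with $\langle\psi_N,\mathcal N_+\psi_N\rangle/N\gtrsim N^{-\epsilon}$, contradicting the weak BEC bound $1-\langle\varphi_0,\gamma_N\varphi_0\rangle\leq CN^{-\bar\delta}$ already known from Lieb--Seiringer, once $\epsilon<\bar\delta$. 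This bootstrap using the prior (suboptimal) BEC result is essential to establish $\mathcal G_{N,\alpha}-2\pi N\geq c\mathcal N_+-C$, which is the actual statement from which the theorem follows "mechanically." Your proposal identifies that the cubic renormalization is needed and that the $\log$ constants must be tracked, but it does not notice that even after all renormalizations the resulting lower bound has a quadratic-in-$\mathcal N_+$ deficit that only a localization-plus-bootstrap argument can repair.
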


\vskip 0.5cm

It is interesting to compare the Gross-Pitaevskii regime with the thermodynamic limit, where a Bose gas of $N$ particles interacting through a fixed potential with scattering length $\aa$ is confined in a box with volume $L^2$, so that $N, L \to \infty$ with the density $\r=N/L^2$ kept fixed. Let 
$b=|\log(\r  \aa^2)|^{-1}$. Then, in the dilute limit $\r \aa^2 \ll 1$, the ground state energy per particle in the thermodynamic limit is expected to satisfy 
\be \label{e0}
e_0(\r) = 4  \pi \r^2 b \Big( 1 + b \log b  + \big(1/2 + 2\g + \log \pi \big) b + o(b) \Big)\,,
\ee
with $\g$ the Euler's constant. The leading order term on the r.h.s. of \eqref{e0} has been first derived in  \cite{Schick} and then rigorously established in \cite{LY2d}, with an error rate $b^{-1/5}$. The corrections up to order $b$ have been predicted in \cite{GroundState2d-1, GroundState2d-2, GroundState2d-3}. To date, there is no rigorous proof of \eqref{e0}. Some partial result, based on the restriction to quasi-free states, has been recently obtained in  \cite[Theorem 1]{FNRS-2d}. 

Extrapolating from (\ref{e0}), in the Gross-Pitaevskii regime we expect $|E_N - 2\pi N| \leq C$. While our estimate (\ref{eq:Enbd}) captures the correct lower bound, the upper bound is off by a logarithmic correction. Eq. (\ref{eq:BEC}), on the other hand, is expected to be optimal (but of course, by (\ref{eq:Enbd}), we need to choose $K = C \log N$ to be sure that (\ref{eq:apprH}) can be satisfied). 
This bound can be used as starting point to investigate the validity of Bogoliubov theory for two dimensional bosons in the Gross-Pitaevskii regime, following the strategy developed in \cite{BBCS4} for the three dimensional case; we plan to proceed in this direction in a separate paper. 

The proof of Theorem \ref{thm:main} follows the strategy  that has been recently introduced in \cite{BBCS3} to prove condensation for three-dimensional bosons in the Gross-Pitaevskii limit. There are, however, additional obstacles in the two-dimensional case, requiring new ideas. To appreciate the difference between the Gross-Pitaevskii regime in two- and three-dimensions, we can compute the energy of the trivial wave function $\psi_N \equiv 1$. The expectation of (\ref{eq:HN}) in this state is of order $N^2$. It is only through correlations that the energy can approach (\ref{eq:Enbd}). Also in three dimensions, uncorrelated many-body wave functions have large energy, but in that case the difference with respect to the ground state energy is only of order $N$ ($N \widehat{V} (0)/2$ rather than $4\pi \frak{a}_0 N$). This observation is a sign that correlations in two-dimensions are stronger and play a more important role than in three dimensions (this creates problems in handling error terms that, in the three dimensional setting, were simply estimated in terms of the integral of the potential). 

The paper is organized as follows. In Sec. \ref{sec:fock} we introduce our setting, based on a description of orthogonal excitations of the condensate on a truncated Fock space. In Sec. \ref{sec:3} and \ref{sec:4} we show how to renormalize the excitation Hamiltonian, to regularise the singular interaction. Finally, in Sec. \ref{sec:main}, we show our main theorem. Sec. \ref{sec:RN} and App. \ref{sec:GN} contain the proofs of the results stated in \ref{sec:3} and \ref{sec:4}, respectively. Finally, in App. \ref{App:omega} we establish some  properties of the solution of the Neumann problem associated with the two-body potential $V$. 

\medskip

\thanks{{\it Acknowledgment.}  We are thankful to A. Olgiati for discussions on the two dimensional scattering equation.  C.C. and S.C. gratefully acknowledge the support from the GNFM Gruppo Nazionale per la Fisica Matematica - INDAM through the project ``Derivation of effective theories for large quantum systems''. B. S. gratefully acknowledges partial support from the NCCR SwissMAP, from the Swiss National Science Foundation through the Grant ``Dynamical and energetic properties of Bose-Einstein condensates'' and from the European Research Council through the ERC-AdG CLaQS.}

\section{The Excitation Hamiltonian}
\label{sec:fock}

Low-energy states of (\ref{eq:HN}) exhibit condensation in the zero-momentum mode $\ph_0$ defined by $\ph_0 (x) = 1$ for all $x \in \Lambda = [-1/2;1/2]^2$. Similarly as in \cite{LNSS,BBCS1,BBCS3}, we are going to describe excitations of the condensate on the truncated bosonic Fock space 
\[ \cF^{\leq N}_+ = \bigoplus_{k=0}^N L^2_\perp (\Lambda)^{\otimes_s k} \]
constructed on the orthogonal complement $L^2_\perp (\Lambda)$ of $\ph_0$ in $L^2 (\Lambda)$. To reach this goal, we define a unitary map $U_N : L^2_s (\Lambda^N) \to \cF_+^{\leq N}$ by requiring that $U_N \psi_N = \{ \alpha_0, \alpha_1, \dots , \alpha_N \}$, with $\alpha_j \in L^2_\perp (\Lambda)^{\otimes_s j}$, if 
\[ \psi_N= \alpha_0 \ph_0^{\otimes N} + \alpha_1 \otimes_s \ph_0^{\otimes (N-1)} + \dots + \alpha_N \]
With the usual creation and annihilation operators, we can write 
\[ U_N \, \psi_N = \bigoplus_{n=0}^N  (1-|\ph_0 \rangle \langle \ph_0|)^{\otimes n} \frac{a (\ph_0)^{N-n}}{\sqrt{(N-n)!}} \, \psi_N \]
for all $\psi_N \in L^2_s (\Lambda^N)$. It is then easy to check that $U_N^* : \cF_{+}^{\leq N} \to L^2_s (\Lambda^N)$ is given by 
\[ U_N^* \, \{ \alpha^{(0)}, \dots , \alpha^{(N)} \} = \sum_{n=0}^N \frac{a^* (\ph_0)^{N-n}}{\sqrt{(N-n)!}} \, \alpha^{(n)} \]
and that $U_N^* U_N = 1$, i.e. $U_N$ is unitary. 

With $U_N$, we can define the excitation Hamiltonian $\cL_N := U_N H_N U_N^*$, acting on a dense subspace of $\cF_+^{\leq N}$. To compute the operator $\cL_N$, we first write the Hamiltonian (\ref{eq:HN}) in momentum space, in terms of creation and annihilation operators $a_p^*, a_p$, for momenta $p \in \Lambda^* = 2\pi \bZ^2$. We find 
\begin{equation}\label{eq:Hmom} H_N = \sum_{p \in \Lambda^*} p^2 a_p^* a_p + \frac 12\sum_{p,q,r \in \Lambda^*} \widehat{V} (r/e^N) a_{p+r}^* a_q^* a_{p} a_{q+r} 
\end{equation}
where 
\[ \widehat{V} (k) = \int_{\bR^2} V (x) e^{-i k \cdot x} dx \] 
is the Fourier transform of $V$, defined for all $k \in \bR^2$ (in fact, (\ref{eq:HN}) is the restriction of (\ref{eq:Hmom}) to the $N$-particle sector of the Fock space). We can now determine $\cL_N$ using the following rules, describing the action of the unitary operator $U_N$ on products of a creation and an annihilation operator (products of the form $a_p^* a_q$ can be thought of as operators mapping $L^2_s (\Lambda^N)$ to itself). For any $p,q \in \Lambda^*_+ = 2\pi \bZ^2 \backslash \{ 0 \}$, we find (see \cite{LNSS}):
\begin{equation}\label{eq:U-rules}
\begin{split} 
U_N \, a^*_0 a_0 \, U_N^* &= N- \cN_+  \\
U_N \, a^*_p a_0 \, U_N^* &= a^*_p \sqrt{N-\cN_+ } \\
U_N \, a^*_0 a_p \, U_N^* &= \sqrt{N-\cN_+ } \, a_p \\
U_N \, a^*_p a_q \, U_N^* &= a^*_p a_q \,.
\end{split} \end{equation}
where $\cN_+ = \sum_{p\in \Lambda^*_+} a_p^* a_p$ is the number of particles operator on $\cF_+^{\leq N}$. We conclude that 
\begin{equation}\label{eq:cLN} \cL_N =  \cL^{(0)}_{N} + \cL^{(2)}_{N} + \cL^{(3)}_{N} + \cL^{(4)}_{N} \end{equation}
with
\begin{equation}\label{eq:cLNj} \begin{split} 
\cL_{N}^{(0)} =\;& \frac 12 \widehat{V} (0) (N-1)(N-\cN_+ ) + \frac 12 \widehat{V} (0) \cN_+  (N-\cN_+ ) \\
\cL^{(2)}_{N} =\; &\sum_{p \in \Lambda^*_+} p^2 a_p^* a_p + N\sum_{p \in \Lambda_+^*} \widehat{V} (p/e^N) \left[ b_p^* b_p - \frac{1}{N} a_p^* a_p \right] \\ &+ \frac N2 \sum_{p \in \Lambda^*_+} \widehat{V} (p/e^N) \left[ b_p^* b_{-p}^* + b_p b_{-p} \right] \\
\cL^{(3)}_{N} =\; & \sqrt{N} \sum_{p,q \in \Lambda_+^* : p+q \not = 0} \widehat{V} (p/e^N) \left[ b^*_{p+q} a^*_{-p} a_q  + a_q^* a_{-p} b_{p+q} \right] \\
\cL^{(4)}_{N} =\; & \frac 12\sum_{\substack{p,q \in \Lambda_+^*, r \in \Lambda^*: \\ r \not = -p,-q}} \widehat{V} (r/e^N) a^*_{p+r} a^*_q a_p a_{q+r} \,,
\end{split} \end{equation}
where we introduced generalized creation and annihilation operators  
\[\label{eq:bp-de} 
b^*_p = U_N a_p^* U_N^* = a^*_p \, \sqrt{\frac{N-\cN_+}{N}} , \qquad \text{and } \quad  b_p = U_N a_p U_N^* =  \sqrt{\frac{N-\cN_+}{N}} \, a_p 
\]
for all $p \in \Lambda^*_+$. 

On states exhibiting complete Bose-Einstein condensation in the zero-momentum mode $\ph_0$, we have $a_0 , a_0^* \simeq \sqrt{N}$ and we can therefore expect that $b_p^* \simeq a_p^*$ and that $b_p \simeq a_p$. From the canonical commutation relations for the standard creation and annihilation operators $a_p, a_p^*$, we find 
\begin{equation}\label{eq:comm-bp} \begin{split} [ b_p, b_q^* ] &= \left( 1 - \frac{\cN_+}{N} \right) \delta_{p,q} - \frac{1}{N} a_q^* a_p 
\\ [ b_p, b_q ] &= [b_p^* , b_q^*] = 0 \,.
\end{split} \end{equation}
Furthermore,
\[\label{eq:comm2} [ b_p, a_q^* a_r ] = \delta_{pq} b_r, \qquad  [b_p^*, a_q^* a_r] = - \delta_{pr} b_q^* \]
for all $p,q,r \in \Lambda_+^*$; this implies in particular that $[b_p , \cN_+] = b_p$, $[b_p^*, \cN_+] = - b_p^*$. It is also useful to notice that the operators $b^*_p, b_p$, like the standard creation and annihilation operators $a_p^*, a_p$, can be bounded by the square root of the number of particles operators; we find
\begin{equation*}
\| b_p \xi \|  \leq \| \cN_+^{1/2} \xi \| \,, \qquad 
\| b^*_p \xi \| \leq  \| (\cN_+ + 1)^{1/2} \xi \| 
\end{equation*}
for all $\xi \in \cF^{\leq N}_+$. Since $\cN_+  \leq N$ on $\cF_+^{\leq N}$, the operators $b_p^* , b_p$ are bounded, with $\| b_p \|, \| b^*_p \| \leq (N+1)^{1/2}$. 

\section{Quadratic renormalization} \label{sec:3}

From \eqref{eq:cLNj} we see that conjugation with $U_N$ extracts, from the original quartic interaction in (\ref{eq:Hmom}), some large constant and quadratic contributions, collected in $\cL^{(0)}_N$ and $\cL^{(2)}_N$ respectively. In particular, the expectation of $\cL_N$ on the vacuum state $\O$ is of order $N^2$, this being an indication of the fact that there are still large contributions to the energy hidden among cubic and quartic terms in $\cL^{(3)}_N$ and $\cL^{(4)}_N$.  Since $U_N$ only removes products of the zero-energy mode $\ph_0$, correlations among particles remain in the excitation vector $U_N \psi_N$. Indeed, correlations  play a crucial role in the two dimensional Gross-Pitaevskii regime and carry an energy of order $N^2$.

To take into account the short scale correlation structure on top of the condensate, we consider the ground state $f_{\ell}$ of the Neumann problem 
\be \label{fell-1}
\Big( -\D + \frac 12 V(x) \Big)  f_{\ell}(x) =  \l_{\ell}\, f_{\ell}(x)
\ee
on the ball $|x| \leq e^N \ell$, normalized so that $f_{\ell}(x) = 1$ for $|x|= e^N\ell$. Here and in the following we omit the $N$-dependence in the notation for $f_\ell$ and for $\lambda_\ell$. By scaling, we observe that $f_{\ell}(e^N\cdot)$ satisfies
\[ 
\Big( -\D + \frac {e^{2N}}{2} V(e^N x) \Big) f_{\ell}(e^Nx) =  e^{2N}\l_{\ell}\,  f_{\ell}(e^Nx)
\]
on the ball $|x| \leq \ell$. We choose $0 < \ell < 1/2$, so that the ball of radius $\ell$ is contained in the box $\Lambda= [-1/2 ; 1/2]^2$. We extend then $f_\ell (e^N.)$ to $\Lambda$, by setting $f_{N,\ell} (x) = f_\ell (e^Nx)$, if $|x| \leq \ell$ and $f_{N,\ell} (x) = 1$ for $x \in \Lambda$, with $|x| > \ell$. Then  
\begin{equation}
\label{tlf}
\Big( -\D + \frac {e^{2N}}{2} V(e^N x ) \Big) f_{N,\ell}(x) =  e^{2N}\l_{\ell}\,  f_{N,\ell}(x)\chi_\ell(x)\,,\end{equation}
where $\chi_\ell$ is the characteristic function of the ball of radius $\ell$. The Fourier coefficients of the function $f_{N,\ell}$ are given by 

\begin{equation*}
 \widehat{f}_{N,\ell} (p) := \int_\Lambda f_\ell (e^Nx) e^{-i p \cdot x} dx \end{equation*}

for all $p \in \L^*$.
We introduce also the function $w_\ell(x)=1-f_\ell(x)$ for $|x| \leq e^N\ell$ and extend it by setting $w_\ell(x)=0$ for $|x| >e^N\ell$. 
Its re-scaled version is defined by $w_{N,\ell}: \L \rightarrow \bR$ $w_{N,\ell}(x)=w_\ell(e^Nx)$ if $|x| \leq \ell$ and $w_{N,\ell}=0$ if $x \in \L$ with $|x| > \ell$.  \\

The Fourier coefficients of the re-scaled function $w_{N,\ell}$ are given by

\begin{equation} \label{eq:def-whatNell}
\hw_{N,\ell}(p)=\int_\L w_\ell(e^Nx)e^{-ip\cdot x}dx = e^{-2N}\hw_\ell\left(e^{-N}p\right).
\end{equation}
We find $\widehat{f}_{N,\ell}(p) = \d_{p,0}- e^{-2N}\hw_{\ell}(e^{-N}p)$. From the Neumann problem \eqref{tlf} 
we obtain 
\begin{equation}
\label{scattp}
-p^2e^{-2N}\hw_{\ell}(e^{-N}p) +\frac 12\sum_{q \in \L^*}\widehat V(e^{-N}(p-q))\widehat{f}_{N,\ell}(q) =e^{2N} \l_\ell\sum_{q\in\L^*}\widehat{\chi}_\ell(p-q)\widehat{f}_{N,\ell}(q).
\end{equation}
where we used the notation $\widehat{\chi}_\ell$ for the Fourier coefficients of the characteristic function on the ball of radius $\ell$. Note that $\widehat{\chi}_\ell(p)= \ell^2\, \widehat \chi(\ell p)$ with $\widehat \chi(p)$ the Fourier coefficients of the characteristic function on the ball of radius one.\\

In the next lemma, we collect some important properties of the solution of \eqref{fell-1}.

\begin{lemma} \label{lm:propomega}
Let $V\in L^3(\bR^2)$ be non-negative, compactly supported (with range $R_0$) and spherically symmetric, and denote its scattering length by $\aa$. Fix $0<\ell<1/2$, $N$ sufficiently large and let $f_{\ell}$ denote the solution of  \eqref{tlf}.  Then
\begin{enumerate}[i)]
\item 
\[ \label{eq:bounds-tlfell}
0 \leq f_{\ell}(x) \leq 1   \qquad \forall \, |x| \leq e^N \ell\,.
\]
\item We have
\be  \label{eq:eigenvalue}
\left |\l_{\ell} - \frac{2}{(e^N\ell)^2 \log(e^N\ell/\aa)} \right| \leq  \frac{C} {(e^N\ell)^2 \log^2(e^N\ell/\aa)} 
\ee
\item There exist a constant $C>0$ such that
\be \label{eq:intpotf}  
 \left| 
 \int \di x\, V(x) f_{\ell}(x) - \frac{4\pi}{\log(e^N\ell/\aa)} \right| \leq   \frac{C} {\log^2(e^N\ell/\aa)} 
\ee
	\item There exists a constant $C>0$ such that 
\be \begin{split} \label{eq:boundsomega}
|w_{\ell}(x)| &\leq  \left\{ \begin{array}{ll} C \qquad &\text{if } |x| \leq R_0 \\ 
C  \, \frac{\log(e^N\ell/|x|) }{\log (e^N\ell/\aa)} \quad & \text{if } R_0 \leq |x|\leq e^N \ell \end{array} 
\right. \\
 |\nabla w_{\ell}(x)| &\leq \frac{C}{\log (e^N\ell /\aa)} \frac 1 {|x| + 1}  \qquad \text{for all } \, |x| \leq e^N \ell
\end{split}\ee

\item Let $w_{N,\ell}= 1- f_{N,\ell}$ with $f_{\ell, N}=f_\ell(e^N x)$. Then 
the Fourier coefficients of the function $w_{N,\ell}$ defined in \eqref{eq:def-whatNell} are such that 
\begin{equation}
\label{hwbound}
|\hw_{N,\ell}(p)| \leq \fra{C}{p^2 \log(e^N\ell/\aa)}.
\end{equation}
	\end{enumerate}
\end{lemma}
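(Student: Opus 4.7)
The plan is to prove the five parts in order, as (iii)--(v) all lean on the estimates in (i) and (ii). Throughout, I set $R = e^N\ell$ and exploit the spherical symmetry of $f_\ell$.

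For part (i), I would argue that $f_\ell$ can be taken non-negative using the variational characterization of $\lambda_\ell$ as the Rayleigh quotient minimum (replace $f_\ell$ by $|f_\ell|$, which has the same energy), and then $f_\ell > 0$ in the interior follows from the strong maximum principle for $-\Delta + V/2 - \lambda_\ell$, using that $\lambda_\ell$ is very small and positive. For the upper bound $f_\ell \leq 1$, I would test with $\min(f_\ell,1)$: the new function agrees with $f_\ell$ on the boundary and lowers the Dirichlet energy strictly on the set $\{f_\ell>1\}$ (where also $V |f_\ell|^2/2 \geq V/2$), contradicting the minimality unless $\{f_\ell>1\}$ has measure zero.

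For part (ii), the key object is the scattering solution $\phi^{(R)}$ satisfying $(-\Delta + V/2)\phi^{(R)} = 0$ with $\phi^{(R)}\equiv 1$ on $\partial B_R$, for which $\phi^{(R)}(r) = \log(r/\aa)/\log(R/\aa)$ outside the range $R_0$. I would use $\phi^{(R)}$ as a trial function in the Rayleigh quotient defining $\lambda_\ell$: the numerator becomes, after integration by parts and use of the scattering equation, the boundary term $\int_{\partial B_R}\phi^{(R)}\partial_n\phi^{(R)} dS = 2\pi/\log(R/\aa)$, and the denominator is $\pi R^2(1+O(1/\log(R/\aa)))$, yielding the upper bound of order $2/(R^2\log(R/\aa))$. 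For the matching lower bound I would test instead against $f_\ell$ itself by multiplying the scattering equation for $\phi^{(R)}$ by $f_\ell$ and integrating; the boundary contribution uses the Neumann condition $\partial_n f_\ell = 0$ at $r=R$, and after rearrangement one extracts $\lambda_\ell$ with the same leading constant and a controlled error. The error bookkeeping requires only crude $L^\infty$ bounds from (i) and the explicit form of $\phi^{(R)}$.

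Part (iii) reduces to integrating the eigenvalue equation over $B_R$ and invoking the Neumann boundary condition, giving $\tfrac12\int V f_\ell = \lambda_\ell\int_{B_R} f_\ell$; since $\int_{B_R} f_\ell = \pi R^2 - \int w_\ell$ and $\int w_\ell = O(R^2/\log(R/\aa))$ by the pointwise bound in (iv), the claimed equality follows from (ii). For part (iv), outside $R_0$ the radial ODE $-w''-w'/r = -\lambda_\ell(1-w)$ with the boundary data $w(R)=0$, $w'(R)=0$ is essentially perturbative around the harmonic solution $\log(R/r)/\log(R/\aa)$, so I would write $w_\ell$ as this leading term plus an explicit integral correction controlled by $\lambda_\ell R^2$; on $|x|\leq R_0$ I would use the already established boundedness of $f_\ell$ together with the fact that $w_\ell$ is continuous across $R_0$, and for $\nabla w_\ell$ inside the support I would invoke $V\in L^3$ together with a standard elliptic regularity / Green's-function representation argument for the Neumann problem. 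The gradient decay $|\nabla w_\ell|\lesssim 1/(|x|+1)/\log(R/\aa)$ outside $R_0$ is direct from the explicit form. I expect the hardest step to be the pointwise gradient bound inside the singular region, where $V$ has only $L^3$ integrability; this is probably where the spherical symmetry and the explicit radial ODE must be leveraged carefully.

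Finally, for part (v) I would use the Fourier-side identity \eqref{scattp} rewritten as
\[
p^2\,\hw_{N,\ell}(p) = \tfrac12\,\widehat{\,e^{2N}V(e^N\cdot)\,f_{N,\ell}\,}(p) - e^{2N}\l_\ell\,\widehat{\,\chi_\ell f_{N,\ell}\,}(p),
\]
and bound each Fourier coefficient by the $L^1(\Lambda)$ norm of the corresponding function. The first term is dominated by $\int V(y) f_\ell(y)\,dy \leq C/\log(R/\aa)$ by (iii); the second by $e^{2N}\lambda_\ell \cdot \pi\ell^2 = \pi\lambda_\ell R^2 \leq C/\log(R/\aa)$ by (ii). Combining these yields $|p^2\hw_{N,\ell}(p)|\leq C/\log(R/\aa)$, which is \eqref{hwbound}.
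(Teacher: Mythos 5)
Parts (iii) and (v) of your proposal are correct and follow the paper's route. For part (ii) you take a genuinely different and more elementary path: instead of the Bessel-function trial state $\Psi_R(x)=\psi_k(m_R(x))$ constructed in Appendix B, you test the Rayleigh quotient with the zero-energy scattering solution $\phi^{(R)}$ for the upper bound, and for the lower bound you pair the two equations to obtain the Green identity
\[
\lambda_\ell\int_{B_R}f_\ell\,\phi^{(R)}\,\di x=\int_{\partial B_R}\phi^{(R)}\,\partial_n\phi^{(R)}\,\di S=\frac{2\pi}{\log(R/\aa)}\,,\qquad R=e^N\ell\,,
\]
whence $\lambda_\ell\geq 2/(R^2\log(R/\aa))$ from $\int f_\ell\phi^{(R)}\leq\pi R^2(1+o(1))$. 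Both sides carry an $O(\log^{-1})$ relative error, so \eqref{eq:eigenvalue} follows. The paper's heavier construction is only needed to extract the explicit coefficient $3/4$ of the second-order term in \eqref{eq:eigenvalue2}, which the statement of this lemma does not require; your shortcut is entirely adequate here.

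The genuine gap concerns the behaviour of $f_\ell$ inside $B_{R_0}$, and it surfaces in both (i) and (iv). In (i), the truncation to $\min(f_\ell,1)$ does not obviously lower the Rayleigh quotient: it decreases both the quadratic form $\int|\nabla u|^2+\tfrac12\int V u^2$ and the normalization $\int u^2$, and since $\lambda_\ell>0$ the quantity $\lambda_\ell\int_{\{f_\ell>1\}}(f_\ell^2-1)$ on the denominator side competes with the loss $\int_{\{f_\ell>1\}}\big(|\nabla f_\ell|^2+\tfrac12V(f_\ell^2-1)\big)$ in the numerator; the truncation argument only yields $\|(f_\ell-1)_+\|_{L^2}\to 0$, not the pointwise bound $f_\ell\leq 1$. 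Outside $B_{R_0}$ the bound is clean — $(rf_\ell')'=-\lambda_\ell rf_\ell<0$ together with the Neumann condition at $r=R$ gives monotonicity — but inside $B_{R_0}$ one needs, as in the paper, the subharmonicity $\Delta f_\ell\geq -C\lambda_\ell$ combined with the refined boundary value $f_\ell(R_0)\leq 1-c\,\lambda_\ell R^2$ and the maximum principle. This same refined inner information is exactly what your treatment of the gradient bound in (iv) is missing. With only the crude bound $0\leq f_\ell\leq 1$ available, a Green's-function or radial-ODE argument on $B_{R_0}$ produces $|\nabla w_\ell|\lesssim\|V\|_{L^3}$, which is $O(1)$ and lacks the required factor $1/\log(R/\aa)\to 0$. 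The decisive intermediate fact is
\[
f_\ell(x)\leq\frac{C}{\log(R/\aa)}\qquad\text{for all }|x|\leq R_0\,,
\]
obtained by pushing the boundary value $f_\ell(R_0)=O(1/\log(R/\aa))$ — which your outer perturbative radial analysis does supply — inward via the maximum principle. You correctly flag the inner gradient bound as ``the hardest step,'' but leave this mechanism unnamed: it is the smallness of $f_\ell$ on $\partial B_{R_0}$, propagated inward by subharmonicity, that furnishes the $1/\log$ factor in \eqref{eq:boundsomega} and, at the same time, closes the pointwise bound in part (i).
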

\begin{proof}
	 The proof of points i)-iv) is deferred in Appendix \ref{App:omega}. To prove point v) we use the scattering equation \eqref{scattp}:
	\begin{equation*}
	\hw_\ell(e^{-N}p) =\frac{e^{2N}}{2p^2}\sum_{q \in \L^*}\widehat V(e^{-N}(p-q))\widehat{f}_{N,\ell}(q) - \fra{e^{4N}}{p^2} \l_\ell\sum_{q\in\L^*}\widehat{\chi}_\ell(p-q)\widehat{f}_{N,\ell}(q).
	\end{equation*}
	Using the fact that $ e^{2N}\l_\ell \leq  C \ell^{-2} |\ln(e^N\ell/\aa)|^{-1}$ and that $0 \leq f_\ell \leq 1$, we end up with
	\begin{equation*}
	\begin{split}
	|\hw_\ell(e^{-N}p)| 	& \leq \frac{e^{2N}}{2p^2}\left[ \big|(\widehat V(e^{-N}\cdot)\ast\widehat{f}_{N,\ell})(p)\big| + 2 e^{2N} \l_\ell \, \big| (\widehat{\chi}_\ell\ast\widehat{f}_{N,\ell})(p)\big|\right]\\
	& \leq \frac{e^{2N}}{2p^2}\left[\int V(x)f_{\ell}(x)dx + C \ell^{-2} |\log(e^N\ell/\aa)|^{-1} \,\int \c_\ell(x)f_{\ell}(e^Nx)dx\right]\\
	& \leq \frac{Ce^{2N}}{p^2\log(e^N\ell/\aa)}.
	\end{split}
	\end{equation*}
\end{proof}

We now define $\check{\eta} : \L \to \bR$ through
\be \label{eq:defchecketa}
\check{\eta}(x)= - N w_{N,\ell}(x) = -N w_\ell(e^N x)  \,.
\ee
With (\ref{eq:boundsomega}) we find 
\be \label{eq:etax0}
|\check{\eta}(x)| \leq  \left\{ \begin{array}{ll} C N \qquad &\text{if } |x| \leq e^{-N} R_0 \\ C \log (\ell / |x| ) \qquad &\text{if } e^{-N} R_0 \leq |x| \leq \ell \end{array} \right. 
\ee
and in particular, recalling that $ e^{-N}R_0 < \ell \leq 1/2$,
\begin{equation}\label{eq:etax}
|\check{\eta} (x)| \leq C  \max (N , \log (\ell / |x|)) \leq C N
\end{equation}
for all $x \in \L$. Using \eqref{eq:etax0} we find 
\begin{equation*}\label{eq:L2eta} \| \eta \|^2 = \| \check{\eta} \|^2 \leq C \int_{|x|\leq \ell} |\log(\ell/|x|)|^2 d^2x \leq C \ell^2 \int_0^1  (\log r)^2 r dr   \leq C \ell^2\,. \end{equation*}
In the following we choose $\ell=N^{-\a}$, for some $\a>0$ to be fixed later, so that 
\begin{equation}\label{eq:etaHL2}
\| \eta \| \leq C N^{-\a}\,.
\end{equation}
This choice of $\ell$ will be crucial for our analysis, as commented below. 
 Notice, on the other hand, that the $H^1$-norms of $\eta$ diverge, as $N \to \infty$. From \eqref{eq:defchecketa} and Lemma \ref{lm:propomega}, part iv) we find 
\begin{equation*} \label{eq:H1eta}
\begin{split}
\| \check{\eta}\|^2_{H_1}  = \int_{|x|\leq \ell} e^{2N} N^2 | (\nabla w_{\ell}) (e^N x )|^2 d^2x &= \int_{|x|\leq e^{N}\ell} N^2 | \nabla w_{\ell}(x)|^2 d^2x \\
& \leq C \int_{|x|\leq e^{N} \ell} \frac{1}{(|x|+1)^2}\,d^2x \leq C N
\end{split}\end{equation*}
for $N \in \bN$ large enough. We denote with $\eta: \L^* \to \bR$ the Fourier transform of $\check{\eta}$, or equivalently
\be \label{eq:defeta}
\eta_p = -N \widehat{w}_{N,\ell} (p) = - N e^{-2N} \widehat{w}_\ell(p/e^N)\,.
\ee
With \eqref{hwbound} we can bound (since $\ell = N^{-\alpha}$) 
\be \label{eq:modetap}
|\eta_p| \leq \frac{C}{|p|^2}
\ee
for all $p \in \L_+^*=2\pi \bZ^2 \backslash \{0\}$, and  for some constant $C>0$ independent of $N$, if $N$ is large enough. From \eqref{eq:etaHL2} we also have
\be \label{eq:etapio}
\| \eta \|_\io \leq C N^{-\a}\,.
\ee
Moreover, (\ref{scattp}) implies the relation  
\begin{equation}\label{eq:eta-scat0}
\begin{split} 
p^2 \eta_p + \frac{N}{2} (\widehat{V} (./e^N) *\widehat{f}_{N,\ell}) (p) = Ne^{2N} \l_\ell (\widehat{\chi}_\ell * \widehat{f}_{N,\ell}) (p)
\end{split} \end{equation}
or equivalently, expressing also the other terms through the coefficients $\eta_p$, 
\begin{equation}\label{eq:eta-scat}
\begin{split} 
p^2 \eta_p + \frac{N}{2} \widehat{V} (p/e^N) & + \frac{1}{2} \sum_{q \in \Lambda^*} \widehat{V} ((p-q)/e^N) \eta_q \\ &\hspace{2cm} = Ne^{2N} \lambda_\ell \widehat{\chi}_\ell (p) +e^{2N}  \lambda_\ell \sum_{q \in \Lambda^*} \widehat{\chi}_\ell (p-q) \eta_q\,.
\end{split} \end{equation}

We will mostly use the coefficients $\eta_p$ with $p\neq 0$. Sometimes, however, it will be useful to have an estimate on $\eta_0$ (because Eq. \eqref{eq:eta-scat} involves $\eta_0$). From \eqref{eq:defeta} and Lemma \ref{lm:propomega}, part iv) we find
\begin{equation}\label{eq:wteta0}  |\eta_0| \leq N \int_{|x|\leq \ell} w_\ell (e^Nx) d^2x \leq C \int_{|x|\leq \ell} \log(\ell/|x|) d^2x  + C N e^{-N}  \leq C \ell^2\,. \end{equation}

With the coefficients \eqref{eq:defeta} we define the antisymmetric operator 
\begin{equation}\label{eq:defB} 
B = \frac{1}{2} \sum_{p\in \L^*_+}  \left( \eta_p b_p^* b_{-p}^* - \bar{\eta}_p b_p b_{-p} \right) \, \end{equation}
and we consider the unitary operator
\begin{equation}\label{eq:eBeta} 
e^{B} = \exp \left[ \frac{1}{2} \sum_{p \in \Lambda^*_+}   \left( \eta_p b_p^* b_{-p}^* - \bar{\eta}_p  b_p b_{-p} \right) \right] \,.
\end{equation}
We refer to operators of the form (\ref{eq:eBeta}) as generalized Bogoliubov transformations.  In contrast with the standard Bogoliubov transformations
\begin{equation}\label{eq:wteBeta} e^{\wt{B}} = \exp \left[  \frac{1}{2} \sum_{p\in \Lambda^*_+}  \left( \eta_p a_p^* a_{-p}^* - \bar{\eta}_p a_p a_{-p} \right) \right] \end{equation}
defined in terms of the standard creation and annihilation operators, operators of the form  (\ref{eq:eBeta}) leave the truncated Fock space $\cF_+^{\leq N}$ invariant. 
On the other hand, while the action of standard Bogoliubov transformation 
on creation and annihilation operators is explicitly given by 
\[\label{eq:act-Bog} e^{-\wt{B}} a_p e^{\wt{B}}  = \cosh (\eta_p) a_p + \sinh (\eta_p) a_{-p}^* \,   \]
there is no such formula describing the action of generalized Bogoliubov transformations.  

Conjugation with \eqref{eq:eBeta} leaves the number of particles essentially invariant, as confirmed by the following lemma.
\begin{lemma}\label{lm:Ngrow} 
Assume $B$ is defined as in \eqref{eq:defB}, with $\eta \in \ell^2 (\L^*)$ and $\eta_p = \eta_{-p}$ for all $p \in \L^*_+$. Then, for every $n \in \bN$ there exists a constant $C > 0$ such that, on $\cF_+^{\leq N}$, 
	\begin{equation}\label{eq:bd-Beta} 
e^{-B} (\cN_+ +1)^{n} e^{B} \leq C e^{C \| \eta \|} (\cN_+ +1)^{n}  \,.
	\end{equation}
as an operator inequality on $\cF^{\leq N}_+$. 
\end{lemma}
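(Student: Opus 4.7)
The plan is a Grönwall argument on
\[ \varphi_\xi(s) := \langle \xi, e^{-sB}(\cN_++1)^n e^{sB}\xi\rangle, \qquad s \in [0,1], \]
for arbitrary $\xi \in \cF_+^{\leq N}$. The hypothesis $\eta_p = \eta_{-p}$ gives $B^* = -B$, so $e^{sB}$ is unitary and $\varphi_\xi'(s) = \langle \zeta_s, [(\cN_++1)^n, B]\zeta_s\rangle$ with $\zeta_s := e^{sB}\xi$. The whole content of the lemma is then the quadratic-form bound
\[ \big|\langle \zeta, [(\cN_++1)^n, B]\zeta\rangle\big| \leq C_n \|\eta\|\,\langle \zeta, (\cN_++1)^n\zeta\rangle \qquad \forall \zeta\in\cF_+^{\leq N}, \]
from which Grönwall produces $\varphi_\xi(1) \leq e^{C_n\|\eta\|}\varphi_\xi(0)$, i.e.\ \eqref{eq:bd-Beta}.

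For the commutator I use the shift rules $h(\cN_+) b_p^* = b_p^* h(\cN_++1)$ and $h(\cN_+) b_p = b_p h(\cN_+-1)$ (following from $[b_p^*,\cN_+] = -b_p^*$, $[b_p,\cN_+]=b_p$), which give
\[ [(\cN_++1)^n, b_p^* b_{-p}^*] = b_p^* b_{-p}^*\, G_n(\cN_+), \qquad G_n(k) := (k+3)^n - (k+1)^n, \]
and analogously $[(\cN_++1)^n, b_p b_{-p}] = -b_p b_{-p}\, \tilde G_n(\cN_+)$ with $\tilde G_n(k) := (k+1)^n - (k-1)^n$; both weights are non-negative and satisfy $G_n,\tilde G_n \leq C_n(\cN_++1)^{n-1}$. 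Writing $G_n = G_n^{1/2}\cdot G_n^{1/2}$ and applying Cauchy--Schwarz (after shifting one $G_n^{1/2}$-factor through the $b$-operators, which only changes the argument of $\cN_+$ by bounded integers and leaves the bound $G_n^{1/2}(\cN_++c) \leq C_n(\cN_++1)^{(n-1)/2}$ intact) yields
\[ \Big|\sum_p \eta_p \langle \zeta, b_p^* b_{-p}^* G_n(\cN_+)\zeta\rangle\Big| \leq \Big(\sum_p\|G_n^{1/2}(\cN_+) b_p \zeta\|^2\Big)^{1/2}\Big(\sum_p |\eta_p|^2 \|G_n^{1/2}(\cN_+) b_{-p}^*\zeta\|^2\Big)^{1/2}. \]
I then invoke three operator inequalities on $\cF_+^{\leq N}$: $\sum_p b_p^* b_p \leq \cN_+$ (since $b_p^* b_p \leq a_p^* a_p$), which bounds the first factor; $[b_{-p},b_{-p}^*] \leq 1$, which splits the second factor into a $\|\zeta\|^2$-piece producing $\|\eta\|^2\|\zeta\|^2$ and a $\|b_{-p}\zeta\|^2$-piece; and the key modewise bound
\[ \sum_p |\eta_p|^2 a_p^* a_p \leq \|\eta\|^2 \cN_+, \]
valid because $a_p^* a_p \leq \cN_+$ for each $p$. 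Multiplying everything out yields $C_n\|\eta\|\langle\zeta,(\cN_++1)^n\zeta\rangle$; the h.c.\ contribution and the $b_p b_{-p}$ term are treated identically.

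The main technical subtlety is the Cauchy--Schwarz step: $|\eta_p|^2$ must be paired with the creation-operator factor rather than the annihilation-operator factor. A naive $1\cdot |\eta_p|$ splitting would force a bound on $\sum_p \|b_{-p}^*\zeta\|^2$, which diverges since the mode index runs over the infinite set $\Lambda_+^*$ and each term carries a $\|\zeta\|^2$ contribution from $[b_{-p},b_{-p}^*] \leq 1$. The inequality $\sum_p |\eta_p|^2 a_p^* a_p \leq \|\eta\|^2\cN_+$ is precisely the device that exploits the telescoping identity $\sum_p a_p^* a_p = \cN_+$ to turn this divergent modewise sum into a convergent estimate carrying the $\ell^2$-norm $\|\eta\|$. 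For $n \geq 2$ the only additional bookkeeping is following the weight $G_n^{1/2}(\cN_+)$ through the shifts of $\cN_+$ across $b_p^*, b_p$ by bounded integer amounts, which is routine.
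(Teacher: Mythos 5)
Your Gr\"onwall/commutator argument is correct. The paper does not give its own proof of this lemma but refers to \cite[Lemma~3.1]{BS} (see also \cite{Sei}), and the argument there follows exactly the scheme you use: differentiate $\langle\xi, e^{-sB}(\cN_++1)^n e^{sB}\xi\rangle$ in $s$, compute $[(\cN_++1)^n,B]$ via the number-shift identities $h(\cN_+)b_p^\sharp = b_p^\sharp h(\cN_+\mp1)$, and close the resulting quadratic-form estimate with Cauchy--Schwarz and the bounds $\sum_p b_p^*b_p\leq\cN_+$, $[b_p,b_p^*]\leq 1$, $\sum_p|\eta_p|^2 a_p^*a_p\leq\|\eta\|^2\cN_+$. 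Your remark about where the weight $|\eta_p|^2$ must be placed in the Cauchy--Schwarz split is the right technical point; the rest of the bookkeeping of the polynomial weights $G_n,\tilde G_n$ through the mode shifts is, as you say, routine.
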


The proof of \eqref{eq:bd-Beta} can be found in \cite[Lemma 3.1]{BS} (a similar result has been previously established in \cite{Sei}).

With the generalized Bogoliubov transformation $e^{B} : \cF_+^{\leq N} \to \cF^{\leq N}_+$, we define a new, renormalized, excitation Hamiltonian $\cG_{N,\a} : \cF^{\leq N}_+ \to \cF^{\leq N}_+$ by setting 
\begin{equation}\label{eq:GN} 
\cG_{N,\a} = e^{-B} \cL_N e^{B} = e^{-B} U_N H_N U_N^* e^{B}\,.
\end{equation}

In the next proposition, we collect important properties $\cG_{N,\a}$. We will use the notation 
\begin{equation}\label{eq:KcVN}  
\cK = \sum_{p \in \Lambda^*_+} p^2 a_p^* a_p \qquad \text{and } \quad \cV_N = \frac{1}{2} \sum_{\substack{p,q \in \Lambda_+^*, r \in \Lambda^* : \\ r \not = -p, -q}} \widehat{V} (r/e^N)  a_{p+r}^* a_q^* a_{q+r} a_p \end{equation}
for the kinetic and potential energy operators, restricted on $\cF_+^{\leq N}$, and  $\cH_N = \cK + \cV_N$. We also introduce a renormalized interaction potential $\o_N\in L^\infty(\L)$, which is defined as the function with Fourier coefficients $\widehat{\o}_N  $
	\be  \label{eq:defomegaN}
\widehat{\o}_N(p) :=  g_N \,  \widehat{\chi}(p/N^\a)\,, \qquad  g_N =2  N^{1-2\a} e^{2N} \l_\ell
\ee
for any $p \in \L^*_+$, and
	\be \label{eq:defomega0}
 \widehat{\o}_N(0)= g_N  \widehat{\chi}(0)  =  \pi g_N\,.
\ee
with $\widehat \chi(p)$ the Fourier coefficients of the characteristic function of the ball of radius one. From \eqref{eq:eigenvalue} and $\ell=N^{-\a}$ one has $|g_N|\leq C$. Note in particular that the potential $\widehat \o_N(p)$ decays on momenta of order $N^\a$, which are much smaller than $e^N$.  From Lemma \ref{lm:propomega} parts i) and iii)  we find 
\be \label{eq:omegahat0}
\big| \widehat{\o}_N(0) -   N \| V f_\ell \|_1  \big| \leq \frac C {N}\,, \qquad  
\Big|\, \widehat{\o}_N(0) -  4 \pi \left(1 + \a \,\tfrac{\log N}{N} \right) \, \Big| \leq \frac{ C }{N}\,.
\ee

\begin{prop} \label{prop:GN} Let $V\in L^3(\bR^2)$ be compactly supported, pointwise non-negative and spherically symmetric. Let $\cG_{N,\a}$ be defined as in \eqref{eq:GN} and define
\begin{equation}\label{eq:GNeff} \begin{split}   \cG^{\text{eff}}_{N,\a}  :=\; &    \frac 12 \widehat{\o}_N(0) (N-1)\left(1-\frac{\cN_+}{N}\right) + \left[2 N\widehat{V} (0)-\frac 12 \widehat{\o}_N(0) \right] \, \cN_+ \, \left(1-\frac{\cN_+}{N}\right) \\
&+ \frac 1 2 \sum_{ p\in  \L_+^*}\widehat{\omega}_N(p)(b_pb_{-p}+\hc)   + \sqrt{N} \sum_{\substack{p,q \in \L^*_+ :\\ p + q \not = 0}} \widehat{V} (p/e^N) \left[ b_{p+q}^* a_{-p}^* a_q  + \hc \right]  \\
& +\cH_N  \,.
\end{split} \end{equation}
Then there exists a constant $C > 0$ such that $\cE_{\cG}= \cG_{N,\a} - \cG^\text{eff}_{N,\a}$ is bounded by 
	\begin{equation} \begin{split}\label{eq:GeffE}
| \langle \xi, \cE_{\cG}\, \xi \rangle | \leq\; & C \big( N^{1/2 -\a} + N^{-1}(\log N)^{1/2} \big) \| \cH_N^{1/2}\xi\| \| (\cN_++1)^{1/2}\xi \| \\
&+ C N^{1-\a}\|(\cN_++1)^{1/2}\xi \| ^2 +C \| \xi\|^2
\end{split}\end{equation}
for all $\a>1$, $\xi \in \cF_+^{\leq N}$ and $N\in \bN$ large enough.
\end{prop}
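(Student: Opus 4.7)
I would split $\cL_N = \cL_N^{(0)}+\cL_N^{(2)}+\cL_N^{(3)}+\cL_N^{(4)}$ as in \eqref{eq:cLNj} and conjugate each piece separately through the Duhamel identity
\[
e^{-B} A\, e^{B} \;=\; A \;+\; \int_0^1 e^{-sB}\,[A,B]\, e^{sB}\,ds,
\]
iterated as many times as needed. Since $\|\eta\|\leq CN^{-\alpha}$ by \eqref{eq:etaHL2}, $|\eta_p|\leq C|p|^{-2}$ by \eqref{eq:modetap}, and since by Lemma \ref{lm:Ngrow} conjugation by $e^{sB}$ preserves the moments of $\cN_+$, the general principle is that each additional commutator with $B$ costs a power of $\|\eta\|$ in operator norm; only the zeroth and first order Taylor contributions need to be tracked, while all higher-order remainders are absorbed into $\cE_\cG$.

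\textbf{The decisive renormalization} concerns the off-diagonal quadratic term. From $\cL_N^{(2)}$ we inherit $(N/2)\sum_p \widehat{V}(p/e^N)(b_p^* b_{-p}^*+\mathrm{h.c.})$, whose coefficient does not decay in $p$, since $\widehat{V}(p/e^N)\approx \widehat{V}(0)$ for every $|p|\ll e^N$. Writing $\cK=\sum_p p^2 a_p^*a_p$, using $[\cK,b_p^*]=p^2 b_p^*$ and $\eta_p=\eta_{-p}$, the first-order Taylor term of $e^{-B}\cK e^{B}$ contributes $\sum_p p^2\eta_p(b_p^* b_{-p}^*+\mathrm{h.c.})$. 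Adding these two contributions and applying the scattering equation \eqref{eq:eta-scat},
\[
p^2\eta_p + \tfrac{N}{2}\widehat{V}(p/e^N) \;=\; Ne^{2N}\lambda_\ell\,\widehat{\chi}_\ell(p) - \tfrac{1}{2}\!\sum_{q}\widehat{V}((p-q)/e^N)\eta_q + e^{2N}\lambda_\ell\!\sum_{q}\widehat{\chi}_\ell(p-q)\eta_q,
\]
the leading piece collapses to $\tfrac{1}{2}\widehat{\omega}_N(p)$ by the definition \eqref{eq:defomegaN} together with $\widehat{\chi}_\ell(p)=N^{-2\alpha}\widehat{\chi}(p/N^\alpha)$, producing the off-diagonal term of $\cG^{\text{eff}}_{N,\alpha}$. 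The two leftover convolutions of $\widehat{V}$ and $\widehat{\chi}_\ell$ with $\eta$ are either cancelled by matching contributions arising from the Duhamel commutators $[\cL_N^{(3)},B]$ and $[\cL_N^{(4)},B]$, or absorbed into $\cE_\cG$ via \eqref{eq:modetap}--\eqref{eq:wteta0}.

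\textbf{Diagonal, cubic, quartic pieces.} For $\cL_N^{(0)}$, conjugation with $e^B$ produces, via $[\cN_+,B]=\sum_p \eta_p(b_p^* b_{-p}^*+b_p b_{-p})$, a first-order contribution that is absorbed into the off-diagonal quadratic term treated above; the zeroth order of $\cL_N^{(0)}$, together with the diagonal part of $\cL_N^{(2)}$ (rewriting $Nb_p^* b_p = a_p^* a_p(N-\cN_+)$), yields the two $\cN_+$-polynomial summands in \eqref{eq:GNeff} with coefficients $\tfrac{1}{2}\widehat{\omega}_N(0)$ and $2N\widehat{V}(0)-\tfrac{1}{2}\widehat{\omega}_N(0)$, the renormalization of $\widehat{V}(0)$ into $\widehat{\omega}_N(0)$ being consistent with \eqref{eq:omegahat0}. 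The cubic term $\cL_N^{(3)}$ is retained as in \eqref{eq:GNeff} at zeroth order; its commutator with $B$ compensates part of the leftover $\widehat{V}\ast\eta$ from the main step. Finally, $\cL_N^{(4)}=\cV_N$ contributes directly to $\cH_N$, and the first-order Taylor term $[\cV_N,B]$ supplies the last quadratic $\eta$-weighted expression needed to close the renormalization.

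\textbf{Error bookkeeping and main obstacle.} Every term not appearing in $\cG^{\text{eff}}_{N,\alpha}$ is handled by Cauchy--Schwarz, splitting one factor into something bounded by $\|\cH_N^{1/2}\xi\|$ (whenever a factor $|p|$ or a copy of $\cV_N$ is available) and the other by $\|(\cN_++1)^{1/2}\xi\|$; Duhamel remainders are dominated through Lemma \ref{lm:Ngrow}. The main obstacle lies in the cubic and quartic sectors: neither $\widehat{V}(p/e^N)$ in $\cL_N^{(3)}$ nor $\widehat{V}(r/e^N)$ in $\cL_N^{(4)}$ decays in the corresponding momentum, so a naive application of Cauchy--Schwarz loses a divergent logarithm; instead one has to split low and high momenta in the exchange variable and use the scattering equation \emph{inside} the cubic commutator to trade a copy of $\widehat{V}\eta$ for a factor involving $p^2\eta_p$ that can be paired with $\cH_N^{1/2}$. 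The choice $\ell=N^{-\alpha}$ with $\alpha>1$ is precisely what converts the $L^2$-smallness of $\eta$ into the gain $N^{1/2-\alpha}$ of the cross term in \eqref{eq:GeffE}, while the $N^{-1}(\log N)^{1/2}$ contribution tracks the sub-leading term in the expansion \eqref{eq:omegahat0} of $\widehat{\omega}_N(0)$.
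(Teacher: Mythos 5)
Your high-level organization (conjugate piece by piece, let the scattering equation renormalize the off-diagonal quadratic sum) matches the paper's, but there are two substantive gaps in the plan as stated.

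First, the heuristic ``each additional commutator with $B$ costs a power of $\|\eta\|$, so only zeroth and first order Taylor contributions need to be tracked'' fails for $\cK$. The first commutator $[\cK,B]=\sum_p p^2\eta_p(b_pb_{-p}+\hc)$ already has coefficients $p^2\eta_p=O(1)$ (not $O(\|\eta\|)$), which you correctly keep. But the second iterated commutator produces the \emph{constant} $\sum_{p}p^2\eta_p^2\big(\tfrac{N-\cN_+}{N}\big)\big(\tfrac{N-\cN_+-1}{N}\big)$, and since $\|\check\eta\|_{H^1}^2=\sum_p p^2\eta_p^2\leq CN$ this constant is of order $N$ --- the same size as the leading energy $\tfrac 12\widehat{\o}_N(0)N$. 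It cannot be absorbed into $\cE_\cG$; it is essential to the renormalization of the constant term. In the paper this is precisely the content of Prop.~\ref{prop:K}, eq.~\eqref{eq:K-dec}, where the $\sum p^2\eta_p^2$ term is isolated and later combined in Subsection~\ref{sub:proofGN} with the constant $N\sum_p\widehat V(p/e^N)\eta_p$ from $\cL^{(2,V)}_N$ and the constant $\tfrac12\sum_{p,r}\widehat V(r/e^N)\eta_{p+r}\eta_p$ from $\cL^{(4)}_N$; only after applying the scattering equation \eqref{eq:eta-scat} to the full sum $\sum_p \eta_p\big[p^2\eta_p+N\widehat V(p/e^N)+\tfrac12(\widehat V\ast\eta)(p)\big]$ does the coefficient $\widehat{\o}_N(0)$ emerge. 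Your account of this step --- that $\cL_N^{(0)}$ plus the diagonal of $\cL_N^{(2)}$ ``yield'' the coefficients $\tfrac 12\widehat{\o}_N(0)$ and $2N\widehat V(0)-\tfrac12\widehat{\o}_N(0)$, ``the renormalization of $\widehat V(0)$ into $\widehat{\o}_N(0)$ being consistent with \eqref{eq:omegahat0}'' --- glosses over this mechanism. Without the $O(N)$ second-order constant from $\cK$ your bookkeeping for the constant term cannot close.

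Second, a smaller discrepancy: for the cubic and quartic error terms the paper does not use a low/high momentum splitting. It passes to position space (generalized via the remainder operators $\check d_x$ of Lemma~\ref{lm:dp}) and uses H\"older together with the two-dimensional Sobolev inequality $\|u\|_q\leq C\sqrt q\,\|u\|_{H^1}$ with $q=\log N$; the $(\log N)^{1/2}$ in \eqref{eq:GeffE} comes from the Sobolev constant, not from the subleading term in \eqref{eq:omegahat0} as you suggest. The paper also avoids the Duhamel iteration altogether by working with the exact relation $e^{-B}b_pe^B=\gamma_pb_p+\sigma_pb_{-p}^*+d_p$ and the explicit bounds \eqref{eq:d-bds}--\eqref{eq:ddxy} on the remainder $d_p$; this keeps the Bogoliubov coefficients $\cosh\eta_p,\sinh\eta_p$ exact and makes it much harder to inadvertently discard the $O(N)$ second-order constant you are missing. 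If you adopt the Duhamel route you must at minimum iterate once more on the $\cK$ piece and carry the quadratic-in-$\eta$ constant through the scattering-equation cancellation.
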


The proof of Prop. \ref{prop:GN} is very similar to the proof of \cite[Prop. 4.2]{BBCS4}. For completeness, we discuss the changes in Appendix \ref{sec:GN}. 

\section{Cubic Renormalization} \label{sec:4}

Conjugation through the generalized Bogoliubov transformation (\ref{eq:wteBeta}) renormalizes constant and off-diagonal quadratic terms on the r.h.s. of (\ref{eq:GNeff}). In order to estimate the number of excitations $\cN_+$ through the energy and show Bose-Einstein condensation, we still 
need to renormalize the diagonal quadratic term (the part proportional to $N \widehat{V} (0) \cN_+$, 
on the first line of (\ref{eq:GNeff})) and the cubic term on the last line of (\ref{eq:GNeff}). To this end, 
we conjugate $\cG_{N,\a}^{\text{eff}}$ with an additional unitary operator, given by the exponential of the anti-symmetric operator 
\begin{equation}\label{eq:defA} A := \frac1{\sqrt N} \sum_{r, v \in \L^*_+} 
\eta_r \big[b^*_{r+v}a^*_{-r}a_v - \text{h.c.}\big]\end{equation}
with $\eta_p$ defined in \eqref{eq:defeta}. 

An important observation is that while conjugation with $e^A$ allows to renormalize the large terms in $\cG_{N,\a}$, it does not substantially change the number of excitations. The following proposition can be proved similarly to \cite[Proposition 5.1]{BBCS3}. 
\begin{prop} \label{prop:ANgrow}
	Suppose that $A$ is defined as in (\ref{eq:defA}). Then, for any $k\in \bN$ there exists a constant $C >0$ such that the operator inequality  
	\[ e^{-A} (\cN_++1)^k e^{A} \leq C (\cN_+ +1)^k   \]
	holds true on $\cF_+^{\leq N}$, for any $\alpha > 0$ (recall the choice $\ell = N^{-\alpha}$ in the definition (\ref{eq:defeta}) of the coefficients $\eta_r$), and $N$ large enough. 
\end{prop}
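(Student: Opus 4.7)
The proof would proceed by a Gr\"onwall argument. For $\xi \in \cF_+^{\leq N}$ set $f(s) := \langle e^{sA}\xi, (\cN_+ + 1)^k e^{sA}\xi\rangle$, $s \in [0,1]$. Since $A$ is anti-self-adjoint on $\cF_+^{\leq N}$,
\begin{equation*}
f'(s) = \langle e^{sA}\xi,\, [(\cN_+ + 1)^k, A]\, e^{sA}\xi\rangle,
\end{equation*}
so it suffices to prove the commutator estimate
\begin{equation*}
|\langle \phi,\, [(\cN_+ + 1)^k, A]\phi\rangle| \;\leq\; C\, \langle \phi, (\cN_+ + 1)^k\phi\rangle \qquad \forall\, \phi \in \cF_+^{\leq N},
\end{equation*}
uniformly in $N$ large; Gr\"onwall's lemma then gives $f(1) \leq e^C f(0)$, which is the claim.

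To compute the commutator, decompose $A = A_+ - A_+^*$ with
\begin{equation*}
A_+ := \frac{1}{\sqrt{N}} \sum_{r,v \in \L_+^*} \eta_r\, b^*_{r+v}\, a^*_{-r}\, a_v.
\end{equation*}
Since $A_+$ raises $\cN_+$ by one, $(\cN_+ + 1)^k A_+ = A_+ (\cN_+ + 2)^k$, and hence
\begin{equation*}
[(\cN_+ + 1)^k, A_+] = A_+\, d_k(\cN_+), \qquad 0 \leq d_k(\cN_+) := (\cN_+ + 2)^k - (\cN_+ + 1)^k \leq k\, (\cN_+ + 2)^{k-1},
\end{equation*}
with an analogous identity for $A_+^*$. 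The problem therefore reduces to bounding $|\langle \phi, A_+\, d_k(\cN_+)\phi\rangle|$ by $C\langle \phi, (\cN_+ + 1)^k\phi\rangle$.

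The cubic estimate is carried out by writing $A_+$ as a sum over $r$ weighted by $\eta_r$, applying Cauchy--Schwarz in the inner sum over $v$, and using the elementary bounds $\sum_v \|b_{r+v}\psi\|^2 \leq \|\cN_+^{1/2}\psi\|^2$, $\|a^*_{-r}\zeta\| \leq \|(\cN_+ + 1)^{1/2}\zeta\|$, and $\sum_v \|a_v\chi\|^2 = \|\cN_+^{1/2}\chi\|^2$. After commuting $a_v$ past the polynomial via $a_v d_k(\cN_+) = d_k(\cN_+ - 1)\, a_v$, one distributes the resulting $(\cN_+ + 1)^{(k-1)/2}$-weight symmetrically on both sides of the inner product. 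The global prefactor $\|\eta\|/\sqrt{N} \leq C\, N^{-\alpha - 1/2}$, guaranteed by (\ref{eq:etaHL2}), then absorbs the factor $N^{1/2}$ produced by the a priori bound $\cN_+ \leq N$ on $\cF_+^{\leq N}$, delivering the desired commutator estimate uniformly in $N$ large and in $\alpha > 0$.

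The main obstacle is precisely this cubic bound: since $A_+$ involves three creation/annihilation operators, a direct Cauchy--Schwarz naturally produces one extra power of $(\cN_+ + 1)^{1/2}$ on the right-hand side, and balancing both sides to obtain exactly $(\cN_+ + 1)^{k/2}$-weights requires converting that extra factor into $N^{1/2}$ via $\cN_+ \leq N$ and absorbing it into the $N^{-1/2}$ prefactor in the definition of $A$. This bookkeeping is the content of Proposition~5.1 of \cite{BBCS3}, and the argument transports verbatim to the present two-dimensional setting because the only input it uses on $\eta$ is the $\ell^2$-bound $\|\eta\| \leq C N^{-\alpha}$ recorded in (\ref{eq:etaHL2}).
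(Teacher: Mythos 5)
Your proposal is correct and follows the same strategy as the paper, which simply refers to Proposition~5.1 of \cite{BBCS3} without reproducing the argument: a Gr\"onwall bound reducing the claim to a commutator estimate, the $\pm 1$ shift identity for $A_\pm$, a Cauchy--Schwarz bound on the cubic expression, and the observation that the extra half power of $(\cN_++1)$ is converted into $N^{1/2}$ via $\cN_+ \leq N$ and absorbed by the $N^{-1/2}$ prefactor in $A$.

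Two small imprecisions, neither of which affects the conclusion, are worth flagging. First, the shift identity should read $a_v\, d_k(\cN_+) = d_k(\cN_+ + 1)\, a_v$, not $d_k(\cN_+ - 1)\, a_v$, since $[a_v,\cN_+]=a_v$; the bound $d_k(\cN_+ + 1)\leq C_k(\cN_+ + 1)^{k-1}$ is unaffected. Second, and more importantly in the two-dimensional setting: your list of elementary bounds includes the single-mode estimate $\|a^*_{-r}\zeta\|\leq\|(\cN_++1)^{1/2}\zeta\|$, which if used naively before summing over $r$ would require $\sum_r|\eta_r|$, a quantity that is \emph{not} controlled here (the bound $|\eta_p|\leq C|p|^{-2}$ of (\ref{eq:modetap}) is not summable over $\L^*_+\subset\bZ^2$, unlike the three-dimensional case). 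Instead one must keep $a_{-r}$ grouped with $b_{r+v}$, apply Cauchy--Schwarz in the inner $v$-sum to produce $\|\cN_+^{1/2}a_{-r}\phi\|$, and then apply a second Cauchy--Schwarz over $r$ to obtain $\|\eta\|_2\,\big(\sum_r\|\cN_+^{1/2}a_{-r}\phi\|^2\big)^{1/2}\leq\|\eta\|_2\,\|\cN_+\phi\|$. Your final display, with the $\ell^2$-norm $\|\eta\|$ and the reference to (\ref{eq:etaHL2}), shows that this is what you mean to do, so the argument goes through; but the intermediate list of elementary bounds does not by itself make clear that the $\ell^2$, and not $\ell^1$, norm of $\eta$ is the one that survives.
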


We will also need to control the growth of the expectation of the energy $\cH_N$ with respect to the cubic conjugation. This is the content of the following proposition, which is proved in subsection \ref{sec:aprioribnds}.

\begin{prop} \label{prop:AHNgrow} Let $A$ be defined as in (\ref{eq:defA}). Then there exists a constant $C > 0$ such that 
	\begin{equation}\label{eq:expAHNexpA}
	e^{-sA} \cH_N e^{sA} \leq C \cH_N + C N (\cN_+ +1)
	\end{equation}
	for all $\a \geq 1$,  $s \in [0;1]$ and $N \in \bN$ large enough.
\end{prop}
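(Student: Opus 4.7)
The approach is a Gronwall-type argument applied to the function
\[
F_\xi(s) := \langle \xi, e^{-sA}\bigl(\cH_N + C_0 N(\cN_+ + 1)\bigr) e^{sA}\xi\rangle
\]
for a large constant $C_0$ to be fixed later. Differentiating gives
\[
F_\xi'(s) = \langle \xi, e^{-sA}\bigl([\cH_N, A] + C_0 N[\cN_+, A]\bigr) e^{sA}\xi\rangle,
\]
and the goal reduces to establishing an operator bound of the form
\[
\pm\bigl([\cH_N, A] + C_0 N[\cN_+, A]\bigr) \;\leq\; C\bigl(\cH_N + N(\cN_+ + 1)\bigr).
\]
Once this is in hand, Proposition \ref{prop:ANgrow} lets me replace $e^{-sA}(\cN_++1)e^{sA}$ on the right by $C(\cN_++1)$, so that $|F_\xi'(s)| \leq C F_\xi(s)$, and Gronwall's lemma yields $F_\xi(1) \leq e^C F_\xi(0)$, i.e.\ the claimed inequality.

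The heart of the proof is therefore the commutator estimate. I would split $[\cH_N, A] = [\cK, A] + [\cV_N, A]$. Since $A$ contains cubic monomials of the form $b^*_{r+v} a^*_{-r} a_v/\sqrt{N}$, the commutator $[\cK, A]$ is again cubic, with coefficients $(r^2 + \text{combinations of }|r+v|^2, v^2)\eta_r/\sqrt{N}$. The commutator $[\cV_N, A]$ produces two kinds of contributions: cubic terms with coefficients $\widehat{V}((r{-}q)/e^N)\eta_q/\sqrt{N}$ and genuinely quartic terms arising from the double contraction of annihilation/creation operators in $\cV_N$ with those in $A$. The structural key is the scattering equation \eqref{eq:eta-scat}, which reads
\[
p^2 \eta_p + \tfrac{N}{2}\widehat{V}(p/e^N) + \tfrac{1}{2}\sum_{q}\widehat{V}((p{-}q)/e^N)\eta_q = N e^{2N}\lambda_\ell \widehat{\chi}_\ell(p) + e^{2N}\lambda_\ell\sum_q \widehat{\chi}_\ell(p{-}q)\eta_q,
\]
and whose right-hand side is of smaller size. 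This identity produces a decisive cancellation between the bulk of the cubic contribution to $[\cK, A]$ and the cubic contribution to $[\cV_N, A]$, leaving only remainder cubic pieces proportional to $\lambda_\ell \widehat{\chi}_\ell$, which can be controlled by $\cK^{1/2}(\cN_++1)^{1/2}$ using Cauchy--Schwarz together with the decay $|\eta_r| \leq C/r^2$ from \eqref{eq:modetap}.

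The main obstacle will be the quartic piece of $[\cV_N, A]$, which after normal ordering has the schematic form
\[
\frac{1}{\sqrt{N}}\sum \widehat{V}(r/e^N)\,\eta_s\, a^*_{\,\cdot\,}a^*_{\,\cdot\,}a_{\,\cdot\,}a_{\,\cdot\,}
\]
with various index assignments. To bound such terms by $\cV_N + N(\cN_++1)$ I would exploit the pointwise non-negativity of $V$ (which gives a factorization $\widehat{V}((p-q)/e^N) = \int V(x)e^{iq\cdot x}\overline{e^{ip\cdot x}}$ and allows rewriting each quartic monomial as an inner product of two states $X\xi$ and $Y\xi$), apply Cauchy--Schwarz, and arrange that $\|X\xi\|^2 \lesssim \cV_N$ and $\|Y\xi\|^2 \lesssim N(\cN_++1)$. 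This is where the hypothesis $\alpha \geq 1$ plays its role: the bounds $\|\eta\|_\infty \leq CN^{-\alpha}$ and $\|\eta\| \leq CN^{-\alpha}$ together with $|\eta_p|\leq C/p^2$ are what make the quartic remainders summable and of the correct order. Collecting all contributions yields the commutator bound stated above (with $C_0$ chosen to absorb the cross term $N[\cN_+, A]$ analogously), and Gronwall then concludes.
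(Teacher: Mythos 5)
Your overall framework—a Gronwall argument via $F_\xi(s)$ and a commutator estimate of the form $\pm[\cH_N,A] \lesssim \cH_N + N(\cN_++1)$—matches the paper's strategy. The variant of absorbing $C_0 N(\cN_++1)$ directly into the Gronwall quantity rather than invoking Proposition~\ref{prop:ANgrow} on the right is legitimate: for $\alpha\geq 1$ one has $\pm N[\cN_+,A] \leq CN^{1-\alpha}(\cN_++1)\leq C(\cN_++1)$, so the extra commutator is harmless. The treatment of the quartic errors from $[\cV_N,A]$ via $\|\eta\|\lesssim N^{-\alpha}$ and position-space Cauchy--Schwarz is also in the right spirit, though one of them ($\Theta_2$ in the paper's decomposition) genuinely needs a H\"older--Sobolev step with $q\sim\log N$ to close with only a $(\log N)^{1/2}$ loss.

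There is, however, a real conceptual gap in your description of what the scattering equation accomplishes. You assert that the scattering identity produces a ``decisive cancellation'' between the cubic parts of $[\cK,A]$ and $[\cV_N,A]$, ``leaving only remainder cubic pieces proportional to $\lambda_\ell\widehat\chi_\ell$.'' This is not what happens. Combining the coefficient $\tfrac{2}{\sqrt N}r^2\eta_r$ from $[\cK,A]$ with the coefficient $\tfrac{1}{\sqrt N}(\widehat V(\cdot/e^N)\ast\eta)(r)$ from $[\cV_N,A]$, the scattering relation \eqref{eq:eta-scat} gives
\[
\tfrac{2}{\sqrt N}\Big[r^2\eta_r + \tfrac12(\widehat V(\cdot/e^N)\ast\eta)(r)\Big] = -\sqrt N\,\widehat V(r/e^N) + 2\sqrt N e^{2N}\lambda_\ell\widehat\chi_\ell(r) + \tfrac{2}{\sqrt N}e^{2N}\lambda_\ell(\widehat\chi_\ell\ast\eta)(r)\,.
\]
The term $-\sqrt N\,\widehat V(r/e^N)$ is not a small remainder: it is exactly minus the coefficient of the cubic $\cC_N$, of full size, and it survives. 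Because $\widehat V(r/e^N)$ is essentially flat over the huge range $|r|\lesssim e^N$, this cubic cannot be controlled by $\cK^{1/2}(\cN_++1)^{1/2}$ with Cauchy--Schwarz in momentum as you propose; such an attempt produces a factor like $\big(\sum_r \widehat V(r/e^N)^2/r^2\big)^{1/2}\sim e^N$. Instead one must switch to position space, use the pointwise non-negativity of $V$ and $0\leq f_{N,\ell}\leq 1$, and absorb it into $\cV_N$, obtaining a bound $\lesssim \|\cV_N^{1/2}\xi\|\,\|\cN_+^{1/2}\xi\|$. This is precisely why the proposition asserts $e^{-sA}\cH_Ne^{sA}\leq C\cH_N+\ldots$ with a genuine constant $C>1$, rather than $\cH_N+o(1)$: the surviving $\widehat V$-sized cubic must be swallowed by the potential part of $\cH_N$. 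Without this step your argument cannot close.
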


We use now the cubic phase $e^{A}$ to introduce a new excitation Hamiltonian, obtained by conjugating the main part $\cG_{N, \a}^{\text{eff}}$ of $\cG_{N,\a}$. We define
\begin{equation} \label{eq:RNell}
\cR_{N, \a}:= e^{-A} \,\cG_{N,\a}^{\text{eff}}\,e^{A} \end{equation}
on a dense subset of $\cF_+^{\leq N}$.
Conjugation with $e^{A}$ renormalizes both the contribution proportional to $\cN_+$ (in the first line in the last line on the r.h.s. of (\ref{eq:GNeff})) and the cubic term on the r.h.s. of (\ref{eq:GNeff}), effectively replacing the singular potential $\widehat{V} (p/e^N)$ by the renormalized potential $\widehat \omega_N(p)$ defined in \eqref{eq:defomegaN}. This follows from the following proposition. 

\begin{prop}\label{prop:RN} Let $V\in L^3(\bR^2)$ be compactly supported, pointwise non-negative and spherically symmetric. Let $\cR_{N,\a}$ be defined in \eqref{eq:RNell} and define 
\begin{equation}\label{eq:cReff}
\begin{split} 
\cR_{N,\a}^{\text{eff}}= &\;  \frac 1 2  (N-1)\,  \widehat \o_N(0)   (1-\cN_+/N) + \frac 1 2  \widehat \o_N(0) \,\cN_+ \left( 1 - \cN_+/N \right) \\
& +  \widehat \o_N(0) \sum_{p\in \L^*_+}a^*_pa_p \Big(1-\frac{\cN_+}{N} \Big)+  \frac 12 \sum_{p\in \L^*_+}  \widehat{\o}_N(p)\big[ b^*_p b^*_{-p} + b_p b_{-p} \big]  \\
& +\frac 1 {\sqrt N} \sum_{\substack{r,v\in \L^*_+:\\ r\neq-v} } \widehat{\o}_N(r)\big[ b^*_{r+v}a^*_{-r} a_v + \text{h.c.}\big]  + \cH_N  \,.
\end{split}
\end{equation} 
Then for $\ell=N^{-\a}$ and $\a>2$ there exists a constant $C>0$ such that $\cE_\cR= \cR_{N,\a}- \cR_{N,\a}^{\text{eff}}$ is bounded by 
\be \label{eq:ReffE}
\pm  \cE_\cR \leq C [ N^{2-\a} + N^{-1/2} (\log N)^{1/2} ](\cH_N +1)  \, , 
\ee
for $N \in \bN$ sufficiently large.
\end{prop}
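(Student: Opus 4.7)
The plan is to start from $\cR_{N,\a} = e^{-A}\cG^{\text{eff}}_{N,\a}e^{A}$ and apply Duhamel's formula
\begin{equation*}
e^{-A}\cG^{\text{eff}}_{N,\a}e^{A} \;=\; \cG^{\text{eff}}_{N,\a} + \int_0^1 e^{-sA}\,[\cG^{\text{eff}}_{N,\a},A]\,e^{sA}\,ds\,,
\end{equation*}
iterating once more on the contributions that are not yet in the desired form. The commutator $[\cG^{\text{eff}}_{N,\a},A]$ splits according to the five summands of $\cG^{\text{eff}}_{N,\a}$ in \eqref{eq:GNeff}: the constant/diagonal $\widehat{\o}_N(0)$-terms, the large diagonal $2N\widehat V(0)\cN_+(1-\cN_+/N)$, the off-diagonal quadratic line, the cubic term with coefficient $\sqrt N\,\widehat V(\,\cdot\,/e^N)$, and $\cH_N=\cK+\cV_N$. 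Each of these commutators produces constant, quadratic, cubic and quartic contributions in the creation and annihilation operators; the strategy is to identify which ones recombine into the corresponding summands of $\cR^{\text{eff}}_{N,\a}$, and to show that everything else is controlled by the right-hand side of \eqref{eq:ReffE}.

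The key algebraic input is the scattering identity \eqref{eq:eta-scat}. Collecting the cubic contributions coming from $[\cK,A]$, $[\cV_N,A]$ and from the commutator of $A$ with the cubic line of $\cG^{\text{eff}}_{N,\a}$, together with the original cubic line of $\cG^{\text{eff}}_{N,\a}$, the coefficient multiplying $b^*_{r+v}a^*_{-r}a_v/\sqrt N$ turns out to equal
\begin{equation*}
r^2\eta_r + \tfrac{N}{2}\widehat V(r/e^N) + \tfrac12\sum_{q\in\L^*}\widehat V((r-q)/e^N)\,\eta_q\,,
\end{equation*}
which by \eqref{eq:eta-scat} is nothing but $Ne^{2N}\l_\ell\,\widehat{\chi}_\ell(r) + e^{2N}\l_\ell\sum_q\widehat{\chi}_\ell(r-q)\eta_q$. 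By the definition \eqref{eq:defomegaN} of $\widehat\o_N$ this reproduces exactly the renormalised cubic line $\frac{1}{\sqrt N}\sum_{r\neq -v}\widehat\o_N(r)[b^*_{r+v}a^*_{-r}a_v+\hc]$ of $\cR^{\text{eff}}_{N,\a}$, modulo remainders involving $\widehat{\chi}_\ell *\eta$. The same mechanism renormalises the large diagonal $2N\widehat V(0)\cN_+(1-\cN_+/N)$: the commutator of $A$ with the off-diagonal quadratic, together with iterated commutators of $A$ with the cubic term, generate $\widehat V(0)\cN_+$-type contributions whose coefficient, added to the existing $2N\widehat V(0)$ and rearranged by the $p=0$ version of \eqref{eq:eta-scat}, reproduces the line $\widehat\o_N(0)\sum_p a_p^*a_p(1-\cN_+/N)$ of $\cR^{\text{eff}}_{N,\a}$. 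The commutators involving the scalar and $\cN_+$-dependent pieces of $\cG^{\text{eff}}_{N,\a}$ are harmless, since $A$ shifts the particle number only by $\pm 1$.

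To estimate the remainder $\cE_\cR$ I would combine Proposition~\ref{prop:ANgrow} to absorb $e^{sA}$'s around $\cN_+$, Proposition~\ref{prop:AHNgrow} to absorb them around $\cH_N$, and the bounds \eqref{eq:modetap}, \eqref{eq:etapio}, \eqref{eq:wteta0} on $\eta$, together with the $H^1$-type estimate $\sum_p p^2|\eta_p|^2 \leq CN$ on $\check\eta$. Residual cubic and quartic terms are paired against $\cK^{1/2}$ or $\cV_N^{1/2}$ through Cauchy--Schwarz in momentum. The contributions of order $N^{2-\a}$ trace back to terms proportional to $\|\eta\|_\infty\,\cN_+ \sim N^{-\a}\cdot N$ times an extra factor $N$ coming from the scaling of $A$, and to the $\widehat{\chi}_\ell *\eta$ remainders, using $\int\chi_\ell = \pi N^{-2\a}$; the $N^{-1/2}(\log N)^{1/2}$ piece arises instead from Cauchy--Schwarz pairings of $\eta_p/|p|$ against cubic operators carrying a $\sqrt{\cK}$, the logarithm being inherited from $\sum_p p^2|\eta_p|^2 \sim N$.

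The main obstacle, absent in the three-dimensional analogue \cite{BBCS3}, is that the correlation structure in $d=2$ is so strong that $\|\check\eta\|_{H^1}^2 \sim N$; consequently many error terms cannot be treated by pure $\ell^2$-estimates on $\eta$ and must instead be paired against the full many-body kinetic energy $\cK$ (or the potential energy $\cV_N$). This is exactly the reason the proposition controls $\cE_\cR$ by $\cH_N+1$ rather than by $\cN_++1$, and why Proposition~\ref{prop:AHNgrow} on the stability of $\cH_N$ under conjugation by $e^{sA}$ is a prerequisite; the Cauchy--Schwarz pairings forced by this absence of cheap $\ell^2$ bounds are responsible for the logarithmic loss in \eqref{eq:ReffE}.
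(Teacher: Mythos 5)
Your proposal follows essentially the same route as the paper: Duhamel on $e^{-A}\cG^{\text{eff}}_{N,\a}e^{A}$, a second Duhamel iteration on the unwanted cubic piece, the scattering identity \eqref{eq:eta-scat} to trade $r^2\eta_r + \frac{N}{2}\widehat V(r/e^N) + \frac12(\widehat V(\cdot/e^N)*\eta)(r)$ for $\widehat\o_N(r)$ up to a $\widehat\chi_\ell*\eta$ remainder, and Propositions~\ref{prop:ANgrow} and \ref{prop:AHNgrow} to absorb the $e^{sA}$-conjugations. That is indeed the skeleton of the paper's Section~\ref{sec:RN}.

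Two of your heuristic attributions are, however, off in ways that would derail the quantitative estimates if taken literally. First, the cubic contributions recombined via \eqref{eq:eta-scat} come from $\cC_N$ itself, $[\cK,A]$ and $[\cV_N,A]$; the commutator $[\cC_N,A]$ instead produces the diagonal quadratic renormalisation (Proposition~\ref{prop:commAcCN}), not cubic terms, and its iterated commutator is what restores $\widehat\o_N(0)\sum_p a_p^*a_p(1-\cN_+/N)$. Second, the $(\log N)^{1/2}$ in \eqref{eq:ReffE} does not come from $\sum_p p^2|\eta_p|^2\sim N$ (which only produces $N^{1/2}$ factors); it comes from $\sum_{p}|\widehat\o_N(p)|^2/p^2 \sim \log N$ (since $\widehat\o_N$ is essentially flat up to $|p|\sim N^\alpha$) and from the H\"older--Sobolev step $\|u\|_q\leq Cq^{1/2}\|u\|_{H^1}$ with $q=\log N$ used to tame $\check\eta$-convolutions. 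Likewise there is no ``extra factor $N$ from the scaling of $A$'' — $A$ carries a $N^{-1/2}$ prefactor; the extra $N^{1/2}$ that upgrades $N^{3/2-\alpha}\|\cH_N^{1/2}\xi\|\|(\cN_++1)^{1/2}\xi\|$ to $N^{2-\alpha}(\cH_N+1)$ comes from the additive $N(\cN_++1)$ term in Proposition~\ref{prop:AHNgrow} when absorbing $e^{sA}$.
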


The proof of Proposition \ref{prop:RN} will be given in Section \ref{sec:RN}. We will also need more detailed information on $\cR_{N,\a}^{\text{eff}}$, as contained in the following proposition. 

\begin{prop}  \label{prop:RN-loc}
Let $\cR_{N,\a}^{\text{eff}}$ be defined in \eqref{eq:cReff}. Then, for every $c > 0$ there is a constant $C > 0$ (large enough) such that 
\begin{equation}
\begin{split} \label{eq:cRN-LB}
\cR_{N,\a}^{\text{eff}}\geq & \;2\pi N + \frac{\widehat{\omega}_N (0)}{2} \,\cN_+ + \frac c {\log N}\, \cH_N - C (\log N)^2 \, \frac{\,\cN_+^2}{N}\,  - C 
\end{split}\end{equation} 
for all $\a>2$ and $N \in \bN$ large enough. 

Moreover, let $f,g : \bR \to [0;1]$  be smooth, with $f^2 (x) + g^2 (x) =1$ for all $x \in \bR$. For $M \in \bN$, let $f_M := f(\cN_+/M)$ and $g_M:= g(\cN_+/M)$. Then there exists $C > 0$ such that  
	\be \label{eq:IMS}
	\cR_{N,\a}^{\text{eff}} =  f_M\, \cR_{N, \a}^{\text{eff}}\, f_M + g_M\, \cR_{N, \a}^{\text{eff}}\, g_M + \Theta_{M}
	\ee
	with 
	\begin{equation*}
	\pm \Theta_M \leq \frac{C\log N}{M^2}\big(\|f'\|^2_{\infty} +\|g'\|^2_{\infty}\big) \big( \cH_N +1 \big)
	\end{equation*}
	for all $\a > 2$, $M \in \bN$ and $N \in \bN$ large enough. 
\end{prop}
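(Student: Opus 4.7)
The plan is to prove the lower bound \eqref{eq:cRN-LB} and the IMS identity \eqref{eq:IMS} separately. For \eqref{eq:cRN-LB} I go through the five summands of \eqref{eq:cReff} in turn. Using the asymptotics \eqref{eq:omegahat0}, the scalar piece $\frac12(N-1)\widehat{\omega}_N(0)$ of the first term already produces $2\pi N + 2\pi\alpha\log N + O(1)$, and I retain the positive excess $2\pi\alpha\log N$ as a reserve to absorb a Bogoliubov correction of comparable size appearing later. Collecting the diagonal number-operator contributions from the first three summands (the piece $-\frac{N-1}{2N}\widehat{\omega}_N(0)\cN_+$, the diagonal of $\frac12\widehat{\omega}_N(0)\cN_+(1-\cN_+/N)$, and the diagonal of $\widehat{\omega}_N(0)\sum_p a_p^* a_p(1-\cN_+/N)$) yields an operator of the form $\widehat{\omega}_N(0)\cN_+$ up to a non-positive remainder of order $\widehat{\omega}_N(0)\cN_+^2/N$. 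Since the target carries only $\frac12\widehat{\omega}_N(0)\cN_+$, it remains to extract from the off-diagonal quadratic, cubic, kinetic and potential pieces a lower bound $-\frac12\widehat{\omega}_N(0)\cN_+$ up to the admissible errors.

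For the off-diagonal quadratic $\frac12\sum_p \widehat{\omega}_N(p)(b_p^* b_{-p}^* + b_p b_{-p})$ I group it with $\frac12\widehat{\omega}_N(0)\cN_+$ and with $(1-c/\log N)$ times the kinetic energy $\cK$ from the final summand of \eqref{eq:cReff}, reserving $\frac{c}{\log N}\cH_N$ for the target. Replacing the $b$-operators by $a$-operators using $b_p = \sqrt{(N-\cN_+)/N}\,a_p$ produces factor corrections of order $\cN_+/N$; these, together with analogous contributions from the cubic step, are absorbed into the $(\log N)^2 \cN_+^2/N$ error. One is then left with the Bogoliubov form
\[
\sum_{p\in\L^*_+}\Big[(p^2 + \tfrac12\widehat{\omega}_N(0))\,a_p^* a_p + \tfrac12\widehat{\omega}_N(p)(a_p^* a_{-p}^* + a_p a_{-p})\Big].
\]
Since $|\widehat{\chi}(k)|\leq \widehat{\chi}(0) = \pi$, we have $|\widehat{\omega}_N(p)|\leq \widehat{\omega}_N(0)$, which ensures positivity of this form up to a Bogoliubov ground-state correction whose absolute value is controlled by $\sum_{p\in\L^*_+} |\widehat{\omega}_N(p)|^2/(4 p^2)$. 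Because $\widehat{\omega}_N$ is essentially supported on $|p|\leq N^\alpha$ and uniformly bounded there, the two-dimensional harmonic estimate $\sum_{0<|p|\leq N^\alpha}|p|^{-2}\leq C\alpha\log N$ caps this correction at $O(\alpha\log N)$, which is absorbed by the reserve from the first step for $\alpha$ sufficiently large.

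The cubic term $\frac{1}{\sqrt N}\sum_{r,v}\widehat{\omega}_N(r)[b_{r+v}^* a_{-r}^* a_v + \hc]$ is estimated by Cauchy--Schwarz with weights tailored to produce on one side a kinetic contribution bounded by $\frac{c}{\log N}\cK$ (absorbed in the reserve allocated above), and on the other side a quartic operator which, after the change of variables $p = r+v$ and the 2d harmonic estimate $\sum_{0<|p|\leq N^\alpha}|p|^{-2}\leq C\log N$, is bounded by $C(\log N)^2 \cN_+^2/N$. The extra $\log N$ beyond the $\cN_+^2/N$ of the three-dimensional analogue reflects the logarithmically divergent 2d sum and is a genuine feature of the two-dimensional setting. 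Collecting the four steps yields \eqref{eq:cRN-LB}; the principal technical obstacle is the fine bookkeeping that compensates the $O(\log N)$ Bogoliubov correction against the leading-term reserve so as to reduce the constant error from $O(\log N)$ to $O(1)$.

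For the IMS localization \eqref{eq:IMS} I use the standard identity
\[
\cR_{N,\a}^{\text{eff}} = f_M \cR_{N,\a}^{\text{eff}} f_M + g_M \cR_{N,\a}^{\text{eff}} g_M + \tfrac12[f_M,[f_M, \cR_{N,\a}^{\text{eff}}]] + \tfrac12[g_M,[g_M, \cR_{N,\a}^{\text{eff}}]],
\]
which follows from $f_M^2 + g_M^2 = 1$. Since $f_M$ and $g_M$ are functions of $\cN_+/M$, they commute with every particle-number-preserving summand of $\cR_{N,\a}^{\text{eff}}$; only the off-diagonal quadratic and cubic pieces therefore contribute to $\Theta_M$. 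For an operator $X$ shifting $\cN_+$ by $k\in\{\pm 1,\pm 2\}$, the identity $[f_M, X] = (f_M(\cN_+) - f_M(\cN_+ + k)) X$ together with Taylor's theorem gives $\|f_M(\cN_+) - f_M(\cN_+ + k)\|\leq C|k|\,\|f'\|_\infty/M$, so that the double commutators acquire an overall prefactor $\|f'\|_\infty^2/M^2$. Bounding the residual off-diagonal quadratic and cubic operators by the same Bogoliubov and Cauchy--Schwarz arguments used above contributes one factor of $\log N$ (again from the 2d harmonic sum) and the operator bound $\cH_N + 1$, yielding the stated estimate on $\Theta_M$ after adding the analogous term for $g_M$.
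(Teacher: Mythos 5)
Your plan captures the overall structure of the paper's argument (discard nonessential $\cN_+/N$ and constant pieces, Cauchy--Schwarz on the cubic operator, completion of the square for the off-diagonal quadratic against $(1-c/\log N)\cK$, then IMS localization), and the $(\log N)^2$ scaling of the quartic error and the double-commutator bound for $\Theta_M$ are both right. The IMS part is essentially the paper's proof. However, there is a genuine gap in your handling of the Bogoliubov ground-state correction, and it is exactly the delicate point of this proposition.

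You write that the harmonic-sum estimate caps the Bogoliubov correction at ``$O(\alpha\log N)$, which is absorbed by the reserve from the first step for $\alpha$ sufficiently large.'' This does not work. By \eqref{eq:omegahat0}, the reserve coming from $\tfrac{N}{2}\widehat{\omega}_N(0)$ is $2\pi\alpha\log N + O(1)$, with the coefficient $2\pi$ pinned down. A crude bound $\sum_{0<|p|\leq N^\alpha}|p|^{-2}\leq C\alpha\log N$ with an unspecified $C$ only yields a Bogoliubov correction $\tfrac14\sum_p|\widehat{\omega}_N(p)|^2/p^2 \leq C'\alpha\log N$, which may well exceed $2\pi\alpha\log N$; and since both the reserve and the correction scale linearly in $\alpha$, increasing $\alpha$ does not improve the comparison at all. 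Moreover the statement must hold for every $\alpha>2$, not for $\alpha$ large. What is actually needed is to show that the two $\alpha\log N$ contributions cancel \emph{exactly} at leading order, i.e. that $\tfrac14\sum_p|\widehat{\omega}_N(p)|^2/p^2 \leq 2\pi\alpha\log N + C$ with the precise coefficient $2\pi$. The paper gets this via a Riemann-sum argument: first replacing $|\widehat{\omega}_N(p)|^2$ by $|\widehat{\omega}_N(0)|^2 = 16\pi^2 + O(\log N/N)$ on $K<|p|\leq N^\alpha$ (using $|\widehat{\omega}_N(p)-\widehat{\omega}_N(0)|\leq C|p|/N^\alpha$), then comparing the lattice sum $4\pi^2\sum_{K<|p|\leq N^\alpha}|p|^{-2}$ with $\int_{K/2<|q|<N^\alpha+K}|q|^{-2}dq = 2\pi\alpha\log N+O(1)$. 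Only then does the $\log N$ divergence disappear and the $-C$ in \eqref{eq:cRN-LB} become an honest constant. Without this sharp computation, your argument leaves an uncontrolled term of order $\alpha\log N$ with undetermined sign in the lower bound, and the proposition is not established.
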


\begin{proof}
 
From \eqref{eq:cReff}, using that $| \widehat \o_N(0) |\leq C$ we have 
\begin{equation}\label{eq:RN-LB1}
\begin{split} 
\cR_{N,\a}^{\text{eff}} \geq &\;  \frac N 2 \,  \widehat \o_N(0)   +   \widehat \o_N(0) \,\cN_+  +  \frac 12 \sum_{p\in \L^*_+}  \widehat{\o}_N(p)\big[ b^*_p b^*_{-p} + b_p b_{-p} \big] \\
& +\frac 1 {\sqrt N} \sum_{\substack{r,v\in \L^*_+:\\ r\neq-v} } \widehat{\o}_N(r)\big[ b^*_{r+v}a^*_{-r} a_v + \text{h.c.}\big] +  \cH_N - C \,\frac{\cN_+^2}{N} - C\,. 
\end{split}
\end{equation} 
For the cubic term on the r.h.s. of \eqref{eq:RN-LB1}, with
\be \label{eq:intV-div-p2}
\sum_{p \in \L^*_+} \frac{|\widehat \o_N(p)|^2}{p^2} \leq C \log N
\ee we can bound  
\be \begin{split} \label{eq:RN-LB3}
 \Big|  \frac 1 {\sqrt N} & \sum_{\substack{r,v\in \L^*_+\\ r \neq -v}}  \widehat{\o}_N(r) \langle \xi, b^*_{r+v} a^*_{-r} a_v\xi \rangle \Big|  \\ & \leq  \frac 1 {\sqrt N} \sum_{\substack{r,v\in \L^*_+\\ r \neq -v}}|\widehat{\o}_N(r) | \| (\cN_+ + 1)^{-1/2}  b_{r+v} a_{-r} \xi \|  \| (\cN_+ + 1)^{1/2} a_v \xi \| \\
&   \leq  \frac 1 {\sqrt N} \, \bigg[ \sum_{\substack{r,v\in \L^*_+\\ r \neq -v}} |r|^2 \| (\cN_++1)^{-1/2}b_{r+v} a_{-r} \xi \|^2 \bigg]^{1/2} 
\\& \hskip 2cm \times 
\bigg[ \sum_{\substack{r,v\in \L^*_+\\ r \neq -v}} \frac{|\widehat{\o}_N(r) |^2}{|r|^2}   \| (\cN_++1)^{1/2}a_v \xi \|^2 \bigg]^{1/2}\\
& \leq  \frac{C (\log N)^{1/2}}{\sqrt N}  \,\| \cK^{1/2}\xi \| \| (\cN+1)\xi\|\,. 
\end{split}\ee
As for the off-diagonal quadratic term on the r.h.s of \eqref{eq:RN-LB1}, we combine it with part of the kinetic energy to estimate. For any $0 < \mu < 1$, we have 
\[ \begin{split} \frac{1}{2} \sum_{p \in \Lambda_+^*} \widehat{\o}_N (p) & \big[ b_p^* b_{-p}^* + b_{-p} b_p \big] + (1- \mu) \sum_{p \in \Lambda^*_+} p^2 a_p^* a_p \\ = \; &(1-\mu) \sum_{p \in \Lambda_+^*} p^2 \left[ b_p^* + \frac{ \widehat{\o}_N (p)}{2(1-\mu) p^2} b_{-p} \right] \left[ b_p + \frac{ \widehat{\o}_N (p)}{2(1-\mu) p^2} b^*_{-p} \right] \\ &- \frac{1}{4(1-\mu)} \sum_{p \in \Lambda^*_+} \frac{|\widehat{\o}_N (p)|^2}{p^2} b_p b_p^* + (1-\mu) \sum_{p \in \Lambda^*_+} p^2 a_p^* \frac{\cN_+}{N} a_p \end{split} \]
since $a_p^* a_p - b_p^* b_p = a_p^* (\cN_+ / N) a_p$. With (\ref{eq:comm-bp}), we conclude that 
\begin{equation*}
	\begin{split} 
\frac{1}{2} \sum_{p \in \Lambda_+^*} \widehat{\o}_N (p) &\big[ b_p^* b_{-p}^* + b_{-p} b_p \big] + (1- \mu) \sum_{p \in \Lambda^*_+} p^2 a_p^* a_p \\ \geq \; &- \frac{1}{4(1-\mu)} \sum_{p \in \Lambda^*_+} \frac{|\widehat{\o}_N (p)|^2}{p^2} a^*_p  a_p - \frac{1}{4(1-\mu)} \sum_{p \in \Lambda^*_+} \frac{|\widehat{\o}_N (p)|^2}{p^2}\,. \end{split} \end{equation*} 
With the choice $\mu = C / \log N$ and with (\ref{eq:intV-div-p2}), we obtain 
\[  \begin{split} 
\frac{1}{2} \sum_{p \in \Lambda_+^*} \widehat{\o}_N (p)& \big[ b_p^* b_{-p}^* + b_{-p} b_p \big] + (1- \mu) \sum_{p \in \Lambda^*_+} p^2 a_p^* a_p \\ \geq \; &- \frac{1}{4(1-\mu)} \sum_{p \in \Lambda^*_+} \frac{|\widehat{\o}_N (p)|^2}{p^2} a^*_p  a_p - \frac{1}{4} \sum_{p \in \Lambda^*_+} \frac{|\widehat{\o}_N (p)|^2}{p^2}  - C\,. \end{split} \]
To bound the first terms on the r.h.s. of the last equation, we use the term $\widehat{\o}_N (0) \cN_+$, in (\ref{eq:RN-LB1}). To this end, we observe that, with (\ref{eq:omegahat0}), 
\[ \frac{|\widehat{\o}_N (p)|^2}{4 (1-\mu) p^2} \leq  \frac{|\widehat{\o}_N (0)|^2}{4 (1-\mu) p^2} \leq \frac{\widehat{\o}_N (0)}{4 ( 1-\mu) \pi} \left(1 + C \frac{\log N}{N} \right) \leq \frac{\widehat{\o}_N (0)}{2} \]
for every $p \in \L^*_+$ (notice that $|p| \geq 2\pi$, for every $p \in \Lambda^*_+$) and for $N$ large enough (recall the choice $\mu = c / \log N$). From  (\ref{eq:RN-LB1}), we find 
\begin{equation}\label{eq:RN-bd}
\begin{split} 
\cR_{N,\a}^{\text{eff}} \geq &\;  \frac N 2 \,  \widehat \o_N(0)   - \frac{1}{4} \sum_{p \in \Lambda_+^*} \frac{|\widehat{\o}_N (p)|^2}{p^2} +   \frac{\widehat \o_N(0)}{2} \,\cN_+  + \frac{c}{\log N} \cH_N  - C \frac{(\log N)^2}{N} \cN_+^2 - C. 
\end{split}
\end{equation} 
Let us now consider the second term on the r.h.s more carefully. Using that, from (\ref{eq:defomegaN}), $\widehat{\o}_N (p) = g_N \widehat{\chi} (p/N^\alpha)$, we can bound, for any fixed $K > 0$,    
\[ \frac{1}{4} \sum_{p \in \Lambda_+^*} \frac{|\widehat{\o}_N (p)|^2}{p^2} \leq C + \frac{1}{4} \sum_{\substack{p \in \Lambda_+^* : \\ K < |p| \leq N^\alpha}} \frac{|\widehat{\o}_N (p)|^2}{p^2}\,. \]
With $|\widehat{\o}_N (p) - \widehat{\o}_N (0)|\leq C |p|/ N^\alpha$, we obtain 
\begin{equation} \label{eq:bd-riem}  \frac{1}{4} \sum_{p \in \Lambda_+^*} \frac{|\widehat{\o}_N (p)|^2}{p^2} \leq C + \frac{|\widehat{\o}_N (0)|^2}{4} \sum_{\substack{p \in \Lambda_+^* : \\ K < |p| \leq N^\alpha}} \frac{1}{p^2} \leq C + 4 \pi^2 \sum_{\substack{p \in \Lambda^*_+ : \\ K < |p| \leq N^\alpha}} \frac{1}{p^2}\,. \end{equation} 
For $q \in \bR^2$, let us define $h(q) = 1/p^2$, if $q$ is contained in the square of side length $2\pi$ centered at $p \in \Lambda^*_+$ (with an arbitrary choice on the boundary of the squares). 
We can then estimate, for $K$ large enough,  
\[ 4\pi^2 \sum_{\substack{p \in \Lambda_+^* : \\ K < |p| \leq N^\alpha}} \frac{1}{p^2} \leq \int_{K/2 < |q| \leq N^\alpha + K} h(q) dq\,. \]
For $q$ in the square centered at $p \in \Lambda^*_+$, we bound 
\[ \left| h(q) - \frac{1}{q^2} \right| = \frac{|p^2 - q^2|}{p^2  \,q^2} \leq \frac{C}{|q|^3}\,. \]
Hence 
\[  4 \pi^2 \sum_{\substack{p \in \Lambda_+^* : \\ K < |p| \leq N^\alpha}} \frac{1}{p^2} \leq \int_{K/2 < |q| < N^\alpha + K} \frac{1}{q^2} dq + C \leq 2\pi \alpha \log N + C\,. \]
Inserting in (\ref{eq:bd-riem}), we conclude that 
\[ \frac{1}{4} \sum_{p \in \Lambda_+^*} \frac{|\widehat{\o}_N (p)|^2}{p^2} \leq 2\pi \alpha \log N +C\,. \]
Combining the last bound with (\ref{eq:omegahat0}) (and noticing that the contribution proportional to $\log N$ cancels exactly), from (\ref{eq:RN-bd}) we obtain 
\[ \cR_{N,\a}^{\text{eff}} \geq \;  2\pi N  +   \frac{\widehat \o_N(0)}{2} \,\cN_+  + \frac{c}{\log N} \cH_N  - C \frac{(\log N)^2}{N} \cN_+^2 - C \]
which proves (\ref{eq:cRN-LB}). 

Next we prove \eqref{eq:IMS}. From \eqref{eq:RN-LB1}, with the bounds (\ref{eq:RN-LB3}) and since, by (\ref{eq:intV-div-p2}),  
\[ \begin{split} \left| \sum_{p \in \Lambda_+^*} \widehat{\o}_N (p) \langle \xi , b_p^* b_{-p}^* \xi \rangle \right|  &\leq \sum_{p \in \Lambda_+^*} |\widehat{\o}_N (p) \| b_p \xi \| \| (\cN_+ + 1)^{1/2} \xi \| \\ &\leq  \left[ \sum_{p \in \Lambda^*_+} \frac{|\widehat{\o}_N (p)|^2}{p^2} \right]^{1/2} \| (\cN_+ + 1)^{1/2} \xi \| \| \cK^{1/2} \xi \|  \\ &\leq  C (\log N)^{1/2} \| (\cN_+ + 1)^{1/2} \xi \| \| \cK^{1/2} \xi \| \end{split} \]
it follows that 
\be \label{Rtheta2}
\cR_{N,\a}^{\text{eff}} = 2\pi N +\cH_N + \theta_{N,\a}
\ee
where for arbitrary $ \d > 0$, there exists a constant $C>0$ such that 
	\be  \label{eq:thetaNal}
\pm  \theta_{N,\a} \leq  \d \cH_N  + C (\log N) \,(\cN_+ +1) \,. \ee
We now note that for $f: \bR \to \bR$ smooth and bounded and $\theta_{N,\a}$ defined above, there exists a constant $C>0$ such that 
\be \label{eq:fMtheta}
\pm [f(\cN_+/M), [f(\cN_+/M),  \theta_{N,\a}]] \leq C\, \frac{\log N}{M^{2}} \| f' \|_\io^2 (\cH_N +1)
\ee
for all $\a>2$ and $N \in \bN$ large enough. The proof of \eqref{eq:fMtheta} follows analogously to the one for \eqref{eq:thetaNal}, since  the bounds leading to \eqref{eq:thetaNal} remain true if we replace the operators $b_p^\#$, $\#=\{ \cdot, *\}$, and $a_p^* a_q$ with  $[f(\cN_+/M),  [f(\cN_+/M), b^\#_p ]]$ or  $[f(\cN_+/M),  [f(\cN_+/M), a^*_p a_q ]]$ respectively, provided we multiply the r.h.s. by an additional factor $M^{-2}\|f'\|_\io^2$, since, for example
\[ \label{eq:doublecommfb}
 [f(\cN_+/M),  [f(\cN_+/M), b_p ]] = \big( f(\cN_+/M) - f((\cN_++1)/M) \big)^2 b_p
\]
and $\| f(\cN_+/M) - f((\cN_++1)/M) \| \leq C M^{-1} \| f'\|_\io$. With an explicit computation we obtain
	\begin{equation*}
	\begin{split}
	\cR_{N,\a}^{\text{eff}}=f_M \cR_{N,\a}^{\text{eff}}f_M +g_M \cR_{N,\a}^{\text{eff}}g_M+\frac{1}{2}\Big([f_M,[f_M,\cR^{\text{eff}}_{N,\a}]]+[g_M,[g_M,\cR^{\text{eff}}_{N,\a}]]\Big)
	\end{split}
	\end{equation*} 
Writing $\cR_{N,\a}^{\text{eff}}$ as in \eqref{Rtheta2} and using \eqref{eq:fMtheta} we get 
	\begin{equation*}
	\begin{split}
	\pm\Big([f_M,[f_M,\cR_{N,\a}^{\text{eff}}]]+[g_M,[g_M,\cR_{N,\a}^{\text{eff}}]]\Big)\leq \frac{C\log N}{M^2} \big(\|f'\|^2_{\infty} +\|g'\|^2_{\infty}\big) \big( \cH_N +1 \big)\,.
	\end{split}
	\end{equation*}
\end{proof}

\section{Proof of Theorem \ref{thm:main}}
\label{sec:main}

The next proposition combines the results of Prop. \ref{prop:GN}, Prop. \ref{prop:RN} and Prop. \ref{prop:RN-loc}.  Its proof makes use of localization in the number of particle and is an adaptation of the proof of \cite[Proposition 6.1]{BBCS3}. The main difference w.r.t. \cite{BBCS3} is that here we need to localize on sectors of $\cF^{\leq N}$ where the number of particles is $o (N)$, in the limit $N \to \infty$.

\begin{prop}
	Let $V\in L^3(\bR^2)$ be compactly supported, pointwise non-negative and spherically symmetric. Let $\cG_{N,\a}$ be the renormalized excitation Hamiltonian defined as in \eqref{eq:GN}. Then, for every $\a \geq 5/2$, there exist constants $C,c > 0$ such that 
	\begin{equation}\label{eq:cGN-fin} 
\cG_{N,\a} -2\pi N \geq c\, \cN_+ - C  \end{equation}
	for all $N \in \bN$ sufficiently large. 
\end{prop}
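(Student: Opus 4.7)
The plan is to adapt the strategy of \cite[Prop.~6.1]{BBCS3}, combining Propositions~\ref{prop:GN}, \ref{prop:ANgrow}, \ref{prop:AHNgrow}, \ref{prop:RN}, and~\ref{prop:RN-loc} with an IMS localization in $\cN_+$ at scale $M = c_1 N/(\log N)^2$, for $c_1 > 0$ to be fixed small. Given $\xi \in \cF^{\leq N}_+$, set $\eta := e^{-A}\xi$. Since $\cG_{N,\a} = e^A\cR_{N,\a}e^{-A} + \cE_\cG$ and $\cR_{N,\a} = \cR_{N,\a}^{\text{eff}} + \cE_\cR$, we have
\[
\langle\xi, \cG_{N,\a}\xi\rangle = \langle\eta, \cR_{N,\a}^{\text{eff}}\eta\rangle + \langle\eta, \cE_\cR\eta\rangle + \langle\xi, \cE_\cG\xi\rangle\,.
\]
Applying the IMS identity~\eqref{eq:IMS} with smooth $f,g$, $f^2+g^2=1$, $f\equiv 1$ on $[0,1/2]$ and $g\equiv 0$ on $[0,1/2]$, the commutator $\Theta_M$ is of size $O((\log N)^5/N^2)(\cH_N+1)$ and will be absorbed by the $\tfrac{c}{\log N}\cH_N$ gain from~\eqref{eq:cRN-LB}.

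On the low sector (where $\cN_+ \leq M$ on $\mathrm{supp}\, f_M$), the term $-C(\log N)^2\cN_+^2/N$ in~\eqref{eq:cRN-LB} is bounded by $Cc_1\cN_+$, so with $\widehat\omega_N(0)/2 \geq 2\pi - o(1)$ from~\eqref{eq:omegahat0} and $c_1$ small enough we obtain
\[
\langle f_M\eta, \cR_{N,\a}^{\text{eff}}f_M\eta\rangle \geq 2\pi N\|f_M\eta\|^2 + c\langle f_M\eta, \cN_+ f_M\eta\rangle + \tfrac{c}{\log N}\langle f_M\eta, \cH_N f_M\eta\rangle - C\|f_M\eta\|^2.
\]
On the high sector (where $\cN_+ \geq M/2$ on $\mathrm{supp}\, g_M$), I invoke the a priori 2D BEC~\eqref{eq:BEC1} of~\cite{LS-BEC}: a standard contradiction argument upgrades~\eqref{eq:BEC1} to the operator inequality $H_N \geq 2\pi N + c_*\cN_+^{\text{orig}} - CN^{1-\bar{\d}}$ on $L^2_s(\Lambda^N)$. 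Setting $\psi := U_N^* e^B e^A g_M\eta$ (so $\|\psi\| = \|g_M\eta\|$ and $\langle \psi, H_N\psi\rangle$ equals $\langle g_M\eta, \cR_{N,\a}^{\text{eff}} g_M\eta\rangle$ up to conjugated $\cE_\cR, \cE_\cG$ errors), transferring $\cN_+^{\text{orig}}$ back to $\cN_+$ via Lemma~\ref{lm:Ngrow} and Prop.~\ref{prop:ANgrow}, and exploiting $\langle g_M\eta, \cN_+ g_M\eta\rangle \geq (M/2)\|g_M\eta\|^2 \gg N^{1-\bar{\d}}\|g_M\eta\|^2$ to absorb $-CN^{1-\bar{\d}}$ into half of the $c_*$-gain, yields
\[
\langle g_M\eta, \cR_{N,\a}^{\text{eff}} g_M\eta\rangle \geq 2\pi N\|g_M\eta\|^2 + c\langle g_M\eta, \cN_+ g_M\eta\rangle - C\|g_M\eta\|^2 - (\text{errors}).
\]

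Summing the two sectors, the remaining errors $\langle\eta, \cE_\cR\eta\rangle$ and $\langle\xi, \cE_\cG\xi\rangle$ are controlled using~\eqref{eq:ReffE} and~\eqref{eq:GeffE} together with Cauchy--Schwarz at weight $\delta = c/(CN)$, and Propositions~\ref{prop:ANgrow} and~\ref{prop:AHNgrow} to relate $\cH_N, \cN_+$ on $\xi$ and $\eta$; for $\a \geq 5/2$ the prefactors $N^{1/2-\a}$ and $N^{1-\a}$ are negligible. The resulting loss amounts to at most a fraction of the gains $c\langle\eta,\cN_+\eta\rangle$ and $\tfrac{c}{\log N}\langle\eta,\cH_N\eta\rangle$, and the $\cH_N$-carrying part on the high sector is dominated by the BEC energy excess (since $\cN_+ \geq M/2$ forces $\langle\psi, H_N\psi\rangle - 2\pi N\|\psi\|^2 \gtrsim M$). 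Finally, Prop.~\ref{prop:ANgrow} transfers the bound from $\eta$ to $\xi$, yielding~\eqref{eq:cGN-fin}.

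The main obstacle, relative to the 3D analysis of~\cite{BBCS3}, is that the term $-C(\log N)^2\cN_+^2/N$ in~\eqref{eq:cRN-LB} — absent in 3D — forces the localization scale to be $M = o(N/(\log N)^2)$ rather than of order $N$, so the high sector is no longer trivially empty for approximate ground states. The 2D weak BEC~\eqref{eq:BEC1} provides just the right room since $M \gg N^{1-\bar{\d}}$; without the rate $\bar{\d}>0$ the strategy would break. A subtle additional difficulty is that the $\tfrac{c}{\log N}\cH_N$ gain of~\eqref{eq:cRN-LB} is not directly available on the high sector (where BEC replaces~\eqref{eq:cRN-LB}), so the $\cH_N$-carrying errors must be controlled there by the BEC energy excess, requiring careful bookkeeping of the $\log N$ factors appearing in Propositions~\ref{prop:GN}, \ref{prop:RN}, and~\ref{prop:RN-loc}.
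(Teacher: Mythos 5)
Your high-sector step rests on the claim that a ``standard contradiction argument'' upgrades \eqref{eq:BEC1} to an operator inequality of the form $H_N \geq 2\pi N + c_*\big(N - a^*(\ph_0)a(\ph_0)\big) - CN^{1-\bar{\d}}$ on all of $L^2_s(\L^N)$. This is circular: after conjugation by $U_N$ and $e^{-B}$, together with Lemma~\ref{lm:Ngrow} to control the conjugated number operator, that inequality becomes precisely \eqref{eq:cGN-fin} up to the $N^{1-\bar{\d}}$ slack, and is equivalent to Theorem~\ref{thm:main} itself, so it cannot be invoked as a prior input. Nor does it follow from \eqref{eq:BEC1} by contradiction: the negation of the claimed inequality only produces states $\psi_N$ with $\langle\psi_N, H_N\psi_N\rangle < 2\pi N + c_*N$, and such states need not satisfy \eqref{eq:appro-gs}, so the a priori BEC bound \eqref{eq:BEC1} gives no information about them.

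What the contradiction argument genuinely yields — and what the paper's proof establishes — is the much weaker, sector-restricted statement that, for $N$ large, $\inf \tfrac{1}{N}\langle\xi, \cR^{\text{eff}}_{N,\a}\xi\rangle - 2\pi \geq C > 0$ over normalized $\xi$ supported on $\{\cN_+ \geq M/2\}$. This suffices for your estimate: it gives $g_M(\cR^{\text{eff}}_{N,\a} - 2\pi N)g_M \geq cN g_M^2 \geq c\,\cN_+\, g_M^2$ simply because $\cN_+ \leq N$, so no linear-in-$\cN_+$ gap is needed on the high sector. The constant gap is proved by contradiction along a putative bad subsequence: one assembles a full sequence of approximate minimizers of $\cG_{N,\a}$ (taking high-sector states along the bad subsequence and genuine minimizers otherwise), verifies it is an approximate ground state sequence after pulling back through $e^A$, $e^B$, $U_N$ — here Props.~\ref{prop:ANgrow}, \ref{prop:AHNgrow} and the bound \eqref{eq:cEL2} control the errors — and contradicts \eqref{eq:BEC1} using $M/2 \gg N^{1-\bar{\d}}$. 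Once your operator-inequality claim is replaced by this sector-restricted constant gap, the rest of your outline — the IMS split, the absorption of $-C(\log N)^2\cN_+^2/N$ on the low sector, and the final transfer — matches the paper's argument; the paper's scale $M = N^{1-\e}$ with $\e < \bar{\d}$ is interchangeable with your $M = c_1 N/(\log N)^2$.
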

\begin{proof} Let $f,g: \bR \to [0;1]$ be smooth, with $f^2 (x) + g^2 (x)= 1$ for all $x \in \bR$. Moreover, assume that $f (x) = 0$ for $x > 1$ and $f (x) = 1$ for $x < 1/2$. For a small $\eps>0$, we fix $M  = N^{1-\e}$ and we set $f_M = f (\cN_+ / M), g_M = g (\cN_+ / M)$. It follows from Prop. \ref{prop:RN-loc} that 
	\begin{equation}\label{eq:cGN-1}\begin{split} \cR_{N,\a}^{\text{eff}}  -2\pi N  \geq\; & f_M \big(\cR^{\text{eff}}_{N,\a} -2\pi N \big) f_M + g_M \big(\cR^{\text{eff}}_{N,\a} -2\pi N \big) g_M \\ & - CN^{2\e-2} (\log N)  (\cH_N + 1) \end{split}\end{equation}
	Let us consider the first term on the r.h.s. of (\ref{eq:cGN-1}). From Prop. \ref{prop:RN-loc}, for all $\a>2$ there exist $c, C>0$ such that 
	\be \label{eq:first-bd} 
\cR_{N,\a}^{\text{eff}} -2\pi N \geq c\, \cN_+ - \fra{C}{N} \,(\log N)^2 \, \cN_+^{\,2}  - C  \,.\ee
On the other hand, with \eqref{Rtheta2} and \eqref{eq:thetaNal} we also find
\begin{equation}\label{eq:second-bd} 
\cR_{N,\a}^{\text{eff}} -2\pi N \geq  c \cH_N  - C (\log N)\,(\cN_++1)  \end{equation}
	for all $\a > 2$ and $N$ large enough. Moreover, due to the choice $M=N^{1-\eps}$, we have
	\[ 
 \fra{(\log N)^2}{N} f_M \cN_+^2 f_M \leq  \frac{(\log N)^{2} }{N^\eps}  f^2_M \cN_+  \,.
 \]
With the last bound, Eq. \eqref{eq:first-bd} implies that 
\be \label{eq:fMGN} 
 f_M \Big(\cR^{\text{eff}}_{N,\a} -2\pi N\Big) f_M  \geq c  f^2_M \cN_+ - C 
\ee
for $N$ large enough.

	Let us next consider the second term on the r.h.s. of (\ref{eq:cGN-1}). We claim that there exists a constant $c > 0$ such that 
	\begin{equation}\label{eq:gMGb} g_M \Big(\cR^{\text{eff}}_{N,\a} -2\pi N  \Big) g_M  \geq c N g_M^2 \end{equation}
		for all $N$ sufficiently large. To prove (\ref{eq:gMGb}) we observe that, since $g(x) = 0$ for all $x \leq 1/2$,   
	\[ \begin{split}g_M \Big( \cR^{\text{eff}}_{N,\a} -2\pi N&   \Big) g_M \geq \left[ \inf_{\xi \in \cF_{\geq M/2}^{\leq N} : \| \xi \| = 1} \frac{1}{N} \langle \xi, \cR^{\text{eff}}_{N,\a} \xi \rangle - 2\pi \right] N g_M^2 \end{split}\]
	where $\cF_{\geq M/2}^{\leq N} = \{ \xi \in \cF_+^{\leq N} : \xi = \chi (\cN_+ \geq M/2) \xi \}$ 
	is the subspace of $\cF_+^{\leq N}$ where states with at least $M/2$ excitations are described (recall that $M = N^{1-\e}$). To prove (\ref{eq:gMGb}) it is enough to show that there exists $C > 0$ with 
	\begin{equation}\label{eq:gMGN2}  \inf_{\xi \in \cF_{\geq M/2}^{\leq N} : \| \xi \| = 1} \frac{1}{N} \langle \xi, \cR^{\text{eff}}_{N,\a} \xi \rangle -2\pi \geq C \end{equation}
	for all $N$ large enough. On the other hand, using the definitions of $\cG_{N,\a}$ in \eqref{eq:GNeff}, $\cR_{N,\a}$ and $\cR^{\text{eff}}_{N,\a}$ in \eqref{eq:cReff}, we obtain that the ground state energy $E_N$ of the system is given by
\[ \label{eq:ENcalGcalR}
\begin{split}
E_N & =  \inf_{\xi \in \cF_+^{\leq N} : \| \xi \|=1}    \bmedia{\xi, e^{-A}\cG_{N,\a}e^A \xi}= \inf_{\xi \in \cF_+^{\leq N} : \| \xi \|=1}    \bmedia{\xi, \big(\cR^{\text{eff}}_{N,\a} + \cE_L \big) \xi} 
\end{split}\]
with $\cE_{L} =  \cE_{\cR} +e^{-A} \cE_{\cG}e^A$. The bounds \eqref{eq:GeffE} and \eqref{eq:ReffE}, together with Prop. \ref{prop:ANgrow} and Prop. \ref{prop:AHNgrow}, imply that for any $\a \geq 5/2$ there exists $C>0$ such that 
\[ \label{eq:cEL}
\begin{split} 
\pm \cE_{L}  & \leq C N^{-1/2}(\log N)^{1/2} \big[ ( \cH_N +1 ) +  e^{-A}\big( N^{-1}(\cH_N+1) + (\cN_++1) \big) e^A \big] + C \\
 & \leq C N^{-1/2}(\log N)^{1/2} ( \cH_N +1 ) + C 
\end{split} \]
With \eqref{eq:second-bd}  we obtain 
\be \label{eq:cEL2}
\pm \cE_L 
\leq  C N^{-1/2} (\log N)^{1/2}  \big(\cR^{\text{eff}}_{N,\a} -2\pi N \big) + C N^{-1/2} (\log N)^{3/2} \cN_+ + C \,,
\ee
and therefore, with $\cN_+\leq N$ 
\[
E_N - 2\pi N \leq   C   \inf_{\xi \in \cF_+^{\leq N} : \| \xi \|=1}    \bmedia{\xi, \big(\cR^{\text{eff}}_{N,\a}  - 2 \pi N \big) \xi}  +C N^{1/2} (\log N)^{3/2} + C\,.
\]
From the result \eqref{eq:en-ti} of \cite{LSeY2d, LS-BEC, LSe-GP-rotating}
	\[ \begin{split} \inf_{\xi \in \cF_{\geq M/2}^{\leq N} : \| \xi \| = 1} \frac{1}{N} \langle \xi,  \cR^{\text{eff}}_{N,\a}  \xi \rangle -2 \pi  & \geq  \inf_{\xi \in \cF_{+}^{\leq N} : \| \xi \| = 1} \frac{1}{N} \langle \xi, \big(\cR^{\text{eff}}_{N,\a} - 2 \pi N \big) \xi \rangle  \\
& \geq  c \left( \frac{E_N}{N} -2\pi \right) - \frac{C}{\sqrt N}\,(\log N)^{3/2} - C N^{-1} \to 0\end{split} \]
	as $N \to \infty$. If we assume by contradiction that (\ref{eq:gMGN2}) does not hold true, then we can find a subsequence $N_j \to \infty$ with
	\[  
\inf_{\xi \in \cF_{\geq M_j/2}^{\leq N_j} : \| \xi \| = 1} \frac{1}{N_j} \langle \xi, \cR_{N_j ,\a}^{\text{eff}} \xi \rangle -2\pi \to 0 
\]
	as $j \to \infty$ (here we used the notation $M_j =  N_j^{1-\e}$). This implies that there exists a sequence $\tl \xi_{N_j} \in \cF^{\leq N_j}_{ \geq M_j /2}$ with $\| \tl \xi_{N_j} \| = 1$ for all $j \in \bN$ such that 
	\begin{equation*}
	\lim_{j \to \infty} \frac{1}{N_j} \langle \tl \xi_{N_j}, \cR^{\text{eff}}_{N_j, \a} \tl \xi_{N_j} \rangle = 2\pi  \, . \end{equation*}
On the other hand, using the relation $\cR^{\text{eff}}_{N_j,\a} = e^{-A}\cG_{N_j,\a} e^A - \cE_{L,j}$ with $\cE_{L,j}$ satisfying the bound \eqref{eq:cEL2} (with $\cN_+ \leq N_j$), we obtain that there exist constants $c_1, c_2, C>0$ such that
\[ \begin{split} \label{eq:relGReff}
& c_1 \langle \tl \xi_{N_j}, \big( \cR^{\text{eff}}_{N,\a}- 2 \pi N_j \big) \tl \xi_{N_j} \rangle  - C N_j^{1/2} (\log N_j)^{3/2}   \\
&\hskip 2.5cm \leq \langle  e^{A} \tl \xi_{N_j}, \big(\cG_{N_j,\a}  -\;  2 \pi N_j \big)  e^A \tl \xi_{N_j} \rangle \\
& \hskip 3.5cm  \leq    c_2 \langle \tl \xi_{N_j}, \big(\cR^{\text{eff}}_{N,\a} - 2 \pi N_j \big)  \tl \xi_{N_j} \rangle  + C N_j^{1/2} (\log N_j)^{3/2}  
\end{split}\]
Hence for $\xi_{N_j} = e^{A} \tl \xi_{N_j}$ we have
\[ \label{eq:limGNa1}
\lim_{N_j \to \infty} \frac{1}{N_j} \langle  \xi_{N_j}, \cG_{N_j,\a}  \xi_{N_j} \rangle = 2\pi \,.
\]
Let now $S:= \{N_j: j\in \bN\} \subset\bN$ and denote by $\xi_N$ a normalized minimizer of $\cG_{N,\a}$ for all $N\in \bN\setminus S$. Setting $\psi_N = U_N^* e^{B} \xi_N$, for all $N \in \bN$, we obtain that $\| \psi_N \| = 1$ and that 
\be \label{eq:limGNa2}
\lim_{N \to \infty} \frac{1}{N} \langle \psi_N, H_N \psi_N \rangle = \lim_{N \to \infty} \frac{1}{N} \langle \xi_N, \cG_{N,\a} \xi_N \rangle = 2\pi  
\ee
Eq. \eqref{eq:limGNa2} shows that the sequence $\psi_N$ is an approximate ground state of $H_N$. From \eqref{eq:BEC1}, we conclude that $\psi_N$ exhibits complete Bose-Einstein condensation in the zero-momentum mode $\ph_0$, and in particular that there exist $\bar \d >0$ such that
	\[ \left|1 - \langle \ph_0, \gamma_N \ph_0 \rangle \right|\leq CN^{-\bar{\d}}\,. \]
	Using Lemma \ref{lm:Ngrow}, Prop. \ref{prop:ANgrow} and the rules (\ref{eq:U-rules}), we observe that 
	\begin{equation}\label{eq:contra1} \begin{split} \frac{1}{N} \langle  \xi_N, \cN_+  \xi_N \rangle &= \frac{1}{N} \langle e^{-B} U_N \psi_N , \cN_+  e^{-B} U_N \psi_N \rangle \\ &\leq \frac{C}{N} \langle \psi_N , U_N^* (\cN_+ +1) U_N \psi_N \rangle \\
&= \frac{C}{N} + C \left[ 1 - \frac{1}{N} \langle \psi_N, a^* (\ph_0) a(\ph_0) \psi_N \rangle \right] \\ &= \frac{C}{N} + C \left[ 1 - \langle \ph_0 , \gamma_N \ph_0 \rangle \right]  \leq CN^{-\bar{\d}}\end{split} \end{equation}
	as $N \to \infty$. 

On the other hand, for $N \in S = \{ N_j : j \in \bN \}$, we have $\xi_N = \chi (\cN_+ \geq M/2) \xi_N$ and therefore
	\[ \frac{1}{N} \langle \xi_N, \cN_+ \xi_N \rangle \geq \frac{M}{2N} = \frac{N^{-\e}}{2}\,. \]
Choosing  $\e < \bar{\d}$ and $N$ large enough we get a contradiction with (\ref{eq:contra1}). This proves (\ref{eq:gMGN2}), (\ref{eq:gMGb}) and therefore also
	\begin{equation}\label{eq:gM-bd}  g_M \Big( \cR^{\text{eff}}_{N,\a} -2\pi N  \Big) g_M \geq c \cN_+ g_M^2\,.
 \end{equation}
	Inserting (\ref{eq:fMGN}) and (\ref{eq:gM-bd}) on the r.h.s. of (\ref{eq:cGN-1}), we obtain that
	\begin{equation}\label{eq:lbd} \cR_{N,\a}^{\text{eff}} -2\pi N \geq c \cN_+  - C (\log N ) N^{2\e-2}  (\cH_N + 1)  - C 
\end{equation}
	for $N$ large enough. With (\ref{eq:second-bd}), (\ref{eq:lbd}) implies 
\[\label{eq:cRNeff-loc} 
\cR^{\text{eff}}_{N,\a} -2\pi N \geq c \cN_+ - C.
\]
To conclude, we use the relation $e^{-A}\cG_{N,\a}e^A=\cR^{\text{eff}}_{N,\a} + \cE_L$ and the bound \eqref{eq:cEL2}. We have that  for $\a\geq 5/2$ there exist $c, C>0$ such that
\[ \begin{split}
\cG_{N,\a} - 2 \pi N &\geq  c e^A \big(\cR^{\text{eff}}_{N,\a} -2\pi N \big) e^{-A} - C N^{-1/2} (\log N)^{3/2} e^A \cN_+ e^A - C   \\
&\geq c \,e^{A} \cN_+ e^{-A} -C  \geq c \cN_+ - C  
\end{split}\]
where we used (\ref{eq:lbd}) and Prop. \ref{prop:ANgrow}. 
\end{proof}

We are now ready to show our main theorem.
\begin{proof}[Proof of Theorem \ref{thm:main}]
	Let $E_N$ be the ground state energy of $H_N$.  
Evaluating (\ref{eq:GNeff}) and \eqref{eq:GeffE} on the vacuum $\O \in \cF^{\leq N}_+$ and using \eqref{eq:defomega0}, we obtain the upper bound 
\[
E_N \leq 2\pi N+ C \log N \,.
\] 
With Eq. (\ref{eq:cGN-fin}) we also find the lower bound $E_N \geq 2\pi N  - C $. This proves (\ref{eq:Enbd}).  
	
\vskip 0.2cm

	Let now $\psi_N \in L^2_s (\Lambda^N)$ with $\| \psi_N \| =1$ and
	\be \label{eq:boundenergy} \langle \psi_N , H_N \psi_N \rangle \leq 2\pi N  + K\,. 
\ee
	We define the excitation vector $\xi_N = e^{-B} U_N \psi_N$. Then $\| \xi_N \| = 1$ and, recalling that $\cG_{N,\a} = e^{-B}  U_N H_N U_N^* e^{B}$ we have, with (\ref{eq:cGN-fin}), 
\be \begin{split}  \label{eq:condensation}
\bmedia{\psi_N, (H_N - 2\pi N) \psi_N} &\geq \bmedia{\xi_N, (\cG_{N,\a} - 2\pi N) \xi_N}  \geq c \centering \bmedia{\xi_N, \cN_+ \xi_N} - C \,.
\end{split}\ee
From Eqs. \eqref{eq:boundenergy} and \eqref{eq:condensation} we conclude that
	\[ \langle \xi_N, \cN_+ \xi_N \rangle  \leq C  (1 +K)\,. \]
	If $\gamma_N$ denotes the one-particle reduced density matrix associated with $\psi_N$, using Lemma  \ref{lm:Ngrow} we obtain 
	\[ \begin{split} 
	1 - \langle \ph_0, \gamma_N \ph_0 \rangle &= 1 - \frac{1}{N} \langle \psi_N, a^* (\ph_0) a (\ph_0) \psi_N \rangle \\ &= 1 - \frac{1}{N} \langle U_N^* e^{B} \xi_N, a^* (\ph_0) a(\ph_0) U_N^* e^{B} \xi_N \rangle \\ &= \frac{1}{N} \langle e^{B} \xi_N, \cN_+ e^{B} \xi_N \rangle \leq \frac{C}{N} \langle \xi_N , \cN_+ \xi_N \rangle \leq \frac{C( 1 +K)}{N} \end{split} \]
	which concludes the proof of (\ref{eq:BEC}). 
\end{proof}

\section{Analysis of the excitation Hamiltonian $\cR_{N} $} \label{sec:RN}

In this section, we show Prop. \ref{prop:RN}, where we establish a lower bound for the operator $\cR_{N,\a} = e^{-A} \cG_{N,\a}^\text{eff} e^A$, with $\cG^\text{eff}_{N,\a}$ as defined in 
(\ref{eq:GNeff}) and with 
\begin{equation}\label{eq:Aell8} A = \frac1{\sqrt N} \sum_{r, v \in \L^*_+} \eta_r \big[b^*_{r+v}a^*_{-r}a_v - \text{h.c.}\big] \, . \end{equation}  
We decompose 
\be \label{eq:GNeff-deco}
 \cG_{N,\a}^\text{eff} =  \cO_{N} + \cK  +\cZ_N+ \cC_{N} + \cV_N \ee
with $\cK$ and $\cV_N$ as in (\ref{eq:KcVN}),  and with 
\begin{equation}\begin{split}\label{eq:GNeff-deco2}
\cO_{N}  =&\;\frac 1 2 \widehat{\o}_N(0) (N-1) \Big(1-\frac{\cN_+}{N}\Big) + \big[ 2 N\widehat V(0)- \frac 1 2 \widehat{\o}_N(0) \big]\cN_+ \Big(1-\frac{\cN_+}{N}\Big) , \\
\cZ_N = &\; \frac 1 2 \sum_{ p\in  \L_+^*}\widehat{\o}_N(p)(b_pb_{-p}+\hc)\\ 
\cC_{N} =&\;  \sqrt{N} \sum_{p,q \in \L^*_+ : p + q \not = 0} \widehat{V} (p/e^N) \left[ b_{p+q}^* a_{-p}^* a_q  + \hc \right] \, .		
\end{split}\end{equation} 
We will analyze the conjugation of all terms on the r.h.s. of \eqref{eq:GNeff-deco} in Subsections 
 \ref{sub:RN-cO}--\ref{sub:Rfin}. The estimates emerging from these subsections will then be combined in Subsection \ref{sub:Rfin} to conclude the proof of Prop. \ref{prop:RN}. Throughout the section, we will need Prop. \ref{prop:AHNgrow} to control the growth of the expectation of the energy $\cH_N = \cK + \cV_N$ under the action of (\ref{eq:Aell8}); the proof of Prop.  \ref{prop:AHNgrow} is contained in Subsection \ref{sec:aprioribnds}. 

In this section, we will always assume that $V \in L^3 (\bR^2)$ is compactly supported, pointwise non-negative and spherically symmetric.

\subsection{A priori bounds on the energy}\label{sec:aprioribnds}

In this section, we show Prop. \ref{prop:AHNgrow}. To this end, we will need the following proposition.  
\begin{prop}\label{prop:commAcVN} Let $\cV_N$ and  $A$  be defined in \eqref{eq:KcVN} and \eqref{eq:defA} respectively. Then, there exists a constant $C > 0$ such that  
	\begin{equation*}\label{eq:commAcVN} 
	[\cV_N,A] = \frac{1}{N^{1/2}}\sum_{\substack{u,r,v \in \Lambda_+^*\\ u\neq-v}} \widehat V((u-r)/e^N) \eta_r\big[b^*_{u+v}a^*_{-u} a_v +\hc\big] + \delta_{\cV_N}  \end{equation*}
	where 
	\begin{equation}\label{eq:RN-cVN1} 
	\begin{split}
	| \langle \xi,  \delta_{\cV_N}  \xi \rangle | \leq&\;  C (\log N)^{1/2} N^{1/2 - \alpha} \|\cH_N^{1/2} \xi\|^2 \end{split}\end{equation}
for any $\alpha > 0$, for all $\xi \in \cF^{\leq N}_+$, and $N \in \bN$ large enough.  
\end{prop}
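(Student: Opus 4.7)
My plan is to compute the commutator $[\cV_N, A]$ directly from the explicit formulas. I first split
\[
A = A_0 + A_1, \qquad A_0 = \frac{1}{\sqrt N} \sum_{r, v \in \Lambda^*_+} \eta_r \big[ a^*_{r+v} a^*_{-r} a_v - \hc \big],
\]
where $A_1$ encodes the correction $\sqrt{(N-\cN_+)/N} - 1$ coming from the definition $b^*_p = a^*_p \sqrt{(N-\cN_+)/N}$; this $A_1$ will go into $\delta_{\cV_N}$. Using $\cV_N = \tfrac 12 \sum_{p,q,s} \widehat V(s/e^N)\, a^*_{p+s} a^*_q a_{q+s} a_p$, I evaluate $[\cV_N, A_0]$ by moving the two annihilators $a_p, a_{q+s}$ in $\cV_N$ past the two creators $a^*_{r+v}, a^*_{-r}$ in $A_0$. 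The main contribution comes from the double pair-contractions; the two ways of pairing give equal contributions by the symmetry $(p,q,s) \mapsto (q+s, p+s, -s)$ of the summand of $\cV_N$ (exploiting the radial symmetry of $\widehat V$), canceling the factor $1/2$.

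Fixing the pairing $a_p \leftrightarrow a^*_{r+v}$, $a_{q+s} \leftrightarrow a^*_{-r}$ yields $p = r+v$, $q = -r-s$, and leaves $\widehat V(s/e^N)\, \eta_r\, a^*_{r+v+s} a^*_{-r-s} a_v$. The change of variables $u = r+s$ turns this into $\widehat V((u-r)/e^N)\, \eta_r\, a^*_{u+v} a^*_{-u} a_v$ summed over $u, r, v$. Restoring $b^*_{u+v}$ in place of $a^*_{u+v}$ (once more putting the correction into $\delta_{\cV_N}$) and adding the adjoint contribution produces the claimed leading expression; the restriction $u \neq -v$ is matched against $s \neq -p, -q$ in $\cV_N$, with the small number of missing / added momentum configurations absorbed into $\delta_{\cV_N}$.

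The remainder $\delta_{\cV_N}$ collects three kinds of terms: (i) five-operator terms of the form $N^{-1/2} \sum \widehat V(s/e^N)\, \eta_r\, a^* a^* a^* a a$ coming from single contractions; (ii) corrections from $A_1$ and from the $b^* - a^*$ mismatches, each controlled by extra factors of $\cN_+/N$; (iii) boundary terms from the excluded momentum configurations. The bound $C (\log N)^{1/2} N^{1/2 - \alpha} \|\cH_N^{1/2} \xi\|^2$ will be obtained via Cauchy-Schwarz on these terms using the pointwise bound $|\eta_r| \leq C/r^2$, the $L^\infty$ bound $\|\eta\|_\infty \leq C N^{-\alpha}$ which, combined with the explicit $N^{-1/2}$ prefactor and a rearrangement over the remaining sum, yields the $N^{1/2 - \alpha}$ factor, together with lattice Riemann-sum estimates on $\widehat V(\cdot/e^N)$ weighted by $1/|p|^2$ to produce the $(\log N)^{1/2}$ contribution. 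The main obstacle I anticipate is the five-operator terms: the weight $\widehat V(s/e^N)$ supplies no decay at large $|s|$, so a decomposition into momentum regimes is needed, absorbing part of the operator norm into $\cK$ (via $|p|^2 a_p^* a_p$) and part into $\cV_N$, and invoking the scattering equation \eqref{eq:eta-scat} to cancel the would-be singular contributions between $\eta_r$ and $\widehat V(\cdot/e^N)$.
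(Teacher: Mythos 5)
Your overall framework is right: compute the commutator, identify the leading cubic term via the change of variable $u = r + s$, and isolate the five-operator remainders as the hard part. The change of variables producing $\widehat V((u-r)/e^N)\,\eta_r\, b^*_{u+v}a^*_{-u}a_v$ is exactly the paper's main term, and your observation that the remainders are five-operator expressions of the form $N^{-1/2}\sum \widehat V\,\eta_r\, a^*a^*a^*aa$ is correct (these are the $\Theta_1,\Theta_2,\Theta_3$ in the paper). The split $A = A_0 + A_1$ is unnecessary overhead — the paper keeps $b^*_{r+v}$ as is and the algebra still closes cleanly — but that is merely stylistic.

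The genuine gap is in the treatment of the hardest remainder term. You propose to get the $(\log N)^{1/2}$ factor from ``lattice Riemann-sum estimates on $\widehat V(\cdot/e^N)$ weighted by $1/|p|^2$'' and to ``invoke the scattering equation to cancel the would-be singular contributions between $\eta_r$ and $\widehat V(\cdot/e^N)$.'' Neither applies here. In the problematic term (the paper's $\Theta_2 \sim N^{-1/2}\sum \widehat V(u/e^N)\,\eta_r\, b^*_{r+v}a^*_{p+u}a^*_{-r-u}a_p a_v$), the potential argument $u$ and the $\eta$ argument $r$ are \emph{independent} summation variables, with no convolution coupling, so the scattering equation \eqref{eq:eta-scat} — which trades $p^2\eta_p$ against $N\widehat V(p/e^N)$ with \emph{matching} argument — produces no cancellation. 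Likewise, $\sum_p |\widehat V(p/e^N)|^2/p^2$-type Riemann sums appear elsewhere in the paper (e.g. \eqref{eq:intV-div-p2}) but not here: the obstruction is not a single $1/p^2$ sum. What the paper actually does is pass to position space, Cauchy–Schwarz down to a first-quantized integral of a two-body potential $e^{2N}V(e^N\cdot)*|\check\eta|^2$, apply H\"older with a large exponent $q$, and then the two-dimensional Sobolev inequality $\|u\|_q \leq C\sqrt{q}\,\|u\|_{H^1}$ to absorb the $L^q$ norm of the wave function into $\cK$. Choosing $q = \log N$ is the crux: the Sobolev constant $\sqrt q$ yields $(\log N)^{1/2}$, and the interpolation $\|\check\eta\|_{2q'}^2 \leq \|\check\eta\|_2^{2/q'}\|\check\eta\|_\infty^{2(q'-1)/q'}$ (with $\|\check\eta\|_2 \lesssim N^{-\alpha}$ and $\|\check\eta\|_\infty \lesssim N$) yields $N^{-2\alpha}$ up to a harmless $N^{1/q} = O(1)$, giving $N^{1/2-\alpha}(\log N)^{1/2}$ in the end. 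Without this H\"older–Sobolev device with $q = \log N$, your argument as sketched would not produce the stated bound; this is the essential missing idea.
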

\begin{proof}
We proceed as in \cite[Prop. 8.1]{BBCS3}, computing $[ a_{p+u}^* a_{q}^* a_{p}a_{q+u}, b^*_{r+v} a^*_{-r}a_{v}]$. We obtain 
\[ [\cV_N, A] = \frac{1}{N^{1/2}}\sum^*_{u\in\Lambda^*, r, v \in \L^*_+} 
	\widehat{V}((u-r)/e^N)\eta_r b^*_{u+v} a_{-u}^*a_v + \Theta_1 + \Theta_2 + \Theta_3 + \hc \]
	with 
	\begin{equation}\label{eq:Theta1to4}
	\begin{split}
	\Theta_1 &:=\;\frac{1}{\sqrt N}\sum^*_{\substack{u\in \Lambda^*\\ r,p,v \in \L^*_+ }}\widehat{V} (u/e^N) \eta_r b_{p+u}^* a_{r+v-u}^*a_{-r}^*a_{p}a_{v}\,,   \\
	\Theta_2 &:=\;\frac{1}{\sqrt N}\sum^*_{\substack{u\in \Lambda^*\\ p,r,v\in \L^*_+}}\widehat{V} (u/e^N) \eta_r b_{r+v}^* a_{p+u}^*a_{-r-u}^*a_{p}a_{v}  \,, \\
	\Theta_3 &:=\;-\frac{1}{\sqrt N}\sum^*_{\substack{u\in \Lambda^*,p,r,v\in \Lambda_+^* }}\widehat{V} (u/e^N) \eta_r  b_{r+v}^* a_{-r}^*a_{p+u}^*a_{p}a_{v+u}  \,.
	\end{split}
	\end{equation}     
and with $\sum^*$ running over all momenta, except choices for which the argument of a creation or annihilation operator vanishes. We conclude that $\delta_{\cV_N} = \Theta_1 + \Theta_2 + \Theta_3 + \hc$. Next, we show that each error term $\Theta_j$, with $j=1,2,3$, satisfies \eqref{eq:RN-cVN1}. To bound $\Theta_1$ we switch to position space and apply Cauchy-Schwarz. We find 
	\begin{equation*}\label{eq:Theta1f} \begin{split}
	|\langle \xi, \Theta_1\xi \rangle |\leq &\; \frac{1}{\sqrt{N}} \int_{\Lambda^2}dxdy\; e^{2N}V(e^N(x-y))  \|\check{a}(\check{ \eta}_y)\check{a}_y \check{a}_x \xi \|\|\check{a}_y \check{a}_x \xi \|\\
	\leq&\; C \|\eta\| \int_{\L^2}dx dy \,e^{2N}V(e^N(x-y))\|\check{a}_y \check{a}_x \xi \|^2\\
	\leq& CN^{-\a}\|\cV_N^{1/2}\xi\|^2\,,
	\end{split}\end{equation*}
for any $\xi\in \cF_+^{\leq N}$
The term 	$\Theta_3$ can be controlled similarly. We find  
	\[\begin{split} \label{eq:Theta3f}
	|\langle \xi, \Theta_3\xi \rangle |=&\; \bigg| \frac{1}{\sqrt N } \int_{\Lambda^2}dxdy \;e^{2N}V(e^N(x-y))\langle \xi, \check{b}_x^*\check{a}^*(\check{ \eta}_x)\check{a}^*_y \check{a}_x \check{a}_y \xi \rangle\bigg|  \\
&\leq  C N^{-\a}\| \cV_N^{1/2}\xi \|^2\,.
	\end{split}\]
	It remains to  bound the term $\Theta_2$ on the r.h.s. of (\ref{eq:Theta1to4}). Passing to position space we obtain, by Cauchy-Schwarz, 
	\[\begin{split} \label{eq:Theta2}
	|\langle \xi,& \Theta_2\xi \rangle |=\; \bigg| \frac{1}{ \sqrt N } \int_{\Lambda^3}dxdydz \;e^{2N}V(e^N(y-z)) \check{ \eta}(x-z)\langle \xi, \check{b}_{x}^*\check{a}_y^*\check{a}^*_z \check{a}_x\check{a}_y\xi \rangle \bigg|  \\
	\leq &\; CN^{-1/2} \int_{\Lambda^3}dxdydz\; e^{2N} V(e^N(y-z)) |\check{\eta}(x-z)|\| \check{a}_x \check{a}_y \check{a}_z  \xi \| \| \check{a}_x \check{a}_y \xi \|\\
	\leq &\; CN^{-1/2}\|\cV_N^{1/2}\cN_+^{1/2}\xi\|\left[\int_{\L^3}dxdydz \, e^{2N}V(e^N(y-z))|\check{\eta}(x-z)|^2\|\check{a}_x \check{a}_y \xi \|^2\right]^{1/2}\,,
	\end{split}\]
To bound the term in the square bracket, we write it in first quantized form and, for any $2 < q < \infty$, 
we apply H\"older inequality and the Sobolev inequality $\| u \|_{q} \leq C \sqrt{q} \, \| u \|_{H^1}$ to estimate (denoting by $1< q' < 2$ the dual index to $q$),  
\begin{equation}\label{eq:first} \begin{split} 
\sum_{n=2}^N &\sum_{i<j}^n \int \left[ e^{2N} V (e^N \cdot ) * |\check{\eta}|^2 \right] (x_i - x_j) \, | \xi^{(n)} (x_1, \dots , x_n)|^2 dx_1 \dots dx_n  \\ 
&  \leq C q \| e^{2N} V (e^N \cdot ) * |\check{\eta}|^2 \|_{q'}  \\ 
& \hspace{1cm} \times \sum_{n=2}^N n \sum_{i=1}^n \int \left[ |\nabla_{x_i} \xi^{(n)} (x_1, \dots , x_n) |^2 + |\xi^{(n)} (x_1, \dots , x_n)|^2 \right] dx_1 \dots dx_n \\
&\leq C q \| \check{\eta} \|_{2q'}^2 \| (\cK + \cN_+)^{1/2} \cN_+^{1/2} \xi \|^2. 
 \end{split} \end{equation} 
With the bounds (\ref{eq:etax}), (\ref{eq:etaHL2}), 
 \[ 
\| \check{\eta} \|_{2q'}^{2} \leq \| \check{\eta} \|_2^{2/q'} \| \check{\eta} \|_\infty^{2(q'-1)/q'} \leq N^{-2\alpha/q'}  N^{2(q'-1)/q'} \] 
we conclude that 
\[ \begin{split} |\langle \xi , \Theta_2 \xi \rangle | &\leq C q^{1/2} N^{-1/2}  N^{-\alpha/q'}  N^{1/q}  \| \cV_N^{1/2} \cN_+^{1/2} \xi \| \|(\cK + \cN_+)^{1/2} \cN_+^{1/2} \xi \| \\ &\leq C q^{1/2} 
N^{1/2}  N^{-\alpha/q'}  N^{1/q} \| \cV_N^{1/2} \xi \| \|\cK^{1/2} \xi \|
 \end{split} \] 
for any $2 < q < \infty$, if $1/q + 1/q' = 1$. Choosing $q = \log N$, we obtain that 
\[  |\langle \xi , \Theta_2 \xi \rangle |  \leq C (\log N)^{1/2} N^{1/2 -\alpha} \| \cH_N^{1/2} \xi \|^2\, .  \]
 \end{proof}

Using Prop. \ref{prop:commAcVN}, we can now show Proposition \ref{prop:AHNgrow}.
\begin{proof}[Proof of Prop. \ref{prop:AHNgrow}]
The proof follows a strategy similar to \cite[Lemma 8.2]{BBCS3}. For fixed $\xi \in \cF_+^{\leq N}$ and $s\in [0; 1]$, we define
\begin{equation*}
f_\xi (s) := \langle \xi, e^{-sA} \cH_N e^{sA} \xi\rangle\,.  \end{equation*}
We compute  
\begin{equation}\label{eq:f'1} f'_\xi  (s) = \langle \xi, e^{-sA} [\cK, A] e^{sA} \xi\rangle + \langle \xi, e^{-sA} [\cV_N, A]  e^{sA} \xi\rangle \,. \end{equation}
With Prop. \ref{prop:commAcVN}, we have  
\[  [\cV_N, A]  = \frac{1}{\sqrt N} \sum_{\substack{ u, v\in \L^*_+, u\neq-v}} (\widehat{V}(\cdot/e^N)*\eta)(u) \left[ b^*_{u+v} a^*_{-u} a_v + \hc \right] +  \delta_{\cV_N} \]
with $\delta_{\cV_N}$ satisfying (\ref{eq:RN-cVN1}). Switching to position space and using Prop. \ref{prop:ANgrow} we find , using (\ref{eq:etax}) to bound $\| \check{\eta} \|_\infty \leq C N$,  
\begin{equation}\label{eq:estVN1}
	\begin{split}
	&\bigg |  \frac{1}{\sqrt N}\sum_{\substack{u,v\in\Lambda_+^*}}(\widehat{V}(\cdot/e^N)*\eta)(u) \langle \xi,e^{-sA}b^*_{u+v}a^*_{-u} a_v e^{sA}\xi \rangle\bigg| \\
	&\hspace{0cm}= \bigg | \fra{1}{\sqrt N}\int_{\Lambda^2} dx dy\; e^{2N} V(e^N(x-y))\check\eta(x-y) \langle \xi, e^{-sA} \check{a}^*_{x} \check{a}^*_{y} \check{a}_{y} e^{sA}\xi \rangle\bigg|\\
	&\leq N^{1/2}  \bigg[ \int_{\Lambda^2}dxdy\; e^{2N}V(e^N(x-y))\|\check{a}_x\check{a}_ye^{sA}\xi\|^2\bigg]^{1/2}\\	
	&\hspace{1.5cm}\times\bigg[ \int_{\Lambda^2}dxdy\; e^{2N}V(e^N(x-y))\|\check{a}_ye^{sA}\xi\|^2\bigg]^{1/2}\\
	&\leq CN^{1/2} \| \cV_N^{1/2} e^{sA} \xi\| \| \cN_+^{1/2} e^{sA} \xi\|
	\end{split}
\end{equation}	
Together with (\ref{eq:RN-cVN1}) we conclude that for any $\a > 1/2$ 
	\be  \label{eq:cVN-commA}
\Big| \langle \xi, e^{-sA} [\cV_N, A]  e^{sA} \xi \rangle \Big| \leq C \langle \xi, e^{-sA} \cH_N e^{sA} \xi \rangle + CN\langle \xi, e^{-sA} (\cN_++1) e^{sA} \xi \rangle 
\ee
	if $N$ is large enough. Next, we analyze the first term on the r.h.s. of (\ref{eq:f'1}). We compute 
	\begin{equation}\label{eq:commAcK}
	\begin{split}
	[\cK,A] &= \frac{1}{\sqrt N}\sum_{r, v\in \L^*_+ } 2r^2\eta_r \big[b^*_{r+v}a^*_{-r} a_v +\text{h.c.}\big]\\
	&\hspace{3cm} +  \frac{2}{\sqrt N}\sum_{r, v\in \L^*_+ }  r\cdot v \;\eta_r\big[b^*_{r+v}a^*_{-r} a_v +\text{h.c.}\big] \\ &=: \text{T}_1 + \text{T}_2 \,.
	\end{split}
	\end{equation} 
	With (\ref{eq:eta-scat}), we write 
	\begin{equation}\label{eq:commAcKterm1}\begin{split}
	\text{T}_1 = \; &-\sqrt N\sum_{\substack{r,v\in\Lambda_+^*\\ r\neq-v} } (\widehat V(\cdot/e^N)\ast \widehat f_{N,\ell})(r) \big[b^*_{r+v}a^*_{-r} a_v +\text{h.c.}\big]\\
	&+ 2\sqrt N\sum_{r, v\in \L^*_+ } e^{2N} \lambda_\ell (\widehat{\chi}_\ell * \widehat{f}_{N,\ell})(r)  \big[b^*_{r+v}a^*_{-r} a_v +\text{h.c.}\big]\\
	=: &\; \text{T}_{11} + \text{T}_{12}\,.
	\end{split}\end{equation}	
The contribution of $\text{T}_{11}$ can be estimated similarly as in (\ref{eq:estVN1}); switching to position space and using \eqref{eq:intpotf}, we obtain 
\begin{equation}\label{eq:T11} \begin{split}
		\big| \langle \xi_1, \text{T}_{11} \, \xi_2 \rangle \big| &\leq 
		C\sqrt{N}\int dxdye^{2N}V(e^N(x-y))f_{\ell}(e^N(x-y))\|\check{a}_x\check{a}_y \xi\|\|  a_y \xi\|\\
		&\leq C\sqrt{N}\, \Big[\int dxdye^{2N}V(e^N(x-y)) \|\check{a}_x\check{a}_y \xi\|^2 \Big]^{1/2}\\& \hskip 1.5cm \times\Big[ \int dxdye^{2N}V(e^N(x-y))f_{\ell}(e^N(x-y))\| a_y \xi\|^2 \Big]^{1/2}\\	
		&\leq C  \|\cV_N^{1/2}  \xi \| \| \cN_+^{1/2} \xi\|\,. \end{split}\end{equation}
for any $\x \in \cF^{\leq N}_+$.
The second term in (\ref{eq:commAcKterm1}) can be controlled using that for any $\x \in \cF^{\leq N}_+$ and $2 \leq q < \io$ we have	
	\begin{equation}\label{eq:estimatechi2}\begin{split}
& N^{2\a} \int_{\L^2} dx dy \, \chi (|x-y|\leq N^{-\a}) \| \check{a}_x \check{a}_y \xi\| \| \check{a}_x \xi \| \\
	& \leq N^{2\alpha} \int_{\L^2} dx \|\check{a}_x\xi\|\left( \int dy \, \chi (|x-y|\leq N^{-\a})\right)^{1-1/q}\left(\int dy \|\check{a}_x\check{a}_y\xi\|^q\right)^{1/q}\\
	&\leq C N^{2\a/q}q^{1/2}\left[\int dx\|\check{a}_x\xi\|^2 \right]^{1/2}\left[\int dxdy \|\check{a}_x\nabla_y\check{a}_y\xi\|^2+\int dxdy\|\check{a}_x\check{a}_y\xi\|^2\right]^{1/2}\\
	&\leq C  N^{2\a/q}q^{1/2}\|(\cN_++1)^{1/2}\xi\|\left[\|\cK^{1/2}(\cN_++1)^{1/2}\xi\|+\|(\cN_++1)\xi\|\right].
	\end{split}\end{equation}
Hence, choosing $q=\log N$, 
	\begin{equation}\label{eq:T12}\begin{split}
	\big |  \langle \xi,  & \text{T}_{12} \xi \rangle\big| \\
&= \Big|\sqrt N e^{2N} \l_\ell \int_{\L^2} dx dy \, \chi (|x-y|\leq N^{-\a})f_{N,\ell}(x-y) \bmedia{\xi, \check{b}^*_x\check{a}^*_y\check{a}_x \xi}\Big|\\	
	& \leq  C N^{2\a -1/2} \int_{\L^2} dx dy \, \chi (|x-y|\leq N^{-\a}) \| \check{a}_x \check{a}_y \xi\| \| \check{a}_x \xi \| \\
	&\leq C (\log N)^{1/2} \|(\cN_++1)^{1/2}\xi\|\left[\|\cK^{1/2}\xi\|+\|(\cN_++1)^{1/2}\xi\|\right] \,,
	\end{split}\end{equation}
With \eqref{eq:T11} and \eqref{eq:T12} we conclude that 
	\begin{equation}\label{eq:T1-bd} |\langle \xi , e^{-A} \text{T}_1 e^A \xi \rangle | \leq  C (\log N)^{1/2}\| (\cH_N + 1)^{1/2} e^{sA}\xi\|  \| (\cN_++1)^{1/2} e^{sA} \xi \|\,. \end{equation}
	for all $\xi \in \cF_+^{\leq N}$. 	As for the second term on the r.h.s. of (\ref{eq:commAcK}) we have
	\begin{equation}\label{eq:boundT2}
	\begin{split}
	&\big| \langle \xi, \text{T}_2 \xi \rangle\big| \\
	&\hspace{.5cm} \leq \frac{C}{\sqrt N}\bigg [\sum_{r\in \L^*_+ }   |r|^2 \| \cN_+^{1/2} a_{-r} \xi \|^2 \bigg]^{1/2}\bigg [ \sum_{r, v\in \L^*_+ } |v|^{2}\eta_r^2\| a_{v} \x \|^2 \bigg]^{1/2}\\	
	&\hspace{.5cm} \leq CN^{-\a} \| \cK^{1/2}\xi\|^2.
	\end{split}
	\end{equation}
for any $\xi \in \cF^{\leq N}_+$. 	With (\ref{eq:T1-bd}) and Prop. \ref{prop:ANgrow}, we conclude that
	\begin{equation*} \label{eq:cK-commA} 
	|\langle \xi, e^{-sA} [\cK, A]  e^{sA} \xi\rangle| \leq C \langle \xi, e^{-sA} \cH_N e^{sA}\xi \rangle + C \log N \langle \xi,  e^{-sA} \cN_+ e^{sA}  \xi \rangle\,.\end{equation*}
Combining with  Eq. \eqref{eq:cVN-commA} we obtain
\begin{equation*}  
	|\langle \xi, e^{-sA} [\cH_N, A]  e^{sA} \xi\rangle| \leq C \langle \xi, e^{-sA} \cH_N e^{sA}\xi \rangle + C N \langle \xi,   e^{-sA} \cN_+ e^{sA}\xi \rangle \,.\end{equation*}
With Prop. \ref{prop:ANgrow}  we obtain the differential inequality 
	\[ | f'_\xi (s) | \leq C f_\xi (s) + CN \langle \xi, (\cN_+ + 1) \xi \rangle \,.\]
	By Gronwall's Lemma, we find (\ref{eq:expAHNexpA}). 
\end{proof}

\subsection{Analysis of $e^{-A} \cO_N e^{A}$}\label{sub:RN-cO}

In this section we study the contribution to $\cR_{N,\a}$ arising from the operator $\cO_N$, 
defined in \eqref{eq:GNeff-deco2}. To this end, it is convenient to use the following lemma.
\begin{lemma} \label{lm:cNops} Let $A$ be defined in \eqref{eq:defA}. Then, there exists a constant $C > 0$ such that 
	\[
	\sum_{p\in \Lambda_+^*} F_p\,  e^{-A} a^*_p a_p  e^{A} =  \sum_{p\in \Lambda_+^*} F_p\, a^*_p a_p  + \cE_F  
\]
where
\[ \label{eq:lmF-1} 
	| \bmedia{\xi_1, \cE_F \xi_2 }|\leq  CN^{-\alpha} \|  F\|_\infty \| (\cN_++1)^{1/2}\xi_1\| \| (\cN_++1)^{1/2}\xi_2\|
	\]
	for all $\alpha > 0$, $\xi_1, \xi_2 \in \cF_+^{\leq N}$, $F \in \ell^{\infty} (\Lambda^*_+)$,  and $N \in \bN$ large enough.  
\end{lemma}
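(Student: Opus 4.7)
The plan is to run a Duhamel/Gronwall argument on the commutator $[\Sigma,A]$, where $\Sigma := \sum_{p\in\Lambda^*_+} F_p\, a_p^* a_p$. Writing
\[
e^{-A}\Sigma e^A - \Sigma \;=\; \int_0^1 ds\, e^{-sA}[\Sigma,A]\,e^{sA},
\]
the task reduces to controlling the bilinear form of $[\Sigma,A]$; then Proposition~\ref{prop:ANgrow} with $k=1$ propagates the resulting $(\cN_++1)^{1/2}$-bound through the conjugating factors $e^{\pm sA}$, uniformly in $s\in[0;1]$, and the integration in $s$ is harmless.

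To compute $[\Sigma,A]$, observe that $a_p^*a_p$ commutes with $\cN_+$ and hence with the function $\sqrt{(N-\cN_+)/N}$ entering $b_q^\sharp$, so $[a_p^*a_p, b_q^*] = \delta_{p,q}\,b_q^*$, $[a_p^*a_p, a_q^*] = \delta_{p,q}\,a_q^*$ and $[a_p^*a_p, a_q] = -\delta_{p,q}\,a_q$. Applied to \eqref{eq:defA} this gives
\[
[\Sigma,A] \;=\; \frac{1}{\sqrt N}\sum_{r,v\in\Lambda^*_+}\eta_r\,(F_{r+v}+F_{-r}-F_v)\bigl[b^*_{r+v}a^*_{-r}a_v - \mathrm{h.c.}\bigr].
\]

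For fixed states $\zeta_1,\zeta_2\in\cF_+^{\leq N}$, I would pull out $|F_{r+v}+F_{-r}-F_v|\leq 3\|F\|_\infty$ and bound each summand by $\|a_{-r}b_{r+v}\zeta_1\|\,\|a_v\zeta_2\|$, obtaining
\[
|\langle\zeta_1,[\Sigma,A]\zeta_2\rangle| \;\leq\; \frac{6\|F\|_\infty}{\sqrt N}\sum_{r,v\in\Lambda^*_+}|\eta_r|\,\|a_{-r}b_{r+v}\zeta_1\|\,\|a_v\zeta_2\|.
\]
The Cauchy--Schwarz must be organized carefully: first in $v$ for each fixed $r$, then in $r$ with weight $|\eta_r|$, so that the outer $r$-sum is controlled by $\|\eta\|$ (a naive Cauchy--Schwarz in $(r,v)$ with weight $|\eta_r|$ on the $\zeta_2$-side produces a divergent $\sum_r 1$). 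This yields
\[
|\langle\zeta_1,[\Sigma,A]\zeta_2\rangle| \;\leq\; \frac{C\|F\|_\infty\,\|\eta\|}{\sqrt N}\,\Bigl[\sum_{r,v}\|a_{-r}b_{r+v}\zeta_1\|^2\Bigr]^{1/2}\Bigl[\sum_v\|a_v\zeta_2\|^2\Bigr]^{1/2}.
\]
The last factor equals $\|\cN_+^{1/2}\zeta_2\|$. For the first, change variables $p=r+v$ and use $\sum_r\|a_{-r}\phi\|^2=\|\cN_+^{1/2}\phi\|^2$, then $[\cN_+,b_p]=-b_p$ and $\sum_p b_p^*b_p\leq\cN_+$ to estimate
\[
\sum_{r,v}\|a_{-r}b_{r+v}\zeta_1\|^2 = \sum_p\|\cN_+^{1/2}b_p\zeta_1\|^2 \leq \langle\zeta_1,\cN_+(\cN_+-1)\zeta_1\rangle \leq \|\cN_+\zeta_1\|^2.
\]
On the truncated space $\cF_+^{\leq N}$ the inequality $\cN_+\leq N$ gives $\|\cN_+\zeta_1\|^2\leq N\,\|(\cN_++1)^{1/2}\zeta_1\|^2$. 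The $\sqrt N$ from the truncation cancels the explicit $1/\sqrt N$ in front, and the bound $\|\eta\|\leq CN^{-\alpha}$ from \eqref{eq:etaHL2} delivers exactly the desired rate:
\[
|\langle\zeta_1,[\Sigma,A]\zeta_2\rangle| \;\leq\; C N^{-\alpha}\|F\|_\infty\,\|(\cN_++1)^{1/2}\zeta_1\|\,\|(\cN_++1)^{1/2}\zeta_2\|.
\]
Substituting $\zeta_j = e^{sA}\xi_j$ and invoking Proposition~\ref{prop:ANgrow} completes the proof after integrating in $s$. The only delicate point is the $N^{1/2}$ loss one suffers from $\|\cN_+\zeta_1\|\leq \sqrt N\,\|(\cN_++1)^{1/2}\zeta_1\|$: this is tight, and cancels the $1/\sqrt N$ scaling of $A$ precisely, leaving the full $\|\eta\|\sim N^{-\alpha}$ as the gain. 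If one instead tried to control $\sum_p\|\cN_+^{1/2}b_p\zeta_1\|^2$ by $\|(\cN_++1)^{1/2}\zeta_1\|^2$ directly (without the truncation trick), the estimate would fail.
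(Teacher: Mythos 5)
Your proof follows exactly the paper's route (Duhamel, the commutator $[a_p^*a_p,A]$ producing the coefficient $F_{r+v}+F_{-r}-F_v$, Cauchy--Schwarz, and Prop.~\ref{prop:ANgrow}), and all the estimates are correct, including the key observation that the truncation $\cN_+\leq N$ supplies the $\sqrt N$ that cancels the $N^{-1/2}$ prefactor of $A$. One cosmetic slip: since $\Sigma^*=\Sigma$ and $A^*=-A$, the commutator $[\Sigma,A]$ is self-adjoint, so it should read $b^*_{r+v}a^*_{-r}a_v+\mathrm{h.c.}$ rather than $-\mathrm{h.c.}$ (this does not affect the bound), and the parenthetical about which Cauchy--Schwarz fails has the side mislabeled (the divergent $\sum_r 1$ arises if $|\eta_r|$ is placed on the $\zeta_1$-factor, not the $\zeta_2$-factor).
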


\begin{proof} The lemma is analogous to \cite[Lemma 8.6]{BBCS3}. We estimate
	\[\begin{split}
	\Big| \sum_{p\in \Lambda_+^*} F_p   &( \langle \xi_1, e^{-A} a^*_p a_p  e^{A} \xi_2 \rangle - \langle \xi_1 , a^*_p a_p \xi_2 \rangle ) \Big|  \\ &= \Big|  \int_0^1ds\;\sum_{p\in \Lambda_+^*} F_p \langle \xi_1 , e^{-sA}[a^*_p a_p, A] e^{sA} \xi_2 \rangle \Big| \\ &\leq  \frac1{\sqrt N} \int_0^1 ds \sum_{r, v\in  \L^*_+ } | F_{r+v}+F_{-r} - F_v | |\eta_r|  |\langle e^{sA} \xi_1, b^*_{r+v}a^*_{-r}a_v e^{sA} \xi_2 \rangle | \\
	&\leq  C \| \eta \| \|  F \|_\infty \| (\cN_++1)^{1/2}\xi_1\| \| (\cN_++1)^{1/2}\xi_2\|\,.
	\end{split}\]
where we used Prop. \ref{prop:ANgrow}. 
\end{proof}

We consider now the action of $e^A$ on the operator $\cO_N$, as defined in (\ref{eq:GNeff-deco2}). 
\begin{prop}\label{prop:cON} Let $A$ be defined in \eqref{eq:defA}. Then there exists a constant $C>0$ such that
	\begin{equation*}
	e^{-A} \cO_N e^A = \frac 1 2 \widehat{\o}_N(0)(N-1) \left(1-\frac{\cN_+}{N}\right) + \big[2N\widehat V(0)- \frac 1 2 \widehat{\o}_N(0) \big] \cN_+ (1-\cN_+/N) + \delta_{\cO_N} 
	\end{equation*}
	where  		
	\begin{equation*}\label{eq:RN-cON1} 
	\pm  \delta_{\cO_N}  \leq  CN^{1-\alpha}   (\cN_+ +1)
\end{equation*}
for all $\alpha > 0$,  and $N \in \bN$ large enough. 
\end{prop}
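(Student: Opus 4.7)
The operator $\cO_N$ is a polynomial of degree two in $\cN_+$: expanding the binomials in \eqref{eq:GNeff-deco2}, I can write
\[
\cO_N = c_0 + c_1\cN_+ + (c_2/N)\cN_+^2
\]
with real constants satisfying $|c_0|,|c_1|,|c_2|\leq CN$ (using $|\widehat V(0)|\leq C$ and the estimate $|\widehat{\o}_N(0)|\leq C$ from \eqref{eq:omegahat0}). Since $c_0$ is a c-number,
\[
\delta_{\cO_N} = c_1\big(e^{-A}\cN_+ e^A - \cN_+\big) + \frac{c_2}{N}\big(e^{-A}\cN_+^2 e^A - \cN_+^2\big),
\]
so it suffices to control these two differences separately as quadratic forms on $\cF_+^{\leq N}$.

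The linear piece is immediate from Lemma \ref{lm:cNops} applied with the constant weight $F_p \equiv c_1$: since $\|F\|_\infty = |c_1|\leq CN$, the lemma yields $|\langle\xi, c_1(e^{-A}\cN_+ e^A - \cN_+)\xi\rangle|\leq CN^{1-\alpha}\langle\xi,(\cN_++1)\xi\rangle$ for every $\xi\in\cF_+^{\leq N}$. For the quadratic piece I would use Duhamel, $e^{-A}\cN_+^2 e^A - \cN_+^2 = \int_0^1 e^{-sA}[\cN_+^2, A]e^{sA}\,ds$, and compute the commutator explicitly. Each monomial $b^*_{r+v}a^*_{-r}a_v$ in $A$ raises $\cN_+$ by one, so $\cN_+ b^*_{r+v}a^*_{-r}a_v = b^*_{r+v}a^*_{-r}a_v(\cN_++1)$, and therefore
\[
[\cN_+^2, A] = \frac{1}{\sqrt N}\sum_{r,v\in\Lambda_+^*}\eta_r\Big[(2\cN_++1)\,b^*_{r+v}a^*_{-r}a_v - a_v^*a_{-r}b_{r+v}\,(2\cN_+-1)\Big].
\]
Passing the $\eta$-sum to position space, as in the proof of Lemma \ref{lm:cNops}, we have $\sum_{r,v}\eta_r b^*_{r+v}a^*_{-r}a_v = \int dxdy\,\check\eta(x-y)\check b^*_x\check a^*_y\check a_x$, and a Cauchy--Schwarz argument applied with $\xi_1 = (2\cN_++1)\xi$ and $\xi_2 = \xi$ gives
\[
|\langle\xi,[\cN_+^2,A]\xi\rangle| \leq C\|\eta\|\,\|(\cN_++1)^{1/2}(2\cN_++1)\xi\|\,\|(\cN_++1)^{1/2}\xi\| \leq CN^{1-\alpha}\langle\xi,(\cN_++1)\xi\rangle,
\]
where I bounded $(2\cN_++1)\leq 3N$ on $\cF_+^{\leq N}$ and used $\|\eta\|\leq CN^{-\alpha}$ from \eqref{eq:etaHL2}. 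Integrating in $s\in[0,1]$, invoking Prop. \ref{prop:ANgrow} to absorb the outer $e^{\pm sA}$ at the cost of harmless multiplicative constants in front of $\cN_++1$, and multiplying by $|c_2|/N\leq C$, gives the same bound $CN^{1-\alpha}\langle\xi,(\cN_++1)\xi\rangle$ for the quadratic contribution. Since $\delta_{\cO_N}$ is symmetric, combining the two estimates yields $\pm\delta_{\cO_N}\leq CN^{1-\alpha}(\cN_++1)$.

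The only delicate step is the bound on $[\cN_+^2,A]$: a careless estimate that distributes the two factors of $\cN_+$ evenly across the bilinear form would cost an extra $\sqrt N$; the key is to keep the whole factor $(2\cN_++1)$ bundled on one side of the inner product before Cauchy--Schwarz, so that it can be bounded pointwise by $3N$ via the truncation of Fock space. All other manipulations follow the patterns already present in Lemma \ref{lm:cNops} and Prop. \ref{prop:commAcVN}, and use only the $\ell^2$-bound $\|\eta\|\leq CN^{-\alpha}$, making no further hypothesis on $\alpha>0$.
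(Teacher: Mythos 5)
Your proposal is correct in structure and gives the right bound, but for the $\cN_+^2$-piece it takes a slightly different route than the paper. The paper writes $e^{-A}\cN_+^2 e^A - \cN_+^2 = (e^{-A}\cN_+e^A)^2 - \cN_+^2$ and polarizes, setting $\xi_1 = e^{-A}\cN_+e^A\xi$ and $\xi_2 = \cN_+\xi$, and then applies Lemma~\ref{lm:cNops} to each of the two resulting inner products; the factor of $N$ is then absorbed by bounding $\|(\cN_++1)^{1/2}\xi_1\|\leq C\|(\cN_++1)^{3/2}\xi\|$ via Prop.~\ref{prop:ANgrow}. You instead invoke Duhamel for $\cN_+^2$ and estimate $[\cN_+^2,A]$ directly, keeping the $(2\cN_++1)$-factor bundled on one side of the form before applying Cauchy--Schwarz and $\cN_+\leq N$. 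Both arguments give $CN^{1-\alpha}(\cN_++1)$; the paper's has the advantage of reusing the already-proved Lemma~\ref{lm:cNops} verbatim, while yours is more self-contained. One small slip worth fixing: your explicit formula for $[\cN_+^2,A]$ is not hermitian as written. Since $b^*_{r+v}a^*_{-r}a_v$ raises $\cN_+$ by one and $a^*_va_{-r}b_{r+v}$ lowers it, the normal-ordered commutator is
\[
[\cN_+^2,A] = \frac{1}{\sqrt N}\sum_{r,v\in\L^*_+}\eta_r\Big[(2\cN_+-1)\,b^*_{r+v}a^*_{-r}a_v + (2\cN_++1)\,a_v^*a_{-r}b_{r+v}\Big]\,,
\]
i.e.\ the constant in the first coefficient is $-1$ rather than $+1$, and the sign in front of the hermitian conjugate term is $+$, not $-$ (the commutator with the $-\hc$ part of $A$ flips the sign). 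This does not affect your estimate, which only uses $|2\cN_+\pm1|\leq 3N$ on $\cF^{\leq N}_+$, but the displayed formula should be corrected.
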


\begin{proof} 
The proof is very similar to \cite[Prop. 8.7]{BBCS3}. First of all, with Lemma \ref{lm:cNops} we can bound 	
\[\begin{split}  \label{eq:RN-cON-p1}
	\pm\bigg\{ e^{-A} &\left[ \frac 1 2 \widehat{\o}_N(0) (N-1)\left(1-\frac{\cN_+}{N}\right) + \big[2N\widehat V(0)- \frac 1 {2} \widehat{\o}_N(0)\big]\cN_+ \right] e^{A} \\ &\hspace{1.5cm} - \left[ \frac 1 2 \widehat{\o}_N(0)  (N-1)\left(1-\frac{\cN_+}{N}\right)  + \big[2 N \widehat V(0)- \frac 1 2 \widehat{\o}_N(0) \big]\cN_+ \right] \bigg\}\\
	&\hspace{1.5cm} \leq C N^{1-\alpha}(\cN_++1)\,. \end{split} \]
Moreover, for the contribution quadratic in $\cN_+$, we can decompose 
\[ \begin{split}& \left\langle \xi, \left[ e^{-A} \cN_+^2 e^A - \cN_+^2 \right] \xi \right\rangle \\ &\hspace{2cm} =   \left\langle \xi_1, \left[ e^{-A} \cN_+ e^A - \cN_+ \right] \xi \right\rangle + \left\langle \xi, \left[e^{-A} \cN_+ e^A - \cN_+ \right] \xi_2 \right\rangle \end{split} \]
	with $\xi_1 = e^{-A} \cN_+ e^A \xi$ and $\xi_2 = \cN_+ \xi$, and estimate, again with Lemma \ref{lm:cNops},  
	\[\begin{split}  &\left| \left\langle \xi, \left[ e^{-A} \cN_+^2 e^A - \cN_+^2 \right] \xi \right\rangle \right| \\ &\hspace{2cm} \leq C N^{-\a} \| (\cN_+ + 1)^{1/2} \xi \| \left[ \| (\cN_+ + 1)^{1/2} \xi_1 \| + \| (\cN_+ + 1)^{1/2} \xi_2 \| \right]\,. \end{split} \]
With Prop. \ref{prop:ANgrow}, we have $\| (\cN_+ + 1)^{1/2} \xi_1 \| \leq C \| (\cN_+ +1)^{3/2} \xi \|$. 
\end{proof}

\subsection{Contributions from $e^{-A} \cK e^{A}$}

In Section \ref{sub:Rfin} we will analyse the contributions to $\cR_{N,\a}$ arising from conjugation of the kinetic energy operator $\cK = \sum_{p \in \L_+^*} p^2 a_p^* a_p$.  To this aim we will exploit  properties of the commutator $[\cK,  A]$, collected in the following proposition.

\begin{prop}\label{prop:RN-K}  Let $A$ be defined as in \eqref{eq:defA} and $\widehat{\o}_N(r)$ be defined in \eqref{eq:defomegaN}. Then there exists a constant $C>0$ such that
	\[ \label{eq:cKAesplicit}
	\begin{split}
	[\cK, A] = &\; -\sqrt N\sum_{p,q\in \Lambda_+^*, p \neq -q } (\widehat V(\cdot/e^N)\ast \widehat f_{N,\ell})(p) ( b^*_{p+q}a^*_{-p} a_q+ \hc)\\
	&\;+ \frac 1 {\sqrt N} \sum_{p, q \in \L^*_+, p \neq -q } \widehat{\o}_N(p) \big[ b^*_{p+q}a^*_{-p} a_q + \hc\big] + \delta_\cK
	\end{split}
	\]
where
	\begin{equation} \label{eq:RN-cK1} 
	\big|  \langle \xi , \d_\cK \xi \rangle \big|    \leq  C N^{-1} (\log N)^{1/2}  \|\cK^{1/2}\xi \| \|\cN_+^{1/2}\xi\| + C N^{-\a}  \|\cK^{1/2}\xi \|^2
\end{equation}
	for all $\a >1$, $\xi \in \cF_+^{\leq N}$, and $N \in \bN$ large enough. Moreover, the operator
\[
\D_\cK = \frac 1 {\sqrt N}  \sum_{p, q \in \L^*_+, p \neq -q } \widehat{\o}_N(p)  \big[ b^*_{p+q}a^*_{-p} a_q , A \big]
\]
satisfies 	\begin{equation} \label{eq:RN-cK3} 
	\big|  \langle \xi , \D_\cK \xi \rangle \big|   \leq  C N^{-\a} (\log N)^{1/2}  \|\cK^{1/2}\xi \|^2 + C N^{-1} \| (\cN_+ + 1)^{1/2} \xi\|^2
	\end{equation}
	for all $\a > 1$, $\xi \in \cF_+^{\leq N}$, and $N \in \bN$ large enough. 

\end{prop}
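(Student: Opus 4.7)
The plan is to compute $[\cK, A]$ by direct expansion and then, using the scattering equation \eqref{eq:eta-scat0}, identify the two displayed cubic terms on the right-hand side of the claimed identity, absorbing everything else into $\delta_\cK$. Since $\cK$ acts diagonally in momentum space, a direct calculation gives $[\cK, b^*_{r+v}a^*_{-r}a_v] = \bigl((r+v)^2 + r^2 - v^2\bigr)\, b^*_{r+v}a^*_{-r}a_v = (2r^2 + 2r\cdot v)\, b^*_{r+v}a^*_{-r}a_v$, so that $[\cK,A] = T_1 + T_2 + \hc$ with
\[
T_1 = \frac{1}{\sqrt N}\sum_{r,v\in \L^*_+} 2r^2\eta_r\, b^*_{r+v}a^*_{-r}a_v, \qquad T_2 = \frac{2}{\sqrt N}\sum_{r,v\in\L^*_+} r\cdot v \, \eta_r\, b^*_{r+v}a^*_{-r}a_v.
\]
The term $T_2$ is already an error: exactly as in \eqref{eq:boundT2} from the proof of Prop.~\ref{prop:AHNgrow}, splitting the momentum factors as $|r|\,\|\cN_+^{1/2} a_{-r}\xi\|$ and $|v|\,|\eta_r|\,\|a_v\xi\|$ and applying Cauchy--Schwarz together with \eqref{eq:modetap} yields $|\langle\xi, T_2\xi\rangle|\leq CN^{-\a}\|\cK^{1/2}\xi\|^2$.

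For $T_1$, I would insert the scattering relation \eqref{eq:eta-scat0}, namely
\[
2r^2\eta_r = -N\,(\widehat V(\cdot/e^N)\ast \widehat f_{N,\ell})(r) + 2N e^{2N}\lambda_\ell (\widehat\chi_\ell\ast\widehat f_{N,\ell})(r).
\]
The first summand reproduces immediately the first cubic term in the claim. For the second summand I would use $\widehat f_{N,\ell}(q) = \delta_{q,0} + N^{-1}\eta_q$ to split $(\widehat\chi_\ell\ast \widehat f_{N,\ell})(r) = \widehat\chi_\ell(r) + N^{-1}(\widehat\chi_\ell\ast \eta)(r)$. The first piece produces precisely $N^{-1/2}\widehat{\o}_N(r)$, because by \eqref{eq:defomegaN} and $\widehat\chi_\ell(p)=N^{-2\a}\widehat\chi(p/N^\a)$ one has $2N e^{2N}\lambda_\ell \widehat\chi_\ell(r) = \widehat{\o}_N(r)$; this gives the second claimed cubic term. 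The remaining piece contributes
\[
\cE = \frac{2 e^{2N}\lambda_\ell}{\sqrt N\, N}\sum_{r,v\in\L^*_+}(\widehat\chi_\ell\ast\eta)(r)\bigl[b^*_{r+v}a^*_{-r}a_v+\hc\bigr]
\]
to $\delta_\cK$. To bound $\cE$ I would pass to position space: since $(\widehat\chi_\ell\ast\eta)(r)$ is the Fourier coefficient of $\chi_\ell(x)\check\eta(x)$, the operator $\cE$ becomes an integral over $\{|x-y|\leq \ell\}$ of $\check\eta(x-y)\,\check b^*_x\check a^*_y\check a_x$ times the prefactor $e^{2N}\lambda_\ell /(\sqrt N\, N)$. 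Using \eqref{eq:eigenvalue}, Cauchy--Schwarz, and a Sobolev inequality with exponent $q = \log N$ applied to the $|\check\eta|^{2q'}$-integral (as in \eqref{eq:first}, combined with the pointwise bound \eqref{eq:etax0} on the logarithmic singularity of $\check\eta$) one expects $|\langle\xi,\cE\xi\rangle|\leq CN^{-1}(\log N)^{1/2}\|\cK^{1/2}\xi\|\,\|\cN_+^{1/2}\xi\|$, which together with the bound on $T_2$ establishes \eqref{eq:RN-cK1}.

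For the bound on $\D_\cK$, I would expand both $[b^*_{p+q}a^*_{-p}a_q,\, b^*_{r+v}a^*_{-r}a_v]$ and its hermitian conjugate via the canonical commutation relations. The outcome is a finite linear combination of sextic normal-ordered terms, structurally analogous to $\Theta_1,\Theta_2,\Theta_3$ in the proof of Prop.~\ref{prop:commAcVN} but with the singular weight $\widehat V(\cdot/e^N)$ replaced by the much tamer $\widehat{\o}_N$, together with quartic contraction terms carrying an additional factor $N^{-1}$. Each sextic piece is estimated in position space by Cauchy--Schwarz, using the two key properties $\|\widehat{\o}_N\|_\infty\leq C$ and $\sum_p |\widehat{\o}_N(p)|^2/p^2\leq C\log N$ from \eqref{eq:intV-div-p2} together with $\|\eta\|_\infty\leq CN^{-\a}$; this produces the term $N^{-\a}(\log N)^{1/2}\|\cK^{1/2}\xi\|^2$ in \eqref{eq:RN-cK3}. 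The quartic contraction terms, damped by the extra $N^{-1}$, are controlled using only $\|\widehat{\o}_N\|_\infty\leq C$ and give the $N^{-1}\|(\cN_++1)^{1/2}\xi\|^2$ contribution. Throughout both parts, the main obstacle is that every estimate must accommodate the slow decay of $\widehat{\o}_N$ (which satisfies $\widehat{\o}_N(0)\sim 4\pi$ rather than shrinking in $N$): this forces the systematic use of the kinetic energy combined with Sobolev inequalities of exponent $\log N$ in order to absorb the logarithmic singularity of $\check\eta$ at the origin without losing powers of $N$.
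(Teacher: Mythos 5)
Your plan reproduces the paper's route for the first part almost verbatim: compute $[\cK,A]=T_1+T_2+\hc$, bound $T_2$ as in \eqref{eq:boundT2}, insert the scattering relation \eqref{eq:eta-scat0} into $T_1$, peel off the two displayed cubic terms, and absorb the $(\widehat{\chi}_\ell\ast\eta)$-remainder into $\delta_\cK$ via Cauchy--Schwarz and a Sobolev inequality with exponent $q=\log N$. One transcription slip: the factor of $N$ in $2Ne^{2N}\lambda_\ell$ cancels the $N^{-1}$ from $\widehat f_{N,\ell}(q)=\delta_{q,0}+N^{-1}\eta_q$, so the remainder is
\[
\cE = \frac{2e^{2N}\lambda_\ell}{\sqrt N}\sum_{r,v\in\L^*_+}(\widehat{\chi}_\ell\ast\eta)(r)\big[b^*_{r+v}a^*_{-r}a_v+\hc\big],
\]
not $\frac{2e^{2N}\lambda_\ell}{\sqrt N\,N}\sum\cdots$; the target bound $CN^{-1}(\log N)^{1/2}\|\cK^{1/2}\xi\|\|\cN_+^{1/2}\xi\|$ you announce for it is, however, the right one.

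For $\Delta_\cK$ the outline is also consistent with the paper, but two of the details are off. First, you attribute the $(\log N)^{1/2}$ to $\sum_p|\widehat{\o}_N(p)|^2/p^2\leq C\log N$; that bound is not what does the work here — it appears in Propositions \ref{prop:RN-loc} and \ref{prop:RN-cZ}. In this proposition the $(\log N)^{1/2}$ comes from the Sobolev estimate with exponent $q=\log N$ as packaged in \eqref{eq:estimatechi2}, applied in position space where $\check{\o}_N(x)\sim N^{2\alpha}\chi_\ell(x)$; you do mention Sobolev at the very end, so the correct mechanism is within reach, but the stated source of the logarithm is mis-attributed. Second, the commutator $[b^*_{p+q}a^*_{-p}a_q,A]$ is not ``structurally analogous to $\Theta_1,\Theta_2,\Theta_3$'': in $[\cV_N,A]$ the main cubic piece factors out and only three residual sextics remain, whereas here one commutes a full cubic with $A$ and obtains twelve terms $\Upsilon_1,\dots,\Upsilon_{12}$ (a mix of quartic-with-$N^{-1}$ and sextic-with-$N^{-2}$ pieces), each of which has to be controlled separately — some by \eqref{eq:estimatechi2}, some by $|\sum_r\widehat{\o}_N(r)\eta_r|\leq C$, some by elementary Cauchy--Schwarz. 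Neither issue invalidates the approach, but as written the second half would not close without identifying the correct list of terms and the correct Sobolev ingredient.
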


\begin{proof} To show \eqref{eq:RN-cK1}  we recall from Eqs. (\ref{eq:commAcK}),  \eqref{eq:commAcKterm1} that
\[
	\begin{split}
	[\cK,A] =\;&-\sqrt N\sum_{\substack{r,v\in\Lambda_+^*\\ r\neq-v} } (\widehat V(\cdot/e^N)\ast \widehat f_{N,\ell})(r) \big[b^*_{r+v}a^*_{-r} a_v +\text{h.c.}\big]\\
	&+ 2\sqrt N\sum_{r, v\in \L^*_+ } e^{2N} \lambda_\ell (\widehat{\chi}_\ell * \widehat{f}_{N,\ell})(r)  \big[b^*_{r+v}a^*_{-r} a_v +\text{h.c.}\big]\\
	&\hspace{3cm} +  \frac{2}{\sqrt N}\sum_{r, v\in \L^*_+ }  r\cdot v \;\eta_r\big[b^*_{r+v}a^*_{-r} a_v +\text{h.c.}\big] \\ 
=\; &\text{T}_{11} + \text{T}_{12}+  \text{T}_2 \,. 
	\end{split}
	\]
with $\text{T}_2$ satisfying  (\ref{eq:boundT2}). Using the definition $\widehat{\o}_{N}(p)=2 N e^{2N} \l_\ell \widehat \chi_\ell(p)$  we write
	\[\begin{split} \label{eq:T121-T122}
	\text{T}_{12} 	= \; &\frac 1 {\sqrt N}\sum_{p, q \in \L^*_+, p \neq -q } \widehat{\o}_N(p) \big[ b^*_{p+q}a^*_{-p} a_q + \hc\big] \\
	&+\frac{2}{\sqrt N} \,e^{2N} \l_\ell \sum_{p, q \in \L^*_+, p \neq -q } (\widehat{\chi_\ell}*\eta)(p) \big[ b^*_{p+q}a^*_{-p} a_q + \hc\big] \\
	=\; & \text{T}_{121}+\text{T}_{122}.
	\end{split}\]
	Hence, $\d_K = T_2 + T_{122}$. To bound $T_{122}$ we switch to position space:  
\[ \begin{split} 
&| \langle \xi, \text{T}_{122} \xi \rangle | \\
&\leq CN^{2\a-3/2} \int_{\L^2} \chi_\ell (x-y) \check{\eta} (x-y) \|\check{a}_x\check{a}_y\xi\|\|\check{a}_x\xi\| \\
&\leq CN^{2\a-3/2} \left[ \int_{\L^2} \chi_\ell(x-y)  \|\check{a}_x\check{a}_y\xi\|^2 dx dy \right]^{1/2} \left[ \int_{\L^2} |\check{\eta} (x-y)|^2  \|\check{a}_x \xi\|^2 dx dy \right]^{1/2} \\
&\leq CN^{\a-3/2} \| \cN^{1/2}_+ \xi \| \left[ \int_{\L^2} \chi_\ell(x-y)  \|\check{a}_x\check{a}_y\xi\|^2 dx dy \right]^{1/2}.
\end{split} \]	
To bound the term in the parenthesis, we proceed similarly as in (\ref{eq:first}). We find
\[ \int_{\L^2} \chi_\ell(x-y)  \|\check{a}_x\check{a}_y\xi\|^2 dx dy \leq C q \| \chi_\ell \|_{q'} \| \cK^{1/2} \cN_+^{1/2} \xi \|^2 \leq C q N^{1-2\alpha /q'}  \| \cK^{1/2} \xi \|^2 \]
for any $q > 2$ and $1 < q' < 2$ with $1/q+ 1/q' =1$. Choosing $q = \log N$, we obtain 
\[ |\langle \xi, \text{T}_{122} \xi \rangle | \leq C N^{-1} (\log N)^{1/2}  \|\cN^{1/2}_+ \xi \| \| \cK^{1/2} \xi \| \]
With \eqref{eq:boundT2}, this implies \eqref{eq:RN-cK1}.   \\

Let us now focus on (\ref{eq:RN-cK3}). We have 
	\begin{equation*}\label{eq:commCrenA}
	\begin{split}
	\frac 1 {\sqrt N} & \sum_{p, q \in \L^*_+, p \neq -q } \widehat{\o}_N(p)  \big[ b^*_{p+q}a^*_{-p} a_q , A \big]\\
	&= \frac 1 N  \sum_{\substack{r, p,q, v\in \L^*_+, \\p \neq-q,  r\neq-v}} \widehat{\o}_N(p) \eta_r \big[b^*_{p+q}a^*_{-p} a_q , b^*_{r+v}a^*_{-r} a_v - a_v^* a_{-r}b_{r+v} \big]\,.
	\end{split}
	\end{equation*}
With the commutators from the proof of Prop. 8.8 in \cite{BBCS3}, we arrive at	

\[ \frac 1 {\sqrt N} \sum_{p, q \in \L^*_+, p \neq -q } \widehat{\o}_N(p)  \big[ b^*_{p+q}a^*_{-p} a_q , A \big]+ \text{h.c.} = \sum_{j=1}^{12} \Upsilon_j +\hc \]
	where 
	\begin{equation}\label{eq:commCNrenA2}
	\begin{split}
	\Upsilon_1:= &\;  - \frac 1 N  \sum_{\substack{q,r,v \in \L^*_+,\\ q\neq v, r \neq -v }} \big( \widehat{\o}_N(v-q) + \widehat{\o}_N(v)  \big) \eta_r b^*_{r+v}b^*_{-r} a^*_{q-v} a_q \,,\\
	\Upsilon_2:= &\; \frac 1 N \sum_{\substack{q,r,v \in \L^*_+,\\ r\neq -v, r \neq -q} } \widehat{\o}_N(r+q)\eta_r (1-\cN_+/N)a^*_{v}a^*_{r+q} a_q a_{r+v}\,,\\
	\Upsilon_3:= &\; \frac 1 N \sum_{\substack{r,v \in \L^*_+\,,\\  r \neq -v }} \big( \widehat{\o}_N(r+v)+ \widehat{\o}_N(r) \big)\eta_r(1-\cN_+/N)a^*_{v} a_v,\\
	\Upsilon_4:= &\;\frac 1 N \sum_{\substack{q,r,v \in \L^*_+\,,\\ q\neq v, r \neq -v} } \widehat{\o}_N(r+v-q)\eta_r(1-\cN_+/N)a_v^* a^*_{q-r-v} a_{-r}  a_q,\\
	\Upsilon_5:= &\; - \frac 1  {N^{2}} \sum_{\substack{p,q,r,v \in \L^*_+,\\ p\neq -q, r \neq -v }} \widehat{\o}_N(p) \eta_ra_v^* a^*_{p+q}a^*_{-p}a_{-r} a_{r+v} a_q\,,\\
	\Upsilon_6:= &\; - \frac 1 {N^2}\sum_{\substack{q,r,v \in \L^*_+,\\ q\neq r+v}} \widehat{\o}_N(r+v)\eta_r  a_v^*a^*_{q-r-v}a_{-r}   a_q,  \\
	\Upsilon_7:= &\; - \frac 1 {N^2} \sum_{\substack{q,r,v \in \L^*_+,\\ q\neq -r, r \neq -v}} \widehat{\o}_N(r)\eta_r  a_v^*a^*_{q+r}a_{r+v}a_q\,,\\
	\Upsilon_8:= &\; \frac 1 N \sum_{\substack{r,v,p \in \L^*_+,\\ p \neq -r-v} } \widehat{\o}_N(p)\eta_r b^*_{p+r+v}b^*_{-p} a^*_{-r}a_{v} \,,\\
	\Upsilon_{9}:= &\; \frac 1 N \sum_{\substack{p,r,v \in \L^*_+,\\p\neq r, r\neq -v} } \widehat{\o}_N(p)\eta_r b^*_{p-r}b^*_{r+v} a^*_{-p}a_{v} \,,\\
\end{split}\ee
and
\be \begin{split} \label{eq:commCNrenA3}
	\Upsilon_{10}:= &\; \frac 1 N \sum_{\substack{q,r,v \in \L^*_+,\\ q\neq -r, r\neq -v} } \widehat{\o}_N(r)\eta_r b^*_{q+r}a^*_{v} a_{q}b_{r+v} \,,\\
	\Upsilon_{11}:= &\; -\frac 1 N \sum_{\substack{p,r,v \in \L^*_+,\\p\neq -v,r\neq -v} } \widehat{\o}_N(p)\eta_r b^*_{p+v}a^*_{-p} a_{-r}b_{r+v} \,,\\
	\Upsilon_{12}:= &\; \frac 1 N \sum_{\substack{q, r,v \in \L^*_+\\ r \neq q-v, -v} } \widehat{\o}_N(r+v)\eta_r b^*_{q-r-v}a^*_{v} a_{-r}b_{q} \,.\\
	\end{split}
	\end{equation}	
	
	To conclude the proof of Prop. \ref{prop:RN-K}, we show that all operators in \eqref{eq:commCNrenA2} and \eqref{eq:commCNrenA3} 
satisfy (\ref{eq:RN-cK3}). To study all these terms it is convenient to switch to position space. We recall that  $\widehat \o_N(p)= g_N \widehat \chi(\ell p)$ with $|g_N|\leq C$ and $\ell= N^{-\a}$. Using \eqref{eq:estimatechi2} we find:
	\[\begin{split}
	\big| \langle \xi, \Upsilon_1 \xi \rangle \big| & \leq  C N^{2\a-1} \int_{\L^2}dx dy \, \chi_\ell(x-y)\|\check{b}(\check{ \eta}_x)\check{b}_x\check{a}_y\xi\|\left[ \| \check{a}_x \xi \| +  \|\check{a}_y\xi\| \right] \\
	&\leq CN^{2\a-1}\|\eta\|\int_{\L^2}dx dy \, \chi_\ell(x-y)\|\check{b}_x\check{a}_y(\cN_++1)^{1/2} \xi\| \|\check{a}_x\xi\| \\
& \leq C N^{-\a} (\log N)^{1/2} \| (\cN_++1)^{1/2}\xi \| \| \cK^{1/2}\xi \| \,.
	\end{split}\]
	The expectation of $\Upsilon_2$ is bounded following the same strategy used to show \eqref{eq:estimatechi2}. For any $2\leq q < \io$ we have
	\[\begin{split}
	&\big| \langle \xi, \Upsilon_2 \xi \rangle \big| \\
	&\leq  CN^{2\a-1}\int_{\L^3}dxdydz \chi_\ell(z-y)|\check{\eta}(z-x)| \|\check{a}_x\check{a}_y\xi\|\|\check{a}_z\check{a}_x\xi\|\\
	&\leq CN^{2\a-1}\int_{\L^2}dxdz |\check{\eta}(z-x)|\|\check{a}_z\check{a}_x\xi\|\\
	&\hspace{2.5cm}\times\left(\int_{\L}dy \, \chi(|z-y|\leq N^{-\a})\right)^{1-1/q} \left(\int_{\L}dy\|\check{a}_x\check{a}_y\xi\|^q\right)^{1/q}\\
&\leq Cq^{1/2}N^{2\a/q-1}\|\eta\|\|(\cN_++1)\xi\|\left[\int_{\L^2}dxdy\|\check{a}_x\nabla_y\check{a}_y\xi\|^2+\int_{\L^2}dxdy\|\check{a}_x\check{a}_y\xi\|^2\right]^{1/2}\\
	&\leq CN^{-\a}(\log N)^{1/2}\|(\cN_++1)^{1/2}\xi\| \|\cK^{1/2}\xi\|\,,
	\end{split}\]
where in the last line we chose $q=\log N$.  The term $\Upsilon_3$ is of lower order; using that $\big| \sum_r \widehat{\o}_N (r) \eta_r \big| \leq \| \widehat{\chi} (./N^\alpha) \|_2 \| \eta \|_2 \leq C$ and Cauchy-Schwarz, we easily obtain 
\[ \big| \langle \xi , \Upsilon_{3} \xi \rangle \big|  \leq CN^{-1}\|(\cN_++1)^{1/2}\xi\|^2\,. \]
The term $\Upsilon_4$ can be estimated as $\Upsilon_1$ using \eqref{eq:estimatechi2}:
	\[\begin{split}
	\big| \langle \xi , \Upsilon_4 \xi \rangle \big| &\leq CN^{2\a-1}\int_{\L^2}dxdy \, \chi_\ell(x-y)\|\check{a}_x\check{a}_y\xi\|\|\check{a}(\check{\eta}_y)\check{a}_y\xi\|\\
	&\leq CN^{2\a-1}\|\eta\|\int_{\L^2}dxdy  \,\chi_\ell(x-y)\|\check{a}_x\check{a}_y\xi\|\|\check{a}_y (\cN_++1)^{1/2}\xi\| \\
	&\leq CN^{-\a} (\log N)^{1/2}\|(\cN_++1)^{1/2} \xi \|  \|\cK^{1/2}\xi\|\,.
	\end{split}\]
The term $\Upsilon_5$ is bounded similarly to $\Upsilon_2$; with $q=\log N$ we have	
	\[\begin{split}
	\big|\langle \xi , \Upsilon_5 \xi \rangle \big| & \leq CN^{2\a-2}\|\eta\|\int_{\L^3} dxdydz \, \chi_\ell(y-z)\|\check{a}_x\check{a}_y\check{a}_z\xi\|\|\cN_+^{1/2}\check{a}_x\check{a}_y\xi\|\\
	& \leq CN^{2\a-3/2}\| \eta\| \int_{\L^2} dxdy \, \|\check{a}_x\check{a}_y\xi\|\\
& \hskip 2cm \times \left(\int_{\L} dz \, \chi(|y-z|\leq N^{-\a})\right)^{1-1/q}\left(\int_{\L} dz \, \|\check{a}_x\check{a}_y\check{a}_z\xi\|^q\right)^{1/q}\\
	&\leq CN^{-\a} (\log N)^{1/2}\|(\cN_++1)^{1/2}\xi \| \|\cK^{1/2}\xi\|\,.
	\end{split}\]
	The terms $\Upsilon_6$ and $\Upsilon_7$ are of smaller order and can be bounded with Cauchy-Schwarz; we have 
	\[\begin{split}
	\big|\langle \xi , &\Upsilon_6 \xi \rangle \big|  \leq CN^{2\a-2} \int_{\L^2} dxdydz \,\chi_\ell(x-y)\|\check{a}_x\check{a}_y\xi\|\|\check{a}(\check{\eta}_x)\check{a}_y\xi\|\\
	& \leq CN^{\a-3/2}\left(\int_{\L^2} dxdy \, \|\check{a}_x\check{a}_y\xi\|^2\right)^{1/2}\left(\int_{\L^2} dxdy \, \chi (|x-y|\leq N^{-\a}) \|\check{a}_y\xi\|^2\right)^{1/2}\\
	&\leq CN^{-1}\|(\cN_++1)^{1/2}\xi \|^2\,,
	\end{split}\]
and	
	\[\begin{split}
	\big|\langle \xi , \Upsilon_7 \xi \rangle \big|
	& \leq CN^{2\a-2}\int_{\L^3} dxdydz \, \chi_\ell(y-z)|\check{\eta}(z-x)|\|\check{a}_x\check{a}_y\xi\|^2\\
	& \leq CN^{2\a-2}\left(\int_{\L^3} dxdydz \, \chi_\ell(y-z) \|\check{a}_x\check{a}_y\xi\|^2\right)^{1/2} \\
	& \hskip 2cm \times\left(\int_{\L^3} dxdydz \, |\check{\eta}(z-x)|^2 \|\check{a}_x\check{a}_y\xi\|^2\right)^{1/2}\\
	&\leq CN^{-1}\|(\cN_++1)^{1/2}\xi \|^2\,.
	\end{split}\]
	The terms $\Upsilon_8, \Upsilon_{11}, \Upsilon_{12}$ are again bounded, as $\Upsilon_1$, using \eqref{eq:estimatechi2}. We find
	\[\begin{split}
	\big|\langle \xi , \big(\Upsilon_8 +  \Upsilon_{11} + \Upsilon_{12} \big) \xi \rangle \big| & \leq CN^{2\a-1}\|\eta\|\int_{\L^2} dxdy \, \chi_\ell(x-y)\|\cN_+^{1/2}\check{a}_x\check{a}_y\xi\|\|\check{a}_x\xi\|\\
	&\leq CN^{-\a} (\log N)^{1/2}\|(\cN_++1)^{1/2}\xi \| \|\cK^{1/2}\xi\|\,.
	\end{split}\] 
It remains to bound $\Upsilon_9$ and $\Upsilon_{10}$. The term $\Upsilon_9$ is bounded analogously to $\Upsilon_2$: 
	\[\begin{split}
	&\big|\langle \xi , \Upsilon_{9} \xi \rangle \big| \\
	&\quad \leq CN^{2\a-1}\int_{\L^3} dxdydz \, \chi_\ell(x-z)|\check{\eta}(x-y)|\|\check{a}_x\check{a}_y\check{a}_z\xi\|\|\check{a}_y\xi\|\\
	& \quad\leq CN^{2\a-1}\int_{\L^2} dxdy \, |\check{\eta}(x-y)|\|\check{a}_y\xi\|\left(\int_{\L} dz \, \chi(|y-z|\leq N^{-\a})\right)^{1-1/q}\\
	&\hspace{8cm}\times\left(\int_{\L} dz \, \|\check{a}_x\check{a}_y\check{a}_z\xi\|^q\right)^{1/q}\\
	&\quad\leq Cq^{1/2}N^{2\a/q-1}\left[\int_{\L^2}dxdy \,  |\check{\eta}(x-y)|^2 \|\check{a}_y\xi\|^2\right]^{1/2}\left[\int_{\L^3}dxdy \, \Big\|\|\check{a}_x\check{a}_y\check{a}_z \xi\|\Big\|_{L^q_z}^2\right]^{1/2}\\
		&\quad\leq CN^{-\a}(\log N)^{1/2}\|(\cN_++1)^{1/2}\xi \| \|\cK^{1/2}\xi\|\,.
	\end{split}\]
As for $\Upsilon_{10}$, we find 
\[ \big|\langle \xi , \Upsilon_{10} \xi \rangle \big|  \leq CN^{2\a-1}\int_{\L^3} dx dy dz \, \chi_\ell(y-z) |\check{\eta}(x-z)| \|\check{a}_x\check{a}_y\xi\|^2\]  
Proceeding as in (\ref{eq:first}), we obtain 
\[ \big|\langle \xi , \Upsilon_{10} \xi \rangle \big| \leq C q N^{2\alpha}  \| \chi_\ell * |\check{\eta}| \|_{q'} \| \cK^{1/2} \xi \|^2 \leq C q  \| \check{\eta} \|_{q'} \| \cK^{1/2} \xi \|^2 \]
for any $q > 2$, and $q' < 2$ with $1/q + 1/q' = 1$. Since, for an arbitrary $q' < 2$, $\| \check{\eta} \|_{q'} \leq \| \check{\eta} \|_2 = \| \eta \|_2 \leq N^{-\alpha}$, we obtain 
\[ \big| \langle \xi , \Upsilon_{10} \xi \rangle \big| \leq C N^{-\alpha} \| \cK^{1/2} \xi \|^2 \]
We conclude that for any $\a >1$ 
\[
\big|\langle \xi , \sum_{j=1}^{12}\Upsilon_{i} \xi \rangle \big| \leq  C N^{-\a} (\log N )^{1/2} \, \| (\cK+1)^{1/2}\xi\|^2 + C N^{-1}  \| (\cN_+ + 1)^{1/2}  \xi \|^2 \,. 
\]
\end{proof}

\subsection{Analysis of $e^{-A} \cZ_N e^{A}$}
In this subsection, we consider contributions to $\cR_{N,\a}$ arising from conjugation of $\cZ_{N}$, as defined in  \eqref{eq:GNeff-deco2}. 

\begin{prop}\label{prop:RN-cZ} Let $A$ be defined in \eqref{eq:defA}. Then, there exists a constant $C>0$ such that		
	\begin{equation*}
	e^{A} \cZ_N e^{-A} =\frac 1 2  \sum_{\substack{p \in \L^*_+}}  \widehat{\o}_N(p)\;  \big( b^*_p b^*_{-p} + b_p b_{-p} \big) +\delta_{\cZ_N}
	\end{equation*}
	where 
	\begin{equation*}\label{eq:RN-cZN1}  \pm \delta_{\cZ_{N}} \leq C N^{1-\alpha}(\cH_N+1)
	\end{equation*}
for all $\alpha > 0$, and $N \in \bN$ large enough.
\end{prop}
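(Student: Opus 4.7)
My plan is to proceed by Duhamel's formula, writing
\[
e^{A}\cZ_N e^{-A} - \cZ_N \;=\; \int_0^1 e^{sA}\,[A,\cZ_N]\,e^{-sA}\,ds,
\]
and reducing the claim to a quadratic form estimate $\pm[A,\cZ_N] \leq C N^{-\alpha}(\cH_N+1)$ on $\cF_+^{\leq N}$. Since $\cN_+ \leq (2\pi)^{-2}\cK \leq C\cH_N$, Prop.~\ref{prop:ANgrow} together with Prop.~\ref{prop:AHNgrow} (applied with $A$ replaced by $-A$) gives $e^{sA}(\cH_N+1)e^{-sA} \leq C \cH_N + CN(\cN_++1) + C \leq C N(\cH_N+1)$ uniformly for $s \in [0;1]$; integrating in $s$ then promotes the $N^{-\alpha}$ bound on the commutator to the announced $N^{1-\alpha}(\cH_N+1)$ control of $\delta_{\cZ_N}$.

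The core work is the computation and estimation of
\[
[A,\cZ_N] = \frac{1}{2\sqrt N} \sum_{r,v,p \in \Lambda_+^*} \eta_r\,\widehat{\o}_N(p)\, \bigl[\,b^*_{r+v}a^*_{-r}a_v - \hc,\;\; b_p b_{-p} + b_p^* b_{-p}^*\,\bigr].
\]
Expanding the inner commutator with the canonical rules \eqref{eq:comm-bp}--\eqref{eq:comm2}, together with the identity $[a_v, b_p^\#] = O(N^{-1/2})\, a_p^\# a_v$ coming from the square-root factor in $b_p$, produces three families of terms in complete analogy with the analysis of $[\cK,A]$ and $[\cV_N,A]$ in Prop.~\ref{prop:RN-K} and Prop.~\ref{prop:commAcVN}: (i) cubic expressions of the schematic form $\frac{1}{\sqrt N}\sum \eta_r\,\widehat{\o}_N(p)\, b^\# a^\# b^\#$ arising from a single Kronecker contraction between an index of $A$ and an index of $\cZ_N$; (ii) quadratic contributions with prefactor $1/N$ produced by the subleading piece of $[b_p,b_q^*]$; and (iii) cubic remainders of order $N^{-1}$ coming from expanding $\sqrt{(N-\cN_+)/N}$.

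Each resulting operator is then estimated by passing to position space, where $\check\eta$ satisfies $\|\check\eta\|_2 \leq CN^{-\alpha}$ and $\|\check\eta\|_\infty \leq CN$ (by \eqref{eq:etax}, \eqref{eq:etaHL2}), while the kernel $\check\o_N$ dual to \eqref{eq:defomegaN} is localized on scale $N^{-\alpha}$ with $\|\check\o_N\|_1 = O(1)$. One then applies Cauchy--Schwarz exactly as in Prop.~\ref{prop:commAcVN} and Prop.~\ref{prop:RN-K}: the factor $N^{-\alpha}$ is extracted either from $\|\eta\|_2$, whenever a full sum over one momentum is available, or from the size of the support of $\check\o_N$; the kinetic operator $\cK$ absorbs momentum factors, while $\cV_N$ is invoked only for the term where $\check\eta$ must be paired against a near-diagonal two-body density, mirroring the role of $\cV_N$ in the treatment of $\Theta_2$ in Prop.~\ref{prop:commAcVN}. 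The Sobolev-type device (with exponent $q = \log N$) used in \eqref{eq:first} and \eqref{eq:estimatechi2} is brought in whenever one has to absorb a characteristic function or a convolution against a three-point correlator without losing powers of $N$; the resulting $\sqrt{\log N}$ loss is harmless given the slack between $N^{-\alpha}$ and $N^{1-\alpha}$.

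The main obstacle I anticipate is the bookkeeping of the type-(iii) remainders arising from the square-root expansion: each must be shown to preserve the $N^{-\alpha}$ scaling relative to $\cH_N$, and in particular the a priori dangerous sum $\sum_p |\widehat\o_N(p)|^2 \leq C N^{2\alpha}$ must always be paired, through the detailed algebra of the commutator, with a compensating $1/p^2$ coming from $\eta_{p'}$ (evaluated at a shifted argument) or from an explicit kinetic factor. Once these term-by-term estimates are in place, the Duhamel integration described above yields the claim.
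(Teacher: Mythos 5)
Your overall scheme---Duhamel's formula, an explicit computation of $[\cZ_N,A]$, and then promotion via Props.~\ref{prop:ANgrow}--\ref{prop:AHNgrow}---is the same as the paper's. Two remarks on the details.

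First, the paper bounds the commutator terms entirely in momentum space: after expanding $\tfrac12\sum_p\widehat\o_N(p)[b_p^*b_{-p}^*+b_pb_{-p},A]$ into four pieces $\Pi_1,\dots,\Pi_4$, each is handled by a direct Cauchy--Schwarz that pulls out $\|\eta\|\le CN^{-\alpha}$ from one sum and $\big[\sum_p |\widehat\o_N(p)|^2/p^2\big]^{1/2}\leq C(\log N)^{1/2}$ from another (this is exactly \eqref{eq:intV-div-p2}, which you do correctly identify as the key cancellation). The position-space route with the Sobolev trick that you sketch is used in the paper for harder cubic/quartic commutators ($\Upsilon_j$, $\Theta_j$, $\Xi_j$), but is an unnecessary detour here; for $\cZ_N$ the momentum-space argument is both shorter and cleaner.

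Second, there is a small imprecision in your final absorption of the logarithm. The commutator estimate you can actually prove is of the mixed form
\[
|\langle\xi,[\cZ_N,A]\,\xi\rangle|\ \leq\ CN^{-\alpha}(\log N)^{1/2}\,\|\cK^{1/2}\xi\|\,\|(\cN_++1)^{1/2}\xi\|\,,
\]
not the operator inequality $\pm[\cZ_N,A]\leq CN^{-\alpha}(\cH_N+1)$; Young's inequality cannot eliminate the $(\log N)^{1/2}$ factor from the latter. If you then conjugate $\cH_N+1$ as a single block and apply your estimate $e^{sA}(\cH_N+1)e^{-sA}\leq CN(\cH_N+1)$, you end up with $N^{1-\alpha}(\log N)^{1/2}(\cH_N+1)$, which is strictly weaker than the stated $N^{1-\alpha}(\cH_N+1)$. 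The paper avoids this by keeping the $\cK^{1/2}$ and $\cN_+^{1/2}$ factors separate before conjugating: Prop.~\ref{prop:ANgrow} gives $\|(\cN_++1)^{1/2}e^{sA}\xi\|\leq C\|(\cN_++1)^{1/2}\xi\|$ with \emph{no} loss, while Prop.~\ref{prop:AHNgrow} gives $\|\cK^{1/2}e^{sA}\xi\|\leq C\|\cH_N^{1/2}\xi\|+CN^{1/2}\|(\cN_++1)^{1/2}\xi\|$. The worst cross-term is then $N^{1/2-\alpha}(\log N)^{1/2}\|(\cN_++1)^{1/2}\xi\|^2\leq N^{1-\alpha}\|(\cN_++1)^{1/2}\xi\|^2$, and the sharp bound follows. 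This is a harmless fix (your weaker estimate would still suffice for the uses of Prop.~\ref{prop:RN-cZ} in Prop.~\ref{prop:RN}), but it is worth being precise about where the factor of $N$ is paid: the $N$ has to come only from conjugating $\cK$, not from conjugating the full $\cH_N+1$.
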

\begin{proof} We have 
	\begin{equation}
	\label{eq:RNM_n2}
	\begin{split}
	\frac 12 \sum_{\substack{p \in \L^*_+}}  \widehat{\o}_N (p) &\;  \big[e^{-A}\big( b^*_p b^*_{-p} + b_p b_{-p} \big)e^{A}-\big( b^*_p b^*_{-p} + b_p b_{-p} \big) \big]\\
	&= \frac 12 \int_0^1ds\sum_{\substack{p \in \L^*_+ }}  \widehat{\o}_N (p) \;  e^{-sA}\big[ b^*_p b^*_{-p} + b_p b_{-p},A \big]e^{sA}.
\end{split}
\end{equation}
We compute  
\begin{equation} \label{eq:bb-commA}
\begin{split}
\frac 12 \sum_{\substack{p \in \L^*_+}}  &\widehat \o_N(p)  \big[ b^*_p b^*_{-p} , b^*_{r+v} a^*_{-r}a_v - a^*_v a_{-r} b_{r+v} \big] \\
 =\; &  -  \widehat \o_N(v)  b^*_{r+v}b^*_{-v}b^*_{-r} + \widehat \o_N(r)  b^*_{v} \Big( b^*_{r}b_{r+v}-\frac 2 N a^*_r a_{r+v}\Big)\\
&+\widehat \o_N(r+v)  \Bigg(1-\fra{\cN_+}{N} \Bigg)b^*_{-r-v}a^*_va_{-r}  -\fra 1 N \sum_{\substack{p \in \L^*}}  \widehat \o_N(p)   b^*_{p}a^*_{-p}a^*_{v}a_{-r}a_{r+v}.
\end{split}
\end{equation}  
With \eqref{eq:bb-commA} we write
\[
\frac 1 2 \sum_{\substack{p\in \L^*}}  \widehat{\o}_N (p)  \big[ b^*_p b^*_{-p} + b_p b_{-p} ,A \big]= \sum_{j=1}^4 \Pi_j + \hc
\]
with 
\begin{equation*} \label{eq:Pi-terms}
\begin{split}
\Pi_1 =\; & - \frac 1 {\sqrt N}\sum_{\substack{ r,v \in \L^*_+\\ r\neq -v}} \widehat{\o}_N (v)\eta_{r}  b^*_{r+v}b^*_{-v}b^*_{-r}\,,\\
\Pi_2 =\; & \frac 1 {\sqrt N} \sum_{\substack{r,v \in \L^*_+: \\r \neq -v }} \widehat{\o}_N(r)\eta_{r} b^*_{v}\Big( b^*_{r}b_{r+v}-\frac 2 N a^*_r a_{r+v}\Big)\,, \\
\Pi_3 =\; & \frac 1 {\sqrt N}\sum_{\substack{r,v \in \L^*_+ \\r \neq -v}}  \widehat{\o}_N (r+v) \eta_{r} \Bigg(1-\fra{\cN_+}{N} \Bigg)b^*_{-r-v}a^*_va_{-r}\,, \\
\Pi_4 =\; &- \frac 1 {N^{3/2}}\sum_{\substack{r,v,p \in \L^*_+: \\r \neq -v}} \widehat{\o}_N (p)\eta_{r} b^*_{p}a^*_{-p}a^*_{v}a_{-r}a_{r+v} \,.
\end{split}
\end{equation*}
To bound the first term, we observe, with (\ref{eq:intV-div-p2}), 
\[ \begin{split} |\langle \xi, \Pi_1 \xi \rangle | &\leq \frac{\| \eta \|}{\sqrt{N}}  \| \cK^{1/2} \cN_+^{1/2} \xi \| \| (\cN_+ + 1)^{1/2} \xi \|  \left[ \sum_{v \in \L^*_+} \frac{|\widehat{\o}_N (v)|^2}{v^2} \right]^{1/2} \\ &\leq C N^{-\alpha} (\log N)^{1/2} \| \cK^{1/2} \xi \| \| (\cN_+ + 1)^{1/2} \xi \|\,. \end{split} \]
The term $\Pi_3$ can be bounded similarly to $\Pi_1$, with \eqref{eq:intV-div-p2}. We find 
\[
\big| \langle \xi, \Pi_3 \xi\rangle \big| \leq C N^{-\alpha} (\log N)^{1/2} \| (\cN_+ + 1)^{1/2} \xi \| \| \cK^{1/2} \xi \|\,. \]
With $|\widehat{\o}_N (r)| \leq C$, we similarly obtain  
\[ \begin{split} |\langle \xi, \Pi_2 \xi \rangle | &\leq  N^{-1/2} \|  \eta \|   \| \cK^{1/2} \cN_+^{1/2} \xi \| \| (\cN_+ + 1)^{1/2} \xi \|  \\ &\leq C N^{-\alpha} \| \cK^{1/2} \xi \| \| (\cN_+ + 1)^{1/2} \xi \|\,. \end{split} \]
Finally, we estimate, using again \eqref{eq:intV-div-p2}, 
\[ \begin{split} \big| \langle \xi, \Pi_4 \xi\rangle \big| &\leq N^{-3/2} \Big( \sum_{r,v,p \in \L_+^*} p^2 |\eta_r|^2 \| a_{-p} a_v (\cN_+ + 1)^{1/2} \xi \|^2 \Big)^{1/2} \\ &\hspace{4cm} \times \Big( \sum_{r,v,p  \in \L_+^*} \frac{|\widehat{\o}_N (p)|^2}{p^2} \| a_{-r} a_{r+v} \xi \|^2 \Big)^{1/2}  \\  &\leq C N^{-3/2} \| \eta \| (\log N)^{1/2} \| \cK^{1/2} (\cN_+ + 1) \xi \| \| (\cN_+ + 1) \xi \| \\ &\leq C N^{-\alpha} (\log N)^{1/2} \| \cK^{1/2} \xi \| \| (\cN_+ + 1)^{1/2} \xi \|\,. \end{split} \]

With (\ref{eq:RNM_n2}), we conclude that 
\[ \begin{split}
&\bigg| \frac 12 \sum_{\substack{p \in \L^*}}  \widehat{\o}_N (p)\;  \big[ \langle \xi , e^{-A}\big( b^*_p b^*_{-p} + b_p b_{-p} \big)e^{A} \xi \rangle - \langle \xi , \big( b^*_p b^*_{-p} + b_p b_{-p} \big) \xi \rangle \big]\bigg|\\
&\hspace{1.4cm}\leq C N^{-\alpha}  (\log N)^{1/2}   \int_0^1 ds\;  \| \cK^{1/2} e^{sA} \xi \| \| (\cN_++1)^{1/2} e^{sA} \xi \| \,.
\end{split} \]
With Prop. \ref{prop:ANgrow}, Lemma \ref{prop:AHNgrow}, we conclude that 
\[
\begin{split}
\Bigg|\frac 1 2 \sum_{\substack{p \in \L^*}}  \widehat{\o}_N (p)\;  &\big[ \langle \xi, e^{-A}\big( b^*_p b^*_{-p} + b_p b_{-p} \big)e^{A} \xi \rangle - \langle \xi , 
\big( b^*_p b^*_{-p} + b_p b_{-p} \big) \xi \rangle \big]\Bigg|\\
&\leq C N^{-\alpha}  (\log N)^{1/2}   \left[ 
\| \cH_N^{1/2} \xi \| + N^{1/2} \| \cN_+^{1/2} \xi \| \right] 
\| (\cN_++1)^{1/2}  \xi \| \\ &\leq C N^{1-\alpha} \| (\cH_N + 1)^{1/2} \xi \|^2  \,.
\end{split}
\]
\end{proof}
\subsection{Contributions from $e^{-A}\cC_N e^{A}$}\label{sec:RN-cC}

In Section \ref{sub:Rfin} we will analyse the contributions to $\cR_{N,\a}$ arising from conjugation of the cubic operator $\cC_N$ defined in  \eqref{eq:GNeff-deco2}. To this aim we will need some properties of the commutator $[\cC_N,  A]$, as established in the following proposition.  
\begin{prop}\label{prop:commAcCN} Let $A$ be defined in \eqref{eq:defA}. Then, there exists a constant $C>0$ such that
	\begin{equation*}
	\begin{split}
	\big[\cC_N, A \big]  =&\;  2\sum_{r, v\in \L^*_+}  \big[\widehat{V}(r/e^N)\eta_r+\widehat{V}((r+v)/e^N)\eta_r\big]a^*_va_v\Big(1-\fra{\cN_+}N\Big)  +\delta_{\cC_N} \\ 
	\end{split}
	\end{equation*}
	where 
	\begin{equation} \label{eq:RN-cCN1}
	\begin{split} 
	| \langle \xi, \delta_{\cC_N}   \xi \rangle |\leq &\;   C N^{3/2-\a} \|\cH_N^{1/2}\xi\|  \|( \cN_++1)^{1/2}\xi\| 
	\end{split}
	\end{equation}
for all $\alpha > 0$, $\xi \in \cF^{\leq N}_+$, and $N \in \bN$ large enough.
\end{prop}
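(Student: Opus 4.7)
My plan is to expand $[\cC_N, A]$ around a decomposition $\cC_N = \cC_N^\sharp + (\cC_N^\sharp)^*$ with
\[ \cC_N^\sharp := \sqrt N \sum_{\substack{p,q \in \L^*_+ \\ p+q \neq 0}} \widehat V(p/e^N)\, b^*_{p+q} a^*_{-p} a_q, \]
and $A = A^\sharp - (A^\sharp)^*$ with $A^\sharp := N^{-1/2} \sum_{r,v \in \L^*_+} \eta_r b^*_{r+v} a^*_{-r} a_v$. The ``cross'' bracket $[(\cC_N^\sharp)^*, A^\sharp]$ and its hermitian conjugate contain all contractible creation/annihilation pairs and will produce the leading number-diagonal operator; the two ``parallel'' brackets $[\cC_N^\sharp, A^\sharp]$ and $[(\cC_N^\sharp)^*, (A^\sharp)^*]$ contain only monomials with three creation and three annihilation operators, and go entirely into $\delta_{\cC_N}$.

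To extract the main term I would iterate $[XYZ, UVW] = X[Y, UVW] + [X, UVW]Y$ and invoke the elementary commutators $[a_q, a^*_v] = \delta_{q,v}$, $[a^*_{-p}, a_{-r}] = -\delta_{p,r}$, and, from \eqref{eq:comm-bp}, $[b_{p+q}, b^*_{r+v}] = (1 - \cN_+/N)\,\delta_{p+q,r+v} - N^{-1} a^*_{r+v} a_{p+q}$. Inside $[(\cC_N^\sharp)^*, A^\sharp]$, two contraction patterns produce quadratic number-diagonal terms carrying the factor $(1-\cN_+/N)$: \emph{pattern (i)} contracts $a_q \leftrightarrow a^*_v$ together with the $b$-pair, pinning $q=v$ and then $p=r$, and giving $\sum_{r,v} \widehat V(r/e^N)\eta_r\, a^*_v a_v(1-\cN_+/N)$; \emph{pattern (ii)} contracts $a^*_{-p} \leftrightarrow a_{-r}$ together with the $b$-pair, pinning $p=r$ and then $q=v$, and giving $\sum_{r,v} \widehat V((r+v)/e^N)\eta_r\, a^*_v a_v(1-\cN_+/N)$. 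Adding the hermitian conjugate and using the self-adjointness of these pieces supplies the overall factor $2$ in the claim. The auxiliary constraint $r \neq -v$ (imposed on $A^\sharp$) and the analogous constraint on $\cC_N^\sharp$ introduce at most a handful of missing index values whose contributions are routine to absorb into $\delta_{\cC_N}$.

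The remaining contributions collected in $\delta_{\cC_N}$ are: (i) sextic and quintic normal-ordered monomials arising when only one (or zero) of the three elementary contractions is used; (ii) the $N^{-1} a^*_{r+v} a_{p+q}$ correction from the $b$-commutator; and (iii) the two parallel brackets. To verify \eqref{eq:RN-cCN1}, I would switch to position space, writing each residual monomial as an integral against $e^{2N} V(e^N(x-y))$ and $\check\eta(\cdot)$, then apply Cauchy-Schwarz to absorb the potential factor into $\|\cV_N^{1/2}\xi\|$, derivatives into $\|\cK^{1/2}\xi\|$, and the surviving $\check\eta$-factor either via $\|\eta\|_2 \le CN^{-\alpha}$ or, when a pointwise product forces it, via a H\"older/Sobolev interpolation at $q = \log N$ between $\|\eta\|_2 \le CN^{-\alpha}$ and $\|\check\eta\|_\infty \le CN$ — exactly the device used for $\Theta_2$ in Prop.~\ref{prop:commAcVN} and for several of the $\Upsilon_j$ in Prop.~\ref{prop:RN-K}. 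The extra $\sqrt N$ carried by $\cC_N$ (as opposed to $\cV_N$) is what upgrades the $N^{1/2-\alpha}$ rate of Prop.~\ref{prop:commAcVN} to the $N^{3/2-\alpha}$ rate needed here. The main obstacle is purely combinatorial: carefully listing every normal-ordered monomial produced by the triple commutator and checking that, after normal ordering, the only surviving number-diagonal pieces are precisely the two patterns above, with every other term falling into the stated error bound by a direct variant of the Cauchy-Schwarz estimates already carried out for $\Upsilon_1, \ldots, \Upsilon_{12}$.
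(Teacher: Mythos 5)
Your proposal is correct and follows essentially the paper's route: extract the number-diagonal main term from the double contractions in the ``cross'' commutator and dump the remaining monomials (the twelve $\Xi_j$) into $\delta_{\cC_N}$, which are then bounded via Cauchy--Schwarz in momentum or position space. Two small corrections. First, your degree count for the ``parallel'' brackets is off: $\cC_N^\sharp$ and $A^\sharp$ each raise $\cN_+$ by one, so $[\cC_N^\sharp, A^\sharp]$ raises $\cN_+$ by two, and its normal-ordered monomials have particle excess $+2$ (e.g.\ three creators and one annihilator after a single contraction, or two creators after two) -- not three creators and three annihilators, which would be number-preserving. Your intended conclusion is nevertheless right: those pieces are off-diagonal and belong entirely to $\delta_{\cC_N}$. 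Second, you anticipate needing the H\"older/Sobolev interpolation at $q = \log N$ as in the proof of Prop.~\ref{prop:commAcVN}, but the paper's proof of this proposition gets by with plain Cauchy--Schwarz -- momentum-space estimates for $\Xi_2, \Xi_5, \Xi_6, \Xi_{10}$ and position-space estimates using only $\|\eta\|_2 \le C N^{-\alpha}$ for the rest -- which is consistent with the absence of a $(\log N)^{1/2}$ factor in \eqref{eq:RN-cCN1}. The interpolation would still land you inside the stated $N^{3/2-\alpha}$ envelope, so this is a remark on economy, not correctness.
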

\begin{proof} 
We consider the commutator 
	\[ \big[\cC_N, A \big] =  \sum_{\substack{p,q \in \L_+^* : p+q \not = 0 \\ r,v \in \L^*_+}} \widehat{V} (p/e^N) \eta_r \big[ b_{p+q}^* a_{-p}^* a_q , b_{r+v}^* a_{-r}^* a_v - a_v^* a_{-r} b_{r+v} \big] + \hc  \,.
	\]
 As in the proof of Prop. \ref{prop:RN-K}, we use the commutators from the proof of Prop. 8.8 in \cite{BBCS3} to conclude that  
\begin{equation*}\label{eq:dec-CA}  \big[\cC_N, A \big] = 2\sum_{r, v\in \L^*_+}  \big[\widehat{V}(r/e^N)\eta_r+\widehat{V}((r+v)/e^N)\eta_r\big]a^*_va_v\frac{N-\cN_+}N + \sum_{j=1}^{12} ( \Xi_j + \hc ) \end{equation*}
	where 
	\begin{equation*}
	\begin{split}
	\Xi_1 :=&\; -\sum_{\substack{ r, v,p\in \L^*_+, \\ 
			p\neq v }}\widehat{V}(p/e^N) \eta_r b^*_{r+v}b^*_{-r} a^*_{-p} a_{v-p},\\ 
	\Xi_2 :=&\;  \sum_{\substack{ r,v,p\in \L^*_+ \\  r \neq -p }}\widehat{V}(p/e^N)\eta_r (1-\cN_+/N)a^*_{v}a^*_{-p} a_{-r-p} a_{r+v} ,\\ 
	\Xi_3 :=&\; \sum_{\substack{ r, v,  p\in\Lambda_+^*: \\r+v\neq p  }} \widehat{V}(p/e^N)\eta_r (1-\cN_+/N)a_v^* a^*_{-p} a_{-r}  a_{r+v-p} ,\\ 
	\end{split}
	\end{equation*}
	as well as 
	\begin{equation*}
	\begin{split}
	\Xi_4 :=&\; -\frac{1}{N}\sum_{\substack{ r,v,p,q\in\Lambda_+^*: p+q \neq 0}}\widehat{V}(p/e^N)\eta_r  a_v^* a^*_{p+q}a^*_{-p}a_{-r} a_{r+v} a_q,\\ 
	\Xi_5 :=&\; -\frac{1}{N}\sum_{\substack{ r, v, q \in\Lambda_+^*:\\ r+v \neq q     }} \widehat{V}((r+v)/e^N) \eta_r  	a_v^*a^*_{q-r-v}a_{-r}   a_q , \\
	\Xi_6 :=&\; -\frac{1}{N}\sum_{\substack{ r,v,q  \in\Lambda_+^*:\\ r \neq -q     }}  \widehat{V}(r/e^N) \eta_r   a_v^*a^*_{q+r}a_{r+v}   a_q	\\  
	\Xi_7 :=&\; \sum_{\substack{ r,v, p \in\Lambda_+^*: \\r+v \neq -p     }}  \widehat{V}(p/e^N) \eta_r   b^*_{p+r+v} b^*_{-p} a^*_{-r} a_v ,\\ 
	\Xi_{8} :=&\;\sum_{\substack{ r, v, p\in\Lambda_+^*:\\ r\neq -p     }}  \widehat{V}(p/e^N) \eta_r  b^*_{p-r}b^*_{r+v}a^*_{-p}  a_v ,\\ 
	\Xi_{9} :=&\; -\sum_{\substack{ r, v, q \in\Lambda_+^*:\\ q\neq v     }} \widehat{V}(v/e^N) \eta_r	 b^*_{q-v}b^*_{r+v}a^*_{-r} a_q ,\\ 
	\Xi_{10} :=&\; \sum_{\substack{ r, v, q\in\Lambda_+^*: \\r\neq -q     }} \widehat{V}(r/e^N) \eta_r	 b^*_{q+r}a_v^*a_qb_{r+v} ,\\ 
	\Xi_{11} :=&\; -\sum_{\substack{ r, v, p\in\Lambda_+^*: \\p\neq -v     }} \widehat{V}(p/e^N) \eta_r	 b^*_{p+v}a^*_{-p} a_{-r}b_{r+v}	 ,\\ 
	\Xi_{12} :=&\; \sum_{\substack{ r, v , q\in\Lambda_+^*: \\q\neq r+v     }} \widehat{V}((r+v)/e^N) \eta_r	 b^*_{q-r-v}a_v^* a_{-r} b_q \,. 
	\end{split}
	\end{equation*}      	
To prove the proposition, we have to show that all terms $\Xi_j$, $j=1,\dots , 12$, satisfy the bound  \eqref{eq:RN-cCN1}. We bound $\Xi_1$ in position space, with Cauchy-Schwarz, by
	\[\begin{split}
	\big|   \langle \xi, \Xi_1 \xi \rangle \big| &\leq  C\int_{\L^3}dxdydz e^{2N}V(e^N(x-y))|\check{\eta}(x-z)|\|\check{a}_x\xi\|\|\check{a}_x \check{a}_y\check{a}_z\xi\|\\
	&\leq C\left[\int_{\L^3}dxdydz \, e^{2N}V(e^N(x-y))\|\check{a}_x \check{a}_y\check{a}_z\xi\|^2\right]^{1/2}\\
	&\hspace{3cm}\times\left[\int_{\L^3}dxdydz \, e^{2N}V(e^N(x-y))|\check{\eta}(x-z)|^2\|\check{a}_x\xi\|^2\right]^{1/2}\\
	& \leq C\|\eta\|\|(\cN_++1)^{1/2}\xi\|\|\cV_N^{1/2}\cN_+^{1/2}\xi\|\\
	& \leq CN^{1/2-\a}\|(\cN_++1)^{1/2}\xi\|\|\cV_N^{1/2}\xi\|.
	\end{split}\]
We can proceed similarly to control $\Xi_9$. We obtain 
	\[ \big|   \langle \xi, \Xi_9 \xi \rangle \big| \leq C N^{1/2-\a} \|(\cN_++1)^{1/2}\xi\| \|\cV_N^{1/2}\xi\|. \]
	The expectations of the terms $\Xi_3$ and $\Xi_{12}$ can be bounded analogously: 
	\[\begin{split}
	\big|   &\langle \xi, \Xi_3 \xi \rangle \big|  +\big| \langle \xi, \Xi_{12}\xi \rangle \big| \\ &\leq  C\int_{\L^3}dxdydz \, e^{2N}V(e^N(x-y))(|\eta(x-z)|+|\eta(y-z)|)\|\check{a}_x\check{a}_y\xi\|\|\check{a}_x \check{a}_z\xi\|\\
	&\leq C\left[\int_{\L^3}dxdydz \, e^{2N}V(e^N(x-y))\|\check{a}_x \check{a}_y\xi\|^2(|\eta(x-z)|^2+|\eta(y-z)|^2)\right]^{1/2}\\
	&\hspace{3cm}\times\left[\int_{\L^3}dxdydz \, e^{2N}V(e^N(x-y))\|\check{a}_x\check{a}_z\xi\|^2\right]^{1/2}\\
	& \leq C\|\eta\|\|(\cN_++1)\xi\|\|\cV_N^{1/2}\xi\|\\
	& \leq CN^{1/2-\a}\|(\cN_++1)^{1/2}\xi\|\|\cV_N^{1/2}\xi\|.
	\end{split}\]
As for $\Xi_4$, we find 
	\[\begin{split}
	|\langle \xi, \Xi_4 \xi\rangle | =&\; \bigg|\frac 1 N \int_{\Lambda^2}dxdydz \, e^{2N} V(e^N(y-z))  \langle \xi , \check{a}^*_{x}\check{a}^*_{y}\check{a}^*_{z}\check{a}(\check{ \eta}_x)\check{a}_{x}\check{a}_{y} \xi\rangle  \bigg|\\ 
	&\leq  C N^{-1}\|\eta\|\int_{\Lambda^2}dxdydz \, e^{2N} V(e^N(y-z))  \| \check{a}_{x}\check{a}_{y} \check{a}_{z}\xi \| \|\cN_+^{1/2} \check{a}_x\check{a}_{y}\xi \| \\ 
	&\leq C N^{-1} \|\eta\|\left[\int_{\Lambda^2}dxdydz\, e^{2N} V(e^N(y-z))  \| \check{a}_{x}\check{a}_{y} \check{a}_{z}\xi \|^2\right]^{1/2}\\&\hspace{1cm}\times \left[\int_{\Lambda^2}dxdydz\, e^{2N} V(e^N(y-z)) \|\cN_+^{1/2} \check{a}_x\check{a}_{y}\xi \|^2\right]^{1/2}\\
	&\leq C N^{1/2-\a}\|\cV_N^{1/2}\xi\| \|\cN_+^{1/2}\xi\|\,.
	\end{split}\]
The  terms $\Xi_5$ and $\Xi_6 $ can be bounded in momentum space, using \eqref{eq:VetaN}.
Hence,
	\[\begin{split}
	&  |  \langle \xi, \Xi_5 \xi \rangle   | + |  \langle \xi, \Xi_6 \xi \rangle   |\\
	&\leq  CN^{-1}\sum_{r, v, q \in\Lambda_+^*}  \!\!\! \Bigg(  \frac{\widehat{V}((v+r)/e^N)}{|v|} |\eta_r| |v|\| a_{v} a_{q-r -v} \xi\|  \| a_{-r}   a_q \xi\| \\ &\hspace{3cm}+  \frac{\widehat{V}(r/e^N)}{|r+v|}|\eta_r| |r+v|  \|a_{r+q}a_v  \xi\| \| a_{q}a_{r+v} \xi\|\Bigg)\\
	 &\leq CN^{1/2-\a} \|(\cN_++1)^{1/2}\xi\| \|\cK^{1/2}\xi\|. \end{split}\]
Similarly we have
	 	\[\begin{split}
	 	|  \langle \xi, \Xi_2 \xi \rangle   | + |  \langle \xi, \Xi_{10} \xi \rangle   | &
	 	\leq  \sum_{r, v, p \in\Lambda_+^*}  \!\!\! \Bigg(  \frac{\widehat{V}(p/e^N)}{|p|}|\eta_r|  |p|\| a_{v}a_{-p}  \xi\|  \|  a_{r+v}a_{-r-p} \xi\| \\ &\hspace{1.5cm} +  \frac{\widehat{V}(r/e^N)}{|r+v|}|\eta_r| |r+v|  \| a_{q}a_{r+v} \xi\| \|a_{r+q}a_v  \xi\|\Bigg)\\
	 	 &\leq CN^{3/2-\a} \|(\cN_++1)^{1/2}\xi\| \|\cK^{1/2}\xi\|. \end{split}\]
	Next, we rewrite $ \Xi_7$, $\Xi_8$ and $\Xi_{11}$ as 	
	\[\begin{split}
	\Xi_7 =&\;  \int_{\Lambda^2}dxdy\; e^{2N} V(e^N(x-y))\check{b}^*_{x}\check{b}^*_{y} a^*(\check{ \eta}_x)\check{a}_{x}\,, \\ 
	\Xi_8 = &\;  \int_{\Lambda^2}dxdydz\; e^{2N} V(e^N(x-y))\check{\eta}(z-x)\check{b}^*_{x}\check{b}^*_{z}\check{a}^*_y\check{a}_{z} \,, \\
	\Xi_{11} =&\;  -  \int_{\Lambda^2}dxdy\; e^{2N} V(e^N(x-y))\check{b}^*_{x}\check{a}^*_{y}\check{a}(\check{ \eta}_x)\check{b}_{x} \,.
	\end{split}\]
	Thus, we obtain 
	\[\begin{split}
	|\langle \xi, \Xi_7 \xi \rangle | &\leq C\|\eta\|\int_{\Lambda^2}dxdy\; e^{2N} V(e^N(x-y))\; \|\cN_+^{1/2} \check{a}_{x}\check{a}_{y} \xi \| \|\check{a}_{x}
	\xi \|\\
	&\leq C\|\eta\|\|\cN_+^{1/2}\cV_N^{1/2}\xi\|\|\cN_+^{1/2}\xi\|\\
	&\leq CN^{1/2-\a}\|\cV_N^{1/2}\xi\|\|\cN_+^{1/2}\xi\|\,,
	\end{split}\]
	as well as  
	\[\begin{split}
	&|\langle \xi, \Xi_8 \xi\rangle |\\
	&\leq  C\int_{\Lambda^2}dxdydz\; e^{2N} V(e^N(x-y))  |\check{\eta}(x-z)|\| \check{a}_{x}\check{a}_{y} \check{a}_{z}\xi \| \| \check{a}_z\xi \| \\
	&\leq C\left[\int_{\Lambda^2}dxdydz\; e^{2N} V(e^N(x-y))  \| \check{a}_{x}\check{a}_{y} \check{a}_{z}\xi \|^2\right]^{1/2}\\&\hspace{3cm}\times \left[\int_{\Lambda^2}dxdydz\; e^{2N} V(e^N(x-y)) |\eta(x-z)|^2\|\check{a}_z\xi \|^2\right]^{1/2}\\
	&\leq C N^{1/2-\a}\|\cV_N^{1/2}\xi\| \|\cN_+^{1/2}\xi\|,
	\end{split}\]
	and
	\[\begin{split}
	|\langle \xi, \Xi_{11} \xi \rangle | &\leq C\|\eta\|\int_{\Lambda^2}dxdy\; e^{2N} V(e^N(x-y))\, \| \check{a}_{x}\check{a}_{y} \xi \| \|\cN_+^{1/2}\check{a}_{x}
	\xi \|\\
	&\leq C\|\eta\|\|\cV_N^{1/2}\xi\|\|\cN_+\xi\| \leq CN^{1/2-\a}\|\cV_N^{1/2}\xi\|\|\cN_+^{1/2}\xi\|.
	\end{split}\]
	Collecting all the bounds above, we arrive at  \eqref{eq:RN-cCN1}. 
\end{proof}

\subsection{Proof of Proposition \ref{prop:RN}}\label{sub:Rfin}

With the results of Sections \ref{sec:aprioribnds}-\ref{sec:RN-cC}, we can now show Proposition \ref{prop:RN}. We assume $\alpha > 2$.  From Eq. \eqref{eq:GNeff-deco},  Prop. \ref{prop:cON} and Prop. \ref{prop:RN-cZ} we obtain that 
\begin{equation*}
\begin{split} 
\cR_{N,\a} =\; & e^{-A}\cG^{\text{eff}}_{N,\a} e^A  \\
= \; & \frac 1 2\, \widehat{\o}_N(0)  (N-1) (1-\cN_+/N)+ \big[ 2 N \widehat V(0)- \frac{1}{2} \widehat{\o}_N(0)  \big] \cN_+ ( 1- \cN_+/N ) \\
& + \frac 12  \sum_{p\in\L^*_+}  \widehat{\o}_N(p)\big[ b^*_p b^*_{-p} + b_p b_{-p} \big] + \cK  + \cC_N+ \cV_N    \\
& + \int_0^1 ds\; e^{-sA}\big [ \cK  + \cC_N+ \cV_N, A \big]e^{sA}  +\cE_{\cR}^{(1)}
\end{split}
\end{equation*}
with 
\[
\pm \cE_{\cR}^{(1)} \leq  C N^{1-\a} (\cH_N + 1)\,.
\]
From Prop.  \ref{prop:commAcVN}, Prop. \ref{prop:RN-K} and Prop.~\ref{prop:commAcCN}, we can write, for $N$ large enough, 
\begin{equation*}
\begin{split}  
&[ \cK  + \cC_N+ \cV_N, A \big] \\
&= \frac 1 {\sqrt N}\sum_{\substack{r,v\in\Lambda_+^*}} \!\!\!\widehat{\o}_N(r) \big[ b^*_{r+v}a^*_{-r} a_v+ \text{h.c.}\big]  -\sqrt N\sum_{\substack{r,v, \in \L^*_+,\\ p \neq-q} }\widehat{V}(r/e^N) \big[ b^*_{r+v}a^*_{-r} a_v + \text{h.c.}\big] \\
&\hspace{0.5cm} + 2\sum_{r,v \in  \L^*_+}  \big[\widehat{V}(r/e^N)\eta_r+\widehat{V}((r+v)/e^N)\eta_r\big] a^*_v a_v (1-\cN_+/N) +\cE_{\cR}^{(2)} 
\end{split}
\end{equation*}
where  
\[ \begin{split}\label{eq:calER-2}
| \langle \xi, \cE_{\cR}^{(2)} \xi \rangle|  \leq  & C N^{1/2-\alpha} (\log N)^{1/2}  \|\cH_N^{1/2} \xi \|^2 + C  N^{3/2-\a}  \|\cH_N^{1/2} \xi \|  \| (\cN_++1)^{1/2}\xi \|  \\
&+ C N^{-1}(\log N)^{1/2}  \|\cH_N^{1/2} \xi \|  \| (\cN_++1)^{1/2}\xi \|   \,.
\end{split}\]
for all $\xi \in \cF^{\leq N}_+$.  From Prop. \ref{prop:ANgrow}, Prop. \ref{prop:AHNgrow} and recalling the definition (\ref{eq:GNeff-deco2}) of the operator $\cC_N$, 
we deduce that 
\begin{equation}\label{eq:prRN4}
\begin{split}  
&\int_0^1 ds\;  e^{-sA}[ \cK  + \cC_N+ \cV_N, A \big] e^{sA} \\
&= \int_0^1 ds \;  e^{-sA} \Big[-\cC_N + \frac 1 {\sqrt N}\sum_{\substack{r, v\in \L^*_+} } \widehat{\o}_N(r)\big[ b^*_{r+v}a^*_{-r} a_v + \text{h.c.}\big] \\
&\hspace{1cm} + 2\sum_{r,v\in \L^*_+}  \big[\widehat{V}(r/e^N)\eta_r+\widehat{V}((r+v)/e^N)\eta_r\big] a^*_v a_v\Big(1-\frac{\cN_+}{N}\Big)\Big]e^{sA}  + \cE_{\cR}^{(3)}
\end{split}
\end{equation}
with 
\[
\pm \cE_{\cR}^{(3)}  \leq  C  [ N^{2-\a} + N^{-1/2} (\log N)^{1/2} ] (\cH_N +1) 
\] for $N \in \bN$ sufficiently large. 

We now rewrite 
\be \label{eq:RNfin-quad} \begin{split}
2\sum_{r,v\in \L^*_+}&  \big[\widehat{V}(r/e^N) \eta_r+\widehat{V}((r+v)/e^N)\eta_r\big] a^*_v a_v\Big(1-\frac{\cN_+}{N}\Big) \\
=\; &  4  \sum_{r,v\in \L^*_+} \widehat{V}(r/e^N)\eta_r a^*_v a_v\Big(1-\frac{\cN_+}{N}\Big) \\
& + 2\sum_{r,v\in \L^*_+} \big[\widehat{V}((r+v)/e^N)-\widehat{V}(r/e^N)\big]  \eta_ra^*_v a_v\Big(1-\frac{\cN_+}{N}\Big) := \text{Q}_1 + \text{Q}_2\,.
\end{split} \ee
With Lemma \ref{lm:propomega}, part iii) we get 
\begin{equation} \label{eq:prRN5}\begin{split}
&\bigg|2 \sum_{r\in \L^*} \widehat{V}(r/e^N)\eta_r -  \big [ 2 \widehat \o_N(0)  - 2N\widehat{V}(0)  \big] \bigg| \leq  \frac{C}{N}\,, \end{split}
\end{equation}
and therefore, using Lemma \ref{lm:cNops} and \eqref{eq:prRN5}  
\begin{equation}\label{eq:prRN6.1}
\begin{split}  
&\pm\bigg[ e^{-sA} \text{Q}_1 e^{sA} - 2\big [ 2 \widehat \o_N(0)  - 2N\widehat{V}(0)  \big]\sum_{v\in \L^*_+}a^*_v a_v\bigg(1-\frac{\cN_+}{N}\bigg)\bigg] \\
& \hskip 6cm \leq  C N^{1-\a} (\cN_++1) + \frac C N \,\cN_+\,.   
\end{split}
\end{equation}
On the other hand it is easy to check that $e^{-sA} \text{Q}_2 e^{sA}$ is an error term; to this aim we notice that
\begin{equation*}
\label{eq:prRN7.2}\begin{split}
& \bigg| \sum_{r\in \L^*}  \big[ \widehat{V}(r/e^N)\eta_r- \widehat{V}((r+v)/e^N)\eta_r\big] \bigg| \leq  C N  |v|  e^{-N}\,. \end{split}
\end{equation*}
Hence  with Props. \ref{prop:ANgrow} and \ref{prop:AHNgrow} we find
\be \label{eq:prRN6.2}
\pm \big[ e^{-sA} \text{Q}_2 e^{sA} \big] \leq C N e^{-N}  e^{-sA} \cN_+^{1/2} \cK^{1/2} e^{sA} \leq C N^{2} e^{-N} (\cH_N +1) \,.
\ee

To handle the second term on the second line of (\ref{eq:prRN4}), we apply Prop.~\ref{prop:RN-K} and then Prop. \ref{prop:ANgrow} and Prop. \ref{prop:AHNgrow}
\begin{equation}\label{eq:prRN7}
\begin{split}
&\pm\bigg(\frac 1 {\sqrt N} \int_0^1 ds\;  \sum_{\substack{r,v \in \L^*_+} }\widehat{\o}_N(r) \Big[e^{-sA} b^*_{r+v}a^*_{-r} a_ve^{sA}- b^*_{r+v}a^*_{-r} a_v \Big]  + \text{h.c.}\bigg)\\
& \hspace{1cm}= \pm \bigg( \frac 1 {\sqrt N}\int_0^1 ds\; \int_0^s dt\;\sum_{\substack{r, v\in \L^*_+}}  \widehat{\o}_N(r) e^{-tA} \Big[  b^*_{r+v}a^*_{-r} a_v , A \Big] e^{tA}\bigg) \\ 
&  \hspace{1cm} \leq  C  \int_0^1 ds\; \int_0^s dt\, e^{-tA} \big(N^{-\a} (\log N)\, \cK + N^{-1}(\cN_++1) \big)  e^{tA} \\
& \hspace{1cm} \leq C N^{1 -\a} \log N (\cH_N +1) \,.
\end{split}
\end{equation}
As for the first term on the second line of (\ref{eq:prRN4}), we use again Prop. \ref{prop:commAcCN}. Using  \eqref{eq:RNfin-quad}, \eqref{eq:prRN6.1} and \eqref{eq:prRN6.2} we have 
\begin{equation}\label{eq:prRN9}
\begin{split} 
\int_0^1 ds\;  e^{-sA} \cC_Ne^{sA } - \cC_N   &=   \int_0^1 ds\;  \int_0^{s}dt \;  e^{-tA } [\cC_N, A] e^{tA} \\
&=   \big [ 2 \widehat \o_N(0)  - 2N\widehat{V}(0)  \big]\sum_{p\in \L^*_+}a^*_pa_p\Big(1-\frac{\cN_+}{N}\Big) + \cE_{\cR}^{(4)}   
\end{split}
\end{equation} 
with $\pm \cE_{\cR}^{(4)}  \leq  C N^{2-\a} (\cH_N+1) + C N^{-1}  (\cN_++1)$. \\

Inserting the bounds (\ref{eq:prRN6.1}),  (\ref{eq:prRN6.2}),  (\ref{eq:prRN7}) and (\ref{eq:prRN9}) into (\ref{eq:prRN4}) we arrive at  
\begin{equation*}\label{eq:prRN-l}
\begin{split} 
\cR_{N,\a}= &\;  \frac 1 2  (N-1)\,  \widehat \o_N(0)   (1-\cN_+/N) + \frac 1 2  \widehat \o_N(0) \,\cN_+ \left( 1 - \cN_+/N \right)   \\
& +  \widehat \o_N(0) \sum_{p\in \L^*_+}a^*_pa_p \Big(1-\frac{\cN_+}{N} \Big)+  \frac 12 \sum_{p\in \L^*_+}  \widehat{\o}_N(p)\big[ b^*_p b^*_{-p} + b_p b_{-p} \big]  \\
& +\frac 1 {\sqrt N} \sum_{\substack{r,v\in \L^*_+:\\ r\neq-v} } \widehat{\o}_N(r)\big[ b^*_{r+v}a^*_{-r} a_v + \text{h.c.}\big] + \cH_N   + \cE_\cR 
\end{split}
\end{equation*} 
with 
\[
\pm \cE_\cR  \leq   C [ N^{2-\a} + N^{-1/2} (\log N)^{1/2} ] (\cH_N +1) 
\]
for $N \in \bN$ sufficiently large. 

\appendix

\section{Analysis of $ \cG_{N,\a}$} \label{sec:GN} 

The aim of this section is to show   Prop. \ref{prop:GN}. 
From (\ref{eq:cLN}) and (\ref{eq:GN}), we can decompose
\[ \cG_{N,\a} = e^{-B} \cL_N e^{B} = \cG^{(0)}_{N,\a} + \cG_{N,\a}^{(2)} + \cG_{N,\a}^{(3)} + \cG_{N,\a}^{(4)} \]
with 
\[ \cG_{N,\a}^{(j)} = e^{-B} \cL_N^{(j)} e^{B}\,. \]
To analyse  $\cG_{N,\a}$ we will need precise informations on the action of the generalized Bogoliubov transformation $e^{B}$ with $B$ the antisymmetric operator defined in \eqref{eq:defB}, which are summarized in subsection \ref{sub:genBog}. Then, in the subsections \ref{sub:G0}--\ref{sub:G4} we prove separate bounds for the operators $\cG_{N,\a}^{(j)}$, $j=0,2,3,4$, which we combine in Subsection \ref{sub:proofGN} to prove Prop. \ref{prop:GN}.  

The analysis in this section follows closely that of \cite[Section 7]{BBCS3} with some slight modifications due to the different scaling of the interaction potential and the fact that the kernel $\eta_p$ of $e^{B}$ is different from zero for all $p \in \L^*_+$ (in \cite{BBCS3} $\eta_p$ is different from zero only for momenta larger than a sufficiently large cutoff of order one). Moreover, while in three dimensions it was sufficient to choose the function  $\eta_p$ appearing in the generalized Bogoliubov transformation with $\| \eta \| $ sufficiently small but of order one, we need here  $\|\eta\|$ to be of order $N^{-\a}$ for some $\a>0$ large enough. As discussed in the introduction this is achieved by considering the Neumann problem for the scattering equation in \eqref{tlf} on a ball of radius $\ell=N^{-\a}$; as a consequence some terms depending on $\ell$ will be large, compared to the analogous terms in \cite{BBCS3}.

\subsection{Generalized Bogoliubov transformations} \label{sub:genBog}

In this subsection we collect important properties about the action of unitary operators of the form $e^{B}$, as defined in \eqref{eq:eBeta}. As shown in \cite[Lemma 2.5 and 2.6]{BBCS1}, we have, if $\| \eta \|$ is sufficiently small,
\begin{equation}\label{eq:conv-serie}
\begin{split} e^{-B} b_p e^{B} &= \sum_{n=0}^\infty \frac{(-1)^n}{n!} \text{ad}_{B}^{(n)} (b_p) \\
e^{-B} b^*_p e^{B} &= \sum_{n=0}^\infty \frac{(-1)^n}{n!} \text{ad}_{B}^{(n)} (b^*_p) \end{split} \end{equation}
where the series converge absolutely. To confirm the expectation that generalized Bogoliubov transformation act similarly to standard Bogoliubov transformations, on states with few excitations, we 
define (for $\| \eta \|$ small enough) the remainder operators 
\begin{equation} \label{eq:defD}
d_q =\sum_{m\geq 0}\frac{1}{m!} \Big [\text{ad}_{-B}^{(m)}(b_q) - \eta_q^m b_{\alpha_m q}^{\sharp_m }  \Big],\hspace{0.5cm} d^*_q =\sum_{m\geq 0}\frac{1}{m!} \Big [\text{ad}_{-B}^{(m)}(b^*_q) - \eta_q^m b_{\alpha_m q}^{\sharp_{m+1}}  \Big]\end{equation}
where $q \in \L^*_+$, $ (\sharp_m, \alpha_m) = (\cdot, +1)$ if $m$ is even and $(\sharp_m, \alpha_m) = (*, -1)$ if $m$ is odd. It follows then from (\ref{eq:conv-serie}) that 
\begin{equation}\label{eq:ebe} 
e^{-B} b_q e^{B} = \gamma_q  b_q +\s_q b^*_{-q} + d_q, \hspace{1cm} e^{-B} b^*_q e^{B} = \g_q b^*_q +\s_q b_{-q} + d^*_q  \end{equation} 
where we introduced the notation $\g_q = \cosh (\eta_q)$ and $\s_q = \sinh (\eta_q)$. It will also be useful to introduce remainder operators in position space. For $x \in \Lambda$, we define the operator valued distributions $\check{d}_x, \check{d}^*_x$ through
\begin{equation}\label{eq:ebex} e^{-B} \check{b}_x e^{B} = b ( \check{\g}_x)  +  b^* (\check{\s}_x) + \check{d}_x, \qquad 
e^{-B} \check{b}^*_x e^{B} = b^* ( \check{\gamma}_x)  +  b (\check{\s}_x) + \check{d}^*_x
\end{equation}
where $\check{\gamma}_x (y) = \sum_{q \in \Lambda^*} \cosh (\eta_q) e^{iq \cdot (x-y)}$ and $\check{\s}_x (y) = \sum_{q \in \Lambda^*} \sinh (\eta_q) e^{iq \cdot (x-y)}$.
The next lemma is taken from \cite[Lemma 3.4]{BBCS3}.
\begin{lemma} \label{lm:dp} 
	Let $\eta \in \ell^2 (\Lambda_+^*)$, $n \in \bZ$. For $p \in \L_+^*$, let $d_p$ be defined as in (\ref{eq:ebe}). If $\| \eta \|$ is small enough, there exists $C > 0$ such that  
	\begin{equation}\label{eq:d-bds}
	\begin{split} 
	\| (\cN_+ + 1)^{n/2} d_p \xi \| &\leq \frac{C}{N} \left[ |\eta_p| \| (\cN_+ + 1)^{(n+3)/2} \xi \| + \| \eta \| \| b_p (\cN_+ + 1)^{(n+2)/2} \xi \| \right], \\ 
	\| (\cN_+ + 1)^{n/2} d_p^* \xi \| &\leq \frac{C}{N} \, \| \eta \| \,\| (\cN_+ +1)^{(n+3)/2} \xi \| \end{split}  \end{equation}
	for all $p \in \L^*_+, \xi \in \cF_+^{\leq N}$. 
	In position space, with $\check{d}_x$ defined as in (\ref{eq:ebex}), we find   
	\begin{equation}\label{eq:dxy-bds} 
	\| (\cN_+ + 1)^{n/2} \check{d}_x \xi \| \leq  \frac{C }{N}\, \| \eta \| \Big[ \,\| (\cN_+ + 1)^{(n+3)/2} \xi \| +  \| b_x (\cN_+ + 1) ^{(n+2)/2}\xi \|\,. \Big] 
	\end{equation}
	Furthermore, letting $\check{\lis{d}}_x = \check{d}_x  + (\cN_+ / N) b^*(\check{\eta}_x)$, we find 
	\be \begin{split} \label{eq:splitdbd}
		\| (\cN_+ &+ 1)^{n/2} \check{a}_y \check{\lis{d}}_x \xi \| \\ &\leq \frac{C}{N} \, \Big[ \, \|\eta \|^2  \| (\cN_+ + 1)^{(n+2)/2} \xi \|  + \| \eta \| |\check{\eta} (x-y)|  \| (\cN +1)^{(n+2)/2}  \xi \| \\
		& \hspace{1cm} + \| \eta \| \| \check{a}_x (\cN_++1)^{(n+1)/2} \xi \| +  \|\eta \|^2 \|\check{a}_y (\cN_+ + 1)^{(n+3)/2} \xi \|\\
		& \hspace{1cm}  + \| \eta \| \| \check{a}_x \check{a}_y (\cN +1)^{(n+2)/2}  \xi \|   \, \Big]
	\end{split}\ee
	and, finally, 
	\begin{equation}\label{eq:ddxy}
	\begin{split} 
	\| (\cN_+ &+ 1)^{n/2} \check{d}_x \check{d}_y \xi \|  \\ &\leq \frac C {N^2} \Big[ \; \|\eta\|^2  \| (\cN_++ 1)^{(n+6)/2} \xi \| + \| \eta \| |\check{\eta} (x-y)|  \| (\cN_+ + 1)^{(n+4)/2}  \xi \| \\ 
	&\hspace{1cm} + \|\eta \|^2 \| {a}_x (\cN_+ + 1)^{(n+5)/2} \xi \|   + \| \eta \|^2 
	\|{a}_y (\cN_+ + 1)^{(n+5)/2} \xi \| \\ &\hspace{1cm}  
	+ \| \eta \|^2\, \|{a}_x {a}_y (\cN_+ +  1)^{(n+4)/2} \xi \| \; \Big] 
	\end{split} \end{equation}
	for all $\xi \in \cF^{\leq n}_+$. 
\end{lemma}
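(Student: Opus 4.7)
The plan is to prove all bounds by carefully expanding the nested commutators defining $d_p$, $d_p^*$ (and their position-space analogues), identifying the leading behaviour that matches the standard Bogoliubov transformation, and showing that each remainder term carries an extra $1/N$ factor coming from the non-canonical part of the commutator $[b_p,b_q^*]$. Since by definition
\[
\text{ad}_{-B}^{(1)}(b_q) = [B,b_q] = -\eta_q\, b^*_{-q} + r_q^{(1)},
\]
where $r_q^{(1)}$ collects the corrections produced by $(1-\cN_+/N)\d_{p,q}-N^{-1}a_q^*a_p$ instead of $\d_{p,q}$, one gets the first nontrivial contribution to $d_q$ and each such ``non-canonical'' term carries a factor $1/N$. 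I would set up an induction on $m$ showing that
\[
\frac{1}{m!}\Big[\text{ad}_{-B}^{(m)}(b_q)-\eta_q^m\,b^{\sharp_m}_{\a_m q}\Big]
\]
has operator norm on the $k$-particle sector bounded by $(C\|\eta\|)^m/(N\,m!)$ times a factor of $(\cN_++1)^{3/2}$ (with an extra $|\eta_q|$ or an $a_q^{\sharp}$ depending on which bound one is proving), so that summing the geometric series (using the smallness of $\|\eta\|$) yields \eqref{eq:d-bds}. The inductive step uses the exact Leibniz-type identity for $[B,\,\cdot\,]$ together with the decomposition of $[b_p,b_q^*]$ into its canonical part $\d_{p,q}$ and the $\cN_+/N$, $a_q^*a_p/N$ corrections; only the corrections contribute to the remainder, giving the $1/N$ in the bound.

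For the position-space bound \eqref{eq:dxy-bds}, I would apply the momentum-space bound \eqref{eq:d-bds} after writing $\check d_x = \sum_p e^{-ip\cdot x}d_p/\sqrt{|\L|}$ and expanding $\|(\cN_++1)^{n/2}\check d_x\xi\|$ as a sum over $p$, using $\|\eta\|_{\ell^2}=\|\check\eta\|_{L^2}$ and Parseval to recover the stated estimate. For the cancellation bound \eqref{eq:splitdbd} on $\check{\lis d}_x = \check d_x+(\cN_+/N)b^*(\check\eta_x)$, the key observation is that the leading-order contribution to $\check d_x$ on states with many excitations is precisely $-(\cN_+/N)b^*(\check\eta_x)$, arising from the $m=1$ term after commuting the $\cN_+$-dependent factors through. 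I would therefore isolate this contribution explicitly, cancel it, and bound the genuine remainder using the same inductive scheme, now with an extra $a_y$ on the left; the appearance of $|\check\eta(x-y)|$ in \eqref{eq:splitdbd} comes from the contraction of $\check a_y$ with the factor $b^*(\check\eta_x)$ via the canonical commutator.

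The last estimate \eqref{eq:ddxy} on $\check d_x\check d_y$ is proved by commuting $\check d_y$ through the expansion of $\check d_x$, picking up commutator terms $[\check d_x,\check d_y]$ that are again controlled by the canonical and non-canonical parts of $[b_p,b_q^*]$, and then applying \eqref{eq:dxy-bds} twice to bound the resulting composition. Each application contributes a factor $C\|\eta\|/N$, giving the overall $1/N^2$; the five summands on the right-hand side correspond to the five qualitatively different ways the $a_x$, $a_y$ factors and the $\check\eta(x-y)$ contraction can appear when the two expansions are composed.

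The main obstacle is the bookkeeping in the induction: one has to organize the iterated commutators so that the ``standard Bogoliubov part'' $\eta_q^m b^{\sharp_m}_{\a_m q}/m!$ is cleanly separated from genuine remainder pieces at every step, and at the same time propagate bounds in terms of $\cN_+$ (with the right exponent) without losing the crucial $1/N$ factor. In particular, verifying \eqref{eq:splitdbd} requires tracking not only the overall $1/N$ but also which specific contraction produces the $|\check\eta(x-y)|$ pointwise term versus the $\|\eta\|$ $L^2$-type terms, and this is where the argument is most delicate. Since the argument is essentially identical to that of \cite[Lemma 3.4]{BBCS3}, I would ultimately invoke that proof, adapting only the constants to reflect that here $\eta$ depends on $N$ through $\ell=N^{-\a}$ but still satisfies $\|\eta\|\leq CN^{-\a}$, which is sufficiently small for the geometric series to converge.
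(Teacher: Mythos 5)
The paper does not prove this lemma at all: it cites \cite[Lemma~3.4]{BBCS3} verbatim (``The next lemma is taken from \cite[Lemma 3.4]{BBCS3}''), so there is no ``paper's proof'' to compare against. Your proposal sketches an inductive argument and then ultimately defers to BBCS3 as well, so your bottom line is consistent with what the paper actually does. Note, though, that the lemma is stated for an arbitrary $\eta \in \ell^2(\Lambda_+^*)$ with $\|\eta\|$ small; nothing about the $N$-dependence of the specific $\eta$ chosen later enters, so there are no ``constants to adapt'' — the statement applies verbatim.

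That said, one step of your sketch would fail if carried out as written. You propose to deduce the position-space bound \eqref{eq:dxy-bds} from the momentum-space bound \eqref{eq:d-bds} by writing $\check d_x = \sum_p e^{-ip\cdot x} d_p$, ``expanding the norm as a sum over $p$'' and ``using Parseval.'' Neither mechanism is available: the triangle inequality $\|\check d_x\xi\| \le \sum_p \|d_p\xi\|$ would require summing $|\eta_p|$ over $p$, which is only $\ell^2$, not $\ell^1$; and there is no Parseval identity for $\|\sum_p e^{-ip\cdot x}d_p\xi\|^2$ because the vectors $d_p\xi$ for different $p$ are far from orthogonal (each $d_p$ is itself an infinite linear combination of creation/annihilation monomials in all momenta). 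The resulting bound must feature $\|\eta\|_{\ell^2}$ and a local object $b_x$, not $|\eta_p|$ and $b_p$, and there is no soft passage from one to the other. In BBCS1/BBCS3 the position-space estimates \eqref{eq:dxy-bds}, \eqref{eq:splitdbd}, \eqref{eq:ddxy} are obtained by redoing the nested-commutator expansion directly for $\operatorname{ad}_{-B}^{(m)}(\check b_x)$ in position space (equivalently, by estimating $\ell^2$-sums over the intermediate momenta produced at each step via Cauchy–Schwarz), which is precisely where the $\|\eta\|$ and $|\check\eta(x-y)|$ factors come from. Your observation about the cancellation underlying $\check{\lis d}_x = \check d_x + (\cN_+/N)b^*(\check\eta_x)$ is correct and matches the actual argument; the gap is only in the claimed route from momentum to position space.
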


A first simple application of Lemma \ref{lm:dp} is the following bound 
on the growth of the expectation of $\cN_+$. 
\begin{lemma}\label{lm:Ngrow2} 
Assume $B$ is defined as in \eqref{eq:defB}, with $\eta \in \ell^2 (\L^*)$ and $\eta_p = \eta_{-p}$ for all $p \in \L^*_+$. Then, there exists a constant $C > 0$ such that
\[
 \Big| \langle \xi, \big[ e^{-B} \cN_+  e^{B} - \cN_+ \big] \xi \rangle \Big| \leq   \| \eta \| \| (\cN_+ + 1)^{1/2} \xi \|^2
\]
for all $\xi \in \cF_+^{\leq N}$. 
\end{lemma}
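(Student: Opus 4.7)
The plan is to reduce the estimate to a bound on the commutator $[\cN_+, B]$ via the fundamental theorem of calculus. Writing
\begin{equation*}
e^{-B} \cN_+ e^{B} - \cN_+ = \int_0^1 ds\, e^{-sB} [\cN_+, B] e^{sB},
\end{equation*}
and using that $[\cN_+, b_p^*] = b_p^*$ and $[\cN_+, b_p] = -b_p$ (recorded just above the lemma), we obtain $[\cN_+, b_p^* b_{-p}^*] = 2 b_p^* b_{-p}^*$ and $[\cN_+, b_p b_{-p}] = -2 b_p b_{-p}$, whence
\begin{equation*}
[\cN_+, B] = \sum_{p \in \Lambda_+^*} \bigl( \eta_p\, b_p^* b_{-p}^* + \bar\eta_p\, b_p b_{-p} \bigr).
\end{equation*}

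Next, I would control the expectation of $[\cN_+,B]$ on an arbitrary $\zeta \in \cF_+^{\leq N}$ by a Cauchy--Schwarz argument in momentum space. For the first sum,
\begin{equation*}
\Bigl| \sum_p \eta_p \langle \zeta, b_p^* b_{-p}^* \zeta \rangle \Bigr| \leq \Bigl( \sum_p \|b_p \zeta\|^2 \Bigr)^{1/2} \Bigl( \sum_p |\eta_p|^2 \|b_{-p}^* \zeta\|^2 \Bigr)^{1/2} \leq \|\eta\|\, \|\cN_+^{1/2} \zeta\|\, \|(\cN_+ + 1)^{1/2} \zeta\|,
\end{equation*}
using the pointwise bounds $\|b_p \zeta\| \leq \|\cN_+^{1/2} \zeta\|$ and $\|b_{-p}^* \zeta\| \leq \|(\cN_+ + 1)^{1/2} \zeta\|$ recalled in Section \ref{sec:fock}, together with $\sum_p \|b_p \zeta\|^2 \leq \langle \zeta, \cN_+ \zeta \rangle$, which follows from $b_p^* b_p \leq a_p^* a_p$ on $\cF_+^{\leq N}$. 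The hermitian conjugate sum is treated identically. Hence
\begin{equation*}
|\langle \zeta, [\cN_+, B] \zeta \rangle| \leq 2 \|\eta\|\, \|(\cN_+ + 1)^{1/2} \zeta\|^2.
\end{equation*}

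Finally, I would apply this with $\zeta = e^{sB} \xi$, invoke Lemma \ref{lm:Ngrow} (with $n=1$) to bound $\|(\cN_+ + 1)^{1/2} e^{sB} \xi\|^2 \leq C e^{C\|\eta\|} \|(\cN_+ + 1)^{1/2} \xi\|^2$, and integrate over $s \in [0,1]$ to obtain the claimed estimate (with an implicit constant $C$ on the right-hand side, which is consistent with the uniform bound stated in the lemma provided $\|\eta\|$ stays bounded). The argument is essentially routine and I do not anticipate any real obstacle; the only points requiring mild care are the combinatorial identity for $[\cN_+, B]$ and the use of $\sum_p b_p^* b_p \leq \cN_+$ in the Cauchy--Schwarz step.
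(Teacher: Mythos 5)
Your proposal is correct, and after the shared opening (Duhamel formula plus the commutator $[\cN_+,B]=\sum_{p}\big(\eta_p b_p^*b_{-p}^*+\bar\eta_p b_pb_{-p}\big)$, where the $\tfrac12$ in $B$ cancels the $2$ coming from $[\cN_+,b_p^*b_{-p}^*]=2b_p^*b_{-p}^*$), it diverges from the paper's argument. The paper does not push $e^{sB}$ onto $\xi$: it expands $e^{-sB}b_pe^{sB}=\gamma_p^{(s)}b_p+\sigma_p^{(s)}b_{-p}^*+d_p^{(s)}$ from (\ref{eq:ebe}), multiplies out the pair $e^{-sB}(b_pb_{-p}+b_p^*b_{-p}^*)e^{sB}$, and bounds the resulting terms directly on $\xi$ using $|\gamma_p^{(s)}|\le C$, $|\sigma_p^{(s)}|\le C|\eta_p|$, and the $d_p$-estimates of Lemma \ref{lm:dp}. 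You instead bound $|\langle\zeta,[\cN_+,B]\zeta\rangle|\le 2\|\eta\|\|(\cN_++1)^{1/2}\zeta\|^2$ by Cauchy--Schwarz (using $\sum_p b_p^*b_p\le\cN_+$ and $\|b_p^*\zeta\|\le\|(\cN_++1)^{1/2}\zeta\|$), set $\zeta=e^{sB}\xi$, and invoke the a priori growth bound of Lemma \ref{lm:Ngrow} (applied with $\eta$ replaced by $s\eta$, uniformly for $s\in[0,1]$) to return to $\xi$. Both routes are sound; yours is lighter, avoiding the $d_p$ machinery entirely, at the cost of leaning on the external result Lemma \ref{lm:Ngrow}, whereas the paper's version is self-contained at this point. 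Your parenthetical about the implicit constant is also right: the displayed inequality in the lemma as printed is missing the $C$ announced in the preamble (the paper's own proof ends with $C\|\eta\|\|(\cN_++1)^{1/2}\xi\|^2$), and your $e^{C\|\eta\|}$ factor is harmless since $\|\eta\|=O(N^{-\alpha})$ in all applications.
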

\begin{proof} With (\ref{eq:ebe}) we write
	\begin{equation*}
	\begin{split}
	e^{-B} &\cN_+  e^{B} - \cN_+ \\ = \; &\int_0^1 e^{-sB} [\cN_+ ,B] e^{s B}ds \\
	=\; &\int_0^1 \sum_{p \in \L^*_+}  \eta_p \, e^{-s B} ( b_p b_{-p}  + b^*_p b^*_{-p} ) e^{s B} \, ds \\
	= \; &\int_0^1 \sum_{p \in \L^*_+}  \eta_p \, \left[ (\g_p^{(s)} b_p + \s_p^{(s)} b_{-p}^* + d_p^{(s)}) ( \g_p^{(s)} b_{-p} + \s_p^{(s)} b_{-p}^* + d_{-p}^{(s)}) + \hc \right]   ds
	\end{split}
	\end{equation*}
	with $\g_p^{(s)} = \cosh (s \eta_p)$, $\s_p^{(s)} = \sinh (s \eta_p)$. Using $|\g^{(s)}_p| \leq C$ and $|\s_p^{(s)}| \leq C |\eta_p|$, (\ref{eq:d-bds}) in Lemma \ref{lm:dp} we arrive at  
	\[ \begin{split} \Big|  &\langle \xi, \big[ e^{-B} \cN_+  e^{B} - \cN_+ \big] \xi \rangle \Big| \\ 
&\leq C \| (\cN_+ + 1)^{1/2} \xi \|  \sum_{p \in \L^*_+} |\eta_p| \left[ |\eta_p| \| (\cN_+ + 1)^{1/2} \xi \| + \| b_{p} \xi \| \right] \leq C \| \eta \|  \| (\cN_+ + 1)^{1/2} \xi \|^2 \end{split} \]
\end{proof}

\subsection{Analysis of $ \cG_{N,\a}^{(0)}=e^{-B}\cL^{(0)}_N e^{B}$} 
\label{sub:G0}

We define $\cE_{N}^{(0)}$ so that 
\begin{equation*}\label{eq:G0C}
\begin{split}
\cG^{(0)}_{N,\a} &= e^{-B} \cL^{(0)}_N e^{B}= \frac12 \widehat{V} (0) (N+\cN_+-1)(N-\cN_+) + \cE_{N,\a}^{(0)}\,.
\end{split}
\end{equation*}
where we recall from (\ref{eq:cLNj}) that 
\begin{equation*}\label{eq:cLN02} \cL_{N}^{(0)} =\;\frac 12 \widehat{V} (0) (N-1 + \cN_+)(N-\cN_+ )\,.  \end{equation*}

\begin{prop}\label{prop:G0} Under the assumptions of Prop. \ref{prop:GN}, there exists a constant $C > 0$ such that  
	\begin{equation*}\label{eq:E0C}
	\pm \cE_{N, \a}^{(0)} \leq C N^{1-\a} (\cN_+ +1)
	\end{equation*}
	for all $\a > 0$ and $N \in \bN$ large enough. 
\end{prop}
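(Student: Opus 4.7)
The strategy is to exploit the fact that $\cL_N^{(0)}$ is a polynomial in $\cN_+$ alone, and then to compute the commutator with $B$ explicitly.

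First, expand
\[
\cL_N^{(0)} = \tfrac12 \widehat{V}(0) (N-1+\cN_+)(N-\cN_+) = \tfrac12 \widehat{V}(0)\bigl[N(N-1) + \cN_+ - \cN_+^2\bigr].
\]
The constant $\tfrac12 \widehat{V}(0) N(N-1)$ commutes with $B$, so
\[
\cE_{N,\a}^{(0)} \;=\; \tfrac12 \widehat{V}(0)\Bigl[\bigl(e^{-B}\cN_+ e^{B} - \cN_+\bigr) - \bigl(e^{-B}\cN_+^2 e^{B} - \cN_+^2\bigr)\Bigr].
\]
The first bracket is already controlled by Lemma \ref{lm:Ngrow2}, which gives
\[
\bigl|\langle\xi,(e^{-B}\cN_+ e^{B} - \cN_+)\xi\rangle\bigr| \leq C\|\eta\|\,\|(\cN_++1)^{1/2}\xi\|^2 \leq C N^{-\a} \|(\cN_++1)^{1/2}\xi\|^2.
\]

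For the $\cN_+^2$ contribution, I would use Duhamel's formula
\[
e^{-B}\cN_+^2 e^{B} - \cN_+^2 = \int_0^1 ds\, e^{-sB}[\cN_+^2,B]e^{sB},
\]
combined with the explicit commutator $[\cN_+,B] = \sum_{p\in\L^*_+}\eta_p\bigl(b_p^*b_{-p}^* + b_p b_{-p}\bigr)$ (since each $b_p^*b_{-p}^*$ adds two particles and each $b_p b_{-p}$ removes two). This yields
\[
[\cN_+^2,B] = \cN_+[\cN_+,B] + [\cN_+,B]\cN_+.
\]
Bounding by Cauchy--Schwarz and using $\|b_p\xi\|\leq\|\cN_+^{1/2}\xi\|$, $\|b_p^*\xi\|\leq\|(\cN_++1)^{1/2}\xi\|$ together with the crucial a priori bound $\cN_+\leq N$ on $\cF_+^{\leq N}$ to trade one power of $\cN_+$ for a factor $N$, one obtains
\[
\bigl|\langle\xi,[\cN_+^2,B]\xi\rangle\bigr| \leq C\|\eta\|\,N\,\|(\cN_++1)^{1/2}\xi\|^2.
\]

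Finally, I would combine this with Lemma \ref{lm:Ngrow} to move the conjugation $e^{\pm sB}$ past $(\cN_++1)^{1/2}$ at the cost of a harmless multiplicative constant, giving
\[
\bigl|\langle\xi, (e^{-B}\cN_+^2 e^{B}-\cN_+^2)\xi\rangle\bigr| \leq C\|\eta\|\,N\,\|(\cN_++1)^{1/2}\xi\|^2 \leq C N^{1-\a}\|(\cN_++1)^{1/2}\xi\|^2.
\]
Adding the two contributions and noting that $N^{-\a}\leq N^{1-\a}$ proves the claim. There is no serious obstacle: the only subtlety is that one must accept the loss of a factor $N$ when reducing $\cN_+^2$ to $\cN_+$, but this loss is compensated precisely by $\|\eta\|\leq CN^{-\a}$, giving the stated $N^{1-\a}$ rate.
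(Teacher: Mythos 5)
Your proof is correct. The paper takes a slightly different route: it writes $\cL_N^{(0)} = \tfrac{N(N-1)}{2}\widehat{V}(0) + \tfrac{N}{2}\widehat{V}(0)\big[\sum_q b_q^*b_q - \cN_+\big]$ (which is algebraically the same decomposition as yours, since $\sum_q b_q^*b_q - \cN_+ = -\cN_+^2/N + O(\cN_+/N)$), and then conjugates $\sum_q b_q^*b_q$ term by term using the explicit expansion $e^{-B}b_q e^B = \gamma_q b_q + \sigma_q b_{-q}^* + d_q$ from (\ref{eq:ebe}) together with the remainder bounds of Lemma \ref{lm:dp}. You instead conjugate $\cN_+^2$ directly via Duhamel's formula, the explicit commutator $[\cN_+, B]=\sum_p\eta_p(b_p^*b_{-p}^*+b_pb_{-p})$, and the brute-force estimate $\cN_+ \leq N$, invoking only Lemma \ref{lm:Ngrow} to absorb the conjugations at each $s$. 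Both routes rely on Lemma \ref{lm:Ngrow2} for the linear-in-$\cN_+$ piece and on $\|\eta\|\leq CN^{-\alpha}$ to land the $N^{1-\alpha}$ rate. Your version is a bit more self-contained here, since it avoids re-invoking the $d$-operator machinery; the paper's version is consistent with how the remaining $\cG_{N,\alpha}^{(j)}$ terms are handled, so it keeps the appendix stylistically uniform. One small point worth making explicit if you write this up: when bounding, say, $\sum_p\eta_p\langle b_p\cN_+\xi, b_{-p}^*\xi\rangle$, the Cauchy--Schwarz in $p$ should place the $\eta_p$ on the $b_{-p}^*$ side so that $\sum_p\|b_p\cN_+\xi\|^2 \le \|\cN_+^{3/2}\xi\|^2$ and $\sum_p\eta_p^2\|b_{-p}^*\xi\|^2 \le \|\eta\|^2\|(\cN_++1)^{1/2}\xi\|^2$, after which $\|\cN_+^{3/2}\xi\|\le N\|(\cN_++1)^{1/2}\xi\|$ gives precisely the stated $N\|\eta\|$.
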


\begin{proof} 
The proof follows \cite[Prop. 7.1]{BBCS3}. 

We write 
\[ \cL_N^{(0)} = \frac{N(N-1)}{2} \widehat{V} (0) + \frac{N}{2}\widehat{V} (0) \Big[ \sum_{q \in \L^*_+} b_q^* b_q - \cN_+ \Big]\,. \]
Hence, 
\begin{equation*}\label{eq:e01}
\begin{split}
\cE_{N}^{(0)} &= \frac{N}{2}\widehat{V} (0) \sum_{q \in \L_+^*} \left[ e^{-B} b_q^* b_q e^{B} - b_q^* b_q \right] - \frac{N}{2}  \widehat{V} (0)\left[ e^{-B} \cN_+ e^{B} - \cN_+ \right]\,. \end{split} \end{equation*}
To bound the first term we use (\ref{eq:ebe}), $|\g_q^2 - 1| \leq C \eta_q^2$, $|\s_q| \leq C |\eta_q|$, the first bound in (\ref{eq:d-bds}), Cauchy-Schwarz and the estimate $\| \eta \| \leq C N^{-\a}$. To bound the second term, we use Lemma \ref{lm:Ngrow2}. We conclude that 
\[ |\langle \xi , \cE_N^{(0)} \xi \rangle | \leq C N^{1-\alpha} \| (\cN_+ + 1)^{1/2} \xi \|^2\,. \]
\end{proof}

\subsection{Analysis of $\cG_{N,\a}^{(2)}=e^{-B}\cL^{(2)}_N e^{B}$}

We consider first conjugation of the kinetic energy operator. 
\begin{prop}\label{prop:K} Under the assumptions of Prop. \ref{prop:GN}, there exists $C > 0$ such that   
	\begin{equation} \label{eq:K-dec} \begin{split} e^{-B}\cK e^{B} = \; &\cK + \sum_{p \in \L^*_+} p^2 \eta_p ( b_p b_{-p} + b^*_p b^*_{-p} ) \\&+ \sum_{p \in  \L^*_+} p^2 \eta_p^2 \Big(\frac{N-\cN_+}{N}\Big) \Big(\frac{N-\cN_+ -1}{N}\Big) +\cE^{(K)}_{N,\a}
	\end{split}
	\end{equation}
	where 
	\begin{equation}\label{eq:errorKc}
	\begin{split}
| \langle \xi, \cE^{(K)}_{N,\a}  \xi \rangle | \leq  C N^{1/2 -\a} \| \cH_N^{1/2}\xi\| \| (\cN_++1)^{1/2}\xi \| + C N^{1-\a}  \| (\cN_++1)^{1/2}\xi \|^2
	\end{split}
	\end{equation}
for any $\a>1$, $\xi \in \cF^{\leq N}_+$ and  $N \in \bN$ large enough.
\end{prop}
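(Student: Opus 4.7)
The plan is to combine the Duhamel formula with the generalized Bogoliubov identities \eqref{eq:ebe}. Since $\cK$ commutes with $\cN_+$ and $[\cK, b_p^*] = p^2 b_p^*$, $[\cK, b_p] = -p^2 b_p$, a direct computation yields
\[
[\cK, B] = \sum_{p\in\L^*_+} p^2 \eta_p (b_p^* b_{-p}^* + b_p b_{-p}).
\]
Hence, by the fundamental theorem of calculus,
\[
e^{-B}\cK e^B - \cK = \int_0^1 ds\, e^{-sB}\Big(\sum_{p\in\L^*_+} p^2 \eta_p (b_p^* b_{-p}^* + b_p b_{-p})\Big) e^{sB}.
\]

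For each $s \in [0,1]$, I would use \eqref{eq:ebe} to conjugate each factor $b_p^\#$ by $e^{sB}$: modulo remainder operators $d_p^{(s)}, d_p^{*(s)}$ controlled by Lemma \ref{lm:dp}, one obtains $\gamma_p^{(s)} b_p^\# + \sigma_p^{(s)} b_{-p}^{\bar{\#}}$ with $\gamma_p^{(s)}=\cosh(s\eta_p)$ and $\sigma_p^{(s)}=\sinh(s\eta_p)$. Expanding the product yields, modulo terms with at least one remainder operator,
\[
\sum_{p\in\L^*_+} p^2 \eta_p \Big[(\gamma_p^{(s)\,2}+\sigma_p^{(s)\,2})(b_p^*b_{-p}^*+b_pb_{-p})+\gamma_p^{(s)}\sigma_p^{(s)}\, \mathcal{D}_p\Big],
\]
where $\mathcal{D}_p := b_p^*b_p+b_{-p}^*b_{-p}+b_pb_p^*+b_{-p}b_{-p}^*$. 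Integrating in $s$ and Taylor-expanding the hyperbolic functions (legitimate because $\|\eta\|_\infty \leq CN^{-\alpha}$ by \eqref{eq:etapio}), the $O(1)$ part of $\int_0^1 \cosh(2s\eta_p)ds$ reproduces the explicit off-diagonal sum in \eqref{eq:K-dec}, while the $O(\eta_p)$ part of $\int_0^1\gamma_p^{(s)}\sigma_p^{(s)} ds$, combined with the commutation relations \eqref{eq:comm-bp} that rewrite $b_p b_p^* = (N-\cN_+)/N \cdot (a_p^*a_p+1)$ and $b_p^* b_p = (N-\cN_++1)/N \cdot a_p^*a_p$, produces (after symmetrization in $p\to -p$ and a careful rearrangement) the diagonal $(N-\cN_+)(N-\cN_+-1)/N^2$ factor in \eqref{eq:K-dec}, plus $a_p^*a_p$-corrections to be absorbed into $\cE^{(K)}_{N,\alpha}$.

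The main obstacle is estimating $\cE^{(K)}_{N,\alpha}$, which collects three classes of contributions: (i) Taylor remainders of $\cosh(2s\eta_p)$ and $\sinh(2s\eta_p)$ of order $|\eta_p|^3$, summable via $|\eta_p|\leq C|p|^{-2}$ from \eqref{eq:modetap}; (ii) the $a_p^*a_p$-corrections from rewriting $\mathcal{D}_p$, which after summation yield terms dominated by $\sum_p p^2\eta_p^2 \,a_p^*a_p \leq C\|\eta\|_\infty\cN_+ \leq CN^{-\alpha}\cN_+$, contributing to the $N^{1-\alpha}\|(\cN_++1)^{1/2}\xi\|^2$ part of \eqref{eq:errorKc}; and (iii) crossed terms containing one or two remainder operators $d_p^{(s)}, d_p^{*(s)}$. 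This last class is the most delicate: each such term carries the weight $p^2\eta_p$ from $[\cK, B]$, and Lemma \ref{lm:dp} supplies an $N^{-1}$ prefactor together with either $|\eta_p|$ or $\|\eta\|$. Splitting $p^2 = |p|\cdot|p|$ and using $|p||\eta_p|\leq C|p|^{-1}$ to Cauchy--Schwarz one factor of $|p|$ against a $\cK^{1/2}$ produces the $\|\cH_N^{1/2}\xi\|$ factor in \eqref{eq:errorKc}, while $\|\eta\|\leq CN^{-\alpha}$ combined with the $N^{-1}$ from Lemma \ref{lm:dp} and an $N^{1/2}$ generated by the $\sqrt{N}$-type arithmetic in the Cauchy--Schwarz yields the overall $N^{1/2-\alpha}$ decay. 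Finally, the number-operator growth bound Lemma \ref{lm:Ngrow2} allows to pass $\cN_+$-weights freely through the $e^{\pm sB}$ factors, absorbing any resulting constants into the final estimate.
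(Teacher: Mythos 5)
Your outline is largely the right architecture: Duhamel on $[\cK,B]$, expansion via \eqref{eq:ebe}, and Lemma~\ref{lm:dp} to control the remainders $d_p^{(s)}$. But the crucial step — controlling the cross terms that contain one remainder operator and one $b_p^\#$ — is not handled correctly, and it is precisely here that the real difficulty of the proposition lies.

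You propose to estimate terms of the schematic form $\sum_p p^2\eta_p\,b_p\,d_{-p}$ in momentum space by "splitting $p^2=|p|\cdot|p|$, using $|p||\eta_p|\leq C|p|^{-1}$, and Cauchy--Schwarzing one $|p|$ against $\cK^{1/2}$." This does not work. First, in $\langle\xi, p^2\eta_p\, b_p\, d_{-p}\,\xi\rangle = \langle b_p^*\xi, p^2\eta_p\, d_{-p}\,\xi\rangle$ the factor on the left is $b_p^*\xi$ (a creation operator on $\xi$), which only gives $\|b_p^*\xi\|\leq\|(\cN_++1)^{1/2}\xi\|$ uniformly in $p$, not a $\cK^{1/2}$-type weight. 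Second, and more importantly, the bound \eqref{eq:d-bds} for $\|d_p\xi\|$ contains the piece $\frac{C}{N}|\eta_p|\|(\cN_++1)^{3/2}\xi\|$, so the sum produces $\sum_p p^2\eta_p^2$ as a prefactor; but in two dimensions $\sum_p p^2\eta_p^2=\|\check\eta\|_{H^1}^2\sim N$ (this is stated just above \eqref{eq:defeta}), so this term is $O(1)$ on states with $\cN_+\sim 1$ — far from the required $N^{1/2-\alpha}$. The "$\sqrt N$-type arithmetic" you invoke to close the gap does not exist; the weight $p^2\eta_p$ is simply not square-summable enough for a naive momentum-space argument.

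The missing ingredient is the scattering equation \eqref{eq:eta-scat0}. The paper first peels off the genuinely leading piece of $d_{-p}^{(s)}$, namely $-s\eta_p(\cN_+/N)b_p^*$, which (a) combines with the $G_1$-contribution to produce the exact $(N-\cN_+)(N-\cN_+-1)/N^2$ coefficient in \eqref{eq:K-dec}, and (b) is manageable directly because $p^2\eta_p^2\leq C|\eta_p|\leq CN^{-\alpha}$ uniformly. For the remaining $\lis d$-piece (the paper's $\cE_{23}^K$ and $\cE_{32}^K$), the paper substitutes
\[
p^2\eta_p \;=\; -\tfrac{N}{2}\big(\widehat V(\cdot/e^N)\ast\widehat f_{N,\ell}\big)(p)+Ne^{2N}\lambda_\ell\big(\widehat\chi_\ell\ast\widehat f_{N,\ell}\big)(p)
\]
and then passes to position space. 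The first summand produces the kernel $e^{2N}V(e^N(x-y))$, which under Cauchy--Schwarz generates $\|\cV_N^{1/2}\xi\|$; the second is the short-range indicator $\chi_\ell(x-y)$, which is handled via the Hölder--Sobolev bound (choosing $q\sim\log N$) and generates $(\log N)^{1/2}\|\cK^{1/2}\xi\|$ at the cost of an $N^{1-2\alpha}$ prefactor, ultimately absorbed into $N^{1/2-\alpha}$ when $\alpha>1$. Without this conversion, you cannot produce the $\cV_N^{1/2}$ weight appearing in $\|\cH_N^{1/2}\xi\|$ in \eqref{eq:errorKc}, and the estimate you propose does not yield the stated decay rate. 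Your plan needs to be augmented with this scattering-equation substitution and the accompanying position-space analysis.
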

\begin{proof} 
We proceed as in the proof of \cite[Prop. 7.2]{BBCS3}. We write 
\begin{equation} \label{eq:cKterms}
	\begin{split} 
	&e^{-B} \cK e^{B}-\cK \\ &= \int_0^1ds \sum_{p \in \L^*_+} p^2 \eta_p \Big [\big(\gamma_p^{(s)} b_p+ \s^{(s)}_p b^*_{-p}\big) \big(\gamma_p^{(s)} b_{-p} + \s^{(s)}_p b^*_{p}\big)\, +\hc\Big ]\\
	&\hspace{.3cm} + \int_0^1ds \sum_{p \in \L^*_+} p^2 \eta_p \big[\big(\gamma_p^{(s)} b_p+ \s^{(s)}_p b^*_{-p}\big) 
	d_{-p}^{(s)}+ d_p^{(s)} \big( \gamma_p^{(s)} b_{-p}+ \s^{(s)}_p b^*_{p}\big)+\hc\big]\\
	&\hspace{.3cm}+ \int_0^1ds \sum_{p \in \L^*_+} p^2 \eta_p\big[ d_p^{(s)} d_{-p}^{(s)} + \hc \big]\\
	&=: \text{G}_1+\text{G}_2+\text{G}_3
	\end{split}  
	\end{equation}
	with $\gamma_p^{(s)} = \cosh(s \eta_p)$, $\s^{(s)}_p =\sinh(s \eta_p)$ and where $d^{(s)}_p$ is defined as in (\ref{eq:defD}), with $\eta_p$ replaced by $s \eta_p$. We find 	
\begin{equation*} \label{eq:first2cK}
	\begin{split} 
	\text{G}_1	&= \sum_{p \in \L^*_+} p^2 \eta_p \big(b_p b_{-p} + b^*_{-p} b^*_{p} \big)+ \sum_{p \in \L^*_+} p^2 \eta_p^2 \left(1-\frac{\cN_+}{N}\right)+\cE^K_{1}
	\end{split}  
	\end{equation*}
	with
	\begin{equation} \nonumber
	\begin{split} 
	\cE^K_{1}=&\, 2 \int_0^1ds \sum_{p \in \L^*_+} p^2 \eta_p  (\s^{(s)}_p)^2 \big(b_p b_{-p} + b^*_{-p} b^*_{p} \big)\\
	&+\int_0^1ds \sum_{p \in \L^*_+} p^2 \eta_p \gamma_p^{(s)} \s^{(s)}_p (4b_p^*b_{p}-2N^{-1}a^*_pa_p)\\
	&+2\int_0^1ds \sum_{p \in \L^*_+} p^2 \eta_p  \left[ (\gamma_p^{(s)}-1) \s^{(s)}_p + (\s^{(s)}_p-s \eta_p) \right] \Big(1-\frac{\cN_+}{N}\Big) \,.
	\end{split}  
	\end{equation}
Since  $|\big((\gamma_p^{(s)})^2-1\big)|\leq C \eta_p^2$, $(\s^{(s)}_p)^2\leq C \eta_p^2$, $p^2 |\eta_p| \leq C $, $\| \eta \|_\io \leq N^{-\a }$, we can estimate 
\begin{equation} \label{eq:cE0K}
\begin{split} 
	|\langle\xi, &\cE^K_{1} \xi\rangle| \\ \leq \; & C \sum_{p \in \L^*_+} p^2 |\eta_p|^3 \|b_p\xi\|\|(\cN_++1)^{1/2} \xi\|+C  \sum_{p \in \L^*_+} p^2 \eta_p^2 \|a_p\xi\|^2+C\sum_{p \in \L^*_+} p^2 \eta_p^4 \| \xi\|^2 \\
\leq \; & C\| \eta \|  \|(\cN_++1)^{1/2} \xi\|^2 \leq C N^{-\a} \|(\cN_++1)^{1/2} \xi\|^2,
	\end{split}  
	\end{equation}
for any $\xi\in\cF_+^{\leq N}$. To bound the term $\text{G}_3$ in \eqref{eq:cKterms}, we switch to position space: 
	\[ \begin{split}  | \langle \xi , \text{G}_3 \xi \rangle | \leq \; &CN  \int_0^1 ds \int_{\L^2}  dx dy \left[ e^{2N} V(e^N(x-y)) + N^{2\a -1} \chi(|x-y|\leq N^{-\a}) \right] \\ &\hspace{3.5cm} \times \| (\cN_+ + 1)^{-1/2} \check{d}^{(s)}_x \check{d}^{(s)}_y \xi \| \| (\cN_+ + 1)^{1/2} \xi \| \end{split} \]
	With (\ref{eq:ddxy}), we obtain 
	\begin{equation}\label{eq:G3-bd} \begin{split} 
&| \langle \xi , \text{G}_3 \xi \rangle | \\ 
&\leq C N^{1-\a}  \int_{\L^2}  dx dy \left[ e^{2N} V(e^N(x-y)) + N^{2\a-1}\chi (|x-y|\leq N^{-\a}) \right]  \| (\cN_+ + 1)^{1/2} \xi \|^2 \\
& \,+ C  N^{-2\a}\int_{\L^2}  dx dy \left[ e^{2N} V(e^N(x-y)) + N^{2\a-1}\chi (|x-y|\leq N^{-\a}) \right]  \| (\cN_+ + 1)^{1/2} \xi \| \\ 
&\hspace{3.5cm} \times \Big[  \| \check{a}_x  (\cN_++1)\xi \| + \| \check{a}_y  (\cN_++1) \xi \| + \| \check{a}_x \check{a}_y (\cN_++1)^{1/2}  \xi \| \Big] \\ 
&\leq C N^{1-\a} \| (\cN_+ + 1)^{1/2} \xi \|^2 + C N^{1/2-\a} \| (\cN_+ + 1)^{1/2} \xi \| \| \cV_N^{1/2} \xi \| \,.
	\end{split} \end{equation}
Finally, we consider $\text{G}_2$ in \eqref{eq:cKterms}. We split it as $\text{G}_2 = \text{G}_{21} + \text{G}_{22} + \text{G}_{23} + \text{G}_{24}$, with 
	\begin{equation} \label{eq:cKtermsG2}
	\begin{split} 
	\text{G}_{21} &= \int_0^1ds \sum_{p \in \L^*_+} p^2 \eta_p \left( \gamma_p^{(s)} b_p d_{-p}^{(s)} + \hc \right), \\ \text{G}_{22} &=  \int_0^1ds \sum_{p \in \L^*_+} p^2 \eta_p \left( \s^{(s)}_p b^*_{-p} d_{-p}^{(s)} + \hc \right) \\
	\text{G}_{23} &= \int_0^1ds \sum_{p \in \L^*_+} p^2 \eta_p \left( \gamma_p^{(s)} d_p^{(s)} b_{-p}+ \hc \right), \\ \text{G}_{24} &=  \int_0^1ds \sum_{p \in \L^*_+} p^2 \eta_p \left( \s^{(s)}_p d_p^{(s)} b^*_{p}+ \hc \right) \,.
	\end{split} \end{equation}
	We consider $\text{G}_{21}$ first. We write
	\begin{equation*}\label{eq:G21}
	\text{G}_{21} = \; - \sum_{p \in \L^*_+} p^2 \eta_p^2 \, \frac{\cN_+ + 1}{N} \frac{N-\cN_+}{N} + \left[  \cE_{2}^K + \hc \right] \end{equation*}
	where $\cE_{2}^K = \sum_{j=1}^3 \cE_{2j}^K$, with  
	\begin{equation}\label{eq:cE2} \begin{split} 
	\cE_{21}^K = \; &\frac{1}{2N} \sum_{p \in \L^*_+} p^2 \eta_p^2 (\cN_++1) \big( b^*_p b_p - \frac 1 N a_p^* a_p\big) , \\ 
	\cE_{22}^K = \; &\int_0^1 ds \sum_{p \in \L^*_+} p^2 \eta_p (\gamma_p^{(s)} - 1)  b_p d_{-p}^{(s)}
	\,, \\  \cE_{23}^K = \; & \int_0^1 ds \sum_{p \in \Lambda_+^*} p^2 \eta_p b_p \lis{d}_{-p}^{(s)} \,.
	  \end{split} \end{equation}
and where we introduced the notation $\lis{d}^{(s)}_{-p} = d_{-p}^{(s)} + s \eta_p (\cN_+ / N) b_p^*$. 
With \eqref{eq:etapio}, we find 
	\begin{equation}\label{eq:cEK21} 
|\langle \xi , \cE_{21}^K \xi \rangle | \leq C \sum_{p \in \L^*_+} \eta_p \| a_p \xi \|^2 \leq C N^{-\a}\|  \cN_+^{1/2} \xi \|^2 
\end{equation}
Using $|\gamma_p^{(s)} - 1| \leq C \eta_p^2$ and (\ref{eq:d-bds}), we obtain  
\begin{equation}\label{eq:cEK22} \begin{split} |\langle \xi , \cE_{22}^K \xi \rangle | &\leq \sum_{p \in \L^*_+} p^2 |\eta_p|^3 \| \cN_+^{1/2} \xi \| \| d^{(s)}_{-p} \xi \| 
\leq  C N^{-3\a} \| (\cN_+ + 1)^{1/2} \xi \|^2. \end{split} \end{equation}
	To control the third term in (\ref{eq:cE2}), we use (\ref{eq:eta-scat0}) and we switch to position space. We find 
\begin{equation} \label{eq:star1} \begin{split} \cE_{23}^K = \; & -N  \int_0^1 ds \int_{\Lambda^2}  dx dy \, e^{2N} V(e^N(x-y)) f_{N,\ell} (x-y) \check{b}_x \check{\lis{d}}^{\,(s)}_y \\ &+ N  \int_0^1 ds e^{2N} \lambda_\ell \int_{\Lambda^2} dx dy \, \chi_\ell (x-y) f_{N,\ell} (x-y) \check{b}_x \check{\lis{d}}^{\,(s)}_y \\
=\; & \cE_{231}^K+\cE_{232}^K \,.
\end{split} \end{equation} 
With (\ref{eq:splitdbd}) and $|\check\eta(x-y)| \leq C N$, we obtain 
\be \begin{split}  \label{eq:cE231K}
|\langle \xi , \cE_{231}^K \xi \rangle |  \leq \;  & N \int_0^1 ds \int_{\Lambda^2}  dx dy \, e^{2N} V(e^N(x-y)) \\ &\hspace{3cm} \times  \| (\cN_+ + 1)^{1/2} \xi \| \| (\cN_+ + 1)^{-1/2}  \check{a}_x \check{\lis{d}}^{(s)}_y \xi \| \\	
\leq \; &C N^{1-\a} \| (\cN_+ + 1)^{1/2} \xi \|^2 + C N^{1/2-\a} \|(\cN_+ + 1)^{1/2} \xi \| \| \cV_N^{1/2} \xi \|. \end{split} \ee
As for $\cE_{232}^K$, with (\ref{eq:splitdbd}) and Lemma \ref{lm:propomega} (recalling $\ell= N^{-\a})$, we find
\be \begin{split} \label{eq:EK232}
|\langle \xi , \cE_{232}^K \xi \rangle |  \leq \; & C N^{-\a} \| (\cN_+ + 1)^{1/2} \xi \|^2  \\ &+ \int_{\L^2}  dx dy \, \c(|x-y| \leq  N^{-\a})\| (\cN_+ + 1)^{1/2} \xi \| \| \check{a}_x \check{a}_y \cN_+^{1/2} \xi \| 
\end{split} \end{equation} 
To bound the last term on the r.h.s. of \eqref{eq:EK232} we use H\"older's and Sobolev inequality $\| u \|_q \leq C q^{1/2} \|u \|_{H^1}$, valid for any $2 \leq q <\io$. We find 
	\[
		\begin{split}
		 \int_{\L^2} & dx dy \, \c(|x-y| \leq  N^{-\a})\| (\cN_+ + 1)^{1/2} \xi \| \| \check{a}_x \check{a}_y \cN_+^{1/2} \xi \| \\
		&\leq  C \| (\cN_+ + 1)^{1/2} \xi \|   \int_{\L} dx \left(\int_{\L} dy \, \c(|x-y| \leq  N^{-\a})\right)^{1-1/q}\left(\int_{\L} dy \, \| \check{a}_x \check{a}_y \cN_+^{1/2} \xi \|^q\right)^{1/q}\\
		&\leq  CN^{2\a/q -2\a} \| (\cN_+ + 1)^{1/2} \xi \|  \int_{\L} dx \left(\int_{\Lambda}dy \, \| \check{a}_x \check{a}_y \cN_+^{1/2} \xi\|^q\right)^{1/q}\\ 
		& \leq C q^{1/2}N^{2\a/q -2\a} \| (\cN_+ + 1)^{1/2} \xi \|   \\
& \hskip 3cm \times  \left[\int_{\L^2} dxdy \, \|\check{a}_x\nabla_y\check{a}_y\cN_+^{1/2}\xi \|^2+\int_{\L^2} dxdy \, \| \check{a}_x \check{a}_y \cN_+^{1/2} \xi \|^2\right]^{1/2}  \\
		& \leq  Cq^{1/2}N^{2\a/q-2\a} \| (\cN_+ + 1)^{1/2} \xi \|  \left[\ \| \cK^{1/2}\cN_+ \xi\|+\| \cN_+^{3/2} \xi \|\right]\,.
		\end{split}\]
Choosing $q=\log N$, we get    
	\begin{equation}
		\label{eq:estimatechi}
	\begin{split}
 \int_{\L^2} &dx dy \, \c(|x-y| \leq  N^{-\a})\| (\cN_+ + 1)^{1/2} \xi \| \| \check{a}_x \check{a}_y (\cN_++1)^{1/2} \xi \| \\
 \leq &\; CN^{1-2\a}(\log N)^{1/2} \| (\cN_+ + 1)^{1/2} \xi \| \|\cK^{1/2} \xi \| .
	\end{split}
	\end{equation}
Therefore, for any $\xi \in \cF^{\leq N}_+$,   
	\[
	|\langle \xi ,  \cE_{232}^K \xi \rangle |   \leq \;  N^{1-2\a} (\log N)^{1/2}  \|\cK^{1/2}\xi\| \| (\cN_+ + 1)^{1/2} \xi \|  + N^{-\alpha} \| (\cN_+ + 1)^{1/2} \xi \|^2  \,.
\]
Combining the last bound with (\ref{eq:cEK21}), (\ref{eq:cEK22}) and \eqref{eq:cE231K}, 
we conclude that 
\begin{equation}\label{eq:cE2f} \begin{split}
| \langle \xi,   \cE_2^K  \xi \rangle | \leq   C N^{1-\a} \| (\cN_+ + 1)^{1/2} \xi \|^2 + C N^{1/2-\a}\| \cH_N^{1/2} \xi \| \|(\cN_+ + 1)^{1/2} \xi \| \,.
\end{split} \end{equation}
for any $\a>1$, $N \in \bN$ large enough, $\xi \in \cF^{\leq N}_+$. 	
	
The term $\text{G}_{22}$ in \eqref{eq:cKtermsG2} can be bounded using (\ref{eq:d-bds}). We find 
	\begin{equation}\label{eq:G22}
	\begin{split} 
	|\langle\xi,\text{G}_{22}\xi\rangle| &\leq CN^{-2\a}\|(\cN_++1)^{1/2}\xi\|^2\,.
	\end{split}  
	\end{equation}
	We split $\text{G}_{23} =  \cE_{31}^K + \cE_{32}^K + \hc$, with 
	\begin{equation*} 
	\begin{split} 
	\cE_{31}^K = \; &\int_0^1ds \sum_{p \in \L^*_+} p^2 \eta_p \big(\gamma_p^{(s)}-1\big) d_p^{(s)}b_{-p} \, , \hspace{.8cm} \cE_{32}^K =  \int_0^1ds \sum_{p \in \L_+^*} p^2 \eta_p d_p^{(s)} b_{-p} 
	\end{split}  
	\end{equation*}
	 With (\ref{eq:d-bds}), we find 
	\[ \begin{split} 
	|\langle \xi, \cE_{31}^K \xi \rangle | &\leq C \int_0^1ds\; \sum_{p \in \L^*_+} p^2 |\eta_p |^3 \| (d_p^{(s)})^* \xi \| \|  b_{-p} \xi \| ds \leq C N^{-3\a} \| (\cN_+ + 1)^{1/2} \xi \|^2 \end{split} \]
To estimate $\cE_{32}^K$, we use (\ref{eq:eta-scat0}) and we switch to position space. Proceeding as we did in (\ref{eq:star1}), (\ref{eq:cE231K}), (\ref{eq:EK232}), we obtain  
	\[\begin{split} |\langle \xi , \cE_{32}^K \xi \rangle | \leq CN\int_0^1 ds \int_{\Lambda^2} dx dy &\left[ e^{2N} V(e^N(x-y)) + N^{2\a-1} \chi (|x-y| \leq N^{-\a}) \right] \\ &\hspace{.5cm} \times \| (\cN_+ + 1)^{1/2} \xi \| \| (\cN_+ + 1)^{-1/2} \check{d}_x^{(s)} \check{b}_y \xi \|\,. \end{split} \]
	With (\ref{eq:dxy-bds}) and \eqref{eq:estimatechi} we find 
	\[\begin{split} 
|\langle \xi , \cE_{32}^K \xi \rangle | &\leq C N^{-\a}  \int_{\Lambda^2} dx dy \left[ e^{2N} V(e^N(x-y)) + N^{2\a -1} \chi(|x-y|\leq N^{-\a}) \right] \\ &\hspace{2cm} \times  \| (\cN_++1)^{1/2} \xi \|  \left[  \| \check{a}_y  (\cN_++1) \xi \|  + \| \check{a}_x \check{a}_y  (\cN_++1)^{1/2} \xi \| \right]  \\ 
	&\leq C N^{1-\a} \| (\cN_+ + 1)^{1/2} \xi \|^2 + C N^{1/2-\a} \| (\cN_+ + 1)^{1/2} \xi \| \| \cV_N^{1/2} \xi \| \\
	& \qquad \hskip 3.5cm + C N^{1-2\a} (\log N)^{1/2} \|(\cN_++1)^{1/2}\xi\|   \|\cK^{1/2}\xi\| \,.
	\end{split} \]
	Combining the bounds for $\cE_{31}^K $ and $\cE_{32}^K $ , we conclude that, if $\alpha > 1$, 
	\begin{equation}\label{eq:G23f} \begin{split}
| \langle \xi,  \text{G}_{23}  \xi \rangle | \leq  C N^{1/2-\a}   \| (\cN_+ + 1)^{1/2} \xi \| \| \cH_N^{1/2} \xi \| +CN^{1-\a}\| (\cN_+ + 1)^{1/2} \xi \|^2 
\end{split} \end{equation} 
To bound $\text{G}_{24}$ in \eqref{eq:cKtermsG2}, we use (\ref{eq:d-bds}), the bounds \eqref{eq:modetap} and $\| \eta \|^2_{H_1} \leq C N$, and the commutator \eqref{eq:comm-bp}:
\begin{equation*} 
	\begin{split} 
	|\langle &\xi,\text{G}_{24}\xi\rangle| \\ &\leq C \int_0^1 ds  \sum_{p \in \L^*_+} p^2 \eta_p^2 \|(\cN_++1)^{1/2}\xi\|\|(\cN_++1)^{-1/2} d_p^{(s)}b^*_{p}\xi\| \\
	&\leq C \|(\cN_++1)^{1/2}\xi\|  \sum_{p \in \L^*_+} p^2 \eta_p^2 \left[ |\eta_p| \| (\cN_+ + 1)^{1/2} \xi \| +N^{-1} \| \eta\| \| b_p b_p^* (\cN_+ +1)^{1/2} \xi \| \right] \\
	&\leq C N^{-\a} \| (\cN_+ + 1)^{1/2} \xi \|^2 \,.
\end{split} \end{equation*}
Together with \eqref{eq:cKtermsG2}, \eqref{eq:cE2f}, \eqref{eq:G22} and \eqref{eq:G23f}, this implies that
\[
 \text{G}_2 = - \sum_{p \in \L^*_+} p^2 \eta_p^2 \, \frac{\cN_+ + 1}{N} \frac{N-\cN_+}{N}+  \cE_4^K 
\]
with 
\be \label{eq:err4K}
| \langle \xi,   \cE_4^K \xi \rangle | \leq  C N^{1/2 -\a} \| \cH_N^{1/2}\xi\| \| (\cN_++1)^{1/2}\xi \| + C N^{1-\a}  \| (\cN_++1)^{1/2}\xi \|^2\,.
\ee
Combining (\ref{eq:cE0K}), \eqref{eq:G3-bd} and \eqref{eq:err4K}, we obtain (\ref{eq:K-dec}) and (\ref{eq:errorKc}).
\end{proof}

In the next proposition, we consider the conjugation of the operator
\[ \begin{split} \cL^{(2,V)}_{N} =\; &N\sum_{p \in \Lambda_+^*} \widehat{V} (p/e^N) \left[ b_p^* b_p - \frac{1}{N} a_p^* a_p \right] + \frac N2 \sum_{p \in \Lambda^*_+} \widehat{V} (p/e^N) \left[ b_p^* b_{-p}^* + b_p b_{-p} \right] \end{split} \]
\begin{prop}\label{prop:G2V} Under the assumptions of Prop. \ref{prop:GN}, there is a constant $C > 0$ such that 
	\begin{equation}\label{eq:cEV-defc} 
	\begin{split}
	e^{-B} \cL^{(2,V)}_N  e^{B} =\; &  N\sum_{p \in \L^*_+} \widehat{V} (p/e^N)\eta_p \Big(\frac{N-\cN_+}{N}\Big)\Big(\frac{N-\cN_+-1}{N}\Big)\\
& + N\sum_{p \in \Lambda_+^*} \widehat{V} (p/e^N) a^*_pa_p\left(1-\fra{\cN_+}{N}\right) \\ 
&+ \frac{N}{2}\sum_{p \in \Lambda^*_+} \widehat{V} (p/e^N) \big( b_p b_{-p}+ b_{-p}^* b_p^*\big) +\cE_{N}^{(V)}
	\end{split}
	\end{equation}
	where 
	\begin{equation}\label{eq:errorVc}
	\begin{split}
| \langle \xi, \cE_{N}^{(V)} \xi \rangle | \leq  C N^{1/2 -\a} \| \cH_N^{1/2}\xi\| \| (\cN_++1)^{1/2}\xi \| + C N^{1-\a}  \| (\cN_++1)^{1/2}\xi \|^2\,.
	\end{split}
	\end{equation}
for any $\a > 1$, $\xi \in \cF^{\leq N}_+$ and $N \in \bN$ large enough.
\end{prop}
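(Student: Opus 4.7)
I would follow the same strategy as in the proof of Proposition \ref{prop:K}, with the coefficient $p^2$ replaced by $N\widehat V(p/e^N)$. Writing
\[
e^{-B}\cL^{(2,V)}_N e^{B} - \cL^{(2,V)}_N = \int_0^1 ds\, e^{-sB}\big[\cL^{(2,V)}_N, B\big] e^{sB},
\]
I would compute the commutator using the relations \eqref{eq:comm-bp} and then, inside each integrand, substitute the expansion \eqref{eq:ebe} with $s\eta$ in place of $\eta$, introducing $\gamma^{(s)}_p = \cosh(s\eta_p)$, $\sigma^{(s)}_p = \sinh(s\eta_p)$ and remainders $d^{(s)\#}_p$ defined analogously to \eqref{eq:defD}. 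This produces three groups of summands, in direct correspondence with the terms $\mathrm{G}_1, \mathrm{G}_2, \mathrm{G}_3$ of \eqref{eq:cKterms}: a purely Bogoliubov (bounded-operator) part, a mixed single-$d$ part, and a double-$d$ part.

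The leading contributions are extracted as follows. From the $\gamma_p^2 \simeq 1$ piece of the off-diagonal term $(N/2)\widehat V(p/e^N)(b^*_p b^*_{-p} + \mathrm{h.c.})$ I recover the third line of \eqref{eq:cEV-defc}; the cross terms $2\gamma_p\sigma_p \simeq 2\eta_p$ therein, when commuted against $b^*_{-p}b_{-p}$ and combined with $[b_p,b^*_p] = 1 - \cN_+/N - a^*_p a_p/N$, generate precisely the first line $N\sum_p \widehat V(p/e^N)\eta_p (N-\cN_+)(N-\cN_+-1)/N^2$. The $\gamma_p^2$ piece of $N\widehat V(p/e^N)b^*_p b_p$, combined with the offset $-\widehat V(p/e^N)a^*_p a_p$ and the identity $b^*_p b_p = a^*_p a_p (N-\cN_+)/N$, yields the second line $N\widehat V(p/e^N) a^*_p a_p(1-\cN_+/N)$. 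All remaining contributions (of type $\sigma_p^2$, single $d^{(s)\#}_p$, or double $d^{(s)\#}_p d^{(s)\#}_q$) are collected into $\cE_N^{(V)}$.

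Error estimates: terms quadratic in $\sigma$ are handled by $|\sigma_p| \leq C|\eta_p|$, $|\widehat V(p/e^N)| \leq \|V\|_1$, and $\|\eta\|^2 \leq CN^{-2\alpha}$, giving $O(N^{1-2\alpha})\|(\cN_+ +1)^{1/2}\xi\|^2$. Mixed single-$d$ terms such as $\sum_p N\widehat V(p/e^N)\eta_p\gamma^{(s)}_p b_p d^{(s)}_{-p}$ are transcribed to position space as integrals of $e^{2N}V(e^N(x-y))\check\eta(x-y)\check b_x \check d^{(s)}_y$ and bounded via Cauchy--Schwarz with \eqref{eq:dxy-bds}--\eqref{eq:splitdbd}, the factor $\|\eta\| \leq CN^{-\alpha}$ appearing in the $d$-estimates providing the $N^{1/2-\alpha}$ prefactor in front of $\|\cH_N^{1/2}\xi\|\|(\cN_++1)^{1/2}\xi\|$. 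Terms involving two $d^{(s)\#}$'s use \eqref{eq:ddxy} in the same way. Contributions with a plain $\gamma_p \sigma_p b^*_p b_p$ after summation produce $N\sum_p \widehat V(p/e^N)\eta_p b^*_p b_p$, which is an error of order $N^{1-\alpha}\|(\cN_+ +1)^{1/2}\xi\|^2$ after a further use of the bound $\sum_p |\widehat V(p/e^N)||\eta_p| \|a_p \xi\|^2 \leq C\|V\|_1 \|\eta\|_\infty \|\cN_+^{1/2}\xi\|^2$.

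The main obstacle is the analog of the term $\cE^K_{23}$ from the proof of Proposition \ref{prop:K}, namely the position-space integral
\[
N\int_{\Lambda^2} dx\,dy\; e^{2N}V(e^N(x-y))\, \check\eta(x-y)\, \check b_x\, \check{\overline{d}}^{(s)}_y.
\]
In the kinetic case one invokes the scattering equation \eqref{eq:eta-scat0} to rewrite $p^2\eta_p$ as $-\tfrac{N}{2}(\widehat V(\cdot/e^N)\ast\widehat f_{N,\ell})(p) + \text{small}$, trading one $\cK^{1/2}$ factor for a $\cV_N^{1/2}$ factor. Here $N\widehat V(p/e^N)$ already appears as the coefficient, so the scattering equation is not available; instead one must bound the integral directly via Cauchy--Schwarz, one factor absorbing $e^{2N}V(e^N(x-y))$ into $\|\cV_N^{1/2}\xi\|^2$ and the other producing $\int e^{2N}V(e^N(x-y))|\check\eta(x-y)|^2\|\check a_x \xi\|^2 dxdy$, which is controlled by the pointwise bound \eqref{eq:etax0}: since $\check\eta$ is only logarithmically large outside the ball $\{|x-y| \leq e^{-N}R_0\}$, and of size $O(N)$ inside it, the tiny Lebesgue measure $O(e^{-2N})$ of this ball absorbs the $N^2$ factor, and the desired $N^{1/2-\alpha}$ prefactor follows.
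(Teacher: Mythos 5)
Your overall strategy — using the Duhamel representation $e^{-B}\cL^{(2,V)}_N e^{B} - \cL^{(2,V)}_N = \int_0^1 ds\, e^{-sB}[\cL^{(2,V)}_N,B]e^{sB}$ — is a genuinely different route from the paper's, which for this proposition instead conjugates each piece of $\cL^{(2,V)}_N$ directly with the expansion $e^{-B}b_p e^{B}=\gamma_p b_p+\sigma_p b^*_{-p}+d_p$ of \eqref{eq:ebe}, without ever forming the commutator $[\cL^{(2,V)}_N,B]$. Both are in principle viable, and your observation that the scattering equation \eqref{eq:eta-scat0} plays no role here (unlike in Prop.\ \ref{prop:K}, where it was needed to tame the $p^2\eta_p$ coefficient) is correct and useful: the coefficient $N\widehat V(p/e^N)$ is uniformly bounded and the extra factor $\eta_p$ produced by $[\,\cdot\,,B]$ gives smallness for free. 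That said, the analogy to Prop.\ \ref{prop:K} is not as tight as you suggest, because $\cL^{(2,V)}_N$ has an off-diagonal pairing part; $[b^*_pb^*_{-p}+b_pb_{-p},B]$ produces number-operator and quartic pieces rather than a clean $bb+b^*b^*$ as $[\cK,B]$ does, so the three-way $G_1/G_2/G_3$ split does not carry over verbatim and the leading constant $N\sum_p\widehat V(p/e^N)\eta_p\,(N-\cN_+)(N-\cN_+-1)/N^2$ must be extracted from a larger collection of commutator terms than your sketch accounts for.

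There is also a concrete gap in the error estimate you describe for the ``main obstacle.'' First, the position-space transcription is wrong: if the coefficient is $N\widehat V(p/e^N)\eta_p$ (a product in momentum space), the corresponding kernel is the \emph{convolution} $N\,(e^{2N}V(e^N\cdot)\ast\check\eta)(x-y)$, not the pointwise product $N\,e^{2N}V(e^N(x-y))\,\check\eta(x-y)$ you write down; the product form arises only when the coefficient is itself a convolution, as in $\cE^K_{23}$ after the scattering equation is applied. Second, and more importantly, the Cauchy--Schwarz you propose cannot produce $N^{1/2-\alpha}$: the factor $\int e^{2N}V(e^N(x-y))|\check\eta(x-y)|^2\|\check a_x\xi\|^2\,dxdy$ equals $\big(\int V(z)|\check\eta(e^{-N}z)|^2 dz\big)\|\cN_+^{1/2}\xi\|^2\le CN^2\|V\|_1\|\cN_+^{1/2}\xi\|^2$, so your bound is of order $N^2\|\cV_N^{1/2}\xi\|\|\cN_+^{1/2}\xi\|$. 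Your claim that the ``tiny Lebesgue measure $O(e^{-2N})$ of the ball absorbs the $N^2$'' is incorrect: that measure is already exactly cancelled by the $e^{2N}$ normalization of the rescaled potential, so no extra gain remains. The required smallness can only come from the $d$-operator estimates of Lemma \ref{lm:dp} (e.g.\ \eqref{eq:splitdbd}), which carry an explicit $C\|\eta\|/N$ prefactor; this is exactly what the paper invokes to bound the analogous terms $\cE^V_{42},\cE^V_{52}$, and your sketch omits it. In fact, in the commutator approach a direct momentum-space Cauchy--Schwarz using \eqref{eq:d-bds} and $\sum_p|\widehat V(p/e^N)||\eta_p|^2\le\|V\|_1\|\eta\|^2$ bounds this mixed term by $CN^{1-2\alpha}\|(\cN_++1)^{1/2}\xi\|^2$ with no $\cV_N$ at all, so the term is not an obstacle once the $d$-estimates are used — but as written, your argument for it does not close.
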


\begin{proof}
We write 
\begin{equation}\label{eq:G2-deco} \begin{split} 
e^{-B} \cL^{(2,V)}_{N}  e^{B}  =\; & N\sum_{p \in \Lambda^*_+}  \widehat{V} (p/e^N) e^{-B} b_p^* b_p e^{B} - \sum_{p \in \Lambda^*_+} \widehat{V} (p/e^N) e^{-B} a_p^* a_p e^{B} \\ 
	&+ \frac{N}{2} \sum_{p \in \Lambda^*_+} \widehat{V} (p/e^N) e^{-B} \big[ b_p b_{-p} + b_p^* b_{-p}^* \big] e^{B} \\ =: \; &\text{F}_1 + \text{F}_2 +\text{F}_3\,. 
	\end{split} \end{equation} 
	With (\ref{eq:ebe}), we find 
	\begin{equation*} 
	\begin{split} 
	\text{F}_1 =\; & N\sum_{p \in  \L_+^*}  \widehat{V} (p/e^N) \big[ \gamma_p b_p^* + \sigma_p b_{-p} \big] \big[ \gamma_p b_p + \sigma_p b_{-p}^*] \\
	&+N\sum_{ p\in \L_+^*} \widehat{V} (p/e^N) \big[ (\gamma_p b_p^* + \sigma_p b_{-p})  d_p+  d_p^*(\gamma_p b_p + \sigma_p b_{-p}^*)+ d_p^* d_p\big]	
\end{split} \end{equation*}
where $\gamma_p = \cosh \eta_p$, $\s_p = \sinh \eta_p$ and the operators $d_p$ are defined in (\ref{eq:defD}). Using $|1-\gamma_p | \leq \eta_p^2$, $|\sigma_p| \leq C |\eta_p|$ and using Lemma  \ref{lm:dp} for the terms on the second line, we find 
\begin{equation}\label{eq:F1} \text{F}_1 = N\sum_{p \in \L_+^*} \widehat{V} (p/e^N) b_p^* b_p + \cE_1^V \end{equation}
with $\pm \cE_1^V \leq C N^{1-\a} (\cN_+ + 1)$. 
	
Let us now consider the second contribution on the r.h.s. of (\ref{eq:G2-deco}). We find 	
\begin{equation} \label{eq:F2}
		-\text{F}_2 = \sum_{p \in \Lambda^*_+} \widehat{V} (p/e^N)  a_p^* a_p + \cE_2^V
	\end{equation}
	with 
	\begin{equation*}
	 \cE_2^V =  \sum_{p \in \Lambda^*_+} \widehat{V} (p/e^N)\int_0^1 e^{-sB}(\eta_pb_{-p}b_p + \hc) e^{s B}ds.
	\end{equation*}
With Lemma \ref{lm:Ngrow}, we easily find $\pm \cE_2^V \leq C N^{-\a} (\cN_+ + 1)$. 

Finally, we consider the last term on the r.h.s. of \eqref{eq:G2-deco}.  With (\ref{eq:ebe}), we obtain 
	\begin{equation}\label{eq:G23split}
	\begin{split}
	\text{F}_3 =\; &\frac{N}{2} \sum_{p \in \Lambda^*_+} \widehat{V} (p/e^N) \left[ \gamma_p b_p + \sigma_p b_{-p}^* \right] \left[ \gamma_p b_{-p} + \sigma_p b_p^* \right]  +\hc \\ 
	&+ \frac{N}{2}  \sum_{p \in \Lambda^*_+} \widehat{V} (p/e^N) \, \left[ (\g_p b_p+ \s_p b^*_{-p}) \, d_{-p} + 
	d_p\, (\g_p b_{-p} + \s_p b^*_{p}) \right]  + \hc \\ 
	&+\frac{N}{2}  \sum_{p \in \Lambda^*_+} \widehat{V} (p/e^N) d_p d_{-p} + \hc \\
	=:  &\,\text{F}_{31} + \text{F}_{32}+\text{F}_{33}\,.
	\end{split} \end{equation}
Using $|1-\gamma_p| \leq C \eta_p^2$, $|\sigma_p| \leq C |\eta_p|$, we obtain 
	\begin{equation}\label{eq:fin-G23}
	\begin{split}
	\text{F}_{31}	= \; & \frac{N}{2}\sum_{p \in \Lambda^*_+} \widehat{V} (p/e^N) \big( b_p b_{-p}+ b_{-p}^* b_p^*\big) +N\sum_{p \in \L^*_+} \widehat{V} (p/e^N) \eta_p \frac{N-\cN_+}{N} + \cE^V_3 
\end{split} \end{equation}
with $\pm\cE_3^V \leq C N^{1-\a}(\cN_++1)$. 
As for $\text{F}_{32}$ in \eqref{eq:G23split}, we divide it into four parts
	\begin{equation}\label{eq:splitF32}
	\begin{split}
	\text{F}_{32} = \; & \frac{N}{2}  \sum_{p \in \Lambda^*_+} \widehat{V} (p/e^N) \, \left[ ( \g_p b_p+ \s_p b^*_{-p}) \, d_{-p} + d_p\, (\g_p b_{-p} + \s_p b^*_{p}) \right] +\hc \\ 
	=: \; & \text{F}_{321}+\text{F}_{322}+\text{F}_{323}+\text{F}_{324} \,.
	\end{split} \end{equation}
	We start with $\text{F}_{321}$, which we write as
	\begin{equation*} 
	\text{F}_{321} = - N\sum_{p \in \L^*_+} \widehat{V} (p/e^N) \eta_p \, \left(\frac{N-\cN_+}{N}\right)\left(\frac{\cN_+ +1}{N}\right) +\cE_4^V \end{equation*} 
	where $\cE_4^V = \cE_{41}^V + \cE_{42}^V + \cE_{43}^V + \hc$, with 
	\begin{equation*} 
	\begin{split}
	\cE^V_{41} = \; &\frac{N}{2}  \sum_{p \in \Lambda^*_+} \widehat{V} (p/e^N) \, (\g_p - 1) b_p d_{-p} \, , \qquad 
	\cE_{42}^V = \frac{N}{2} \sum_{p \in \L_+^*} \widehat{V} (p/e^N) b_p \lis{d}_{-p} \\
	\cE_{43}^V = \; &- \frac{N}{2} \sum_{p \in \L^*_+} \widehat{V} (p/e^N) \eta_p \frac{\cN_+ + 1}{N} (b_p^* b_p - N^{-1} a_p^* a_p ) 
	\end{split} \end{equation*}
	and with the notation $\lis{d}_{-p} = d_{-p} + N^{-1}  \eta_p \, \cN_+ b_p^*$. Since $|\g_p - 1| \leq C \eta_p^2$, $\| \eta\|_\io \leq C N^{-\a}$, we find easily with (\ref{eq:d-bds})  that 
	\[ \begin{split} |\langle \xi, \cE_{41}^V \xi \rangle |  &\leq C  N^{1-3\a}\| (\cN_+ + 1)^{1/2} \xi \|^2 \,.\end{split} \]
Moreover 
	\[ |\langle \xi , \cE_{43}^V \xi \rangle | \leq C N \sum_{p \in \L^*_+} \eta_p \| a_p \xi \|^2 \leq C N^{1-\a} \| \cN_+^{1/2} \xi \|^2\,. \]
As for $\cE_{42}^V$, we switch to position space and we use (\ref{eq:splitdbd}). We obtain 
	\[ \begin{split} 
	|\langle \xi , \cE_{42}^V \xi \rangle | &\leq CN \int_{\L^2}  dx dy \, e^{2N} V(e^N(x-y)) \| (\cN_+ + 1)^{1/2} \xi \| \| (\cN_+ + 1)^{-1/2} \check{a}_x \check{\lis{d}}_y \xi \| \\ &
\leq C N^{1-\a}   \int_{\L^2}  dx dy \, e^{2N} V(e^N(x-y)) \| (\cN_+ + 1)^{1/2} \xi \| \\ &\hspace{2cm} \times \Big[  \| (\cN_+ + 1)^{1/2} \xi \| +  \| \check{a}_x  \xi \| +\| \check{a}_y  \xi \| + N^{-1/2}\| \check{a}_x \check{a}_y \xi \| \Big] \\
	&\leq C N^{1-\a} \|(\cN_+ + 1)^{1/2} \xi \|^2 + C N^{1/2-\a} \| (\cN_+ + 1)^{1/2} \xi \| \| \cV_N^{1/2} \xi \|\,.
	\end{split} \]
	We conclude that \begin{equation*}\begin{split}
	|\langle \xi , \cE^V_4 \xi \rangle |  \leq C N^{1/2-\a} \| (\cN_+ + 1)^{1/2} \xi \| \| \cV_N^{1/2} \xi \| +   C N^{1-\a} \|(\cN_+ + 1)^{1/2} \xi \|^2.
	\end{split}\end{equation*}	
To bound the term $\text{F}_{322}$ in (\ref{eq:splitF32}), we use (\ref{eq:d-bds})  and $|\s_p|\leq C|\eta_p|$; we obtain 
	\begin{equation*} 
	\begin{split}
	|\langle\xi,\text{F}_{322} \xi\rangle| \leq \; &C N  \sum_{p \in \L^*_+} |\eta_p| \| b_{-p} \xi\| \left[ |\eta_p| \| (\cN_+ + 1)^{1/2} \xi \| + \| \eta \| \| b_{-p} \xi \| \right] \\ \leq \; &C N^{1-2\a} \| (\cN_+ + 1)^{1/2} \xi \|^2 \,. \end{split} \end{equation*}
	Let us now consider the term $\text{F}_{323}$ on the r.h.s. of (\ref{eq:splitF32}). We write  $\text{F}_{323} = \cE_{51}^V + \cE_{52}^V + \hc$, with
	\begin{equation*} 
	\begin{split}
	\cE_{51}^V = \frac{N}{2}  \sum_{p \in \Lambda^*_+} \widehat{V} (p/e^N) \, (\g_p-1) \, d_p  b_{-p} \, , \qquad \cE_{52}^V = \frac{N}{2}  \sum_{p \in \Lambda^*_+} \widehat{V} (p/e^N) \,  d_p  b_{-p} \,.
	\end{split} \end{equation*}
	With $|\g_p - 1| \leq C \eta_p^2 $ and  (\ref{eq:d-bds})  we obtain 
	\[ |\langle \xi , \cE_{51}^V \xi \rangle | \leq C N\sum_{p \in \L^*_+} \eta_p^2 \,\| d^*_p \xi \| \| a_p \xi \| \leq C N^{1-3\a} \| (\cN_+ + 1)^{1/2} \xi \|^2 \,. \]
	We find, switching to position space and using (\ref{eq:dxy-bds}),
	\[ \begin{split} &|\langle \xi, \cE_{52}^V \xi \rangle |\\
	&\leq CN \int_{\L^2}  dx dy \, e^{2N} V(e^N(x-y)) \| (\cN_+ + 1)^{1/2} \xi \| \| (\cN_+ +1)^{-1/2} \check{d}_x \check{a}_y \xi \| \\ 
	&\leq CN^{1-\a}   \| (\cN_+ + 1)^{1/2} \xi \| \int_{\L^2}  dx dy \, e^{2N} V(e^N(x-y)) \left[ \| \check{a}_y   \xi \| + N^{-1/2}\| \check{a}_x \check{a}_y  \xi \| \right] \\
&\leq C N^{1-\a} \| (\cN_+ + 1)^{1/2} \xi \|^2 + C N^{1/2-\a}\| (\cN_+ + 1)^{1/2} \xi \| \| \cV_N^{1/2} \xi \|\,. 
	\end{split} \]
	Hence, \[ | \langle \xi, \text{F}_{323} \xi \rangle| \leq C N^{1-\a} \| (\cN_+ + 1)^{1/2} \xi \|^2 + C N^{1/2-\a}\| (\cN_+ + 1)^{1/2} \xi \| \| \cV_N^{1/2} \xi \| \] 	
	To estimate the term $\text{F}_{324}$ in \eqref{eq:splitF32} we use (\ref{eq:d-bds}) and  the bound 
	\[ \begin{split} \label{eq:sumVetap}
\sum_{ p\in  \L_+^*}|\widehat{V}(p/e^N)||\eta_p| & \leq C \sum_{ \substack{p\in  \L_+^*,\, |p|\leq e^N}} \frac {1}{p^2} + C\sum_{ \substack{p\in  \L_+^*,\, |p|> e^N}} \frac{|\widehat{V}(p/e^N)|}{p^2} \\
& \leq C N  + C \Bigg( \sum_{ \substack{p\in  \L_+^*}} |\widehat{V}(p/e^N)|^2 \Bigg)^{1/2} \Bigg( \sum_{ \substack{p\in  \L_+^*,\, |p|> e^N}} \frac{1}{p^4} \Bigg)^{1/2}  \\
&  \leq C N 
\end{split}\]
We find 
	\begin{equation}\nonumber
	\begin{split}
	|\langle\xi, \text{F}_{324} \xi \rangle|  &\leq \; CN\sum_{p \in \L^*_+} \big|\widehat{V} (p/e^N) \big||\eta_p|\|(\cN_++1)^{1/2}\xi\|\|(\cN_++1)^{-1/2} d_p\, b^*_{p}\xi\|
	\\ 
	&\leq CN\sum_{p \in \L^*_+} \big|\widehat{V} (p/e^N) \big||\eta_p|\|(\cN_++1)^{1/2}\xi \|  \\ &\hspace{1cm} \times 
	\left[ |\eta_p| \| (\cN_+ + 1)^{1/2} \xi \| + N^{-1}\| \eta \| \| b_p b^*_p (\cN_+ + 1)^{1/2} \xi \| \right] \\ 
	&\leq CN\sum_{p \in \L^*_+} \big|\widehat{V} (p/e^N) \big||\eta_p|\|(\cN_++1)^{1/2}\xi \| \\
	&\hspace{1cm} \times  
	\left[ |\eta_p| \| (\cN_+ + 1)^{1/2} \xi \| + N^{-1}\| \eta\| \| (\cN_+ + 1)^{1/2} \xi \| + \| \eta \| \| a_p \xi \| \right] \\ 
&\leq \; C N^{1-\a}\|(\cN_++1)^{1/2}\xi\|^{2}\,. 
	\end{split} \end{equation}
Combining the last bounds, we arrive at   
	\begin{equation*}\label{eq:F32-fin} \text{F}_{32} = N\sum_{p \in \L^*_+} \widehat{V} (p/e^N) \eta_p \left( \frac{N-\cN_+}{N} \right) \left( \frac{-\cN_+ -1}{N} \right) + \cE_6^V \end{equation*} 
	with 
	\begin{equation}\label{eq:F32} \begin{split}
| \langle \xi, \cE^V_6  \xi \rangle| \leq C N^{1-\a} \| (\cN_+ + 1)^{1/2} \xi \|^2 + C N^{1/2-\a}\| (\cN_+ + 1)^{1/2} \xi \| \| \cV_N^{1/2} \xi \|\,.
	\end{split}
	 \end{equation}
	
	To control the last contribution $\text{F}_{33}$ in \eqref{eq:G23split}, we switch to position space. With (\ref{eq:ddxy}) and (\ref{eq:etax}) we obtain 
	\[ \begin{split}  |\langle \xi , \text{F}_{33} \xi \rangle | &\leq CN \| (\cN_+ + 1)^{1/2} \xi \| \int_{\L^2}  dx dy \, e^{2N} V(e^N(x-y)) \| (\cN_+ + 1)^{-1/2} \check{d}_x \check{d}_y \xi \| \\
	&\leq C N^{1-\a} \| (\cN_+ + 1)^{1/2} \xi \|^2 + C N^{1/2-2\a} \| (\cN_++1)^{1/2} \xi \| \| \cV_N^{1/2} \xi \|\,. \end{split} \]    
	The last equation, combined with (\ref{eq:G23split}), (\ref{eq:fin-G23}) and 
	(\ref{eq:F32}), implies that 
	\[ \begin{split} 
	\text{F}_3 = \; &\frac{N}{2} \sum_{p \in \L_+^*} \widehat{V} (p/e^N) (b_p b_{-p} + b^*_{-p} b^*_p) \\ &+ N\sum_{p \in \L^*_+} \widehat{V} (p/e^N) \eta_p \left( \frac{N-\cN_+}{N} \right) \left( \frac{N-\cN_+ - 1}{N} \right) + \cE_7^V \end{split} \]
	with 
	\begin{equation*}\begin{split}
| \langle \xi, \cE^V_7  \xi \rangle| \leq C N^{1-\a} \| (\cN_+ + 1)^{1/2} \xi \|^2 + C N^{1/2-\a}\| (\cN_+ + 1)^{1/2} \xi \| \| \cV_N^{1/2} \xi \|\,.
	\end{split}\end{equation*}
	
	Together with (\ref{eq:F1}) and with (\ref{eq:F2}),  and recalling that $b_p^* b_p -N^{-1} a_p^* a_p = a_p^* a_p (1- \cN_+ / N)$, we obtain (\ref{eq:cEV-defc}) with (\ref{eq:errorVc}). 
		\end{proof}

	\subsection{Analysis of $ \cG_{N,\a}^{(3)}=e^{-B}\cL^{(3)}_N e^{B}$}

We consider here the conjugation of the cubic term $\cL_N^{(3)}$, defined in (\ref{eq:cLNj}). 
\begin{prop}\label{prop:GN-3}  Under the assumptions of Prop. \ref{prop:GN}, there exists a constant $C > 0$ such that 
		\begin{equation*}\label{eq:def-E3}
		\cG_{N,\a}^{(3)}  = e^{-B}\cL^{(3)}_N e^{B}  = \sqrt{N} \sum_{p,q \in \L^*_+ : p + q \not = 0} \widehat{V} (p/e^N) \left[ b_{p+q}^* a_{-p}^* a_q  + \hc \right] + \cE^{(3)}_{N} \end{equation*}
		where  
		\begin{equation}\label{eq:lm-GN31}\begin{split} 
| \langle \xi, \cE_{N}^{(3)}  \xi \rangle| \leq C N^{1/2-\a}\| (\cN_+ + 1)^{1/2} \xi \| \| \cV_N^{1/2} \xi \|+ C N^{1-\a} \| (\cN_+ + 1)^{1/2} \xi \|^2 
\end{split}\end{equation}
for any $\a > 1$ and $N \in \bN$ large enough.
\end{prop}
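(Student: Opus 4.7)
The plan is to start from the Duhamel identity
\[
e^{-B}\cL_N^{(3)} e^{B} - \cL_N^{(3)} = \int_0^1 e^{-sB}\,[\cL_N^{(3)},B]\, e^{sB}\,ds,
\]
and show that the integrand is an error of the form \eqref{eq:lm-GN31}, using Lemma \ref{lm:Ngrow} (resp. Prop.~\ref{prop:AHNgrow}-type bounds already available for generalized Bogoliubov conjugation) to move the factors of $e^{\pm sB}$ through the kinetic/number bounds. Equivalently, one can apply the explicit action formulas \eqref{eq:ebe}, writing
\[
e^{-B} b^*_{p+q} a^*_{-p} a_q e^{B} \;=\; \big(\gamma_{p+q} b^*_{p+q} + \sigma_{p+q} b_{-p-q} + d^*_{p+q}\big)\, e^{-B}(a^*_{-p}a_q)e^B ,
\]
and using $e^{-B}a^*_{-p} a_q e^{B} = a^*_{-p}a_q + \int_0^1 e^{-sB}[a^*_{-p}a_q,B]e^{sB}ds$, with $[a^*_{-p}a_q,B]=\eta_q b^*_{-p}b^*_{-q}+\eta_p b_p b_q$ produced by the commutator rules \eqref{eq:comm-bp}. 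Isolating the main term $b^*_{p+q}a^*_{-p}a_q$, the remainder consists of four families: (i) $(\gamma_{p+q}-1)$-corrections, controlled by $|\gamma_{p+q}-1|\le C\eta_{p+q}^2$; (ii) $\sigma_{p+q}b_{-p-q}$-replacements; (iii) $d^*_{p+q}$-remainders bounded through Lemma \ref{lm:dp}; and (iv) the commutator $[a^*_{-p}a_q,B]$, producing quintic-in-$b$ operators.

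Next, I would estimate each family by passing to position space, where
\[
\cL_N^{(3)} = \sqrt N \int_{\Lambda^2} dx\,dy\; e^{2N}V(e^N(x-y))\,\check b^*_x \check a^*_y \check a_x + \hc ,
\]
so the natural companions of the errors are expressions like $\sqrt N\int dx\,dy\, e^{2N}V(e^N(x-y)) \check a^*(\check\sigma_x)\check a^*_y\check a_x$ or $\sqrt N\int dx\,dy\, e^{2N}V(e^N(x-y)) \check d^*_x \check a^*_y \check a_x$. Cauchy--Schwarz in position space, using the factor $e^{2N}V(e^N(x-y))$ as a weight on one side and $\|\check\sigma\|_2\le C\|\eta\|\le CN^{-\alpha}$ (or the pointwise bounds of Lemma \ref{lm:dp}) on the other, produces contributions bounded by
\[
C N^{1/2-\alpha}\,\|\cV_N^{1/2}\xi\|\,\|(\cN_++1)^{1/2}\xi\| + CN^{1-\alpha}\|(\cN_++1)^{1/2}\xi\|^2,
\]
which matches \eqref{eq:lm-GN31}. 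For the family (iv), the quintic-in-$b$ terms of the form $\sqrt N \sum_{p,q,r}\widehat V(p/e^N)\eta_r b^*_{p+q}b^*_{-p} b^*_{\pm r}$ are handled similarly, distributing $\widehat V(p/e^N)$ into a local potential kernel in position space and $\eta_r$ into $\check\eta(w-z)$, with \eqref{eq:etax} and \eqref{eq:etaHL2} providing the required smallness.

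The main obstacle is the $\sigma_{p+q}b_{-p-q}$ replacement term (family (ii)), which carries the full $\sqrt N$ prefactor and in which the decay of $\widehat V(p/e^N)$ in $p$ is lost (since $\widehat V(p/e^N)\simeq \widehat V(0)$ for $|p|\ll e^N$), so that the only available smallness comes from $\sigma_{p+q}$ itself. After switching to position space this term reads $\sqrt N\int e^{2N}V(e^N(x-y))\,\check a(\check\sigma_x)\check a^*_y\check a_x$; after normal-ordering and Cauchy--Schwarz with the splitting $e^{2N}V(e^N(x-y))=[e^{2N}V(e^N(x-y))]^{1/2}\cdot [e^{2N}V(e^N(x-y))]^{1/2}$, one absorbs one half into $\|\cV_N^{1/2}\xi\|$ and the other into $\|\check\sigma_x\|_2 \|(\cN_++1)^{1/2}\xi\|\le C N^{-\alpha}\|(\cN_++1)^{1/2}\xi\|$; the leftover $\sqrt N$ combines with $N^{-\alpha}$ into the advertised $N^{1/2-\alpha}$. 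The contributions of the $d^*_{p+q}$-remainders (family (iii)) are analogous once one invokes \eqref{eq:dxy-bds}, the $1/N$ in the definition of $d_p$ absorbing the $\sqrt N$ prefactor. Finally, conjugation of the main term and of the errors by $e^{\pm sB}$ is harmless thanks to Lemma \ref{lm:Ngrow} and the a priori control already assumed for lower-order operators.
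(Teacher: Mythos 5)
Your proposal follows essentially the same route as the paper: you expand $e^{-B}b^*_{p+q}a^*_{-p}a_q e^B$ by first writing $e^{-B}b^*_{p+q}e^B = \gamma_{p+q}b^*_{p+q}+\sigma_{p+q}b_{-p-q}+d^*_{p+q}$ and $e^{-B}a^*_{-p}a_qe^B$ via a Duhamel formula with $[a^*_{-p}a_q,B]=\eta_q b^*_{-p}b^*_{-q}+\eta_p b_p b_q$, and you estimate the resulting families in position space using Lemma~\ref{lm:dp} for the $d$-remainders and Cauchy--Schwarz with the $e^{2N}V(e^N\cdot)$ kernel. This matches the paper's decomposition into $\cE_1^{(3)},\cE_2^{(3)},\cE_3^{(3)}$ and the associated estimates. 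One small inaccuracy: in the $\sigma_{p+q}b_{-p-q}$ family the resulting operator is cubic ($\check a^*_y\check b(\check\sigma_x)\check a_x$ after normal ordering), so Cauchy--Schwarz does not naturally produce $\|\cV_N^{1/2}\xi\|$; the more direct estimate gives $CN^{1-\alpha}\|(\cN_++1)^{1/2}\xi\|^2$, which is what the paper obtains for $\cE_{12}^{(3)}$ and which is also admissible in \eqref{eq:lm-GN31}.
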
 
\begin{proof} This proof is similar to the proof of \cite[Prop. 7.5]{BBCS3}. Expanding $e^{-B} a_{-p}^* a_q e^B$, we arrive at 
\begin{equation} \label{eq:deco-cE3} \begin{split} 
		\cE^{(3)}_{N} &=  \sqrt{N} \sum_{p,q \in \Lambda^*_+ : p+q \not = 0} \widehat{V} (p/e^N) \big((\g_{p+q}-1) b^*_{p+q} + \s_{p+q} b_{-p-q} + d_{p+q}^* \big) \, a_{-p}^* a_q  \\
		&\hspace{.3cm}+ \sqrt{N} \sum_{p,q \in \Lambda_+^* , p+q \not = 0} \widehat{V} (p/e^N) \eta_p \, e^{-B}b^*_{p+q}e^{B} \int_0^1 ds\, e^{-sB} b_{p} b_{q} e^{sB}\\
		&\hspace{.3cm}+ \sqrt{N} \sum_{p,q \in \Lambda_+^* , p+q \not = 0} \widehat{V} (p/e^N) \eta_q \, e^{-B} b^*_{p+q}e^{B} \int_0^1 ds\,  e^{-sB}b_{-p}^*b^*_{-q} e^{sB}  \\
		&\hspace{.3cm}+ \hc \\ &=:  \; \cE^{(3)}_1 + \cE_2^{(3)} + \cE_3^{(3)} + \hc 
		\end{split} \end{equation}
where, as usual, $\gamma_p = \cosh \eta (p)$, $\s_p = \sinh \eta (p)$ and $d_p$ is as in (\ref{eq:defD}).  We consider $\cE_1^{(3)}$. To this end, we write 
		\[ \begin{split}
		\cE^{(3)}_1 =\; &  \sqrt{N} \sum_{p,q \in \Lambda^*_+ : p+q \not = 0} \widehat{V} (p/e^N) \big((\g_{p+q}-1) b^*_{p+q} + \s_{p+q} b_{-p-q} + d_{p+q}^* \big) \, a_{-p}^* a_q \\
		=: & \; \cE^{(3)}_{11} + \cE^{(3)}_{12} +\cE^{(3)}_{13} \,.
		\end{split}\]
		Since $|\g_{p+q}-1|\leq |\eta_{p+q}|^2$ and $\| \eta \| \leq C N^{-\a}$, we find 
		\be \begin{split} \label{eq:GN3-P11}
			|\langle \xi, \cE^{(3)}_{11} \xi \rangle| 
			& \leq  CN \| \eta\|^2 \| (\cN_++1)^{1/2} \xi \|^2 \leq C N^{1-2\a} \| (\cN_+ + 1)^{1/2} \xi \|^2 \,.
		\end{split}\ee
As for  $\cE^{(3)}_{12}$, we commute $a^*_{-p}$ through $b_{-p-q}$ (recall $q \not = 0$). With $|\s_{p+q}| \leq C |\eta_{p+q}|$, we obtain 
		\be \begin{split} \label{eq:GN3-P12}
			|\langle \xi, \cE^{(3)}_{12} \xi \rangle| & \leq  C N^{1-\a} \| (\cN_++1)^{1/2} \xi \|^2 \,.
		\end{split}\ee
We decompose now $\cE^{(3)}_{13} = \cE^{(3)}_{131} + \cE^{(3)}_{132}$, with 
		\[ \begin{split} 
		\cE^{(3)}_{131} =\; & \sqrt{N} \sum_{p,q \in \Lambda^*_+ : p+q \not = 0} \widehat{V} (p/e^N) \, \bar d^*_{p+q}   a^*_{-p} a_q \\
		\cE^{(3)}_{132} = \; & -\frac{(\cN_++1)}{N}\sqrt{N} \sum_{p,q \in \Lambda^*_+ : p+q \not = 0} \widehat{V} (p/e^N) \eta_{p+q} \, b_{-p-q} a^*_{-p} a_q \,.
\end{split}\]
where we defined $d^*_{p+q}= \lis{d}^*_{p+q} - \frac{(\cN_++1)}{N}\, \eta_{p+q} b_{-p-q}$.
The term $\cE^{(3)}_{132}$ is estimated similarly to $\cE_{12}^{(3)}$, moving $a_{-p}^*$ to the left of $b_{-p-q}$; we find $\pm \cE^{(3)}_{132} \leq C N^{1-\a} (\cN_+ + 1)$. We bound $\cE^{(3)}_{131}$ in  position space. We find
		\[\begin{split} 
		&|\langle \xi, \cE^{(3)}_{131} \xi \rangle|\\
		&\quad \leq 
		N^{1/2}\int_{\L^2}  dx dy \,e^{2N}V(e^N(x-y))  \| \check{a}_x \xi \| \|\check{a}_y \check{\bar d}_x  \xi \|  \\
		&\quad \leq  CN^{1/2-\a}  \int_{\L^2}  dx dy \,e^{2N} V(e^N(x-y))\| \check{a}_x \xi \| \\ & \hspace{0.8cm} \times  \big[ \,\|(\cN_++1)\xi \| + N^{-1} \| \check{a}_x (\cN_+ + 1)^{1/2} \xi \| +\| \eta \| \| \check{a}_y (\cN_+ + 1)^{1/2} \xi \| + \| \check{a}_x \check{a}_y \xi \| \big]  \\
		&\quad \leq C N^{1-\a} \| (\cN_+ + 1)^{1/2} \xi \|^2+CN^{1/2-\a}\| (\cN_+ + 1)^{1/2} \xi \| \| \cV_N^{1/2} \xi \|  \,.
		\end{split}\]
		With \eqref{eq:GN3-P11} and \eqref{eq:GN3-P12} we obtain 
		\be \label{eq:G3N-P1end}
		\begin{split}
 |\langle \xi , \cE^{(3)}_1\xi \rangle | \leq &  C N^{1/2-\a}   \| \cV_N^{1/2}\xi \| \| (\cN_+ +1)^{1/2} \xi \|+ C N^{1-\a}  \| (\cN_+ +1)^{1/2} \xi \|^2 \,.
		\end{split}
		\ee
		
Next, we focus on $\cE^{(3)}_2$, defined in (\ref{eq:deco-cE3}). With Eq. \eqref{eq:ebe}, we find  
		\begin{equation}\label{eq:cE32-deco} \begin{split}
		\cE^{(3)}_2 =\; & \sqrt{N}\sum_{p,q \in \Lambda_+^* , p+q \not = 0} \widehat{V} (p/e^N) \eta_p \, e^{-B}b^*_{p+q}e^{B}\\
		& \hspace{0.8cm} \times \int_0^1 ds\, \big( \g^{(s)}_p  \g^{(s)}_{q} b_{p} b_q + \s^{(s)}_p  \s^{(s)}_{q}b^*_{-p} b^*_{-q} +\g^{(s)}_p   \s^{(s)}_{q} b^*_{-q} b_{p} +  \s^{(s)}_p  \g^{(s)}_{q}  b^*_{-p} b_{q}  \big)  \\
		& +\sqrt{N} \sum_{p,q \in \Lambda_+^* , p+q \not = 0} \widehat{V} (p/e^N) \eta_p \, e^{-B}b^*_{p+q}e^{B} \int_0^1 ds\,  \g^{(s)}_p  \s^{(s)}_{q} [b_{p},  b^*_{-q}]  \\
		& + \sqrt{N} \sum_{p,q \in \Lambda_+^* , p+q \not = 0} \widehat{V} (p/e^N)  \eta_p \, e^{-B}b^*_{p+q}e^{B}\\
		& \hskip 1cm\times \int_0^1 ds\, \Big[ d^{(s)}_p \big( \g^{(s)}_{q} b_{q} + \s^{(s)}_{q} b^*_{-q} \big) + \big( \g^{(s)}_p b_{p} + \s^{(s)}_p b^*_{-p} \big)   d^{(s)}_{q} +   d^{(s)}_{p}  d^{(s)}_{q}   \Big]\\
		=: &\; \cE^{(3)}_{21} + \cE^{(3)}_{22} + \cE^{(3)}_{23}
		\end{split}\end{equation}
		with $\gamma^{(s)}_p = \cosh (s \eta_p)$, $\s^{(s)}_p = \sinh (s \eta_p)$ and $d^{(s)}_p$ defined as in (\ref{eq:defD}), with $\eta$ replaced by $s \eta$. With Lemma \ref{lm:Ngrow}, we get 
		\be \begin{split} \label{eq:G3N-P31} 
			|\langle \xi , \cE^{(3)}_{21} \xi \rangle | \leq \; &C  N^{1-\a} \| (\cN_++1)^{1/2}\xi \|^2 \,.
		\end{split}\ee
		Since $[b_{p},b^*_{-q}] = - a^*_{-q} a_{p} /N $ for $p \neq -q$, we find 
		\be \begin{split}\label{eq:G3N-P32} 
			|\langle \xi , \cE^{(3)}_{22} \xi \rangle | &\leq CN^{-2\a}  \| (\cN_+ + 1)^{1/2} \xi \|^2  \,.
		\end{split}\ee
As for the third term on the r.h.s. of (\ref{eq:cE32-deco}), we switch to position space. We find  
		\[ \begin{split}
		\cE^{(3)}_{23} =\;& \sqrt N  \int_{\L^3} dx dy dz\, e^{2N} V(e^N(x-z))  \check{\eta} (y-z) \,e^{-B} \check{b}^*_x e^{B}\\
		& \hskip 1cm\times \int_0^1 ds\, \Big[  \check{d}^{(s)}_y \big( b(\check{\g}^{(s)}_{x}) +b^*(\check{\s}^{(s)}_{x}) \big) +  \big( b(\check{ \g}^{(s)}_{y}) +b^*(\check{\s}^{(s)}_{y}) \big) \check{ d}^{(s)}_x    +  \check{ d}^{(s)}_y \check{ d}^{(s)}_x  \Big]\,.
		\end{split}\]
		Using the bounds \eqref{eq:dxy-bds}, \eqref{eq:splitdbd}, \eqref{eq:ddxy} and Lemma \ref{lm:Ngrow} we arrive at 
		\begin{equation*}\begin{split} 
		 &|\langle \xi , \cE^{(3)}_{23} \xi \rangle | \\
		 & \leq   C \sqrt N  \int_{\Lambda^3} dx dy dz \, e^{2N} V(e^N(x-z)) | \check{\eta} (y-z)| \| \check{b}_x e^{B}  \xi \| \int_0^1 ds  \\
&  \; \times  \Big[ \| \check{d}^{(s)}_y \big( \check{b}_{x} + b(\check{r}^{(s)}_x) + b^*(\check{\s}^{(s)}_x)\big)\xi \|  +\| \big( \check{b}_{y} + b(\check{r}^{(s)}_y) + b^*(\check{\s}^{(s)}_y)\big) \check{d}^{(s)}_x  \xi \| + \|  \check{d}^{(s)}_x  \check{d}^{(s)}_y \xi \|   \Big] \\
& \leq C \sqrt N  \int_{\Lambda^3} dx dy dz \, e^{2N} V(e^N(x-z)) | \check{\eta} (y-z)| \| \check{b}_x e^{B}  \xi \| \Big[ N^{-1} |\check{\eta} (x-y)|  \| (\cN_+ +1) \xi\|  \\ 
& \quad + \| \eta \| \| \check{b}_{x} \check{b}_{y} \xi \| + \| \eta \| \| (\cN_++1) \xi \| + \| \eta \| \| \check{b}_x (\cN_++1)^{1/2} \xi \| + \| \eta \|\| \check{b}_y (\cN_++1)^{1/2} \xi \|  \Big] \\
		& \leq  C N^{1-\a} \|  \cN_+^{1/2} e^{B} \xi \| \| (\cN_++ 1) \xi \| \\
		& \leq  C N^{1-\a} \|  (\cN_++1)^{1/2} \xi \|^2 \,
		\end{split} \end{equation*}
where $\check{r}$ indicates the function in $L^2 (\L)$ with Fourier coefficients $r_p =1-\g_p$, and the fact that $ \| \check{\eta}\|, \|\check{r}\|, \|\check{\s}\|\leq C N^{-\a}$.
		Combined with (\ref{eq:G3N-P31}) and (\ref{eq:G3N-P32}), the last bound implies that 
		\begin{equation}\label{eq:cE32f}\begin{split} 
\pm \cE_2^{(3)} \leq & C N^{1-\a}(\cN_+ + 1) \,.
\end{split}\end{equation}
To bound the last contribution on the r.h.s. of (\ref{eq:deco-cE3}), it is convenient to bound 
(in absolute value) the expectation of its adjoint  
		\begin{equation*} \begin{split} 
		\cE^{(3)*}_3 =\; & \sqrt{N} \sum_{p,q \in \Lambda_+^* , p+q \not = 0} \widehat{V} (p/e^N)  \eta_q \int_0^1 ds\, e^{-sB} b_{-q}e^{sB}\\
		& \hskip 1cm \times  \big( \g^{(s)}_p b_{-p} + \s^{(s)}_p b^*_{p} +  d^{(s)}_{-p}\big) \big( \g_{p+q} b_{p+q} + \s_{p+q} b^*_{-p-q} +  d_{p+q}\big) \\
		=\; &\sqrt{N} \sum_{p,q \in \Lambda_+^* , p+q \not = 0} \widehat{V} (p/e^N)  \eta_q \, \int_0^1 ds\,e^{-sB} b_{-q}e^{sB}\\
		&  \hskip -0.2cm\times \Big[ \, \g^{(s)}_p  \g_{p+q} b_{-p} b_{p+q} + \s^{(s)}_p  \s_{p+q} b^*_{p} b^*_{-p-q}  + \g^{(s)}_p  \s_{p+q} b^*_{-p-q} b_{-p}+  \g_{p+q}  \s^{(s)}_p b^*_{p} b_{p+q} \\
		& \hskip 1.4cm +  d^{(s)}_{-p} \big( \g_{p+q} b_{p+q} + \s_{p+q} b^*_{-p-q}\big) +  \big( \g^{(s)}_p b_{-p} + \s^{(s)}_p b^*_{p}\big)  d_{p+q} +   d^{(s)}_{-p}  d_{p+q}\Big]  \\
		& +\sqrt{N} \sum_{p,q \in \Lambda_+^* , p+q \not = 0} \widehat{V} (p/e^N)  \eta_q \, \int_0^1 ds\,e^{-sB} b_{-q}e^{sB}   \g^{(s)}_p  \s_{p+q} [b_{-p},b^*_{-p-q}] 
		\\ =: & \, \cE_{31}^{(3)} + \cE_{32}^{(3)} \,.
		\end{split} \end{equation*}
Since $q \neq 0$, $[b_{-p},b^*_{-p-q}] = - a^*_{-p-q} a_{-p} /N $. Thus, we can estimate 
		\begin{equation}\label{eq:cE332f}\begin{split}
		| \langle \xi,  &\cE_{32}^{(3)}  \xi\rangle| \\ & \leq  CN^{-1/2}  \int_0^1 ds \sum_{p,q \in \Lambda_+^* , p+q \not = 0} | \eta_q | | \eta_{p+q}| \, \| a_{-p-q}\,e^{-sB} b^*_{-q}e^{sB} \xi \| \|  a_{-p} \xi \| \\
		& \leq C  \| \eta\|^2  \|(\cN_++1)^{1/2} \xi \|^2 \leq CN^{-2\a} \| (\cN_+ + 1)^{1/2} \xi \|^2 \,.
		\end{split}\end{equation}
To bound the expectation of $\cE^{(3)}_{31}$, we switch to position space. We find  
\begin{equation*} \begin{split}  \label{eq:GN3-P2} 
& |\langle \xi ,  \cE^{(3)}_{31} \xi \rangle | \\
& \leq   N^{1/2}\int_0^1  ds\,  \int_{\Lambda^2} dx dy \, e^{2N} V(e^N(x-y))\|  b^*(\check{\eta}_{x}) e^{sB}  \xi \| \Big[ \| \check{b}_{x}\check{b}_{y} \xi \| \\ &    + \| \eta\|  \| \check{b}_x (\cN_++1)^{1/2} \xi \| +  \| \eta\| \| \check{b}_y (\cN_++1)^{1/2} \xi \| + N^{-1}|\check \eta(x-y)|   \| (\cN_++1) \xi \| \Big] \,.
		\end{split} \end{equation*}
With Lemma \ref{lm:Ngrow}, we conclude that 
\begin{equation} \begin{split}  \label{eq:GN3-cE313} 
		 |\langle \xi , \cE_{31}^{(3)} \xi \rangle | \leq &  C N^{1/2-\a}   \| \cV_N^{1/2}\xi \| \| (\cN_+ +1)^{1/2} \xi \|+ C N^{1-\a}  \| (\cN_+ +1)^{1/2} \xi \|^2 \,.
\end{split} \end{equation}
From (\ref{eq:cE332f}) and \eqref{eq:GN3-cE313} we obtain 
		\[ \begin{split}
	 |\langle \xi , \cE^{(3)}_3\xi \rangle | \leq &  C N^{1/2-\a}   \| \cV_N^{1/2}\xi \| \| (\cN_+ +1)^{1/2} \xi \|+ C N^{1-\a}  \| (\cN_+ +1)^{1/2} \xi \|^2 \, . 
\end{split}\]
Together with (\ref{eq:deco-cE3}), (\ref{eq:G3N-P1end}) and (\ref{eq:cE32f}), we arrive at  (\ref{eq:lm-GN31}).
	\end{proof}

	\subsection{Analysis of $ \cG_{N,\a}^{(4)}=e^{-B}\cL^{(4)}_N e^{B}$}
	\label{sub:G4}
	
Finally, we consider the conjugation of the quartic term $\cL_N^{(4)}$. We define the error operator 
$\cE_N^{(4)}$ through 
 \begin{equation*} \begin{split} 
	\cG_{N,\a}^{(4)} = e^{-B} \cL^{(4)}_{N} e^{B}  = \; &\cV_N + \frac 12 \sum_{\substack{q \in \Lambda^*_+, r\in \Lambda^* \\  r\neq -q}} \widehat{V} (r/e^N) \eta_{q+r} \eta_q \left( 1-\frac{\cN_+ }{N} \right) \left( 1 - \frac{\cN_+ +1}{N} \right)  \\ &+  \frac 12\sum_{\substack{q \in \Lambda^*_+, r \in \L^*: \\ r\neq -q}} \widehat{V} (r/e^N) \, \eta_{q+r} \left(  b_q b_{-q} + b^*_q b^*_{-q} \right)  + \cE^{(4)}_{N} \end{split} \end{equation*}
	\begin{prop}\label{prop:GN-4} Under the assumptions of Prop.\ref{prop:GN} there exists a constant $C > 0$ such that 
		\begin{equation}\label{eq:E4bound2}
		\begin{split}
|\langle \xi , \cE^{(4)}_N\xi \rangle | \leq &  C N^{1/2-\a}   \| \cV_N^{1/2}\xi \| \| (\cN_+ +1)^{1/2} \xi \|+ C N^{1-\a}  \| (\cN_+ +1)^{1/2} \xi \|^2 
		\end{split}
		\end{equation}
for any $\a > 1$, $\xi \in \cF^{\leq N}_+$ and $N \in \bN$ large enough.  
	\end{prop}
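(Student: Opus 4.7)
The plan is to compute $e^{-B}\cV_N e^B$ (note $\cL_N^{(4)}=\cV_N$, since the bosonic annihilation operators $a_p,a_{q+r}$ commute) by writing $\cV_N$ in position space,
$$\cV_N = \tfrac12\int_{\Lambda^2}dx\,dy\,e^{2N}V(e^N(x-y))\,\check{a}^*_x\check{a}^*_y\check{a}_y\check{a}_x\,,$$
converting each $\check{a}^\#$-operator to the corresponding $\check{b}^\#$-operator (whose deviation from $\check{a}^\#$ generates controllable $\cN_+/N$-errors), and then substituting into each $\check{b}^\#_z$-factor the identity $e^{-B}\check{b}_x e^B = b(\check{\gamma}_x) + b^*(\check{\sigma}_x) + \check{d}_x$ from \eqref{eq:ebex}, and its adjoint. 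Expanding the resulting fourfold product yields a finite sum indexed by the $3^4=81$ ways of picking one of three labels ($\gamma$, $\sigma$, or the remainder $d$) in each of the four factors.

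I would extract three classes of leading contributions from this expansion. Taking $b^\#(\check{\gamma}_z)$ in every factor reproduces $\cV_N$, up to $\eta^2$-corrections (from $\gamma_z-\delta_z$) which go into $\cE^{(4)}_N$. Taking exactly two $\check\sigma$-kernels, one on an annihilation factor and one on a creation factor, and normal-ordering the remaining two $\check{b}^\#$'s, yields after $\sinh(\eta_p)=\eta_p + O(\eta_p^3)$ the off-diagonal quadratic $\tfrac12\sum\widehat V(r/e^N)\eta_{q+r}(b_q b_{-q}+b^*_q b^*_{-q})$. Taking $\check\sigma$-kernels in all four factors and closing both pairs via the canonical commutators $[b_p,b_q^*]$ from \eqref{eq:comm-bp} produces the constant $\tfrac12\sum\widehat V(r/e^N)\eta_{q+r}\eta_q(1-\cN_+/N)(1-(\cN_++1)/N)$, with the $\cN_+/N$-factors arising from the precise $b,b^*$-commutator on $\cF_+^{\leq N}$.

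All remaining contributions are absorbed into $\cE_N^{(4)}$. Each contains either at least one remainder factor $\check{d}^\#_z$ (controlled by Lemma~\ref{lm:dp}, with \eqref{eq:dxy-bds}, \eqref{eq:splitdbd}, or \eqref{eq:ddxy} used for one, mixed, or two remainders respectively), or three or more $\check\sigma$-kernels contributing $\|\eta\|^{\,2}\leq CN^{-2\alpha}$. These are bounded in position space by Cauchy--Schwarz, using $\|\eta\|\leq CN^{-\alpha}$, $|\eta_p|\leq C/p^2$, $\|\check\eta\|_\infty\leq CN$ (from \eqref{eq:etax}, \eqref{eq:modetap}) and the scale-invariance $\|e^{2N}V(e^N\cdot)\|_1=\|V\|_1$, following the pattern of Props.~\ref{prop:K}, \ref{prop:G2V}, \ref{prop:GN-3}. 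A typical error term of the form $N\int dx\,dy\,e^{2N}V(e^N(x-y))\,\|\check d_x\check a_y\xi\|\,\|\check a_y\xi\|$ leads, via \eqref{eq:dxy-bds}, to
$$CN^{1/2-\alpha}\|\cV_N^{1/2}\xi\|\,\|(\cN_++1)^{1/2}\xi\|+CN^{1-\alpha}\|(\cN_++1)^{1/2}\xi\|^2\,,$$
matching the claim \eqref{eq:E4bound2}. The main difficulty is the careful bookkeeping: one must correctly identify the three leading contributions among the $81$ terms, track the normal-ordering commutators that shift weights between classes by $1/N$-factors, and verify that the $\cN_+/N$-corrections from converting $\check{a}^\#$ to $\check{b}^\#$ combine with the commutator-generated factors to produce exactly the stated coefficients $(1-\cN_+/N)(1-(\cN_++1)/N)$ in the constant term.
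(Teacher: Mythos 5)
Your high-level strategy---expand $e^{-B}\cV_N e^B$, isolate the three non-error contributions, and push everything else into $\cE_N^{(4)}$ with the help of Lemma~\ref{lm:dp}---is correct in spirit and matches the paper's target identity. However, the paper does not carry out the $3^4$-term expansion you describe; instead it decomposes $\cG^{(4)}_{N,\a} = \cV_N + \text{W}_1 + \text{W}_2 + \text{W}_3 + \text{W}_4$ (Eq.~\eqref{eq:defW}) using an iterated Duhamel expansion, where $\text{W}_1,\text{W}_2$ carry single $\int_0^1 ds\, e^{-sB}(\cdot)e^{sB}$ integrals and $\text{W}_3,\text{W}_4$ carry double $\int_0^1 ds\int_0^s d\tau$ integrals. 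This organization keeps a deliberate mix of $a$-operators and conjugated $b$-operators within each $\text{W}_j$, which sidesteps the step you gloss over: passing wholesale from $\check{a}^\#$ to $\check{b}^\#$ in the quartic. On $\cF_+^{\leq N}$ one has $\check{a}^*_x\check{a}^*_y\check{a}_y\check{a}_x = \check{b}^*_x\check{b}^*_y\check{b}_y\check{b}_x\cdot\frac{N^2}{(N-\cN_+)(N-\cN_+-1)}$, and the inverse factor is \emph{not} a small $\cN_+/N$-error---it is unbounded as $\cN_+\to N$. Your phrasing ``controllable $\cN_+/N$-errors'' hides a genuine difficulty that the paper's formulation is designed to avoid.

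There are also two bookkeeping slips. First, the off-diagonal quadratic $\tfrac12\sum\widehat V(r/e^N)\eta_{q+r}(b_qb_{-q}+\hc)$ does not come from ``exactly two $\check\sigma$-kernels''; in the $\gamma/\sigma/d$ language it arises from a \emph{single} $\check\sigma$-kernel on one of the creation (or annihilation) slots together with the contraction $\langle\check\sigma_x,\check\gamma_y\rangle\approx\check\eta(x-y)$ produced when the lone lowering factor $b(\check\sigma_x)$ is normal-ordered past $b^*(\check\gamma_y)$, leaving a pure $bb$ (resp.\ $b^*b^*$) term. The two-$\sigma$ configurations you describe produce either a diagonal quadratic or a constant, not the stated $b_q b_{-q}+\hc$. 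Second, the exact coefficient $(1-\cN_+/N)(1-(\cN_++1)/N)$ on the constant must be derived, not merely asserted: it requires tracking that the two successive commutators $[b_p, b^*_q]=(1-\cN_+/N)\delta_{pq}-N^{-1}a^*_qa_p$ are evaluated at $\cN_+$ and at $\cN_+ + 1$ respectively, and separately accounting for the $-N^{-1}a^*_qa_p$ pieces (which in the paper are shown to satisfy the error bound on their own). Until these points are filled in, the proposal is an outline of the correct answer but not a proof that the error term $\cE_N^{(4)}$ has precisely the two-term bound \eqref{eq:E4bound2}.
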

	
To show Prop. \ref{prop:GN-4}, we use the following lemma, whose proof can be  obtained as in \cite[Lemma 7.7]{BBCS3}.  
	\begin{lemma}\label{lm:prel4}
		Let $\eta \in \ell^2 (\L^*)$ as defined in \eqref{eq:defeta}. Then there 
		exists a constant $C > 0$ such that 
		\[ \begin{split} \label{eq:prel4-2}
			\| (\cN_++1)^{n/2} e^{-B} &\check{b}_x \check{b}_y  e^{B}\xi \| \\
			&\leq  C  \Big [\, N \| (\cN_+ +1)^{n/2} \xi \| + \| \check{a}_y (\cN_+ +1)^{(n+1)/2} \xi \|  \\
			&\hskip 1cm + \| \check{a}_x (\cN_+ +1)^{(n+1)/2} \xi \|   +  \| \check{a}_x \check{a}_y (\cN_++1)^{n/2} \xi \| \Big ]
		\end{split}\]
		for all $\xi \in \cF_+^{\leq N}$, $n \in \bZ$. \end{lemma}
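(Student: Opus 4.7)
The plan is to insert the decomposition
\[
e^{-B}\check{b}_x e^{B} = \check{b}_x + b(\check{r}_x) + b^*(\check{\sigma}_x) + \check{d}_x
\]
into the identity $e^{-B}\check{b}_x\check{b}_y e^{B} = (e^{-B}\check{b}_x e^{B})(e^{-B}\check{b}_y e^{B})$ and to control each of the sixteen resulting operator products separately. The above splitting is obtained from \eqref{eq:ebex} by writing $\cosh(\eta_p) = 1 + r_p$ with $|r_p|\leq C\eta_p^2$, so that $b(\check{\gamma}_x)=\check{b}_x + b(\check{r}_x)$; note that $\|\check{r}_x\|_{L^2}\leq C\|\eta\|_\infty\|\eta\|\leq CN^{-2\alpha}$ and $\|\check{\sigma}_x\|_{L^2}\leq C\|\eta\|\leq CN^{-\alpha}$, so the three non-identity components carry smallness in $N$.

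The main term $\check{b}_x\check{b}_y$ will produce the fourth bound $\|\check{a}_x\check{a}_y(\cN_++1)^{n/2}\xi\|$ after commuting the weight to the right via $(\cN_++1)^{n/2}\check{b}_z=\check{b}_z(\cN_++c)^{n/2}$ (with a bounded integer shift $c$) and applying the pointwise bound $\|\check{b}_z\phi\|\leq\|\check{a}_z\phi\|$. For the remaining fifteen cross terms I would use the standard estimates $\|b(f)\phi\|\leq\|f\|\,\|\cN_+^{1/2}\phi\|$ and $\|b^*(f)\phi\|\leq\|f\|\,\|(\cN_++1)^{1/2}\phi\|$ for the $b(\check{r}_\cdot)$ and $b^*(\check{\sigma}_\cdot)$ factors, and Lemma \ref{lm:dp} (in particular \eqref{eq:dxy-bds} for a single $\check{d}_\cdot$ and \eqref{eq:ddxy} for the product $\check{d}_x\check{d}_y$) for the $\check{d}_\cdot$ factors.

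Sorting these cross terms according to how many $\check{b}$ factors survive, there are two cases. First, terms with one surviving $\check{b}_x$ or $\check{b}_y$ yield, after commuting the $(\cN_++1)^{n/2}$ weight through and absorbing small prefactors $\|\check{r}\|$, $\|\check{\sigma}\|$ or $\|\eta\|$ into $C$, contributions bounded by $\|\check{a}_y(\cN_++1)^{(n+1)/2}\xi\|$ or $\|\check{a}_x(\cN_++1)^{(n+1)/2}\xi\|$, matching the second and third terms in the claim. Second, terms with no surviving $\check{b}$, such as $b^*(\check{\sigma}_x)b^*(\check{\sigma}_y)$ or $b^*(\check{\sigma}_x)\check{d}_y$, produce at worst $C\|(\cN_++1)^{(n+2)/2}\xi\|$, which using $\cN_+\leq N$ on $\cF_+^{\leq N}$ is bounded by $CN\|(\cN_++1)^{n/2}\xi\|$, matching the first term.

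The main technical obstacle is the double-$\check{d}$ term $\check{d}_x\check{d}_y$, which requires the full refined bound \eqref{eq:ddxy}: each summand there carries a prefactor $N^{-2}$ together with factors $\|\eta\|^2$, $\|\eta\|$, or a pointwise contribution $|\check{\eta}(x-y)|$, and after trading surplus weights of $(\cN_++1)$ for powers of $N$ via $\cN_+\leq N$ the resulting estimates fit into the four categories on the right-hand side; for example $N^{-2}\|\eta\|^2\|(\cN_++1)^{(n+6)/2}\xi\|\leq CN^{1-2\alpha}\|(\cN_++1)^{n/2}\xi\|$ is absorbed into $N\|(\cN_++1)^{n/2}\xi\|$. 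A secondary bookkeeping point is that $n\in\bZ$ is allowed, but since the shifts produced by commuting $(\cN_++1)^{n/2}$ through a finite product of $\check{b}_\cdot$, $b^{(*)}(\cdot)$, $\check{d}_\cdot$ are by bounded integers, the ratios $(\cN_++c)^{n/2}(\cN_++1)^{-n/2}$ are uniformly bounded on $\cF_+^{\leq N}$, and all such shifts can be absorbed into the constant $C$.
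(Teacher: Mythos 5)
Your approach is correct and matches the one used in \cite[Lemma 7.7]{BBCS3}, which is the source the paper cites for this lemma: expand both factors via \eqref{eq:ebex} (splitting $\cosh(\eta_p)=1+r_p$), control the sixteen cross terms one by one with the operator bounds $\|b^{(*)}(f)\phi\|\leq\|f\|\,\|(\cN_++1)^{1/2}\phi\|$ and Lemma \ref{lm:dp}, commute the $(\cN_++1)^{n/2}$-weight through each bounded-shift factor, and trade surplus $(\cN_++1)$ weights against powers of $N$ via $\cN_+\leq N$.

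Two points deserve sharper wording. First, your dichotomy ``one surviving $\check{b}$ vs.\ none'' does not cleanly classify the mixed $\check{b}\,\check{d}$ and $\check{d}\,\check{b}$ products: a term such as $\check{d}_x\check{b}_y$ contributes \emph{both} the $\check{a}_y(\cN_++1)^{(n+1)/2}$-type and the $\check{a}_x\check{a}_y(\cN_++1)^{n/2}$-type bounds (the latter coming from the $\|b_x(\cN_++1)^{(n+2)/2}\check{b}_y\xi\|$ piece in \eqref{eq:dxy-bds}), so terms with one $\check{d}$ factor spill over into several categories, not just one. Second, for the term $\check{b}_x\check{d}_y$ (and its $b(\check{r}_x)$, $b^*(\check{\sigma}_x)$ analogues) the single-$\check{d}$ estimate \eqref{eq:dxy-bds} alone is not enough: after commuting the weight one is left with $\|\check{a}_x(\cN_++c)^{n/2}\check{d}_y\xi\|$, with $\check{a}_x$ acting \emph{after} $\check{d}_y$, and one really needs the refined bound \eqref{eq:splitdbd} (with the roles of $x$ and $y$ swapped, and the correction $\check{d}_y=\check{\lis{d}}_y-(\cN_+/N)b^*(\check{\eta}_y)$ handled separately). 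You do invoke ``Lemma \ref{lm:dp}'' as a whole, so this is more an omission in your running commentary than a genuine gap, but it is worth recording because \eqref{eq:splitdbd} is precisely the tool the reference designed for these mixed terms. With these two bookkeeping refinements, your proof closes.
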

	
\begin{proof}[Proof of Prop. \ref{prop:GN-4}] We follow the proof of \cite[Prop. 7.6]{BBCS3}.
We write 
\[  \cG^{(4)}_{N,\a}  = \cV_N + W_1 + W_2 + W_3 + W_4 \]
with 
\begin{equation}\label{eq:defW} 
		\begin{split} 
		\text{W}_1 = \; & \frac{1}{2} \sum_{q \in \Lambda_+^* , r \in \Lambda^* : r \not = -q} \widehat{V} (r/e^N) \eta_{q+r} \int_0^1 ds \,  \big( e^{-s B} b_{q} b_{-q}\,e^{sB}   + \text{h.c.} \big) \\ 
		\text{W}_2 = \; &  \sum_{p,q \in \Lambda_+^* , r \in \Lambda^* : r \not = p,-q} \widehat{V} (r/e^N)\,  \eta_{q+r} \int_0^1 ds \, 
		 \big(  e^{-s B} b^*_{p+r} b^*_{q} e^{s B}  a^*_{-q-r} a_p + \text{h.c.} \big)\\
		\text{W}_3 = \; &    \sum_{p,q\in \Lambda^*_+, r \in \Lambda^* : r \not = -p -q} \widehat{V} (r/e^N) \eta_{q+r} \eta_p \, \\ &\hspace{1.5cm} \times  \int_0^1 ds\,\int_0^s d\t \, \big(e^{-sB} b^*_{p+r} b^*_q e^{sB} e^{-\t B} b^*_{-p} b^*_{-q-r} e^{\t B}+ \hc \big) \\ 
		\text{W}_4 = \; &  \sum_{p,q\in \Lambda^*_+, r \in \Lambda^* : r \not = -p -q} \widehat{V} (r/e^N) \, \eta^2_{q+r} \\ &\hspace{1.5cm} \times \int_0^1 ds\,\int_0^s d\t \,  \big( e^{-sB} b^*_{p+r} b^*_q e^{sB} e^{-\t B} b_{p} b_{q+r} e^{\t B} + \hc \big) \,.
		\end{split} \end{equation}
		
Let us first consider the term $\text{W}_1$. With (\ref{eq:ebe}), we find
\begin{equation}\label{eq:W1-dec} 
\begin{split} 
\text{W}_1 = \; & \frac{1}{2} \sum_{q \in \Lambda_+^* , r \in \Lambda^* : r \not = -q} \widehat{V} (r/e^N) \eta_{q+r} \int_0^1 ds (\g^{(s)}_q)^2  (b_q b_{-q} + \hc ) \\ &+ \frac{1}{2} \sum_{q \in \Lambda_+^* , r \in \Lambda^* : r \not = -q} \widehat{V} (r/e^N) \eta_{q+r} \int_0^1 ds \, \g_q^{(s)} \s^{(s)}_q \big( [b_q , b_q^*] + \hc \big) \\
		&+ \frac{1}{2} \sum_{q \in \Lambda_+^* , r \in \Lambda^* : r \not = -q} \widehat{V} (r/e^N) \eta_{q+r} \int_0^1 ds  \, \g_q^{(s)} \big( b_q d_{-q}^{(s)}+\hc \big) + \cE_{10}^{(4)} \\
		=: & \; \text{W}_{11} + \text{W}_{12} + \text{W}_{13} + \cE^{(4)}_{10}  
		\end{split} 
		\end{equation}
		where  
		\begin{equation}\label{eq:cE410}
		\cE_{10}^{(4)} =  \; \cE_{101}^{(4)} +  \cE_{102}^{(4)} +  \cE_{103}^{(4)} +  \cE_{104}^{(4)} +  \cE_{105}^{(4)} \end{equation} 
		with 
		\begin{equation}\label{eq:cE410j} \begin{split} \cE^{(4)}_{101} &= \frac{1}{2} \sum_{q \in \Lambda_+^* , r \in \Lambda^* : r \not = -q} \widehat{V} (r/e^N) \eta_{q+r} \int_0^1 ds \Big[ 2 \gamma_q^{(s)} \s_q^{(s)} b_q^* b_q  + (\s_q^{(s)})^2 b_{-q}^* b_q^* +\hc \Big] \\
		\cE^{(4)}_{102} &= \frac{1}{2} \sum_{q \in \Lambda_+^* , r \in \Lambda^* : r \not = -q} \widehat{V} (r/e^N) \eta_{q+r} \int_0^1 ds \, \sigma_q^{(s)} \big( b_{-q}^* d_{-q}^{(s)} + \hc \big)  \\
		\cE^{(4)}_{103} &= \frac{1}{2} \sum_{q \in \Lambda_+^* , r \in \Lambda^* : r \not = -q} \widehat{V} (r/e^N) \eta_{q+r} \int_0^1 ds \, \sigma_q^{(s)} \big( d^{(s)}_q b_q^* + \hc \big) \\
		\cE^{(4)}_{104} &= \frac{1}{2} \sum_{q \in \Lambda_+^* , r \in \Lambda^* : r \not = -q} \widehat{V} (r/e^N) \eta_{q+r} \int_0^1 ds \, \g_q^{(s)} \big( d^{(s)}_q b_{-q} + \hc \big) \\
		\cE^{(4)}_{105} &= \frac{1}{2} \sum_{q \in \Lambda_+^* , r \in \Lambda^* : r \not = -q} \widehat{V} (r/e^N) \eta_{q+r} \int_0^1 ds \big( d^{(s)}_q d^{(s)}_{-q} + \hc \big) \,.
		\end{split} \end{equation}
With 
\begin{equation}\label{eq:VetaN} 
\frac 1N\sup_{q \in \Lambda_+^*} \sum_{r \in \L_+^*} |\widehat{V} (r/e^N)| |\eta_{q+r}|  \leq C \, < \infty \end{equation}
uniformly in $N \in \bN$, we can estimate the first term in (\ref{eq:cE410j}) by 
\[ \begin{split}|\langle \xi, \cE^{(4)}_{101} \xi \rangle | & \leq C N^{1-\a}\| (\cN_+ + 1)^{1/2} \xi \|^2\,. \end{split} \]
		Using (\ref{eq:VetaN}) and \eqref{eq:d-bds} we also find 
		\[ \begin{split} |\langle \xi, \cE_{102}^{(4)} \xi \rangle | &\leq C N^{1-2\a} \| (\cN_+ + 1)^{1/2} \xi \|^2 \,.\end{split} \]
For the third term in (\ref{eq:cE410j}) we switch to position space and use \eqref{eq:dxy-bds}:
\[ \begin{split} 
		|\langle \xi,  \cE_{103}^{(4)} \xi \rangle | 
		&\leq \frac 12 \int dx dy e^{2N} V(e^N (x-y)) |\check \eta(x-y)|  \\
		& \hskip 0.5cm \times\int_0^1 ds \, \| (\cN +1)^{-1/2} \check{d}_y b^*(\check{\s}_x^{(s)})\xi\| \| (\cN+1)^{1/2}\xi\| \\
		&  \leq C \| \check \eta \|_\io \| \eta\| \int dx dy e^{2N} V(e^N (x-y))  \| (\cN_++1)^{1/2}\xi\|\int_0^1ds  \\
		& \hskip 0.5cm \times \Big[ \| b^*(\check{\s}^{(s)}_x) \xi\| + \frac 1 N |\check{\eta}^{(s)}(x-y)| \|(\cN+1)^{1/2}\xi\| + \frac{1}{\sqrt N} \| b^*(\check{\s}^{(s)}_x) \check{b}_y \xi\| \Big]  \\
			&\leq C N^{1-\a}\| (\cN_++1)^{1/2} \xi \|^2 \,. \end{split} \]
		Consider now the fourth term in (\ref{eq:cE410j}). We write $\cE_{104}^{(4)} = \cE_{1041}^{(4)} + \cE_{1042}^{(4)}$, with
		\[ \begin{split} \cE_{1041}^{(4)} &= \frac{1}{2} \sum_{q \in \Lambda_+^* , r \in \Lambda^* : r \not = -q} \widehat{V} (r/e^N) \eta_{q+r} \int_0^1 ds \, (\g_q^{(s)} -1) d^{(s)}_q b_{-q} \\
		\cE_{1042}^{(4)} &= \frac{1}{2} \sum_{q \in \Lambda_+^* , r \in \Lambda^* : r \not = -q} \widehat{V} (r/e^N) \eta_{q+r} \int_0^1 ds  \, d^{(s)}_q b_{-q} 
		\end{split} \]
		With $|\g_q^{(s)} - 1| \leq C |\eta_q|^2$, (\ref{eq:VetaN}) and $\|d^*_q \xi \| \leq C \| \eta \| \|(\cN_++1)^{1/2}\xi \|$, we  find 
		\[|  \langle \xi , \cE_{1041}^{(4)} \xi \rangle|  \leq C N^{1-3\a}\| (\cN_+ + 1)^{1/2} \xi \|^2 \]  
		As for $\cE_{1042}^{(4)}$, we switch to position space. Using (\ref{eq:etax}) and (\ref{eq:dxy-bds}), we obtain 
		\begin{equation*} 
		\begin{split}
		&| \langle \xi , \cE_{1042}^{(4)} \xi \rangle | \\
		&\quad= \Big|\frac{1}{2} \int_0^1 ds \int_{\Lambda^2} dx dy \,e^{2N} V(e^N(x-y)) \check{\eta} (x-y) \langle \xi, \check{d}^{(s)}_x \check{b}_y \xi \rangle \Big| \\
		&\quad\leq CN \int_0^1 ds \int_{\Lambda^2} dx dy \,e^{2N} V(e^N(x-y)) \| (\cN_+ + 1)^{1/2} \xi \| \| (\cN_+ + 1)^{-1/2} \check{d}^{(s)}_x \check{b}_y \xi \| \\ 
		&\quad\leq C N\| \eta \|  \int_0^1 ds \int_{\Lambda^2} dx dy \,e^{2N} V(e^N(x-y)) \| (\cN_+ + 1)^{1/2} \xi \|  \\ &\hspace{7cm} \times N^{-1}\left[ \| \check{a}_y \cN_+ \xi \| + \| \check{a}_x \check{a}_y \cN_+^{1/2} \xi \| \right] \\ 
		&\quad\leq C N^{1-\a} \| (\cN_+ + 1)^{1/2} \xi \|^2 + CN^{1/2-\a} \| (\cN_+ + 1)^{1/2} \xi \| \| \cV_N^{1/2} \xi \| 
		\end{split} \end{equation*} 
Let us consider the last term in (\ref{eq:cE410j}). Switching to position space and using (\ref{eq:ddxy}) in Lemma \ref{lm:dp} and again (\ref{eq:etax}), we arrive at 		
\[ \begin{split} 
		&|\langle \xi , \cE_{105}^{(4)} \xi \rangle | \\
		&\leq C N\int_{\L^2} dx dy \, e^{2N} V(e^N(x-y))  \| (\cN_+ + 1)^{1/2} \xi \| \int_0^1 ds  \| (\cN_+ + 1)^{-1/2} \check{d}^{(s)}_x \check{d}^{(s)}_y \xi \| \\ &\leq CN \| \eta \|  \| (\cN_+ + 1)^{1/2} \xi \| \int_{\L^2} dx dy \, e^{2N}V(e^N(x-y)) \\ &\hspace{1cm} \times \left[  \| (\cN_+ + 1)^{1/2} \xi \| + \| \eta\| \| \check{a}_x  \xi \| + \| \eta\| \| \check{a}_y  \xi \| + N^{-1/2}\|\eta\| \| \check{a}_x \check{a}_y \xi \| \right] \\ &\leq C N^{1-\a} \| (\cN_++ 1)^{1/2} \xi \|^2  + CN^{1/2-2\a}\| (\cN_+ + 1)^{1/2} \xi \| \| \cV_N^{1/2} \xi \|\,. \end{split} \]
Summarizing, we have shown that (\ref{eq:cE410}) can be bounded by 
		\be \begin{split} \label{eq:E10-fin}
|\langle \xi , \cE_{10}^{(4)}\xi \rangle | \leq &  C N^{1/2-\a}   \| \cV_N^{1/2}\xi \| \| (\cN_+ +1)^{1/2} \xi \|+ C N^{1-\a}  \| (\cN_+ +1)^{1/2} \xi \|^2  
\end{split}\ee
for any $\a > 1$, $\xi \in \cF^{\leq N}_+$. 		
		Next, we come back to the terms $\text{W}_{11}, \text{W}_{12}, \text{W}_{13}$ introduced in (\ref{eq:W1-dec}). Using (\ref{eq:VetaN}) and $|\gamma_q^{(s)} -1| \leq C \eta_q^2$, we can write  
		\begin{equation}\label{eq:W11} \text{W}_{11} =  \frac{1}{2} \sum_{q \in \Lambda_+^* , r \in \Lambda^* : r \not = -q} \widehat{V} (r/e^N) \eta_{q+r} (b_q b_{-q} + \hc ) + \cE_{11}^{(4)}\,, \end{equation}
		where $\cE_{11}^{(4)}$ is such that 
		\[ \begin{split} | \langle \xi , \cE_{11}^{(4)} \xi \rangle | &\leq C N^{1-2\a} \| (\cN_+ + 1) \xi \|^2\,. \end{split} \]
Next, we can decompose the second term in (\ref{eq:W1-dec}) as 
\begin{equation}\label{eq:W12}  \text{W}_{12} = \frac{1}{2} \sum_{q \in \Lambda_+^* , r \in \Lambda^* : r \not = -q} \widehat{V} (r/e^N) \eta_{q+r} \eta_q  \left(1- \frac{\cN_+}{N} \right) + \cE_{12}^{(4)} \end{equation}
where $\pm \cE^{(4)}_{12}  \leq C N^{-\a} \cN_+ + N^{1-3\a}$. 

The third term on the r.h.s. of (\ref{eq:W1-dec}) can be written as  
		\begin{equation}\label{eq:W13} \text{W}_{13} = - \frac{1}{2} \sum_{q \in \Lambda_+^* , r \in \Lambda^* : r \not = -q} \widehat{V} (r/e^N) \eta_{q+r} \eta_q \left(1- \frac{\cN_+}{N} \right) \frac{\cN_+ +1}{N}  + \cE^{(4)}_{13} \end{equation}
		where $\cE^{(4)}_{13} = \cE_{131}^{(4)} +  \cE_{132}^{(4)} + \cE_{133}^{(4)} + \cE_{134}^{(4)}$, with 
		\begin{equation*} 
		\begin{split} 
		\cE^{(4)}_{131} = \; &\frac{1}{2} \sum_{q \in \Lambda_+^* , r \in \Lambda^* : r \not = -q} \widehat{V} (r/e^N) \eta_{q+r}  \int_0^1 ds \, (\g_q^{(s)} -1) b_q d_{-q}^{(s)} +\hc  \\
		\cE^{(4)}_{132} = \; &\frac{1}{2} \sum_{q \in \Lambda_+^* , r \in \Lambda^* : r \not = -q} \widehat{V} (r/e^N) \eta_{q+r} \int_0^1 ds \, b_q \left[ d_{-q}^{(s)} + s \eta_q \frac{\cN_+}{N} b_{q}^* \right] + \hc \\
		\cE^{(4)}_{133} = \;&- \frac{1}{2} \sum_{q \in \Lambda_+^* , r \in \Lambda^* : r \not = -q} \widehat{V} (r/e^N) \eta_{q+r} \eta_q b^*_q  b_{q} \frac{\cN_+ +1}{N} \\
		\cE^{(4)}_{134} = \; & \frac{1}{2N} \sum_{q \in \Lambda_+^* , r \in \Lambda^* : r \not = -q} \widehat{V} (r/e^N) \eta_{q+r} \eta_q a_q^* a_q \frac{\cN_+ +1}{N} \,.
		\end{split} \end{equation*}
With (\ref{eq:VetaN}), we immediately find 
		\[  \pm \cE_{133}^{(4)} \leq C N^{1-\a} (\cN_+ + 1) , \qquad \pm \cE_{134}^{(4)} \leq C N^{-\a}		(\cN_+ + 1)\,.  \]
With $|\g_q^{(s)} -1| \leq C \eta_q^2$, Lemma \ref{lm:dp} and, again, (\ref{eq:VetaN}), we also obtain  
		\[  \begin{split} |\langle \xi , \cE_{131}^{(4)} \xi \rangle | &\leq C N^{1-3\a} \| (\cN_+ + 1)^{1/2} \xi \|^2\,. \end{split} \]
		Let us now consider $\cE_{132}^{(4)}$. In position space, with $\check{\lis{d}}^{(s)}_y = d^{(s)}_y +  (\cN_+ / N) b^* (\check{\eta}_{y})$ and using (\ref{eq:splitdbd}), we obtain 
		\[ \begin{split} 
		|\langle \xi , \cE_{132}^{(4)} \xi \rangle |  &= \Big| \frac 1 2 \int_0^1 ds \int_{\L^2} dx dy\, e^{2N} V(e^N(x-y)) \check{\eta} (x-y) \langle \xi, \check{b}_x \check{\lis{d}}^{(s)}_y \xi \rangle \Big| \\ &\leq C N^{1-\a} 
		\int_{\L^2}  dx dy \,e^{2N} V(e^N(x-y)) \| (\cN_+ + 1)^{1/2} \xi \| \\ &\hspace{1cm} \times \left[  \| (\cN_+ + 1)^{1/2} \xi \|  + \| \check{a}_y\xi \|  + \| \check{a}_x \xi \| +N^{-1} \| \check{a}_x \check{a}_y \cN_+^{1/2} \xi \| \right] \\
		&\leq C N^{1-\a}\| (\cN_+ + 1)^{1/2} \xi \|^2 + C N^{1/2-\a} \| (\cN_+ + 1)^{1/2} \xi \| \| \cV_N^{1/2} \xi \|\,. \end{split} \]
It follows that \[ | \langle \xi, \cE_{13}^{(4)} \rangle | \leq C  N^{1/2 -\a} \|\cV_N^{1/2} \xi \| \| (\cN_++1)^{1/2}\xi \| +CN^{1-\a}\| (\cN_++1)^{1/2}\xi \|^2 . \] With \eqref{eq:E10-fin}, (\ref{eq:W11}), (\ref{eq:W12}), (\ref{eq:W13}), we obtain 
		\begin{equation}\label{eq:W1f} \begin{split}  \text{W}_1 = \; &\frac{1}{2} \sum_{q \in \L_+^*, r \in \L^* : r \not = -q} \widehat{V} (r/e^N) \eta_{q+r}  \big(b_q b_{-q} + \hc \big) \\ &+\frac{1}{2} \sum_{q \in \Lambda_+^* , r \in \Lambda^* : r \not = -q} \widehat{V} (r/e^N) \eta_{q+r} \eta_q  \left(1- \frac{\cN_+}{N} \right) \left(1- \frac{\cN_+ + 1}{ N} \right) + \cE^{(4)}_1 \end{split} \end{equation}
where
		\[ 
| \langle \xi, \cE_{1}^{(4)}\xi\rangle | \leq C  N^{1/2 -\a} \|\cV_N^{1/2} \xi \| \| (\cN_++1)^{1/2}\xi \| +CN^{1-\a}\| (\cN_++1)^{1/2}\xi \|^2\,,
\]
		
Next, we control the term $\text{W}_2$, from (\ref{eq:defW}). In position space, we find
		\begin{equation*} \begin{split} 
		\text{W}_2 = \int_{\Lambda^2} dx dy\, e^{2N} V(e^N(x-y))  \int_0^1  ds \big( e^{-sB} \check{b}^*_x \check{b}^*_y  e^{s B} 
		a^* (\check{\eta}_{x}) \check{a}_y  + \text{h.c.}   \big)
		\end{split} 
		\end{equation*}
		with $\check{\eta}_{x} (z) = \check{\eta} (x-z)$. By Cauchy-Schwarz, we have
		\[ \begin{split} 
		|\langle  \xi, \text{W}_2 \xi \rangle | &\leq \int_{\Lambda^2} dx dy \, e^{2N}V(e^N(x-y))  \int_0^1  ds  \\ &\hspace{1cm} \times  \| (\cN_+ +1)^{1/2} e^{-sB} \check{b}_x \check{b}_y  e^{s B}  \xi \| \| (\cN_+ +1)^{-1/2} a^* (\check{\eta}_{x}) \check{a}_y \xi \| \,.
		\end{split} \]
		With 
		\[ \| (\cN_+ +1)^{-1/2} a^* (\check{\eta}_{x}) \check{a}_y \xi \| \leq C \|  \eta \| \| \check{a}_y \xi \| \leq C N^{-\a} \| \check{a}_y \xi \|  \]
		and using Lemma \ref{lm:prel4}, we obtain 
		\be \begin{split}  \label{eq:W2end}
			|\langle \xi, \text{W}_2 \xi \rangle |&\leq  C N^{-\a} \int_{\Lambda^2}  dx dy \, e^{2N} V(e^N(x-y))  \| \check{a}_y \xi \| \\ &\hspace{1cm} \times \Big\{ N \| (\cN_+ +1)^{1/2} \xi \| + N  \| \check{a}_x \xi \| + N \| \check{a}_y \xi \| + N^{1/2}  \| \check{a}_x \check{a}_y \xi \| \Big\}  \\ 
			& \leq C N^{1-\a} \,  \| (\cN_+ +1)^{1/2} \xi \| ^2+CN^{1/2-\a}\| ( \cN_+  + 1 )^{1/2} \xi \| \|\cV_N^{1/2}\xi\|\,. \end{split} \ee
		Also for the term $\text{W}_3$ in (\ref{eq:defW}), we switch to position space. We find
		\begin{equation*}
		\begin{split}  \text{W}_3 = &\;  \int_{\L^2}  dx dy \, e^{2N} V(e^N(x-y))  \\
		&  \times \int_0^1 ds\, \int_0^s d \t \, \big( e^{-sB} \check{b}^*_x \check{b}^*_y e^{sB} \, e^{- \t B} b^*(\check{ \eta}_{x}) b^* (\check{\eta}_{y}) e^{\t B} + \text{h.c.} \big) \,.\end{split} \end{equation*}
		and thus 
		\[ \begin{split} & \left|\langle \xi, \text{W}_3 \xi \rangle \right|  \leq  \int_{\L^2} dx dy \, e^{2N} V(e^N(x-y))  \int_0^1 ds\, \int_0^s d \t \, \| (\cN_+ +1)^{1/2} e^{-sB} \check{b}_x \check{b}_y e^{sB} \xi \| \\ &\hspace{5cm} \times  \|  (\cN_+ +1)^{-1/2}  e^{- \t B}  b^*(\check{ \eta}_{x})\, b^* (\check{ \eta}_{y}) e^{ \t B} \xi \|\,. 
		\end{split} \]
		With Lemma \ref{lm:Ngrow}, we find 
		\begin{equation*}
		\begin{split} 
		\| (\cN_+ +1)^{-1/2}  e^{- \t B}  b^*(\check{ \eta}_{x})\, b^*(\check{ \eta}_{y}) e^{ \t B} \xi \| &  \leq C \| \eta\|^2 \| (\cN_+ +1)^{1/2} \xi \|  \,.
		\end{split} \end{equation*}
		Using Lemma \ref{lm:prel4}, we conclude that 
		\be \begin{split}  \label{eq:W3end}
			|\langle \xi, \text{W}_3 \xi \rangle |&\leq  C \|\eta\|^2\, \int_{\Lambda^2}  dx dy \, e^{2N} V(e^N(x-y))  \| (\cN_+ +1)^{1/2}\xi \| \\ &\hspace{1cm} \times \Big\{ N \| (\cN_+ +1)^{1/2} \xi \| + N  \| \check{a}_x \xi \| + N \| \check{a}_y \xi \| + N^{1/2}  \| \check{a}_x \check{a}_y \xi \| \Big\}  \\ 
			& \leq C N^{1-2\a} \,  \| (\cN_+ +1)^{1/2} \xi \|^2 +CN^{1/2-2\a}\| \cV_N^{1/2} \xi \| \| (\cN_+ +1)^{1/2} \xi \|.\end{split} \ee
		The term $\text{W}_4$ in (\ref{eq:defW}) can be bounded similarly. In position space, we find 
		\begin{equation*} 
		\begin{split} \text{W}_4 = \; & \int dxdy \, e^{2N} V(e^N(x-y)) \\
		& \times\int_0^1 ds \int_0^s d \t \,  \big( e^{-sB} \check{b}^*_x \check{b}^*_y  \, e^{sB} \,  e^{-\t B} b(\check{\eta^2_x}) \check{b}_y e^{\t B} + \text{h.c.} \big) 
		\end{split} \end{equation*}
		with $\check{\eta^2}$ the function with Fourier coefficients $\eta^2_q$, for $q \in \L^*$, and where $\check{\eta^2_x}(y) := \check{\eta^2}(x-y)$. Clearly  $\| \check{\eta^2_{x}} \|  \leq C \| \check{\eta}\|^2 \leq C N^{-2\a}$. With Cauchy-Schwarz and Lemma \ref{lm:Ngrow}, we obtain 
		\[ \begin{split} |\langle \xi , \text{W}_4 \xi \rangle | \leq \; & C N^{-2\a} \int_0^1 ds \int_0^s d\tau \int dx dy \, e^{2N} V(e^N(x-y)) \\ &\hspace{2cm} \times  \| (\cN_+ + 1)^{1/2} \check{b}_y \check{b}_x e^{sB} \xi \|  \| \check{b}_y e^{\tau B} \xi \|\,. \end{split} \] 
		Applying Lemma \ref{lm:prel4} and then Lemma \ref{lm:Ngrow}, we obtain 
		\[ \begin{split} 
		|\langle \xi , \text{W}_4 \xi \rangle | \leq &\; C N^{-2\a} \int_0^1 ds \int_0^s d\tau \int dx dy\, e^{2N} V(e^N(x-y))
		\| \check{b}_y e^{\tau B} \xi \| \\ &\hspace{1cm} \times \left\{  N \| (\cN_+ +1)^{1/2} \xi \| + N \| \check{a}_x \xi \| + N \| \check{a}_y \xi \|  + N^{1/2} \| \check{a}_x \check{a}_y \xi \| \right\} 
		\\ \leq \; &C  N^{1-2\a } \| (\cN_+ + 1)^{1/2} \xi \|^2  +CN^{1/2-2\a} \| \cV_N^{1/2} \xi \| \| (\cN_+ + 1)^{1/2} \xi \|\,.
		\end{split} \]
From (\ref{eq:W1f}), (\ref{eq:W2end}), (\ref{eq:W3end}) and the last bound, we conclude that 
		\[ \begin{split} \cG^{(4)}_{N,\a} = \; &\cV_N + \frac{1}{2} \sum_{q \in \L_+^*, r \in \L^* : r \not = -q} \widehat{V} (r/e^N) \eta_{q+r}  \big(b_q b_{-q} + \hc \big) \\ &+\frac{1}{2} \sum_{q \in \Lambda_+^* , r \in \Lambda^* : r \not = -q} \widehat{V} (r/e^N) \eta_{q+r} \eta_q  \left(1- \frac{\cN_+}{N} \right) \left(1- \frac{\cN_+ + 1}{ N} \right) + \cE^{(4)}_{N,\a} \end{split} \] 
		where $\cE_{N,\a}^{(4)}$ satisfies \eqref{eq:E4bound2}. 

\end{proof}\subsection{Proof of Proposition \ref{prop:GN}}
\label{sub:proofGN}

With the results established in Subsections \ref{sub:G0} - \ref{sub:G4}, we cam now show Prop. \ref{prop:GN}. Propositions \ref{prop:G0}, \ref{prop:K}, \ref{prop:G2V}, \ref{prop:GN-3}, \ref{prop:GN-4}, imply that 
\begin{equation} \begin{split} \label{eq:proofGNell-1}
& \cG_{N,\a} =    \frac{\widehat{V} (0)}{2}\, (N +\cN_+ -1) \, (N-\cN_+)\\
& \hskip 0.2cm+ \sum_{p \in \L^*_+} \eta_p \Big[p^2 \eta_p + N\widehat{V} (p/e^N) + \frac 12\sum_{\substack{r \in \L^*\\ p+r \neq 0}} \widehat{V} (r/e^N)  \eta_{p+r}\Big]\Big(\frac{N-\cN_+}{N}\Big) \Big(\frac{N-\cN_+ -1}{N}\Big)\\
&  \hskip 0.2cm+\cK +N\sum_{p \in \Lambda^*_+}  \widehat{V} (p/e^N) a^*_pa_p\Big(1-\fra{\cN_+}{N}\Big) \\
&  \hskip 0.2cm+ \sum_{p \in \L^*_+} \Big[\; p^2 \eta_p + \frac N2 \widehat{V} (p/e^N) + \frac 12 \sum_{r \in \L^*:\; p+r \neq 0}\hskip -0.5cm \widehat{V} (r/e^N)  \eta_{p+r} \; \Big]  \big( b^*_p b^*_{-p} + b_p b_{-p} \big) \\
&  \hskip 0.2cm+ \sqrt{N} \sum_{p,q \in \L^*_+ :\, p + q \not = 0} \widehat{V} (p/e^N) \left[ b_{p+q}^* a_{-p}^* a_q  + \hc \right]  +\cV_N   + \cE_{1}
\end{split} \end{equation}
where 
\begin{equation*}
| \langle \xi, \cE_1  \xi \rangle | \leq  C N^{1/2 -\a} \| \cH_N^{1/2}\xi\| \| (\cN_++1)^{1/2}\xi \| + C N^{1-\a}  \| (\cN_++1)^{1/2}\xi \|^2 
\end{equation*}
for any $\a >1$ and  $\xi \in \cF^{\leq N}_+$. With \eqref{eq:eta-scat}, we find 
\begin{equation*} 
\begin{split} 
&\sum_{p \in \L^*_+}  \eta_p \Big[p^2 \eta_p + N\widehat{V} (p/e^N) + \frac 1 {2} \sum_{r \in \L^*:\; p+r \neq 0}\hskip -0.5cm \widehat{V} (r/e^N)  \eta_{p+r}\Big] \\
& = \sum_{p \in \L^*_+} \eta_p \Big[ \;\frac N 2 \widehat{V} (p/e^N)  +Ne^{2N} \l_\ell \widehat \chi_\ell(p) +e^{2N} \l_\ell \sum_{q \in \L^*} \widehat \chi_\ell(p-q) \eta_q - \frac 12 \widehat V(p/e^N) \eta_0\;\Big] 
\end{split} \end{equation*}
From Lemma  \ref{lm:propomega} and estimating  $\| \widehat \chi_\ell\|= \| \chi_\ell\|  \leq C N^{-\a}$, $\|\eta\| \leq C N^{-\a}$ and $\| \widehat{\chi}_\ell * \eta \| = \| \chi_\ell \check{\eta} \| \leq \| \check{\eta} \| 
\leq C N^{-\a}$, we have 
\[
\begin{split}
\Big | Ne^{2N} \l_\ell \sum_{p \in \L^*_+} \eta_p \widehat \chi_\ell(p) \Big | 
\leq CN^{2\a}\|\widehat{\chi}_\ell\|\|\eta\| \leq C,
\end{split}\]
and
\[
\begin{split}
\Big| e^{2N} \l_\ell \sum_{\substack{p \in \L^*_+,\, q \in \L^*}} \widehat \chi_\ell(p-q) \eta_q\eta_p \Big| 
& \leq C N^{2\a-1} \| \hat \chi_\ell \ast \eta \|\| \eta\| \leq CN^{-1}.
\end{split}\]
Moreover, using \eqref{eq:VetaN} and the bound  \eqref{eq:wteta0} we find 
\[
\Big| \frac 12 \sum_{p \in \L^*_+} \widehat V(p/e^N) \eta_p \eta_0 \Big| \leq C N^{1-2\a}\,.
\]
We obtain 
\begin{equation*} \begin{split} 
 & \sum_{p \in \L^*_+} \eta_p  \Big[p^2 \eta_p +  N\widehat{V} (p/e^N) + \frac 1 {2} \sum_{\substack{r \in \L^*\\ p+r \in \L^*_+}} \widehat{V} (r/e^N)  \eta_{p+r}\Big]\Big(\frac{N-\cN_+}{N}\Big) \Big(\frac{N-\cN_+ -1}{N}\Big) \\ 
& \qquad  = 
\frac{N}{2} \sum_{p \in \L^*_+} \widehat{V} (p/e^N) \eta_p \left( \frac{N-\cN_+}{N} \right) \left( \frac{N-\cN_+ - 1}{N} \right) +\cE_2 \end{split} \end{equation*}
with $\pm \cE_2  \leq C $ for all $\a \geq 1/2$. 
On the other hand, using \eqref{eq:wteta0} we have
\[ \begin{split}
\frac N 2 \sum_{p \in \L^*_+}  \widehat{V} (p/e^N) \eta_p =\; & \frac N 2  \big( \widehat V(\cdot / e^N) \ast \eta \big)(0) - \frac N 2  \widehat{V} (0) \eta_0 \\
=\; & \frac{N^2}2 \Big( \int dx V(x) f_{\ell}(x) -  \widehat V(0) \Big) + \tl \cE_2
\end{split}\]
with $\pm \tl \cE_2 \leq C N^{1-2\a}$. 
With the first bound in \eqref{eq:omegahat0} we conclude that
\begin{equation}\label{eq:1lineG} \begin{split} 
 & \sum_{p \in \L^*_+} \eta_p  \Big[p^2 \eta_p + N\widehat{V} (p/e^N) + \frac 1 {2} \sum_{\substack{r \in \L^*\\ p+r \in \L^*_+}} \widehat{V} (r/e^N)  \eta_{p+r}\Big]\Big(\frac{N-\cN_+}{N}\Big) \Big(\frac{N-\cN_+ -1}{N}\Big)  \\ 
& = \frac 1 {2N}  \left[ \widehat{\o}_N(0) - N\widehat{V} (0) \right] (N-\cN_+ -1) \left( N-\cN_+ \right)   + \cE_3 \end{split} \end{equation}
where $\pm \cE_3 \leq C$, if $\alpha \geq 1/2$. Using \eqref{eq:eta-scat}, we can also handle the fourth line of \eqref{eq:proofGNell-1}; we find 
\begin{equation}\begin{split} \label{eq:proofGNellQ}
& \sum_{p \in \L^*_+} \Big[\, p^2 \eta_p + \frac N2 \widehat{V} (p/e^N) + \frac 12 \sum_{r \in \L^*:\; p+r \in \L^*_+}\hskip -0.5cm \widehat{V} (r/e^N)  \eta_{p+r} \, \Big]  \big( b^*_p b^*_{-p} + b_p b_{-p} \big) \\
& \quad = \sum_{p \in \L^*_+} \Big[ Ne^{2N} \lambda_\ell \widehat{\chi}_\ell (p) +e^{2N} \lambda_\ell \sum_{q \in \Lambda^*} \widehat{\chi}_\ell (p-q) \eta_q - \frac 1 2 \widehat{V} (p/e^N)  \eta_{0}  \Big] \big( b^*_p b^*_{-p} + b_p b_{-p} \big)\,.
\end{split} \end{equation}
The last two terms on the right hand side of \eqref{eq:proofGNellQ} are error terms. With \eqref{eq:wteta0} and \eqref{eq:VetaN} we have 
\[ \begin{split}
\Big| \sum_{p \in \L^*_+}&\widehat{V} (p/e^N)  \eta_{0} \big( b^*_p b^*_{-p} + b_p b_{-p} \big) \Big| \\
& \leq  C N^{-2\a}  \bigg[  \sum_{p \in \L^*_+} \frac{|\widehat{V} (p/e^N)|^2}{p^2} \bigg]^{1/2}  \bigg[  \sum_{p \in \L^*_+} p^2 \| a_p \xi \|^2 \bigg]^{1/2}  \|(\cN_++1)^{1/2} \xi \| \\
& \leq C N^{1/2 - 2\a} \| \cK^{1/2}\xi\|  \|(\cN_++1)^{1/2} \xi \|\,.
\end{split}\]
The second term on the right hand side of \eqref{eq:proofGNellQ} can be bounded in position space: 
 \[\begin{split}
    & \Big| \langle \xi,\; e^{2N} \lambda_\ell \sum_{p\in \L^*_+} (\widehat{\chi}_\ell * \eta)(p) (b_p^* b_{-p}^* + b_p b_{-p}) \xi \rangle \Big|  \\
     &\quad\leq CN^{2\a-1} \| (\cN_+ + 1)^{1/2} \xi \| \int_{\L^2} dxdy \,\chi_\ell(x-y) |\check{\eta}(x-y)|\|(\cN_+ + 1)^{-1/2}\check{b}_x\check{b}_y\xi\|\\ 
     &\quad\leq CN^{\a-1}\|(\cN_+ + 1)^{1/2} \xi \| \left[\int_{\L^2} dxdy \, \chi_\ell(x-y) \|(\cN_+ + 1)^{-1/2}\check{a}_x\check{a}_y\xi\|^2\right]^{1/2}.
 \end{split}\]
 The term in parenthesis can be bounded similarly as in \eqref{eq:first}. Namely,
 \[
 \int_{\L^2} dxdy \, \chi_\ell(x-y) \|(\cN_+ + 1)^{-1/2}\check{a}_x\check{a}_y\xi\|^2 \leq Cq N^{-2\a/q'}\|\cK^{1/2}\xi\|^2
 \]
 for any $q >2$ and $1 < q' < 2$ with $1/q +1/q' =1$. Choosing $q =\log N$, we get
    \[ \begin{split}
 \Big| \langle \xi,e^{2N} \lambda_\ell &\sum_{p\in \L^*_+} (\widehat{\chi}_\ell * \eta)(p) (b_p^* b_{-p}^* + b_p b_{-p}) \xi \rangle \Big|   \\
&\leq C N^{-1} (\log N)^{1/2}  \| (\cN_+ + 1)^{1/2} \xi \| \|\cK^{1/2}\xi\| \,,
\end{split}\]

and, from (\ref{eq:proofGNellQ}), we conclude that 
\begin{equation} \label{eq:2lineG}
\begin{split}
\sum_{p \in \L^*_+} &\Big[p^2 \eta_p + \frac N2 \widehat{V} (p/e^N) + \frac 1 {2} \sum_{\substack{r \in \L^* : \\ p+r \in \L^*_+}}  \widehat{V} (r/e^N)  \eta_{p+r} \Big]  \big( b^*_p b^*_{-p} + b_p b_{-p} \big) \\ &= \sum_{p \in \L^*_+} Ne^{2N} \lambda_\ell \widehat{\chi}_\ell (p)\big( b^*_p b^*_{-p} + b_p b_{-p} \big) + \cE_4\,,
\end{split}\end{equation}
with \[ | \langle \xi, \cE_4  \xi \rangle| \leq C  N^{-1} (\log N)^{1/2}  \|(\cN_++1)^{1/2}\xi\| \| \cK^{1/2}\xi \|.\] 
if $\alpha > 1$. 
Combining (\ref{eq:proofGNell-1}) with \eqref{eq:1lineG} and (\ref{eq:2lineG}), and using the definition \eqref{eq:defomegaN} we conclude that 
\begin{equation} \begin{split}  \label{eq:proofGNell-2}
\cG_{N,\a} = \; &   \frac 1 2 \,\widehat{\o}_N(0) (N-1)\Big(1-\frac{\cN_+}{N}\Big) + \Big[N \widehat{V} (0)- \frac 1 2 \,\widehat{\o}_N(0) \Big]\, \cN_+\Big(1-\frac{\cN_+}{N}\Big)\\
&+ N \sum_{p \in \Lambda^*_+}  \widehat{V} (p/e^N) a^*_pa_p \left(1- \frac{\cN_+}{N} \right) + \frac 1 2  \sum_{ p\in  \L_+^*}\widehat{\o}_N(p)(b_pb_{-p}+\hc)\\
& + \sqrt{N} \sum_{p,q \in \L^*_+ : p + q \not = 0} \widehat{V} (p/e^N) \left[ b_{p+q}^* a_{-p}^* a_q  + \hc \right]  \\
& +\cK  +\cV_N  + \cE_5 \,,
\end{split} \end{equation} 
with
\[\begin{split}
| \langle \xi, \cE_5  \xi \rangle|   \leq \; &C N^{1/2 -\a} \| \cH_N^{1/2}\xi\| \| (\cN_++1)^{1/2}\xi \| + C N^{1-\a}  \| (\cN_++1)^{1/2}\xi \|^2 \\
&+ C  N^{-1} (\log N)^{1/2}  \| \cK^{1/2}\xi \| \|(\cN_++1)^{1/2}\xi\|  + C \| \xi \|^2  \,,
\end{split}
\]
for any $\a> 1$. Observing that $|\widehat{V} (p/e^N) - \widehat{V} (0)| \leq C |p| e^{-N}$ in the second line on the r.h.s. of (\ref{eq:proofGNell-2}), we arrive at $\cG_{N,\a} = \cG^\text{eff}_{N,\a} + \cE_{\cG}$, with $\cG^\text{eff}_{N,\a}$ defined as in (\ref{eq:GNeff}) and with $\cE_{\cG}$ that  satisfies \eqref{eq:GeffE}.

\section{Properties of the Scattering Function} \label{App:omega}

Let $V$ be a potential with finite range $R_0>0$ and scattering length $\aa$. For a fixed $R>R_0$, we study properties of the ground state $f_R$ of the Neumann problem
\be \label{tlf2}
\Big( -\D + \frac {1}{2} V(x) \Big) f_{R}(x) =  \l_{R}\,  f_{R}(x)
\ee
on the ball $|x|\leq R$, normalized so that $f_{R}(x)=1$ for $|x|=R$.  Lemma \ref{lm:propomega}, parts i)-iv), follows by setting $R=e^N \ell$ in the following lemma. 

\begin{lemma} \label{lm:appA}
	Let $V\in L^3(\bR^2)$ be non-negative, compactly supported and spherically symmetric, and denote its scattering length by $\aa$. Fix $R>0$ sufficiently large and denote by $f_{R}$ the Neumann ground state of  \eqref{tlf2}. Set  $w_{R}=1 -  f_{R}$. Then we have 
\[  \label{eq:bounds-fR}
0\leq f_R(x)\leq 1
\]
Moreover, for $R$ large enough there is a constant $C>0$ independent of $R$ such that
		\be \label{eq:eigenvalue2}
		\left | \,\l_{R} - \frac{2}{R^2 \log(R/\aa)}  \left( 1 + \frac{3}{4}\fra{1}{\log(R/\aa)} \right) \,\right | \leq  \frac{C}{R^2} \frac{1}{\log^3(R/\aa)}  \,.
		\ee
and
		\be \label{eq:intpotf2}  
		\left| 
		\int \di x\, V(x) f_{R}(x) - \frac{4\pi}{\log(R/\aa)}  \right| \leq  \frac{C}{\log^2 (R/\aa)}   \,.
		\ee
Finally, there exists a constant $C>0$ such that 
		\be \begin{split} \label{eq:w-bounds}
		|w_{R}(x)| &\leq   \chi(|x|\leq R_0)+ C\, \frac{\log(|x|/R) }{\log (\aa/R)} \,\chi(R_0 \leq  |x| \leq R  )  \\[0.2cm]
|\nabla w_{R}(x)| & \leq \frac{C}{\log (R /\aa)} \frac { \chi(|x|\leq R)} {|x| + 1}
		\end{split} \ee
for $R$ large enough.
\end{lemma}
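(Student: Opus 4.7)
The plan is to exploit radial symmetry and an explicit Bessel--function representation of $f_R$ outside the support of $V$, combined with a comparison to the scattering solution $\phi^{(R)}$ from (\ref{eq:defa0}), which on $R_0 < r \leq R$ takes the form $\phi^{(R)}(r) = \log(r/\aa)/\log(R/\aa)$. For (i), $f_R \geq 0$ follows by noting that $|f_R|$ is also admissible in the Rayleigh quotient, combined with uniqueness of the Neumann ground state. For $f_R \leq 1$, I would set $\tilde f_R := \min(f_R, 1)$: since $V \geq 0$, neither the kinetic nor the potential energy increases under the truncation, while $\tilde f_R = f_R$ on $\partial B_R$, so optimality forces $f_R = \tilde f_R$ a.e.

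For (iv), since $V$ vanishes outside $B_{R_0}$, in the annulus $R_0 < r < R$ the radial eigenvalue equation $-f_R'' - r^{-1} f_R' = \lambda_R f_R$ has the explicit solution
\[
f_R(r) = A\, J_0(\sqrt{\lambda_R}\, r) + B\, Y_0(\sqrt{\lambda_R}\, r),
\]
with $A, B$ determined by $f_R(R) = 1$ and the Neumann condition $f_R'(R) = 0$. For $\lambda_R R^2$ small, the expansions $J_0(x) = 1 + O(x^2)$ and $Y_0(x) = (2/\pi)(\log(x/2)+\gamma) + O(x^2 \log x)$ show that $f_R$ agrees with $\phi^{(R)}$ in the annulus up to relative error $O(1/\log(R/\aa))$, giving the stated logarithmic bound on $w_R$ for $R_0 \leq |x| \leq R$; inside $B_{R_0}$, the pointwise bound from (i) yields $|w_R| \leq 1$. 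The gradient estimate follows by differentiating this representation in the annulus and by applying standard elliptic regularity to $(-\Delta+V/2)f_R = \lambda_R f_R$ near the origin (using $V \in L^3$).

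For (ii) and (iii), I would use integration of the eigenvalue equation together with the Neumann BC to deduce $\tfrac{1}{2} \int V f_R = \lambda_R \int_{B_R} f_R$, while (iv) and (i) give $\int_{B_R} f_R = \pi R^2 (1 + O(1/\log(R/\aa)))$. The leading asymptotic $\lambda_R \sim 2/(R^2 \log(R/\aa))$ in (\ref{eq:eigenvalue2}) follows from a two-sided variational estimate: a Neumann-adapted modification of $\phi^{(R)}$ as trial function for the upper bound, and a comparison of $f_R$ to $\phi^{(R)}$ via (\ref{eq:defa0}) for the lower bound. The subleading $3/(4\log(R/\aa))$ then falls out of expanding the Bessel outer solution to the next order in $x = \sqrt{\lambda_R}\, R$ and matching the logarithmic derivative at $r = R_0$ to the inner scattering solution, whose derivative is pinned by $\aa$. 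Inserting into the integral identity produces (\ref{eq:intpotf2}).

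The main obstacle will be the precise matching needed for the $3/4$ constant in (ii): one must consistently track two layers of logarithmic corrections in both inner and outer regions, since the naive trial function $\phi^{(R)}$ alone produces a correction of size $1/\log(R/\aa)$ rather than the smaller $3/(4\log(R/\aa))$ that reflects the Neumann (rather than Dirichlet) boundary condition. I would use the variational formula (\ref{eq:defa0}) for $\aa$ rather than pointwise derivatives at $r = R_0$ to sidestep the limited regularity of $f_R$ across the support boundary of $V$.
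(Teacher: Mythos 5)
Your general architecture --- Bessel representation in the annulus $R_0 < |x| < R$, regularity arguments inside, the integration identity $\int V f_R = 2\lambda_R\int f_R$ for (iii), a variational two-sided bound for $\lambda_R$ --- agrees with the paper's, but there are two concrete gaps. First, the truncation argument for $f_R\le 1$ does not go through: setting $\tilde f_R=\min(f_R,1)$ decreases both the numerator $\int|\nabla u|^2+\tfrac12 V u^2$ and the denominator $\int u^2$ of the Rayleigh quotient, so you cannot conclude $\mathcal R(\tilde f_R)\le\mathcal R(f_R)$, and hence optimality does not force $f_R=\tilde f_R$. The paper instead shows $f_R'(r)>0$ on an interval $(R_0,R/2)$ from the explicit Bessel representation, uses that the Neumann condition means the first zero of $f_R'$ is at $r=R$, so $f_R$ is increasing in the annulus and hence $\le f_R(R)=1$ there; inside $B_{R_0}$ it needs a separate bootstrap (elliptic regularity to get a crude $L^\infty$ bound, then $\Delta f_R \ge -\lambda_R(1+C\|f_R\|_2)$ and the maximum principle, followed by a self-improving estimate on $\|f_R\|_2$).

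Second, and more seriously, the proposal does not contain the device that actually produces the $3/4/\log(R/\aa)$ coefficient with error $O(\log^{-3}(R/\aa))$. You correctly observe that a naive modification of $\phi^{(R)}$ only controls the $O(1/\log)$ term, and that the Bessel outer solution with a matching condition encodes the subleading order --- but you do not say how to turn this heuristic matching into a rigorous two-sided bound on $\lambda_R$ in the presence of $V$. The paper's upper bound uses the composite trial function $\Psi_R(x)=\psi_k(m_R(x))$, where $\psi_k$ is the free 2D Helmholtz solution vanishing at $|x|=\aa$ and satisfying the Neumann condition at $|x|=R$ (this is what fixes $(kR)^2=2\log^{-1}(R/\aa)[1+\tfrac34\log^{-1}(R/\aa)]+O(\log^{-3})$), and $m_R(x)=\aa\exp(\log(R/\aa)\phi_R(x))$ is the scattering-solution change of variables that makes $\Psi_R$ coincide with $\psi_k$ in the annulus while absorbing $V$ inside $B_{R_0}$ up to $O(k^2)$ errors. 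For the lower bound the paper writes $f_R=q_R\Psi_R$ (ground-state representation) and expands the energy in $q_R$; this is essential because no direct comparison of $f_R$ with $\phi^{(R)}$ gives the $O(\log^{-3})$ error. Without these two ingredients your ``matching the logarithmic derivative at $r=R_0$'' step remains formal: the inner profile depends on $\lambda_R$ itself and on $V$, and controlling the resulting matching datum to the needed accuracy is exactly what the $m_R$-substitution and the factorization $f_R=q_R\Psi_R$ accomplish.
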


To show Lemma \ref{lm:appA} we adapt to the two dimensional case the strategy used in \cite[Lemma A.1]{ESY0} for the three dimensional problem. We will use some well known properties of the zero energy scattering equation in two dimensions, summarized in the following lemma.

\begin{lemma} 
Let $V \in L^3 (\bR^2)$ non-negative, with $\text{supp } V \subset B_{R_0} (0)$ for an $R_0 > 0$. Let $\aa \leq R_0$ denote the scattering length of $V$. For $R > R_0$, let $\phi_R : \bR^2 \to \bR$ be the radial solution of the zero energy scattering equation 
\be\label{eq:fell}
\left[ -\D + \frac 12 V \right] \phi_R= 0
\ee
normalized such that $\phi_R (x) = 1$ for $|x| = R$. Then 
\be \label{A.3}
\phi_R(x) 
= \frac{\log(|x| / \aa)}{\log(R/\aa)} 
\ee
for all $|x| > R_0$. Moreover, $|x| \to \phi_R (x)$ is monotonically increasing and there exists a constant $C > 0$ (depending only on $V$)  such that 
\begin{equation}\label{eq:low-phi}
\phi_R (x) \geq \phi_R (0) \geq \frac{C}{\log (R/ \aa)} 
\end{equation}
for all $x \in \bR^2$. Furthermore, there exists a constant $C > 0$ such that 
\begin{equation}\label{eq:nablaphi}
 |\nabla \phi_R(x)| \leq \frac {C}{|\log (R/\aa)|} \frac 1 {|x|+1} \end{equation}
for all $x \in \bR^2$. 
\end{lemma}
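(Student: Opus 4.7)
My plan is to prove the three claims in turn. For the explicit formula (A.3): on the annulus $\{R_0 < |x| < R\}$ the potential vanishes and $\phi_R$ is radial and harmonic, hence of the form $A + B\log|x|$. The restriction $\phi_R|_{B_R}$ coincides with the minimizer $\phi^{(R)}$ of the variational problem (\ref{eq:defa0}) defining the scattering length, and the paper's introduction asserts that $\phi^{(R)}(x) = \log(|x|/\aa)/\log(R/\aa)$ on $R_0 < |x| \leq R$. For $|x| > R$ the radial harmonic structure, together with continuity of the trace and normal derivative across $\partial B_R$, extends the same log profile.

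Next, for non-negativity, monotonicity, and the lower bound (\ref{eq:low-phi}): non-negativity is the standard consequence of uniqueness in the variational problem, since replacing $\phi_R$ with $|\phi_R|$ yields a competitor with the same boundary datum and no larger energy (using $V \geq 0$). For monotonicity I write the equation radially, $(r\phi_R'(r))' = \tfrac{1}{2}\,r V(r)\phi_R(r) \geq 0$, and use $\lim_{r\to 0^+} r\phi_R'(r) = 0$ (smoothness at the origin, valid because $V \in L^3_{\mathrm{loc}}$) to integrate from $0$ and conclude $r\phi_R'(r) \geq 0$. For the lower bound on $\phi_R(0)$, I pull out the $R$-dependence by setting $\phi_0 := \log(R/\aa)\phi_R$, which by uniqueness equals $\log(|x|/\aa)$ outside $B_{R_0}$ and is therefore $R$-independent; it then suffices to prove $\phi_0(0) \geq C > 0$. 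The cleanest route is the De Giorgi--Nash--Moser Harnack inequality for non-negative weak solutions of $(-\Delta + V/2)u = 0$, which is applicable since $V \in L^3(B_{2R_0})$ with $3 > d/2$ in dimension $d = 2$: it yields $\phi_0(R_0) = \sup_{B_{R_0}}\phi_0 \leq C \inf_{B_{R_0}}\phi_0 = C\phi_0(0)$ (monotonicity identifies the sup and inf on the ball), hence $\phi_0(0) \geq \log(R_0/\aa)/C$. A self-contained alternative is to divide the radial ODE by $\phi_R$ and use monotonicity to obtain $(\log\phi_R)'(r) \leq (2r)^{-1}\int_0^r s V(s)\,ds \leq C\|V\|_3\, r^{1/3}$ by Hölder, then integrate on $[0,R_0]$ and invoke (\ref{A.3}) at $R_0$.

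Finally, for the gradient bound (\ref{eq:nablaphi}): for $|x| > R_0$ the explicit formula gives $|\nabla\phi_R(x)| = 1/(|x|\log(R/\aa))$, immediately dominated by $C/(\log(R/\aa)(|x|+1))$. For $|x| \leq R_0$ I use the integrated radial equation $r\phi_R'(r) = \tfrac{1}{2}\int_0^r s V(s)\phi_R(s)\,ds \leq \tfrac{1}{2}\phi_R(R_0)\int_0^r s V(s)\,ds$ (monotonicity), then $\phi_R(R_0) \leq C/\log(R/\aa)$ from (\ref{A.3}) combined with the Hölder bound $\int_0^r s V(s)\,ds \leq C\|V\|_3\, r^{4/3}$ give $\phi_R'(r) \leq C r^{1/3}/\log(R/\aa) \leq C/\log(R/\aa)$ on $[0,R_0]$; the claim follows since $1/(|x|+1)$ is bounded below by a positive constant on $B_{R_0}$. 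The main technical obstacle is the quantitative lower bound $\phi_0(0) \geq C > 0$ in the second step; everything else reduces to direct computation from (\ref{A.3}) and the radial ODE, combined with the Hölder estimate exploiting $V \in L^3$. I expect to use Harnack as the most transparent path, but the ODE/Hölder argument is equally viable and avoids invoking De Giorgi--Nash--Moser theory.
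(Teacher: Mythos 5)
Your proposal is correct and follows essentially the same route as the paper: the explicit log profile and non-negativity/monotonicity are obtained by the standard identification with the variational minimizer and the radial ODE, the lower bound $\phi_R(0)\geq C/\log(R/\aa)$ comes from Harnack's inequality on $B_{R_0}$ (applicable since $V\in L^3$ with $3>d/2$) combined with the explicit value at $|x|=R_0$, and the gradient bound reduces to an $L^3$--H\"older estimate on the potential. The only cosmetic difference is that for $|x|\leq R_0$ you bound $|\nabla\phi_R|$ via the integrated radial equation $r\phi_R'(r)=\tfrac12\int_0^r sV(s)\phi_R(s)\,ds$, whereas the paper uses the two-dimensional log-kernel integral representation of $\phi_R$; both hinge on the same estimate $\phi_R\leq \log(R_0/\aa)/\log(R/\aa)$ inside the support of $V$ together with H\"older, so the arguments are equivalent in content.
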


\begin{proof} 
The existence of the solution of (\ref{eq:fell}), the expression (\ref{A.3}), the fact that $\phi_R (x) \geq 0$ and the monotonicity are standard (see, for example, Theorem C.1 and Lemma C.2 in \cite{LSSY}). The bound (\ref{eq:low-phi}) for $\phi_R (0)$ follows from (\ref{A.3}), comparing $\phi_R (0)$ with $\phi_R (x)$ at $|x| = R_0$, with Harnack's inequality (see \cite[Theorem C.1.3]{Simon}). Finally, (\ref{eq:nablaphi}) follows by rewriting (\ref{eq:fell}) in integral form 
\[ \label{scatt2d}
\phi_R (x) = 1 - \frac{1}{4\pi} \int_{\bR^2} \log \Big( R/|x-y| \Big) V(y) \phi_R (y) \di y \,.
\]
For $|x| \leq R_0$, this leads (using that $\phi_R (y) \leq \log (R_0/\aa) / \log (R/\aa)$ for all $|y| \leq R_0$ and the local integrability of $|.|^{-3/2}$) to 
\[ 
|\nabla \phi_R (x) | \leq C \int \frac{V(y) \phi_R(y)}{|x-y|} dy  \leq \frac{C \| V \|_3}{\log (R/\aa)} 
\] 
Combining with the bound for $|x| > R_0$ obtained differentiating (\ref{A.3}), we obtain the desired estimate. 
\end{proof}

\begin{proof}[Proof of Lemma \ref{lm:appA}]   By standard arguments (see for example \cite[proof of theorem C1]{LSSY}), $f_R(x)$ is spherically symmetric and non-negative.  We now start by proving an upper bound for $\l_R$, consistent with \eqref{eq:eigenvalue2}. To this end, we calculate the energy of a suitable trial function. For $k \in \bR$ we define
\[ \label{eq:PsiOut}
\ps_k (x)= J_0(k |x|) - \frac{J_0(k \aa)}{Y_0(k\aa)} Y_0(k|x|)\,.
\]
with $J_0$ and $Y_0$ the zero Bessel functions of first and second type, respectively. Note that 
\[ 
- \D \ps_k (x) = k^2 \ps_k (x)\,.
\]
and  $\ps_k (x)=0$ if $|x| =a$. We define $k= k(R)$ to be the smallest positive real number satisfying $\dpr_r \psi_R (x)=0$ for $|x| = R$. One can check that 
\be \label{eq:chosenk}
\left| \,k^2 - \frac{2}{R^2 \log(R/\aa)} \left( 1 + \frac{3}{4}\fra{1}{\log(R/\aa)} \right)  \,\right| \leq \frac C {R^2} \frac{1}{\log^3 (R/\aa)}
\ee
in the limit $R\to \io$. To prove (\ref{eq:chosenk}), we observe that 
\be \label{eq:der-R}
\dpr_r \psi_k (x)\Big|_{|x| = R} = - k J_1 (kR) + k \frac{J_0(k \aa)}{Y_0(k \aa)} Y_1(kR) \,,
\ee
and we expand for $kR, k\aa \ll 1$ using (with $\g$ the Euler constant)
\be \begin{split} \label{eq:expansion}
 & \left| J_0(r) -1 +\frac {r^2} 4   \right| \leq C r^4 \,, \hskip 0.5cm \left| J_1(r) - \frac r 2  \Big(1-\fra {r^2}8\Big)  \right| \leq C r^5\,,  \\
&   \left| Y_0(r) - \frac 2 \pi \log(r e^\g /2) \right| \leq C r^2 \log(r) \,, \\
&  \left| Y_1(r) + \frac 2 \pi  \frac 1 r \left(1-\frac{r^2}{2} \Big(1-\frac{r^2}8\Big) \log (re^\g/2) +\fra{r^2}{4}\right)\right| \leq C r^3.
\end{split}\ee
With \eqref{eq:expansion} one finds that \eqref{eq:der-R} 
\begin{equation}\label{eq:neumann} \begin{split}  \dpr_r \psi_R &(x)\Big|_{|x| = R} \\ = \; &-\frac{1}{2 kR \log (k\aa e^\gamma / 2)} \\ & \cdot \left\{ \frac{(kR)^4}{8} \log (R/\aa) - (kR)^2 \left[ \log (R/\aa)  - \frac{1}{2} \right] + 2 + \cO ( (kR)^4 + (k\aa)^2)  \right\} \end{split} \end{equation}
The smallest solution of 
\[ \frac{(kR)^4}{8} \log (R/\aa) - (kR)^2 \left[ \log (R/\aa)  - \frac{1}{2} \right] + 2  = 0 \]
is such that 
\begin{equation}\label{eq:kvalue} (kR)^2 = \frac{2}{\log (R/\aa)} \left[ 1 + \frac{3}{4 \log (R/\aa)} \right] + \cO (\log^{-3} (R/\aa)) \end{equation}
in the limit of large $R$. Inserting in (\ref{eq:neumann}), we find that the r.h.s. changes sign around the value of $k$ defined in (\ref{eq:kvalue}). By the intermediate value theorem, we conclude that there is a $k = k (R) > 0$ satisfying (\ref{eq:chosenk}), such that $\partial_r \psi_{k(R)} (x) = 0$ if $|x| = R$.

Now, let $\phi_R (x)$ be the solution of the zero energy scattering equation \eqref{eq:fell}, with $\phi_R(x)=1$ for $|x| = R$. We set
\be \label{eq:defPsi}
\Psi_R (x) := \psi_{k} (m_R(x)) = J_0 (k m_R (x)) - \frac{J_0(k\aa)}{Y_0(k\aa)} Y_0(k m_R(x))\,,
\ee
with $k = k(R)$ satisfying \eqref{eq:chosenk} and
\[ \label{eq:defmR}
m_R(x):= \aa \exp\big( \log(R/\aa) \phi_{R}(x)\big)\,.
\]
With this choice we have $m_R(x)=|x|$ outside the range of the potential;  
hence $\Psi_R (x)=\psi_{k} (x)$  for  $R_0 \leq |x| \leq R$. In particular, $\Psi_R$ satisfies Neumann boundary conditions at $|x|=R$. 

From \eqref{A.3}, \eqref{eq:low-phi} and the monotonicity of $\phi_R$, we get 
\be \label{eq:mR-bnd}
C \aa  \leq m_R(x) \leq R_0  \qquad \text{for all} \quad 0\leq |x| \leq R_0 \, 
\ee
and for a constant $C > 1$, independent of $R$. From \eqref{eq:nablaphi} we also get 
\begin{equation}\label{eq:nablam} |\nabla m_R(x)|  \leq C  \qquad \text{for all } \quad 0\leq |x| \leq R \,.
\end{equation}
With the notation  ${\mathfrak h}= -\D + \frac 12 V $, we now evaluate $\bmedia{\Psi_R,  {\mathfrak h} \Psi_R}$. To this end we note that
\be \label{eq:splitting}
\bmedia{\Psi_R, \mathfrak{h} \Psi_R} = \int_{|x| < R_0} \lis{\Psi_R(x)} ( \mathfrak{h} \Psi_R(x)) dx  + k^2 \int_{|x| \geq R_0}  |\Psi_R(x)|^2 \, dx \,.
\ee
Let us consider the region $|x| < R_0$. From \eqref{eq:defPsi} and \eqref{eq:expansion} we find, first of all, 
\be \label{eq:trialPsi}
\left| \Psi_R(x) + \frac{\log(m_R(x)/\aa)}{\log (k \aa e^\g /2)}\right| \leq C  ( k m_R (x))^2\,, 
\ee
Next, we compute $-\D \Psi_R (x)$. With 
\begin{align*}
J'_0(r) &=- J_1(r)   &&  J'_1(r)  =\frac 1 2 \big( J_0(r)- J_2(r)\big)  \\
Y'_0(r)& = - Y_1(r)  && Y'_1(r)  =\frac 1 2 \big( Y_0(r)- Y_2(r)\big) \,.
\end{align*}
we obtain (here, we use the notation $m'_R$ and $m''_R$ for the radial derivatives of the radial function $m_R$) 
\[ \begin{split}
- \D \Psi_R (x) =\; & - \dpr^2_r \Psi_R (x)  - \frac{1}{|x|} \dpr_r \Psi_R (x) \\
=\; &  - k m_R'' (x) \big[ -J_1(k m_R (x) ) + \frac{J_0(k\aa)}{Y_0(k \aa)} Y_1(km_R(x)) \big] \\
& - \frac 1 2 k^2 \big(m_R'(x)\big)^2 \big[ J_2(k m_R(x) ) - \frac{J_0(k \aa)}{Y_0(k \aa)} Y_2(km_R(x)) \big] \\
& - \frac 1 2 k^2 \big(m_R'(x)\big)^2 \big[ -J_0(k m_R(x) ) + \frac{J_0(k\aa)}{Y_0(k\aa)} Y_0(km_R (x)) \big] \\
& - \frac{k m_R'(x)}{|x|} \big[ -J_1(k m_R(x) ) + \frac{J_0(k\aa)}{Y_0(k\aa)} Y_1(km_R(x)) \big] \,.
\end{split}\]
We note that, using the scattering equation \eqref{eq:fell},  
\be \label{eq:mell}
m_{R}'' - \frac{(m_R')^2}{m_R} + \frac{1}{|x|} m_R' = \frac 1 2 V m_R \, \phi_R \log(R/\aa)= \frac 12 V m_R \log(m_R/\aa)\,.
\ee
Now we write
\be \begin{split} \label{eq:D1}
& - \D \Psi_R (x) \\
&=  \Big[ -k \Big( m''_R (x) + \frac{m'_R(x)}{|x|} \Big)Y_1(km_R (x)) + \frac {k^2} 2 (m'_R(x))^2 Y_2(km_R(x))  \Big]    \frac{J_0(k\aa)}{Y_0(k \aa)} + g_R(x)
\end{split}\ee
where $g_R(x)= \sum_{i=1}^3 g_R^{(i)}(x)$ with
\[ \begin{split}
g_R^{(1)}(x) & = k \,\Big(  m''_R(x) +\frac{m'_R(x)}{|x|} \Big) J_1(k m_R(x)) \\
g_R^{(2)}(x) & = - \frac 1 2  k^2 (m'_R(x))^2 J_2(k m_R(x)) \\
g_R^{(3)}(x) & = -\frac 1 2 k^2 (m'_R(x))^2 \Big( -J_0(k m_R(x)  +\frac{J_0(k\aa)}{Y_0(k\aa)} Y_0(km_R (x))  \Big)= \frac{k^2} 2  (m'_R(x))^2  \Psi_R (x) \,.
\end{split}\]
With \eqref{eq:mell}, \eqref{eq:expansion} and (\ref{eq:mR-bnd}), (\ref{eq:nablam}), we find
\[
|g_R^{(1)}(x) | \leq C k^2 \Big( (m'_R(x))^2 + \frac 1 2 V(x) m_R^2 (x) \log(m_R(x)/\aa)\Big) \leq C k^2 (1+V(x)) \,.
\]
 Next, with 
$ | J_2(r) - r^2/8 | \leq C r^4  $
we get
\[
|g_R^{(2)}(x) |\leq  C k^4  (m'_R(x))^2 (m_R(x))^2 \leq C k^4\,.
\]
With (\ref{eq:trialPsi}), we can also bound 
\[
|g_R^{(3)}(x)| \leq  C k^2 (m'_R(x))^2 \frac{\log(m_R(x)/\aa)}{\log (k \aa)} \leq C k^2 \log^{-1}(k \aa)\,.
\]
We conclude that $ |g_R(r)| \leq C (1+V(x)) k^2 $ for all $r \leq R_0$ and $R$ large enough.   Finally, using Eq. \eqref{eq:mell}, the expansion for $Y_1(r)$ in Eq. \eqref{eq:expansion}, and the bound
\[
\Big|Y_2(r) + \frac 4 \pi \frac 1 {r^2} \Big| \leq C \,,
\]
 we can rewrite the first term on the r.h.s. of \eqref{eq:D1} as
\be \begin{split} \label{eq:D2}
&\Big[ -k \Big( m''_R(x) + \frac{m'_R(x)}{|x|} \Big)Y_1(km_R(x)) + \frac {k^2} 2 (m'_R(x))^2 Y_2(km_R(x))  \Big]    \frac{J_0(k\aa)}{Y_0(k \aa)} \\
& \qquad  = \frac 1 \pi V (x) \log(m_R(x)/\aa) \frac{J_0(k\aa)}{Y_0(k \aa)} + h_R(x)
\end{split}\ee
with $|h_R(x)|\leq C (1+V(x)) k^2$ for all $r \leq R_0$, $R$ large enough. With the identities  \eqref{eq:D1} and \eqref{eq:D2} we obtain 
\[ \begin{split} \label{eq:DPsi}
\bigg| - \D \Psi_R (x)- \frac 1 \pi \frac{J_0(k\aa)}{Y_0(k\aa)} V (x)  \log(m_R(x)/\aa ) \bigg| \leq C (1+V(x)) k^2\,,
\end{split}\]
for all $|x| \leq R_0$ and for $R$ sufficiently large. With (\ref{eq:trialPsi}), we conclude that, for $0 \leq |x| \leq R_0$,  
\be \begin{split} \label{eq:hPsi}
\Big|(- \D + \frac 12 V )\Psi_R (x) \Big| \leq C (1+V(x)) k^2  \,.
\end{split}\ee

With \eqref{eq:splitting}, \eqref{eq:hPsi} and the upper bound 
\begin{equation}\label{eq:upp-psi} 
|\Psi_R(r)| \leq \frac{C}{|\log(k \aa)|} 
\end{equation}
for all $|x| \leq R_0$ (which follows from \eqref{eq:trialPsi} and \eqref{eq:mR-bnd}), we get 
\[
\bmedia{\Psi_R, {\mathfrak h} \Psi_R} \leq  k^2  \bmedia{\Psi_R, \Psi_R}+ \frac {C k^2} {|\log (k\aa) |} \int_{|x| \leq R_0} (1+ V(x)) \, \di x  \,.
\]
On the other hand, Eq.\eqref{eq:trialPsi}, together with $m_R(x)=|x|$ for $|x| \geq R_0$, implies the lower bound 
\[ \begin{split} 
\bmedia{\Psi_R,\Psi_R} &\geq  \int_{R_0 \leq |x| \leq R}  |\Psi_R(x)|^2  \di x \geq  \frac{C}{|\log (k\aa)|^2} \int_{R_0 \leq |x| \leq R}   \log^2(|x|/\aa)   \di x  \geq CR^2 \,.
\end{split} \]
Hence, with \eqref{eq:chosenk}, we conclude that 
\be \begin{split}\label{eq:lR-upp}
\l_{R}   \leq  \frac{\bmedia{\Psi_R, \mathfrak h \Psi_R}}{\bmedia{\Psi_R,\Psi_R}}&  \leq k^2 \, \left( 1 + \frac {C\, |\log (k\aa)|}{R^2} \right)   \\
& \leq  \frac{2}{R^2 \log(R/\aa)} \left( 1 + \frac{3}{4}\fra{1}{\log(R/\aa)}+  \frac{C}{\log^2(R/\aa)}  \right)\, 
\end{split}\ee
in agreement with \eqref{eq:eigenvalue2}. 

To prove the lower bound for $\l_R$ it is convenient to show some upper and lower bounds 
for $f_R$. We start by considering $f_R$ outside the range of the potential. We denote $\e_R= \sqrt{\l_R}\, R$. Keeping into account the boundary conditions at $|x| = R$, we find, for 
$R_0 \leq |x| \leq R$, 
\[ \label{eq:fellout}
 f_{R} (x) =A_R \, J_0( \e_R\,|x| /R) + B_R \, Y_0( \e_R\, |x| /R)\,, 
\]
with
\[ 
A_R =\left(J_0(\e_R)  - J_1(\e_R)\, \frac{Y_0(\e_R)}{Y_1(\e_R)} \right)^{-1} \,,
\]
and
\[
B_R  =\left( Y_0(\e_R)  - \frac{J_0(\e_R)}{J_1(\e_R)} Y_1(\e_R)  
\right)^{-1} \,.
\]
From \eqref{eq:lR-upp}, we have $|\e_R|\leq C\, |\log (R/\aa)|^{-1/2}$. Thus, we can expand $f_R$ for large $R$, using \eqref{eq:expansion} and, for $Y_0$, the improved bound 
$$\left|Y_0 (r)  -\frac2\pi\log(re^\g/2) \left(1-\frac 1 4 r^2\right)\right|\leq C \, r^2\, ,$$  
we find 
\be \label{eq:exp-AB}\begin{split}
\Big| A_R - 1 + \frac {\e_R^2} 4 \Big( 2 \log(\e_R e^\g/2) - 1 \Big)  \Big| &\leq C \e_R^4 (\log \e_R)^2\,, \\
 \Big| B_R - \frac \pi 4  \e_R^2 \left( 1 -\frac{\e_R^2}{8}\right)  \Big| &\leq  C \e_R^6 \, .
\end{split}\ee
which leads to 
\be \label{eq:exp-tlf-outsideR}
\begin{split}
&\left| f_R (x) -1 +\frac {\e_R^2}{4}  \left(2\log( R/|x|) - 1 +\frac{x^2}{R^2} \right)  -  \frac{\e_R^4}{16} \log(R/|x|) \left(1 + \frac{ 2x^2}{R^2} \right)  \right| \\ &\hspace{9cm} \leq C \e_R^4 (\log \e_R)^2 \,.
\end{split}
\ee
We can also compute the radial derivative 
\[
\partial_r f_R (x)= - \frac{\e_R} R \Big( A_R \, J_1( \e_R\,r/R) + B_R \, Y_1( \e_R\, r/R)\Big)\,.
\]
With the expansions \eqref{eq:expansion} and \eqref{eq:exp-AB} we conclude that for all $R_0 \leq |x| < R$ we have 
\be \label{eq:derivative-out}
\left| \partial_r f_R (x)- \frac {\e_R^2} {2|x|} \left( 1 - \frac{x^2}{R^2} + \frac{\eps_R^2 x^2}{2R^2} \log (R/|x|)  
\right)\right| \leq  C \e_R^4 \log \e_R\,.
\ee
The bound (\ref{eq:derivative-out}) shows that $\partial_r f_R (x)$ is positive, for, say, $R_0 < |x| < R/2$. 
Since $\partial_r f_R (x)$ must have its first zero at $|x| = R$, we conclude that $f_R$ is increasing in $|x|$, on $R_0 \leq |x| \leq R$. From the normalization $f_R (x) = 1$, for $|x| = R$, we conclude therefore that $f_R (x) \leq 1$, for all $R_0 \leq |x| \leq R$. 

From (\ref{eq:exp-tlf-outsideR}) and \eqref{eq:lR-upp} we obtain, on the other hand, the lower bound 
\be \begin{split}\label{eq:tlf-lower} 
f_R (x) &\geq  1 -\frac {\e_R^2}{2} \log( R/ |x|)  -C \e_R^4 (\log \e_R)^2 \\
 &\geq 1  - \frac{\log( R/|x|)}{\log (R/\aa)}\left (1 + \frac 3 4 \frac{1}{\log(R/\aa)} +  \frac{C}{\log^2(R/\aa)}\right) - C \,\fra{(\log\log(R/\aa))^2}{\log^2(R/\aa)}\\
& \geq \fra{\log(|x| /\aa)}{\log(R/\aa)} - \frac 3 4   \fra{\log(R/|x|)}{\log^2(R/\aa)} - C \frac{\log (R/|x|)}{\log^3 (R/a)} - C\, \fra{ (\log\log(R/\aa))^2}{\log^2(R/\aa)}\,,
\end{split}\ee
for $R$ sufficiently large. Let $R_*=\max \{R_0, e \aa\}$. Then Eq. \eqref{eq:tlf-lower}  implies in particular that, for $R$ large enough,
\begin{equation}\label{eq:lower-fR} \begin{split}
f_R (x)  \geq \frac{C}{\log(R/\aa)}\,.
\end{split}\end{equation} 
for all $R_* < |x| \leq R$.

Finally, we show that $f_R (x) \leq 1$ also for $|x| \leq R_0$. First of all, we observe that, by elliptic regularity, as stated for example in \cite[Theorem 11.7, part iv)]{Lieb-Loss}, there exists $0<\a<1$ and $C > 0$ such that   
\[ \left| f_R (x)- f_R(y) \right| \leq C  \| (V- 2 \l_R) f_R \|_2 \, |x-y|^\a  \]
With $\| V f_R \|_2 \leq \| V \|_3 \| f_R \|_6 \leq C \| f_R \|_{H^1} \leq C (1 + \lambda_R ) \| f_R \|_2$, we conclude that $0 \leq f_R (x) \leq 1 + C \| f \|_2$ for all $|x| \leq R_0$ (because we know that $f_R (x) \leq 1$ for $R_0 \leq |x| \leq R$). To improve this bound, we go back to the differential equation (\ref{tlf2}), to estimate 
\begin{equation}\label{eq:subhar} \Delta f_R = \frac{1}{2} V f_R - \l_R f_R \geq - \l_R (1+C \| f \|_2) \end{equation} 
This implies that $f_R (x) + \l_R (1+C \| f \|_2) x^2 /2$ is subharmonic. Using (\ref{eq:exp-tlf-outsideR}), we find $f_R (x) \leq 1 - C \eps_R^2$ for $|x| = R_0$. From the maximum principle, we obtain therefore that 
\begin{equation}\label{eq:fR-up} f_R (x) \leq 1 - C \eps^2_R + C \l_R (1+C \| f_R \|_2)  \end{equation}
for all $|x| \leq R_0$. In particular, this implies that $\| f_R {\bf 1}_{|x| \leq R_0} \|_2 \leq C + C \l_R  \| f_R \|_2$, and therefore that 
\[ \| f_R {\bf 1}_{R_0 \leq |x| \leq R} \|_2 \geq \| f_R \|_2 (1 - C \l_R) - C  \]
With $f_R (x) \leq 1$ for $R_0 \leq |x| \leq R$, we find, on the other hand, that 
$\| f_R {\bf 1}_{R_0 \leq |x| \leq R} \|_2 \leq C R$. We conclude therefore that $\| f_R \|_2 \leq CR$ and, from (\ref{eq:fR-up}), that $f_R (x) \leq 1 - C \eps_R^2 + C/R \leq 1$, for all $|x| \leq R_0$, if $R$ is large enough. 

We are now ready to prove the lower bound for $\l_R$. We use now that any function $\Phi$ satisfying Neumann boundary conditions at $|x|=R$ can be written as $\Phi(x)=q(x)\Psi_R(x)$, with $\Psi_R(x)$ the trial function used for the upper bound and $q>0$ a function that satisfies Neumann boundary condition at $|x|=R$ as well. This is in particular true for the solution $f_R(x)$ of \eqref{tlf2}. In the following we write
\[
f_R (x) = q_R (x) \Psi_R(x)
\]
where $q_R$ satisfies Neumann boundary conditions at $|x|=R$. From (\ref{eq:trialPsi}), we find $|\Psi_R (x)| \geq C / \log (ka)$. The bound $f_R(x)\leq 1$ implies therefore that there exists $c>0$ such that
\be \label{eq:qR-upper}
q_R(x) \leq C \log(k\aa)    \qquad \forall \, |x| \leq R_0 \,.
\ee
From the identity
\[
\mathfrak h f_R= (\mathfrak h \Psi_R) q_R - (\D q_R) \Psi_R - 2 \nabla q_R \nabla \Psi_R
\]
we have 
\[
\int_{|x|\leq R} \di x \, f_R \mathfrak h f_R = \int_{|x|\leq R} \di x \, |\nabla q_R|^2 \Psi_R^2 + \int_{|x|\leq R} \di x \, |q_R|^2 \Psi_R \mathfrak h \Psi_R\,.
\]
From \eqref{eq:hPsi} and \eqref{eq:upp-psi}, we have
\[ \begin{split} \label{Psi-h-Psi}
\left| \Psi_R (x)  (\mathfrak h \Psi_R) (x) - k^2 \Psi^2_R (x) \right| \leq \, C \frac{k^2}{|\log k a |}  (1+ V(x)) \chi (|x| \leq R_0)\,.
\end{split}\]
Hence
\be \begin{split} \label{eq:lb1}
&\int_{|x|\leq R} \di x\,  f_R \mathfrak h f_R  \geq\;  k^2\|f_R\|^2 - \frac{C k^2}{|\log k|} \int_{|x|\leq R_0} \di x \, (1+V(x)) |q_R (x)|^2  \,.
\end{split}\ee
With (\ref{eq:qR-upper}), we obtain 
\[ \begin{split} \label{eq:lb2}
\int_{|x|\leq R}\di x\, f_R & \mathfrak h f_R \geq   k^2\|f_R\|^2 - C k^2 \log(k\aa)\,.
\end{split}\]
With (\ref{eq:lower-fR}) (recalling that $R_*=\max\{R_0, e \aa\}$), we bound 
\[
\|f_R\|^2 \geq \int_{R_* \leq |x| \leq R}  |f_R (x)|^2 \,  \di x  \geq \frac{C R^2  }{\log^2(R/\aa)} 
\]
and, inserting in (\ref{eq:lb1}), we conclude that
\[ \label{eq:lambdaR-LB}
\begin{split}
\l_{R} =  \frac{\bmedia{f_R, \mathfrak h f_R}}{\bmedia{f_R,f_R}} &\geq k^2 \left( 1 - \frac{C \log^3(R/\aa)}{R^2 }  \right) \\&\geq  \frac{2}{ R^2 \log (R/\aa)} \,\left( 1+\fra34\fra{1}{\log(R/\aa)} -\, \frac{C}{\log^2 (R/\aa)} \right)\,,
\end{split}\]
where in the last inequality we used \eqref{eq:chosenk}. \\

To prove \eqref{eq:intpotf2} we use the scattering equation \eqref{tlf2} to write 
\[
\int \di x\, V(x)  f_{R}(x) =  2 \int_{|x| \leq R} \di x\, \Delta  f_{R}(x)  + 2 \int_{|x| \leq R} \di x  \lambda_{R}  f_{R}(x)\,.
\]
Passing to polar coordinates, and using that $\Delta f_R (x) = |x|^{-1} \partial_r |x| \partial_r f_R (x)$, we find that the first term vanishes. Hence 
\[
\int \di x\, V(x)  f_{R}(x)= 2 \lambda_R \int \di x \,  f_{R} (x) \,. \label{B.29}
\]
With the upper bound $f_R(r) \leq 1$ and with (\ref{eq:eigenvalue2}), we find  
\[ \label{eq:int-lambda-upp}
\int \di x\, V(x)  f_{R} (x) \leq 2 \pi R^2  \lambda_R  \leq  
\frac{4 \pi}{\log (R/\aa)}  \left( 1 +  \frac{C}{\log(R/\aa)} \right)\,.
\]
To obtain a lower bound for the same integral we use that $ f_{R}(r) \geq 0$ inside the range of the potential. Outside the range of $V$, we use \eqref{eq:exp-tlf-outsideR}. We find
\[
\int \di x\, V(x)  f_{R}(x) \geq 4 \pi  \lambda_R \int_{R_0}^R \di r \, r \,  (1-  C \eps_R^2 \log (R/r) )  \geq \frac{4 \pi}{\log (R/\aa)}  \left( 1 -  \frac{C}{\log(R/\aa)} \right) 
\]
We conclude that 
\[\left| \int \di x\, V(x)  f_{R}(x) - \frac{4\pi}{\log (R/\aa)} \right| \leq \frac{C}{\log^2 (R/\aa)}.\]

Finally, we show the bounds in \eqref{eq:w-bounds}. For $r\in [R_0, R]$, from \eqref{eq:exp-tlf-outsideR} we have 
\begin{equation}\label{eq:wRbd}
\left|\, w_R (x)- \frac{\log(R/|x|)}{\log(R/\aa)}\,\right| \leq  \frac{C}{\log(R/\aa)}\,.
\end{equation}
As for the derivative of $w_R$ we use \eqref{eq:derivative-out} to compute
\[
\left| \partial_r f_R (x)\right|  \leq  \frac{C}{|x|}  \frac{1}{\log(R/\aa)}\,.
\]
Moreover $\partial_r f_R (x)= 0$ if $|x| = R$, by construction. 

On the other hand, if $|x| \leq R_0$, we have $w_R (x)= 1- f_R(x)\leq 1$.
As for the derivative, we define $\wt{f}_R$ on $\bR_+$ through $\wt{f}_R (r) = f_R (x)$, if $|x| = r$, and we use the representation  
\[ 
\wt{f}'_R (r) = \frac 1 r \int_0^r \di s \big( \wt{f}''_R(s) s + \wt{f}'_R (s) \big)\,.
\]
With (\ref{tlf2}), we have (with $\wt{V}$ defined on $\bR_+$ through $V(x) = \wt{V}(r)$, if $|x| = r$) 
\[
\wt{f}''_R (r) + \frac 1 r \wt{f}'_R (r) = \l_R \wt{f}_R (r) - \frac 1 2 \wt{V} (r) \wt{f}_R (r)\,,
\] 
By (\ref{eq:wRbd}), we can estimate $\wt{f}_R (R_0) \leq C / \log (R/\aa)$. 
From (\ref{eq:subhar}), we also 
recall that 
\[ \wt{f}_R (r) \leq \wt{f}_R (R_0) + C R \l_R \leq C /\log (R/\aa) \] 
for any $r < R_0$. We conclude therefore that 
 \[ \begin{split}
|\wt{f}'_R(r) | & =  \Big|\frac 1 r \int_0^r \di s s \big( \l_R \wt{f}_R(s) - \frac 1 2 \wt{V} (s) \wt{f}_R(s)\big)  \Big| \\
& \leq  \frac{\l_R}{r}  \int_0^r r dr  + \frac {C}{r \log (R/\aa)}   \int_0^r  dr \, r  \wt{V} (r)  \\
& \leq  \frac{C}{\log(R/\aa)}  +  C\, \| V \|_2  \frac{\log (R_0/\aa)}{\log(R/\aa)}   \leq   \frac{C}{\log(R/\aa)}  \,.
\end{split}\]

\end{proof}

\end{document}